\tikzset{close/.style={near start,outer sep=-10pt}}
\newcommand{\myitem}[1][]{%
\item[#1]\protected@edef\@currentlabel{#1}\ignorespaces%
}
\def\widebreve{\mathpalette\wide@breve}
\def\wide@breve#1#2{\sbox\z@{$#1#2$}%
     \mathop{\vbox{\m@th\ialign{##\crcr
\kern0.08em\brevefill#1{0.8\wd\z@}\crcr\noalign{\nointerlineskip}%
                    $\hss#1#2\hss$\crcr}}}\limits}
\def\brevefill#1#2{$\m@th\sbox\tw@{$#1($}%
  \hss\resizebox{#2}{\wd\tw@}{\rotatebox[origin=c]{90}{\upshape(}}\hss$}
\newcommand{\amp}{\mathrel{\,\&\,}}
\newcommand{\imp}{\;\rightarrow\:}
\newcommand{\defeqiv}{\stackrel{\textup{def}}{\iff}}
\newcommand{\defeql}{\stackrel{\textup{def}}{\  =\  }}
\DeclareMathOperator*{\medwedge}{{\textstyle{\bigwedge}} }
\DeclareMathOperator*{\medvee}{{\textstyle{\bigvee}}}
\newcommand{\prox}{\prec}
\newcommand{\cov}{\mathrel{\lhd}}
\newcommand{\fcov}{\mathrel{\blacktriangleleft}}
\newcommand{\fprox}{\sqsubset}
\newcommand{\fproxeq}{\sqsubseteq}
\newcommand{\cutcomp}{\mathop{\cdot}}
\newcommand{\meets}{\between}
\newcommand{\downset}{\mathord{\downarrow}}
\newcommand{\upset}{\mathord{\uparrow}}
\newcommand{\id}{\mathrm{id}}
\newcommand{\Pow}[1]{\mathsf{Pow}(#1)}
\newcommand{\Fin}[1]{\mathsf{Fin}(#1)}
\newcommand{\Sin}[1]{\mathsf{Sin}(#1)}
\newcommand{\PFin}[1]{\mathsf{Fin}^{+}\!(#1)}
\newcommand{\FFin}[1]{\Fin{\Fin{#1}}}
\newcommand{\Image}[1]{\mathrm{Im}\,#1 }
\newcommand{\Choice}[1]{\mathrm{Ch}(#1)}
\newcommand{\Nat}{\mathbb{N}}
\newcommand{\Upper}[1]{\mathrm{P_U}(#1)}
\newcommand{\UpperFunc}{\mathrm{P_U}}
\newcommand{\Lower}[1]{\mathrm{P_L}(#1)}
\newcommand{\LowerFunc}{\mathrm{P_L}}
\newcommand{\Ideals}[1]{\mathrm{Idl}(#1)}
\newcommand{\RIdeals}[1]{\mathrm{RIdl}(#1)}
\newcommand{\Karoubi}[1]{\mathsf{\mathbf{Split}(#1)}}
\newcommand{\Cat}[1]{\mathbb{#1}}
\newcommand{\AlgDom}{\mathsf{AlgDom}}
\newcommand{\AlgPxPos}{\mathsf{Pos_{App}}}
\newcommand{\AlgPxSLat}{\mathsf{JLat_{JApp}}}
\newcommand{\AlgPxSLatS}{\mathsf{JLat_{App}}}
\newcommand{\PxPos}{\mathsf{PxPos}}
\newcommand{\PxJLat}{\mathsf{PxJLat}}
\newcommand{\SPxJLat}{\mathsf{SPxJLat}}
\newcommand{\PxJLatLoc}{\mathsf{LSPxJLat}}
\newcommand{\Cont}{\mathsf{ContDom}}
\newcommand{\ContLat}{\mathsf{ContLat}}
\newcommand{\ContLatS}{\mathsf{ContLat_{Scott}}}
\newcommand{\AlgLat}{\mathsf{AlgLat}}
\newcommand{\AlgLatS}{\mathsf{AlgLat_{Scott}}}
\newcommand{\FreeSL}{\mathsf{FreeSupLat_{Scott}}}
\newcommand{\Infosys}{\mathsf{Infosys}}
\newcommand{\FCov}{\mathsf{FCov}}
\newcommand{\ContFCov}{\mathsf{ContFCov}}
\newcommand{\SContFCov}{\mathsf{SContFCov}}
\newcommand{\ContFCovLoc}{\mathsf{LSContFCov}}
\newcommand{\SEnt}{\mathsf{SEnt}}
\newcommand{\SEntsys}{\Karoubi{\SEnt}}
\newcommand{\FTop}{\mathsf{FTop}}
\newcommand{\LKFTop}{\mathsf{LKFTop}}
\newcommand{\LKFrm}{\mathsf{LKFrm}}
\newcommand{\BCov}{\mathsf{BCov}}
\newcommand{\ContBCov}{\mathsf{ContBCov}}
\newcommand{\CovtoLat}[1]{\mathrm{L}(#1)}
\newcommand{\ApproxExt}[1]{\mathrel{\widetilde{#1}}}
\newcommand{\KZ}{\mathrm{KZ}}
\newcommand{\coAlg}[1]{\mathsf{coAlg(#1)}}
\newcommand{\wb}{\mathsf{wb}}
\theoremstyle{theorem}
\newtheorem{theorem}[thm]{Theorem}
\newtheorem{proposition}[thm]{Proposition}
\newtheorem{lemma}[thm]{Lemma}
\newtheorem{corollary}[thm]{Corollary}
\theoremstyle{definition}
\newtheorem{definition}[thm]{Definition}
\newtheorem{remark}[thm]{Remark}
\newtheorem{example}[thm]{Example}
\theoremstyle{thmC}
\newtheorem{corollaryC}[thm]{Corollary}
\numberwithin{equation}{section}
\keywords{continuous lattice;
locally compact locale;
proximity lattice;
entailment relation;
point-free topology}
\begin{document}
\title[Predicative theories of continuous lattices]{Predicative theories of continuous lattices}

\author[T.~Kawai]{Tatsuji Kawai} 
\address{Japan Advanced Institute of Science and Technology,
1-1 Asahidai, Nomi, Ishikawa 923-1292, Japan}	
\email{tatsuji.kawai@jaist.ac.jp} 

\begin{abstract}
  We introduce a notion of strong proximity join-semilattice, a
  predicative notion of continuous lattice which arises as the Karoubi
  envelop of the category of algebraic lattices.
  Strong proximity join-semilattices can be characterised
  by the coalgebras of the lower powerlocale on the wider category of
  proximity posets (also known as abstract bases or R-structures).
  Moreover, locally compact locales can be characterised in terms of
  strong proximity join-semilattices by the coalgebras of the double
  powerlocale on the category of proximity posets.
  We also provide more logical characterisation of a strong
  proximity join-semilattice, called a strong continuous finitary
  cover, which uses an entailment relation to present the underlying
  join-semilattice.
  We show that this structure naturally corresponds to the notion
  of continuous lattice in the predicative point-free topology.
  Our result makes the predicative and finitary aspect of the notion
  of continuous lattice in point-free topology more explicit.


\end{abstract}
\maketitle

\section{Introduction}\label{sec:Introduction}
A continuous lattice  \cite{gierz2003continuous} is a complete lattice
$L$ in which every element $x \in L$  is expressed as a directed
join of elements way-below $x$, where $y$ is \emph{way-below} $x$ if
\[
  x \leq \bigvee U \imp \exists z \in U \left( y \leq z \right)
\]
for every directed subset $U \subseteq L$.
The reference to an arbitrary directed subset in the
definition of the way-below relation, however,  makes the theory of continuous
lattice difficult to develop in predicative mathematics such as those of
Bishop \cite{Bishop-67} or Martin-L\"of \cite{per1984intuitionistic},
where powers of sets are not accepted as legitimate objects.  Negri
\cite{negri1998continuous,NegriContinuousDomain} has addressed this
problem in the predicative point-free topology
(known as formal topology \cite{Sambin:intuitionistic_formal_space}).
Her notion of continuous lattice assumes only the structure
on the basis of a continuous lattice together with a primitive
way-below relation on the basis (see Definition \ref{def:ContBCov}).
However, the condition on the primitive way-below relation refers to
an arbitrary element of the continuous lattice represented by such a
structure. Hence, the structure is not a predicative presentation of
a continuous lattice; rather, it should be considered as specifying
the condition such a predicative presentation, if it exists, should
satisfy. The problem of how to present continuous lattices in a
uniform way using only predicative data remains open.

The aim of this paper is to address the above predicativity problem by
introducing several finitary presentations of continuous lattices in
terms of the basis of a domain. These presentations are \emph{finitary} in
that they do not make any reference to an arbitrary element of the continuous
lattices they represent but only to finite joins of their bases.

Our idea is built on a well-known fact in domain theory: a continuous
domain can be completely recovered from a certain structure on its
basis, which can be simply described using an idempotent relation on
the basis.  Such a structure on the basis is called an abstract basis
by Abramsky and Jung~\cite[Section 2.2.6]{abramsky1994domain} and an
R-structure by Smyth~\cite{SymthEffectivelyGivenDomain}.  
In this paper, we introduce
corresponding structures for continuous lattices, called
\emph{proximity $\vee$-semilattice} and \emph{strong proximity
$\vee$-semilattice}.
Both proximity $\vee$-semilattice and its strong variant are abstract
bases enriched with a structure of join semilattice
(\emph{$\vee$-semilattice} for short).  Proximity $\vee$-semilattices
naturally arise as the Karoubi envelop  of the category of algebraic lattices and Scott
continuous functions, while strong proximity $\vee$-semilattices arise
as the Karoubi envelop of the category of algebraic lattices and
suplattice homomorphisms.  The latter also arise as the coalgebras of
the lower powerlocale on the category of proximity posets, the  dual
of the category of abstract basis. Moreover, locally compact locales
admit natural characterisation in terms of strong proximity
$\vee$-semilattice as the coalgebras of the double powerlocale on the
category of proximity posets. 

We also introduce more logical characterisation of a (strong)
proximity $\vee$-semilattice, called \emph{continuous
finitary cover} and \emph{strong continuous finitary cover}.
These structures use entailment relations enriched
with idempotent relations to present (strong) proximity
$\vee$-semilattices, and they are naturally found in the Karoubi
envelop of the category of free suplattices and Scott continuous
functions.
In particular, a strong continuous finitary cover can be regarded as a
finitary presentation of a continuous lattice in formal topology by
Negri~\cite{negri1998continuous}. We make this observation explicit by
establishing an equivalence of the category of strong continuous
finitary covers and that of continuous lattices in formal topology.
The result clarifies the predicative and finitary aspect of the
notions introduced in \cite{negri1998continuous}.

\subsubsection*{Related works}
In the context of stably compact locales, the structures analogous to
those of proximity $\vee$-semilattice and continuous finitary cover
have already appeared. Smyth~\cite{SmythStableCompactification} and
Jung and S\"underhauf~\cite{JungSunderhaufDualtyCompactOpen} gave 
predicative characterisation of stably compact locales using proximity
lattices and their strong variants, respectively.
These notions were studied further by van
Gool~\cite{vanGoolDualityCanExt}, who gave an explicit characterisation
of the category of stably compact locales by the Karoubi envelop of the
category of spectral locales (the ideal completions of distributive
lattices). On the other hand, Vickers~\cite{VickersEntailmentSystem} 
characterised the category of stably compact frames and preframe
homomorphisms as the Karoubi envelop of its subcategory of free
frames.  Building on the earlier work by
Jung, Kegelmann, and Moshier~\cite{JungEtAlMultilingual}, he also gave
a logical presentation of a proximity lattice using an entailment
relation with interpolation property. The strong variant of this notion
was investigated by Coquand and Zhang
\cite{CoquandZhangPrediativePatch} and Kawai~\cite{KawaiContEnt}.
These structures for stably compact locales are analogous to proximity
$\vee$-semilattices and continuous finitary covers and their strong
variants. The difference between the previous works stated above and
this paper is as follows: the previous works deal with the Karoubi envelop of the
category of spectral locales or that of free frames and preframe
homomorphisms; on the other hand, this paper deals with the Karoubi
envelop of the category of algebraic lattices (ideal completions of
$\vee$-semilattices) or that of free suplattices and 
Scott continuous functions.

\subsubsection*{Notations}
  We fix some notations for sets and relations which are used
  throughout the paper.
  If $S$ is a set, then $\Sin{S}$ denotes the set of singleton subsets
  of $S$;  $\Fin{S}$ denotes the set of finitely enumerable subsets of
  $S$. Here, a set $X$ is \emph{finitely
  enumerable}~\cite[Section~7]{CTSbook} (or \emph{Kuratowski
  finite}~\cite[D5.4]{ElephantII}) if there exists a surjection $f
  \colon \left\{ 0,\dots,n-1 \right\} \to X$ for some $n \in \Nat$.
  We also write $\Pow{S}$ for the collection of subsets of $S$ (usually called
  the powerset of $S$), which does not form a set predicativity unless $S = \emptyset$.

  Let $r \subseteq X \times Y$ be a relation between sets $X$ and $Y$.
  For subsets $U \subseteq X$ and $V \subseteq Y$, the forward
  image $rU \subseteq Y$ and the inverse image $r^{-}V \subseteq X$ are defined by 
  \[
    \begin{aligned}
      rU \defeql 
      \left\{ y \in Y \mid \exists x \in U \left( x \mathrel{r} y
    \right) \right\}, \\
    r^{-}V \defeql 
    \left\{ x \in X \mid \exists y \in V \left( x \mathrel{r} y
  \right) \right\}.
\end{aligned}
\]
  For $x \in X$ and $y \in Y$, we write $rx$ for $r\{x\}$ and $r^{-}y$
  for $r^{-}\{y\}$. 
  The opposite of $r$ is denoted by $r^{-}$. 
  The relational composition of $r \subseteq X \times Y$ and $s
  \subseteq Y \times Z$ is denoted by $s \circ r$.

\section{Splitting of idempotents}\label{sec:SplitIdm}
We recall the
following fact~\cite[Chapter VII,
Section 2.3]{johnstone-82}, which underlies the development of the
later sections:
\begin{enumerate}
  \item Continuous domains are the Scott continuous retracts of
    algebraic domains.
  \item Continuous lattices are the suplattice retracts of algebraic lattices.
\end{enumerate}
These allow us to characterise continuous domains (or continuous lattices) in terms of
the Karoubi envelop of the category of algebraic domains (resp.\
algebraic lattices) with suitable notions of morphisms.
 
We recall some standard notions in domain theory
\cite{gierz2003continuous}.  A \emph{dcpo} is a poset $(D,\leq)$ in
which every directed subset has a least upper bound, where a subset $U
\subseteq D$ is \emph{directed} if it is inhabited and any two
elements of $U$ have an upper bound in $U$.  A function between dcpos
$D$ and $D'$ is \emph{Scott continuous} if it preserves directed
joins.  For elements $x,y$ of a dcpo $D$, we say that $y$ is
\emph{way-below} $x$, denoted $y \ll x$, if 
\[
  x \leq \bigvee U \imp \exists z \in U \left( y \leq z\right) 
\]
for every directed subset $U \subseteq D$. A \emph{continuous domain}
is a dcpo in which every element is a directed join of elements
way-below it. 
A \emph{continuous lattice} is a continuous domain with finite joins.
Thus, every continuous lattice is a \emph{suplattice}, a poset with
all joins. A \emph{suplattice homomorphism} between continuous
lattices is a function which preserves all joins. 

\begin{proposition}
  \label{prop:SplitCont}
  Let $D$ be a continuous domain and $f \colon D \to D$ be an
    idempotent Scott continuous function.
    Let $ D_{f} =\left\{ f(a) \mid a \in D \right\}$.
    \begin{enumerate}
      \item\label{prop:SplitCont1} 
        $D_{f}$ is a continuous domain.

      \item\label{prop:SplitCont2} If $D$ is a continuous lattice,
        then so is $D_{f}$.

      \item\label{prop:SplitCont3} If $D$ is a continuous lattice and
        $f$ is a suplattice homomorphism, then $D_{f}$ is a sub-semilattice of $D$.
    \end{enumerate}
\end{proposition}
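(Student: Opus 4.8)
The plan is to exploit that $f$, being idempotent, is a retraction onto its image, so that $D_{f}$ coincides with the fixed-point set $\left\{ x \in D \mid f(x) = x \right\}$ and inherits exactly the joins that $f$ preserves. First I would record the crucial observation that \emph{directed joins in $D_{f}$ agree with those in $D$}: if $U \subseteq D_{f}$ is directed, then $\bigvee U$ exists in $D$, and since each $u \in U$ satisfies $f(u) = u$, Scott continuity gives $f(\bigvee U) = \bigvee f(U) = \bigvee U$; hence $\bigvee U \in D_{f}$ and this element is the join computed in $D_{f}$. In particular $D_{f}$ is a dcpo.

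For continuity (item \ref{prop:SplitCont1}), I would show that whenever $a \ll x$ in $D$ with $x \in D_{f}$, one has $f(a) \ll x$ in $D_{f}$. Given a directed $V \subseteq D_{f}$ with $x \leq \bigvee V$, the previous paragraph identifies $\bigvee V$ with the join taken in $D$, so $a \ll x$ yields some $v \in V$ with $a \leq v$, whence $f(a) \leq f(v) = v$. Since $D$ is continuous, $x = \bigvee \left\{ a \mid a \ll x \right\}$ is a directed join; applying $f$, using $f(x) = x$ and Scott continuity, and noting that monotonicity sends the directed set $\left\{ a \mid a \ll x \right\}$ to a directed set, gives $x = \bigvee \left\{ f(a) \mid a \ll x \right\}$, a directed join of elements way-below $x$ in $D_{f}$. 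Thus $D_{f}$ is a continuous domain.

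For item \ref{prop:SplitCont2} it suffices, by the definition of continuous lattice as a continuous domain with finite joins, to verify that $D_{f}$ has finite joins. I would take the binary join of $x, y \in D_{f}$ to be $f(x \vee y)$ and the least element to be $f(\bot)$: monotonicity of $f$ makes these upper bounds of $\left\{ x, y \right\}$ and of $\emptyset$, respectively, and if $z \in D_{f}$ bounds $x$ and $y$, then $x \vee y \leq z$ forces $f(x \vee y) \leq f(z) = z$, giving leastness. For item \ref{prop:SplitCont3}, the stronger hypothesis that $f$ is a suplattice homomorphism collapses the formula $f(x \vee y)$ to $f(x) \vee f(y) = x \vee y$, so that $x \vee y \in D_{f}$; likewise preservation of the empty join gives $f(\bot) = \bot \in D_{f}$. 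Hence $D_{f}$ is closed under finite joins as computed in $D$ and is a sub-$\vee$-semilattice of $D$.

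The main obstacle is the continuity argument of item \ref{prop:SplitCont1}, and specifically the step of transporting the way-below relation of $D$ through $f$ to obtain way-below in $D_{f}$; this is where the agreement of directed joins established at the outset is essential, since it lets a witness for $a \ll x$ against a directed family in $D_{f}$ be read off from the same family viewed in $D$. Once that identification is in place, the remaining verifications for items \ref{prop:SplitCont2} and \ref{prop:SplitCont3} are routine monotonicity and preservation calculations.
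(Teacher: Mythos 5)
Your proposal is correct and follows essentially the same route as the paper: both arguments rest on the observation that directed joins of fixed points of $f$ are again fixed points (so $D_{f}$ is a sub-dcpo), transport the way-below relation of $D$ into $D_{f}$ via $f$ to exhibit each $x \in D_{f}$ as the directed join $\bigvee\{f(a) \mid a \ll x\}$, and define finite joins in $D_{f}$ by applying $f$ to the joins of $D$. The only cosmetic difference is that the paper records the full biconditional characterisation of $\ll_{f}$, whereas you prove just the direction needed.
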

\begin{proof}
  \ref{prop:SplitCont1}.
    We recall the proof from  Johnstone~\cite[Chapter VII, Lemma
    2.3]{johnstone-82} (see also Vickers~\cite[Theorem
    3]{VickersEntailmentSystem}).
    First, the way-below relation $\ll_{f}$ of
    $D_{f}$ can be characterised in terms of the way-below relation
    $\ll$ of $D$ as follows:
    \begin{equation}
      \label{eq:WayBelowSplit}
      f(a) \ll_{f} f(b) \leftrightarrow \exists c \in D \left( f(a) \leq
      f(c) \amp c \ll f(b) \right).
    \end{equation}
    It is easy to see that the set $\downset_{\ll_{f}}
    f(a) = \left\{ f(b) \in D_{f} \mid f(b) \ll_{f} f(a) \right\}$
    is directed. Then 
    \begin{align*}
    \bigvee \downset_{\ll_{f}} f(a)
    = 
    \bigvee \left\{ f(b) \in D_{f}  \mid b \ll f(a)\right\} 
    = 
    f\left(\bigvee \left\{ b \in D  \mid b \ll f(a)\right\} \right)
    = 
    f(a),
    \end{align*}
    where the second and the third equations follow from the Scott
    continuity and the idempotency of $f$, respectively.
    \smallskip

  \noindent\ref{prop:SplitCont2}.
    If $D$ is a continuous lattice with a $\vee$-semilattice
    structure $(D, 0, \vee)$, then
     $D_{f}$ admits a $\vee$-semilattice structure as follows:
      \begin{align}
        \label{eq:FiniteJoinsInSplit}
        0_{f} &\defeql f(0), &
        f(a) \vee_{f} f(b) &\defeql f(f(a) \vee f(b)).
      \end{align}

  \noindent\ref{prop:SplitCont3}.
  Immediate from \eqref{eq:FiniteJoinsInSplit}, noting that
  $f$ is a $\vee$-semilattice homomorphism.
\end{proof}

  Let $(S,\leq)$ be a poset.  An \emph{ideal completion} of a poset
  $(S,\leq)$ is the
  collection $\Ideals{S}$ of ideals of $S$ ordered by inclusion,
  where an \emph{ideal} is a downward closed and upward directed subset of $S$. 
  The poset $(\Ideals{S}, \subseteq)$ is a continuous domain:
  directed joins of ideals are unions; the way-below relation is
  characterised by
  \begin{equation}
    \label{eq:WayBelowIdeals}
    I \ll J \leftrightarrow \exists a \in J \left( I \subseteq
    \downset_{\leq} a \right),
  \end{equation}
  where $\downset_{\leq} a \defeql \left\{ b \in S \mid b \leq a
  \right\}$.
  An \emph{algebraic
  domain} is a continuous domain which can be expressed as the ideal
  completion of a poset.  An \emph{algebraic lattice} is a continuous
  lattice which can be expressed as the ideal completion of a
  $\vee$-semilattice.  For an algebraic lattice which is expressed as $\Ideals{S}$
  for some $\vee$-semilattice $(S, 0, \vee)$, its finite joins can be
  characterised by
\begin{align}
  \label{eq:FiniteJoinsIdeals}
  0 &\defeql \left\{ 0 \right\}, &
  I \vee J  &\defeql \bigcup_{a \in I, b \in J} \downset_{\leq} (a \vee
  b).
\end{align}

\pagebreak[2]
\begin{proposition}
  \label{prop:ScotRetract}
      \leavevmode
  \begin{enumerate}
    \item\label{prop:ScotRetract1}
  Every continuous domain is a Scott continuous retract of an algebraic
  domain.
    \item\label{prop:ScotRetract3}
  Every continuous lattice is a suplattice retract of an algebraic
  lattice.
  \end{enumerate}
\end{proposition}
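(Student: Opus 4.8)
The plan is to realise both retractions through the ideal completion. Given a continuous domain $D$, I would take the algebraic domain to be $\Ideals{D}$, the ideal completion of the underlying poset of $D$; when $D$ is a continuous lattice, $(D,0,\vee)$ is a $\vee$-semilattice, so $\Ideals{D}$ is then an algebraic lattice. Define a retraction $r \colon \Ideals{D} \to D$ and a section $s \colon D \to \Ideals{D}$ by
\[
  r(I) \defeql \bigvee I, \qquad s(x) \defeql \downset_{\ll} x = \left\{ y \in D \mid y \ll x \right\}.
\]
The map $r$ is well defined because each ideal is directed and $D$ is a dcpo (a suplattice in the lattice case). The first thing to record is that $s(x)$ is genuinely an ideal: it is downward closed since $y \le y' \ll x$ gives $y \ll x$, and it is directed because, fixing a directed $U \subseteq \downset_{\ll} x$ with $\bigvee U = x$ (continuity of $D$), any $y_1, y_2 \ll x$ both lie below a common member of $U$. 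Since $U \subseteq s(x) \subseteq \downset_{\le} x$ we also get $\bigvee s(x) = x$, that is, $r \circ s = \id_{D}$.

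It then remains to check continuity of the two maps. For $r$: it is monotone and, as directed joins in $\Ideals{D}$ are unions, $r\!\left(\bigcup_{k} I_{k}\right) = \bigvee \bigcup_{k} I_{k} = \bigvee_{k} \bigvee I_{k} = \bigvee_{k} r(I_{k})$, so $r$ is Scott continuous; in the lattice case $r$ is moreover left adjoint to $x \mapsto \downset_{\le} x$ (since $\bigvee I \le x \iff I \subseteq \downset_{\le} x$) and therefore preserves all joins, i.e.\ is a suplattice homomorphism.

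For $s$ I would argue the two cases separately. In case \ref{prop:ScotRetract1}, $s$ is monotone, so for directed $U$ the family $\{ s(u) \}_{u \in U}$ is directed with join $\bigcup_{u} s(u)$, and Scott continuity of $s$ reduces to $\downset_{\ll}\!\left(\bigvee U\right) = \bigcup_{u \in U} \downset_{\ll} u$. The inclusion $\supseteq$ is just monotonicity; for $\subseteq$, given $y \ll \bigvee U$ the interpolation property of $\ll$ yields $z$ with $y \ll z \ll \bigvee U$, whence $z \le u$ for some $u \in U$ and so $y \ll u$. In case \ref{prop:ScotRetract3} I would instead verify directly that $s$ preserves arbitrary joins. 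Writing $x = \bigvee_{k} x_{k}$, each $x_{k} = \bigvee \downset_{\ll} x_{k}$, so $x$ is the directed join of the finite joins of elements of $\bigcup_{k} \downset_{\ll} x_{k}$; consequently $y \ll x$ holds iff $y$ lies below some such finite join, which is exactly the condition defining membership of $y$ in the join $\bigvee_{k} \downset_{\ll} x_{k}$ taken in $\Ideals{D}$. Here one uses the standard fact that a finite join of elements way-below $x$ is again way-below $x$. Thus $s$ is a suplattice homomorphism.

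The main obstacle is precisely the continuity of the section $s$ in part \ref{prop:ScotRetract1}, which rests on the interpolation property $y \ll x \imp \exists z\,(y \ll z \ll x)$: this is the one point where the bare definition of continuous domain does not suffice, and the interpolation lemma must be invoked (or proved, by showing that $\bigcup_{w \ll x} \downset_{\ll} w$ is directed with join $x$). Everything else is routine verification, and in the lattice case interpolation is sidestepped altogether by computing with finite joins of the basis.
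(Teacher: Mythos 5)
Your proposal is correct and follows exactly the route the paper takes: the paper's proof consists of the single observation that every continuous domain (resp.\ continuous lattice) is a Scott continuous (resp.\ suplattice) retract of its ideal completion, deferring the verification to Johnstone, and your section $x \mapsto \downset_{\ll} x$ and retraction $I \mapsto \bigvee I$ are precisely that construction, with the details (including the reliance on interpolation for Scott continuity of the section) correctly worked out.
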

\begin{proof}
  Every continuous domain (or continuous lattice) is a Scott
  continuous (resp.\ suplattice) retract of its ideal completion. See
  Johnstone~\cite[Chapter~VII, Theorem~2.3]{johnstone-82}.
\end{proof}

The following construction plays a fundamental role in this paper.
\begin{defiC}[{\cite[Chapter 2, Exercise B]{FreydAbelCat}}]
  An \emph{idempotent} in a category $\Cat{C}$ is a morphism $f \colon
  A \to A$  such that $f \circ f = f$.  The \emph{Karoubi envelop} (or
  \emph{splitting of idempotents}) of $\Cat{C}$ is a category
  $\Karoubi{\Cat{C}}$ where objects are idempotents in $\Cat{C}$ and
  morphisms $h \colon (f \colon A \to A) \to (g \colon B \to B)$
  are morphisms $h \colon A \to B$ in $\Cat{C}$ such that $g \circ h =
  h = h \circ f$. 
\end{defiC}
An idempotent $f \colon A \to A$ in $\Cat{C}$ \emph{splits} if there
exists a pair of morphisms $r \colon A \to B$ and $s \colon B \to A$
such that $s \circ r = f$ and $r \circ s = \id_{B}$.
If $\Cat{C}$ is a full subcategory of $\Cat{D}$ where every idempotent
splits in $\Cat{D}$ and if every object in $\Cat{D}$ is a retract of an
object of $\Cat{C}$, then $\Cat{D}$ is equivalent to
$\Karoubi{\Cat{C}}$. 

\begin{table}[tb]
\begin{tabular}{ll}
  $\Cont$& 
    the category of continuous domains and Scott continuous
    functions\\
  $\AlgDom$& 
    the full subcategory of $\Cont$ consisting of algebraic domains\\
  $\ContLatS$&
    the full subcategory of $\Cont$ consisting of continuous
    lattices\\
  $\AlgLatS$&
    the full subcategory of $\Cont$ consisting of algebraic
    lattices \\
  $\ContLat$&
    the category of continuous lattices and suplattice
    homomorphisms\\
  $\AlgLat$&
    the full subcategory of $\ContLat$ consisting of algebraic
    lattices
\end{tabular}
\caption{Subcategories of continuous domains.}
\label{tab:CategoriesContDom}
\end{table}
\begin{theorem}
  \label{thm:ContContLatKaroubi}
Consider the subcategories of continuous domains in
Table~\ref{tab:CategoriesContDom}.
  \begin{enumerate}
    \item $\Cont$ is equivalent to $\Karoubi{\AlgDom}$.
    \item $\ContLatS$ is equivalent to $\Karoubi{\AlgLatS}$.
    \item $\ContLat$ is equivalent to $\Karoubi{\AlgLat}$.
  \end{enumerate}
\end{theorem}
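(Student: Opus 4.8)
The plan is to derive all three equivalences from the general categorical criterion recalled just after the definition of the Karoubi envelope: if $\Cat{C}$ is a full subcategory of $\Cat{D}$, every idempotent in $\Cat{D}$ splits, and every object of $\Cat{D}$ is a retract of an object of $\Cat{C}$, then $\Cat{D}$ is equivalent to $\Karoubi{\Cat{C}}$. For the three cases I would take $\Cat{D}$ to be $\Cont$, $\ContLatS$, $\ContLat$ and $\Cat{C}$ to be $\AlgDom$, $\AlgLatS$, $\AlgLat$, respectively. Fullness of each inclusion is immediate from the definitions in Table~\ref{tab:CategoriesContDom} (for the second case, $\AlgLatS$ is full in $\Cont$ and therefore full in the full subcategory $\ContLatS$). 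Thus the proof reduces to checking, in each case, the retract condition and the splitting condition.

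For the retract condition I would appeal to Proposition~\ref{prop:ScotRetract}. Part~\ref{prop:ScotRetract1} says every continuous domain is a Scott continuous retract of an algebraic domain, which settles the first case. Part~\ref{prop:ScotRetract3} says every continuous lattice is a suplattice retract of an algebraic lattice; since every suplattice homomorphism is Scott continuous, this same retraction also witnesses a Scott continuous retract, so it settles both the second and the third case.

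For the splitting condition, given an idempotent $f \colon D \to D$ in $\Cat{D}$ I would factor it through its image $D_{f} = \left\{ f(a) \mid a \in D \right\}$ by the corestriction $r \colon D \to D_{f}$, $a \mapsto f(a)$, and the inclusion $s \colon D_{f} \to D$. Idempotency yields $s \circ r = f$ and $r \circ s = \id_{D_{f}}$ immediately, so only two things remain: that $D_{f}$ is an object of $\Cat{D}$, and that $r$ and $s$ are morphisms of the required kind. Proposition~\ref{prop:SplitCont}\ref{prop:SplitCont1}--\ref{prop:SplitCont2} supplies the former, giving $D_{f}$ a continuous domain in the first case and a continuous lattice in the other two. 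For the morphism condition in the first two cases one checks that the directed joins of $D_{f}$ agree with those of $D$: for a directed family $\left\{ f(a_{i}) \right\}$ one has $f\left( \bigvee_{i} f(a_{i}) \right) = \bigvee_{i} f(f(a_{i})) = \bigvee_{i} f(a_{i})$ by Scott continuity and idempotency, so $\bigvee_{i} f(a_{i}) \in D_{f}$ and this is the join in $D_{f}$; hence both $s$ and $r = f$ are Scott continuous.

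The delicate point, and where I expect the main effort, is the third case, in which $r$ and $s$ must be suplattice homomorphisms rather than merely Scott continuous. Here $f$ is a suplattice homomorphism, and the identical computation, now applied to an arbitrary family, $f\left( \bigvee_{i} f(a_{i}) \right) = \bigvee_{i} f(f(a_{i})) = \bigvee_{i} f(a_{i})$, shows that $D_{f}$ is closed under all joins formed in $D$ and that these are the joins of $D_{f}$. Consequently the inclusion $s$ preserves arbitrary joins and $r = f$ preserves them by hypothesis, which upgrades Proposition~\ref{prop:SplitCont}\ref{prop:SplitCont3} from the assertion that $D_{f}$ is a sub-semilattice to the assertion that $s$ is a genuine suplattice homomorphism. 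With the splitting condition established in all three cases, the general criterion delivers the three equivalences.
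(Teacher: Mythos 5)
Your proof is correct and follows exactly the route the paper takes: its two-sentence proof cites Proposition~\ref{prop:SplitCont} for the splitting of idempotents and Proposition~\ref{prop:ScotRetract} for the retract condition, then invokes the general criterion stated after the definition of the Karoubi envelope. You merely make explicit the details the paper leaves implicit, in particular the verification that the image factorisation $r,s$ consists of morphisms of the appropriate kind (including the upgrade of Proposition~\ref{prop:SplitCont}\ref{prop:SplitCont3} to closure of $D_{f}$ under arbitrary joins), and these checks are all sound.
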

\begin{proof}
By Proposition \ref{prop:SplitCont}, every idempotent in
$\Cont$, $\ContLatS$, and $\ContLat$ splits. The desired conclusion
follows from Proposition~\ref{prop:ScotRetract}.
\end{proof}

\section{Proximity \texorpdfstring{$\vee$}{join}-semilattices}\label{sec:SPSuplat}
We introduce predicative characterisations of the Karoubi envelops of the
category of algebraic domains and its subcategories of algebraic
lattices. 
The left column of Table \ref{tab:CategoriesPxPos} shows some of
the major structures introduced in this section. The categories
of these structures provide predicative characterisations of the duals of
the Karoubi envelops of the categories on the middle column as well as 
the duals of the categories on the right column (cf.\ Theorem
\ref{thm:ContContLatKaroubi}).\footnote{$\LKFrm$ in Table \ref{tab:CategoriesPxPos}
denotes the category of locally compact frames, i.e., the dual
of the category of locally compact locales (cf.\ Section \ref{sec:LocalizedPxSL}).}

\begin{table}[t]
  \renewcommand{\arraystretch}{1.2}
\begin{tabular}{ccc}
  %
  %
  Predicative characterisation
  & Karoubi envelop & Domain theoretic dual\\
  \hline
  \rule[4pt]{0pt}{14pt}
  \parbox{14em}{proximity posets \\ 
                 + approximable relations}
  & 
  $\AlgDom$ 
  & 
  $\Cont$ 
  \\
  \rule[6pt]{0pt}{14pt}
  \parbox{14em}{proximity $\vee$-semilattices \\
                 + approximable relations} 
  & 
  $\AlgLatS$
  & 
  $\ContLatS$
  \\
  \rule[6pt]{0pt}{14pt}
  \parbox{14em}{strong proximity $\vee$-semilattices \\
  + join-approximable relations} 
  & 
  $\AlgLat$
  & 
  $\ContLat$
  \\
  \rule[6pt]{0pt}{14pt}
  \parbox{14em}{localized strong prox.\
    $\vee$-semilat.\ \\ + proximity relations}
  & ---
  & $\LKFrm$
\end{tabular}
\caption{Main structures in Section~\ref{sec:SPSuplat}.}
\label{tab:CategoriesPxPos}
\end{table}
\subsection{Proximity posets and proximity 
  \texorpdfstring{$\vee$}{join}-semilattices}
\label{sec:ProxPoset}
Our predicative characterisation of continuous domains rests on an elementary
characterisation of Scott continuous functions between algebraic
domains. 

First, we recall that the ideal completion of a poset is the free dcpo on that poset.
\begin{lemma}
  \label{lem:MonotoneScottCont}
  For each poset $(S,\leq)$, there is a monotone (i.e.\ order
  preserving) function  $i_{S} \colon S \to \Ideals{S}$ 
      defined by
      \begin{equation}\label{eq:InjectionGenerator}
        i_{S}(a) \defeql \downarrow_{\leq} a
      \end{equation}
      with the following universal property:
   for any monotone function $f \colon
   S \to D$ to a dcpo $D$, there exists a unique Scott continuous function
   $\overline{f} \colon \Ideals{S} \to D$ such that
   $\overline{f} \circ i_{S} = f$.
\end{lemma}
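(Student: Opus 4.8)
The plan is to exhibit the extension concretely and then verify the three clauses of the universal property. That $i_{S}$ is monotone is immediate from the definition \eqref{eq:InjectionGenerator}: if $a \leq b$, then every element below $a$ is below $b$, so $\downset_{\leq} a \subseteq \downset_{\leq} b$, i.e.\ $i_{S}(a) \subseteq i_{S}(b)$.

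Given a monotone $f \colon S \to D$, I would define the candidate extension by
\[
  \overline{f}(I) \defeql \bigvee_{a \in I} f(a) \qquad (I \in \Ideals{S}).
\]
The first thing to justify is that this join exists. Since $I$ is an ideal it is inhabited and upward directed, and since $f$ is monotone the image $\left\{ f(a) \mid a \in I \right\}$ is therefore a directed subset of $D$; as $D$ is a dcpo, the join $\bigvee_{a \in I} f(a)$ exists, so $\overline{f}$ is well defined. Monotonicity of $\overline{f}$ is then clear, since $I \subseteq J$ gives $\left\{ f(a) \mid a \in I \right\} \subseteq \left\{ f(a) \mid a \in J \right\}$.

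Next I would check the factorisation $\overline{f} \circ i_{S} = f$. Here $\overline{f}(i_{S}(a)) = \bigvee_{b \leq a} f(b)$, and because $a \in \downset_{\leq} a$ and $f$ is monotone, $f(a)$ is the greatest element of $\left\{ f(b) \mid b \leq a \right\}$, hence equals this join. For Scott continuity, recall that directed joins in $\Ideals{S}$ are computed as unions. Thus for a directed family $\mathcal{D} \subseteq \Ideals{S}$ one has $\bigvee \mathcal{D} = \bigcup \mathcal{D}$, and the required identity $\overline{f}(\bigcup \mathcal{D}) = \bigvee_{I \in \mathcal{D}} \overline{f}(I)$ reduces to the standard fact that a join taken over a directed union of subsets decomposes as the directed join of the partial joins.

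Finally, uniqueness follows from Scott continuity together with the observation that every ideal is recovered from its principal ideals: since $I$ is directed, the family $\left\{ \downset_{\leq} a \mid a \in I \right\}$ is directed in $\Ideals{S}$ and $I = \bigcup_{a \in I} \downset_{\leq} a = \bigvee_{a \in I} i_{S}(a)$. Any Scott continuous $g$ with $g \circ i_{S} = f$ must then satisfy $g(I) = \bigvee_{a \in I} g(i_{S}(a)) = \bigvee_{a \in I} f(a) = \overline{f}(I)$, so $g = \overline{f}$. The only steps that require genuine care are the well-definedness of $\overline{f}$ (the directedness of the image $\left\{ f(a) \mid a \in I \right\}$) and the rearrangement of joins in the Scott continuity check; everything else is routine.
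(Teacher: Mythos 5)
Your proof is correct and follows exactly the route the paper takes: the paper simply cites Vickers and records the same extension formula $\overline{f}(I) = \bigvee_{a \in I} f(a)$, leaving the verification (directedness of the image of an ideal under a monotone map, the factorisation, Scott continuity via directed joins being unions, and uniqueness from $I = \bigvee_{a \in I} i_{S}(a)$) implicit. You have supplied those routine checks correctly.
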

\begin{proof}
See Vickers \cite[Proposition 9.1.2 (v)]{vickers1989topology}.  The
unique extension $\overline{f} \colon \Ideals{S} \to D$ is defined by
\begin{equation}
  \label{eq:UniquExntension}
  \overline{f}(I) \defeql \bigvee_{a \in I} f(a).
  \qedhere
\end{equation}
\end{proof}

For posets $(S,\leq)$ and $(S', \leq')$, a monotone function $f \colon
S' \to \Ideals{S}$ corresponds to a relation between $S$ and $S'$ as
characterised below.
\begin{definition}
  \label{def:ApproximableRelation}
  Let $(S,\leq)$ and $(S', \leq')$ be posets. A relation $r \subseteq
  S \times S'$ is \emph{approximable} if
  \begin{enumerate}[align=left]
    \myitem[(AppI)]\label{def:approximableI}
    $r^{-} b$
      is an ideal for each $b \in S'$,
    \myitem[(AppU)]\label{def:approximableU}
    $r a$
     is an upward closed subset for each $a \in S$.
  \end{enumerate}
\end{definition}

\begin{proposition}
  \label{prop:BijApproxScottCont}
  Let $(S,\leq)$ and $(S',\leq')$ be posets. There exists a bijective
  correspondence between approximable relations from $S$ to $S'$ and
  Scott continuous functions from $\Ideals{S'}$ to $\Ideals{S}$.
  Through this correspondence, the identity function
  on $\Ideals{S}$ corresponds to the order $\leq$ on $S$
  and the composition of two Scott continuous functions
  contravariantly corresponds
  to the relational composition of the approximable relations.
\end{proposition}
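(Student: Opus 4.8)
The plan is to construct both directions of the correspondence explicitly using the universal property of the ideal completion (Lemma~\ref{lem:MonotoneScottCont}), then verify that they are mutually inverse, and finally read off the identity and composition clauses from the explicit formulas. Given an approximable relation $r \subseteq S \times S'$, I would define a function $g_{r} \colon S' \to \Ideals{S}$ by $g_{r}(b) \defeql r^{-} b$. Condition \ref{def:approximableI} guarantees that $g_{r}(b)$ is an ideal, so $g_{r}$ is well typed, and condition \ref{def:approximableU} says precisely that $b \leq' b'$ implies $r^{-} b \subseteq r^{-} b'$, so $g_{r}$ is monotone. Lemma~\ref{lem:MonotoneScottCont} then yields a unique Scott continuous extension $\overline{g_{r}} \colon \Ideals{S'} \to \Ideals{S}$, and by \eqref{eq:UniquExntension} together with the fact that directed joins in $\Ideals{S}$ are unions, one computes $\overline{g_{r}}(J) = \bigcup_{b \in J} r^{-} b = r^{-} J$.

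Conversely, given a Scott continuous $h \colon \Ideals{S'} \to \Ideals{S}$, I would define a relation $r_{h} \subseteq S \times S'$ by $a \mathrel{r_{h}} b \defeqiv a \in h(i_{S'}(b))$. Then $r_{h}^{-} b = h(i_{S'}(b))$ is an ideal, giving \ref{def:approximableI}; and monotonicity of $h$ (which follows from Scott continuity) together with the implication $b \leq' b' \imp i_{S'}(b) \subseteq i_{S'}(b')$ gives \ref{def:approximableU}.

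For mutual inverseness: starting from $h$, the function $g_{r_{h}} = h \circ i_{S'}$ has $h$ itself as a Scott continuous extension, so $\overline{g_{r_{h}}} = h$ by the uniqueness clause of Lemma~\ref{lem:MonotoneScottCont}. Starting from $r$, the recovered relation satisfies $a \mathrel{r_{\overline{g_{r}}}} b \iff a \in \overline{g_{r}}(i_{S'}(b)) = r^{-}(\downset_{\leq'} b)$, i.e.\ $\exists b' \leq' b$ with $a \mathrel{r} b'$; this collapses back to $a \mathrel{r} b$ precisely because \ref{def:approximableU} makes $ra$ upward closed. This last step, using \ref{def:approximableU} to discharge the existential introduced by passing through $\downset_{\leq'} b$, is the one place that genuinely invokes the approximability axioms, and it is where I expect the only real subtlety to lie; everything else is routine bookkeeping.

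Finally, the identity and composition clauses are direct computations. The relation corresponding to $\id_{\Ideals{S}}$ is $\left\{ (a,b) \mid a \in i_{S}(b) \right\}$, and since $i_{S}(b) = \downset_{\leq} b$ this is exactly $\leq$. For composition, given $r \subseteq S \times S'$ and $s \subseteq S' \times S''$ with associated functions $h_{r}(J') = r^{-} J'$ and $h_{s}(J'') = s^{-} J''$, the composite $h_{r} \circ h_{s}$ sends $J''$ to $r^{-}(s^{-} J'')$, and the relation it induces pairs $a$ with $c$ when $a \in r^{-}(s^{-}(\downset_{\leq''} c)) = (s \circ r)^{-} c$, again using \ref{def:approximableU} for $s$ to simplify $s^{-}(\downset_{\leq''} c)$ to $s^{-} c$; this relation is exactly $s \circ r$. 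Since $h_{s}$ is applied before $h_{r}$ whereas the relational composite is $s \circ r$, the correspondence reverses the direction of composition, as claimed.
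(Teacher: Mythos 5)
Your proposal is correct and follows essentially the same route as the paper: both define the correspondence by $b \mapsto r^{-}b$ extended via the universal property of Lemma~\ref{lem:MonotoneScottCont} in one direction and by $a \mathrel{r_{f}} b \defeqiv a \in f(i_{S'}(b))$ in the other. The paper leaves the mutual-inverseness and the identity/composition clauses as ``straightforward to check''; your write-up simply supplies those details, correctly identifying \ref{def:approximableU} as the axiom that collapses $r^{-}(\downset_{\leq'} b)$ back to $r^{-}b$.
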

\begin{proof}
An approximable relation $r \subseteq S \times S'$
bijectively corresponds to a monotone function $f_{r} \colon S' \to
\Ideals{S}$ defined by
\begin{equation}
  \label{eq:ApproxToScott}
  f_{r}(b) \defeql r^{-} b,
\end{equation}
and hence to a Scott continuous function $\overline{f_{r}} \colon
\Ideals{S'} \to \Ideals{S}$ defined in \eqref{eq:UniquExntension}.
%
%
Conversely, each Scott continuous function $f \colon \Ideals{S'} \to \Ideals{S}$
determines an approximable relation $r_{f} \subseteq S \times S'$ by 
\begin{equation}
  \label{eq:ScottToApprox}
  a \mathrel{r_{f}} b \defeqiv a \in f(i_{S'}(b)).
\end{equation}
It is straightforward to check that the above correspondence is bijective.
The second statement is also straightforward to check.
\end{proof}

Let $\AlgPxPos$ be the category in which objects are posets and
morphisms are approximable relations
between posets: the identity on a poset is its underlying order;
the composition of two approximable relations is the relational composition.

\begin{proposition}\label{prop:AlgPxPos}
  $\AlgPxPos$ is dually equivalent to $\AlgDom$.
\end{proposition}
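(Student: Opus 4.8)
The plan is to exhibit a contravariant functor from $\AlgPxPos$ to $\AlgDom$ and verify that it is full, faithful, and essentially surjective; equivalently, I would build a covariant functor $F \colon \AlgPxPos \to \Opposite{\AlgDom}$ and show that it is an equivalence of categories. On objects, $F$ should send a poset $(S,\leq)$ to its ideal completion $\Ideals{S}$, which is an algebraic domain by the very definition of that notion. On a morphism $r \colon S \to S'$ (an approximable relation), $F$ should send $r$ to the Scott continuous function $\overline{f_{r}} \colon \Ideals{S'} \to \Ideals{S}$ supplied by Proposition~\ref{prop:BijApproxScottCont}; note the reversal of direction, which is precisely why the resulting equivalence is a \emph{dual} one.

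First I would confirm that $F$ is indeed a functor, which is immediate from the last sentence of Proposition~\ref{prop:BijApproxScottCont}. The identity morphism on $S$ in $\AlgPxPos$ is the order $\leq$, and this corresponds to the identity Scott continuous function on $\Ideals{S}$; moreover relational composition of approximable relations corresponds contravariantly to composition of Scott continuous functions. Hence $F$ preserves identities and sends a composite to the (reversed) composite, so it is a well-defined functor into $\Opposite{\AlgDom}$.

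Next, fullness and faithfulness follow at once from the bijective correspondence established in Proposition~\ref{prop:BijApproxScottCont}: for each pair of posets $S,S'$, the assignment $r \mapsto \overline{f_{r}}$ is a bijection between the approximable relations from $S$ to $S'$ and the Scott continuous functions from $\Ideals{S'}$ to $\Ideals{S}$, that is, between $\AlgPxPos(S,S')$ and $\AlgDom(\Ideals{S'},\Ideals{S}) = \Opposite{\AlgDom}(\Ideals{S},\Ideals{S'})$. For essential surjectivity I would simply invoke the definition of an algebraic domain: any object $D$ of $\AlgDom$ can by definition be expressed as the ideal completion of some poset $S$, so $D \cong \Ideals{S} = F(S)$ in $\AlgDom$. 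Assembling these three properties yields that $F$ is an equivalence $\AlgPxPos \simeq \Opposite{\AlgDom}$, i.e.\ a dual equivalence between $\AlgPxPos$ and $\AlgDom$.

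I do not expect a genuine obstacle here. All of the analytic content, namely the hom-set bijection together with its compatibility with identities and composition, is already packaged into Proposition~\ref{prop:BijApproxScottCont}, and essential surjectivity is definitional. The only point demanding care is the bookkeeping of the direction-reversal, so that the contravariance is tracked consistently and the statement is correctly phrased as a dual equivalence rather than a covariant one.
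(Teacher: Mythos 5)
Your proposal is correct and is exactly the argument the paper intends: its proof of this proposition is simply ``Immediate from Proposition~\ref{prop:BijApproxScottCont}'', and you have spelled out the same route --- the functor $S \mapsto \Ideals{S}$, fullness and faithfulness from the hom-set bijection of that proposition, and essential surjectivity from the definition of algebraic domain. No gaps; you have merely made explicit the details the paper leaves to the reader.
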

\begin{proof}
  Immediate from Proposition~\ref{prop:BijApproxScottCont}.
\end{proof}

By Theorem \ref{thm:ContContLatKaroubi},
we have the following characterisation of the category of continuous
domains.
\begin{theorem}
  \label{prop:SplitAlgPxPos}
   $\Karoubi{\AlgPxPos}$ is dually equivalent to $\Cont$. 
\end{theorem}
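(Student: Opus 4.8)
The plan is to deduce this theorem formally from the two facts already established: Proposition~\ref{prop:AlgPxPos}, which gives a dual equivalence $\AlgPxPos \simeq \Opposite{\AlgDom}$, and Theorem~\ref{thm:ContContLatKaroubi}, which gives $\Karoubi{\AlgDom} \simeq \Cont$. The only additional ingredient is a general principle describing how the Karoubi envelope interacts with equivalences and with the passage to opposite categories, so that these two inputs can simply be chained together.

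First I would record that $\Karoubi{(-)}$ is $2$-functorial, and in particular carries an equivalence to an equivalence: any functor $F \colon \Cat{C} \to \Cat{D}$ induces $\Karoubi{F}$ sending an idempotent $e \colon A \to A$ to $Fe \colon FA \to FA$ and a Karoubi morphism $h$ to $Fh$, and a natural isomorphism $F \cong G$ lifts to $\Karoubi{F} \cong \Karoubi{G}$; hence $\Cat{C} \simeq \Cat{D}$ implies $\Karoubi{\Cat{C}} \simeq \Karoubi{\Cat{D}}$. Second I would verify the canonical isomorphism $\Karoubi{\Opposite{\Cat{C}}} \cong \Opposite{\Karoubi{\Cat{C}}}$. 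This is a direct unwinding of the definitions: an endomorphism is idempotent in $\Opposite{\Cat{C}}$ precisely when it is idempotent in $\Cat{C}$, and a Karoubi morphism $(e) \to (e')$ in $\Opposite{\Cat{C}}$ is, after reversing composition, exactly a Karoubi morphism $(e') \to (e)$ in $\Cat{C}$, which is by definition a morphism $(e) \to (e')$ in $\Opposite{\Karoubi{\Cat{C}}}$. The identity-on-objects assignment is then an isomorphism of categories.

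With these in place the conclusion is a short chain. Applying $\Karoubi{(-)}$ to the dual equivalence of Proposition~\ref{prop:AlgPxPos} and using the first principle gives $\Karoubi{\AlgPxPos} \simeq \Karoubi{\Opposite{\AlgDom}}$; the second principle rewrites the right-hand side as $\Opposite{\Karoubi{\AlgDom}}$; and Theorem~\ref{thm:ContContLatKaroubi}, together with the fact that $\Opposite{(-)}$ preserves equivalences, identifies this with $\Opposite{\Cont}$. Composing yields $\Karoubi{\AlgPxPos} \simeq \Opposite{\Cont}$, which is exactly the asserted dual equivalence. The one point that requires genuine care, rather than routine bookkeeping, is the verification that $\Karoubi{(-)}$ respects natural isomorphisms, so that an \emph{equivalence} (not merely an isomorphism) of categories is sent to an equivalence; this is what lets the argument proceed from the dual equivalence of Proposition~\ref{prop:AlgPxPos} without demanding a strict isomorphism of categories. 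Once the action of $\Karoubi{(-)}$ on natural transformations is spelled out this is immediate, and everything else reduces to the definitional unwinding of the opposite category described above.
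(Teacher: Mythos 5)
Your proposal is correct and follows essentially the same route as the paper, which derives the theorem directly from Proposition~\ref{prop:AlgPxPos} and Theorem~\ref{thm:ContContLatKaroubi} without further comment. You have merely made explicit the routine bookkeeping the paper leaves implicit, namely that $\Karoubi{(-)}$ preserves equivalences and commutes with $\Opposite{(-)}$, and both verifications are carried out correctly.
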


%
%
The objects and
morphisms of $\Karoubi{\AlgPxPos}$ can be explicitly described  as follows.
\begin{definition}
A \emph{proximity poset} is a structure $(S, \leq, \prox)$
where $(S, \leq)$ is a poset and $\prox$ is a relation on $S$
satisfying 
\begin{enumerate}
  \item
  $\downset_{\prox} a \defeql \left\{ b \in S \mid b \prox a \right\}$
    is a rounded ideal of $S$,

  \item
  $\upset_{\prox} a \defeql \left\{ b \in S \mid b
    \succ a \right\}$
    is a rounded upward closed subset $S$.
\end{enumerate}
Here, an ideal $I$ is \emph{rounded} if 
\[
  a \in I \leftrightarrow \exists b \succ a \left( b \in I\right).
\]
Similarly, an upward closed subset $U \subseteq S$ is \emph{rounded} if
\[
  a \in U \leftrightarrow \exists b \prox a \left( b \in U\right).
\]
We write $(S, \prox)$ or simply $S$ for a proximity poset, leaving the
underlying order of $S$ implicit.
\end{definition}

\begin{definition}
  \label{def:ApproximableRelationRounded}
  Let $(S,\prox)$ and  $(S',\prox')$ be proximity posets.
  A relation $r \subseteq S \times S'$ is \emph{approximable} if
  $r$ is an approximable relation between the underlying posets of $S$ and $S'$
  such that ${\mathrel{r} \circ \prox} = r = {\prox' \circ
    \mathrel{r}}$. 
  Equivalently, a relation $r \subseteq S \times S'$ is approximable if 
  \begin{enumerate}
    \item
      $r^{-} b$ is a rounded ideal for each $b \in S'$,
    \item 
      $r a$ is a rounded upward closed subset for each $a \in S$.
    \end{enumerate}
\end{definition}

Henceforth, we write $\PxPos$ for $\Karoubi{\AlgPxPos}$: objects
of $\PxPos$ are proximity posets and morphisms between them are
approximable relations. The identity on a proximity poset $(S, \prox)$
is $\prox$; the composition of two approximable relations is the relational
composition. Note that $\AlgPxPos$ is a full subcategory of $\PxPos$,
where each poset $(S, \leq)$ is identified with a proximity poset with
$\leq$ as its idempotent relation. Thus, the terminology
\emph{approximable relation} for morphisms of $\PxPos$ is consistent
with that of $\AlgPxPos$.

\begin{remark}
  \label{rem:ConnectionToRstructure}
  \leavevmode
\begin{enumerate}
  \item 
  The notion of  proximity poset is essentially equivalent to that of
  \emph{abstract basis}~\cite[Definition~2.2.20]{abramsky1994domain}
  or \emph{R-structure}~\cite{SymthEffectivelyGivenDomain}, 
  %
  %
  where the morphisms between abstract bases are also characterised as
  approximable relations~\cite[Definition~2.2.27]{abramsky1994domain}.
  An abstract basis lacks the underlying poset structure on $S$. In this
  paper, we include the poset structure to stress the fact that every continuous
  domain is a retract of an algebraic domain, which is represented by
  that poset structure. 

  \item 
  Another closely related notion is that of \emph{infosys} and
  approximable mapping by Vickers~\cite[Definition~2.1 and
  Definition~2.18]{Infosys}.  Constructively, however, the notion of
  infosys seems to be more general than that of continuous domain,
  and hence does not seem to be equivalent to proximity poset.
  Nevertheless, some of the conjectures Vickers has made in the
  context of infosys are proved to be correct for proximity posets;
  see Section~\ref{sec:LowerPowerLoc} and
  Section~\ref{sec:DoublePowerLoc}.
\end{enumerate}
\end{remark}

To obtain a similar characterisation of continuous lattices,
consider a full subcategory $\AlgPxSLatS$ of  $\AlgPxPos$ consisting
of $\vee$-semilattices. Then, the dual equivalence in Proposition~\ref{prop:AlgPxPos}
restricts to a dual equivalence between
$\AlgPxSLatS$ and $\AlgLatS$. 
Hence, by Theorem \ref{thm:ContContLatKaroubi}, 
we have the following characterisation of the category of continuous
lattices and Scott continuous functions (cf.\ Table~\ref{tab:CategoriesPxPos}).
\begin{theorem}
  \label{thm:SplitAlgPxSLatS}
$\Karoubi{\AlgPxSLatS}$ is dually equivalent to $\ContLatS$. 
\end{theorem}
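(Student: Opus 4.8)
The plan is to obtain the result by transporting the dual equivalence of Proposition~\ref{prop:AlgPxPos} along the Karoubi-envelope construction and then combining it with Theorem~\ref{thm:ContContLatKaroubi}. Concretely, the contravariant functor $S \mapsto \Ideals{S}$ witnessing $\AlgPxPos \simeq \Opposite{\AlgDom}$ should be shown to restrict to a dual equivalence $\AlgPxSLatS \simeq \Opposite{\AlgLatS}$, after which one applies the Karoubi envelope to both sides.

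First I would verify the claim made just before the statement, that the dual equivalence restricts to the subcategories of $\vee$-semilattices and algebraic lattices. The key point is that a poset $(S,\leq)$ carries a $\vee$-semilattice structure precisely when $\Ideals{S}$ is an algebraic lattice. For the forward direction, if $(S,0,\vee)$ is a $\vee$-semilattice, then the finite joins of $\Ideals{S}$ are given by \eqref{eq:FiniteJoinsIdeals}, so $\Ideals{S}$ is a continuous lattice arising as the ideal completion of a $\vee$-semilattice, hence algebraic; this shows the functor maps $\AlgPxSLatS$ into $\AlgLatS$. Conversely, every algebraic lattice is by definition of this form, which gives essential surjectivity of the restricted functor onto $\AlgLatS$. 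Since $\AlgPxSLatS$ and $\AlgLatS$ are full subcategories with morphisms inherited from $\AlgPxPos$ and $\AlgDom$, fullness and faithfulness are inherited automatically, so the restriction is again a dual equivalence.

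Next I would invoke two general categorical facts: the Karoubi envelope preserves equivalences of categories, and it commutes with passage to opposites, i.e.\ $\Karoubi{\Opposite{\Cat{C}}} \cong \Opposite{(\Karoubi{\Cat{C}})}$. The latter holds because the defining condition $f \circ f = f$ of an idempotent is self-dual, while the two side conditions $g \circ h = h$ and $h = h \circ f$ defining the morphisms of the Karoubi envelope are simply interchanged under taking opposites. Applying these to the dual equivalence $\AlgPxSLatS \simeq \Opposite{\AlgLatS}$ established above yields $\Karoubi{\AlgPxSLatS} \simeq \Opposite{(\Karoubi{\AlgLatS})}$.

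Finally, Theorem~\ref{thm:ContContLatKaroubi}(2) gives $\ContLatS \simeq \Karoubi{\AlgLatS}$, so $\Opposite{(\Karoubi{\AlgLatS})} \simeq \Opposite{\ContLatS}$; chaining the two equivalences shows that $\Karoubi{\AlgPxSLatS}$ is dually equivalent to $\ContLatS$, as required. I expect the only nonroutine step to be the first one, namely confirming that the equivalence of Proposition~\ref{prop:AlgPxPos} genuinely restricts to the subcategories of $\vee$-semilattices and algebraic lattices; once the characterisation of algebraic lattices as ideal completions of $\vee$-semilattices is in hand, the remainder is a formal consequence of the functoriality of the Karoubi envelope.
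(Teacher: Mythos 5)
Your proposal is correct and follows essentially the same route as the paper, which likewise observes that the dual equivalence of Proposition~\ref{prop:AlgPxPos} restricts to one between $\AlgPxSLatS$ and $\AlgLatS$ (via the characterisation of algebraic lattices as ideal completions of $\vee$-semilattices and \eqref{eq:FiniteJoinsIdeals}) and then invokes Theorem~\ref{thm:ContContLatKaroubi}. The only difference is that you make explicit the standard facts that the Karoubi envelope preserves equivalences and commutes with taking opposites, which the paper leaves implicit.
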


%
%
The objects of $\Karoubi{\AlgPxSLatS}$ can be explicitly described as follows.
\begin{definition}
A \emph{proximity $\vee$-semilattice} is a proximity poset whose
underlying poset is a $\vee$-semilattice.
We write $(S, 0, \vee, \prox)$  or simply  $S$  for a proximity
$\vee$-semilattice, where $(S, 0, \vee)$ is a $\vee$-semilattice and $(S,
\prox)$ is a proximity poset.
\end{definition}

Henceforth, we write $\PxJLat$ for $\Karoubi{\AlgPxSLatS}$, which
forms a full subcategory of $\PxPos$ consisting of 
proximity $\vee$-semilattices.

\begin{remark}
It is well known that $\ContLatS$ is equivalent to the category of
injective $\mathrm{T}_{0}$-spaces (cf.\ Scott~\cite{ScottContLatt}).
Thus, $\PxJLat$ can be seen as a predicative characterisation of the
latter category. 
\end{remark}

%
%
\subsection{Representation of continuous domains}
By Theorem~\ref{prop:SplitAlgPxPos}, each proximity poset $(S,
\prox)$ represents a continuous domain. In the case of continuous
domain, the equivalence in Theorem \ref{thm:ContContLatKaroubi} is
mediated by the canonical splitting of idempotents of $\AlgDom$ by the
image factorisation described in Proposition~\ref{prop:SplitCont}.
This suggests that the continuous domain represented by $(S,\prox)$ arises as
the image of $\Ideals{S}$ under the idempotent Scott continuous
function on $\Ideals{S}$ induced by $\prox$ (cf.\
Proposition~\ref{prop:BijApproxScottCont}).  This image, which we
denote by
$\RIdeals{S}$, is the collection of ideals of the form 
\[
  \downset_{\prox}I
  \defeql \prox I = \left\{ a \in S \mid \exists b \in I \left ( a \prox b \right)
  \right\}
\]
for some $I \in \Ideals{S}$. By the idempotency of $\prox$, these
ideals are exactly the rounded ideals of $S$ (hence the notation
$\RIdeals{S}$).
By \eqref{eq:WayBelowSplit} and \eqref{eq:WayBelowIdeals},
the way-below relation in $\RIdeals{S}$ can be characterised by
\begin{equation*}
  I \ll J \leftrightarrow \exists a \in J \left( I \subseteq
  \downset_{\prox} a \right).
\end{equation*}
Furthermore, if $(S,0,\vee, \prox)$ is a proximity $\vee$-semilattice, then 
$\RIdeals{S}$ has finite joins by
Proposition~\ref{thm:SplitAlgPxSLatS}.
By \eqref{eq:FiniteJoinsInSplit} and \eqref{eq:FiniteJoinsIdeals},
these joins can be characterised by
\begin{align}
  0 &\defeql \downset_{\prox} 0,
  &
  I \vee J &\defeql
  \bigcup_{a \in I, b \in J} \downset_{\prox}\! \left(a \vee b \right).
\end{align}

By Theorem~\ref{prop:SplitAlgPxPos}, each approximable
relation between proximity posets $S$ and $S'$ represents
a Scott continuous function between $\RIdeals{S'}$ and $\RIdeals{S}$.
This is analogous to the case of posets;
Proposition~\ref{prop:BijApproxScottCont} and
Lemma~\ref{lem:MonotoneScottCont}, which are stated in terms of posets
and their ideals, can be restated in terms of proximity posets and
their rounded ideals as follows: each approximable
relation $r \colon S \to S'$ bijectively corresponds
to a Scott continuous function $\overline{f_{r}} \colon \RIdeals{S'} \to
\RIdeals{S}$ given by 
\begin{equation}
  \label{eq:ApproxToScottPxPos}
  \overline{f_{r}}(I) \defeql r^{-}I,
\end{equation}
and this correspondence preserves identity and composition of
$\PxPos$. 
This can be established as in~Proposition~\ref{prop:BijApproxScottCont}
using the Lemma~\ref{lem:DcpoInterpretationScottCont} below in place of
Lemma~\ref{lem:MonotoneScottCont}.
Here, a \emph{dcpo interpretation} of a proximity poset $(S, \prox)$ in
a dcpo $D$ is a monotone function $f \colon S \to D$ such that $f(a) =
\bigvee_{b \prox a} f(b)$. Note that $\downset_{\prox} a$ is directed
so that $\bigvee_{b \prox a} f(b)$ exists.
\begin{lemma}
  \label{lem:DcpoInterpretationScottCont}
  For each proximity poset $(S,\prox)$, there is a dcpo interpretation
  $i_{S} \colon S \to \RIdeals{S}$ 
      defined by
      \begin{equation*}
        i_{S}(a) \defeql \downarrow_{\prox} a
      \end{equation*}
      with the following universal property:
   for any dcpo interpretation  $f \colon
   S \to D$ in a dcpo $D$, there exists a unique Scott continuous function
   $\overline{f} \colon \RIdeals{S} \to D$ such that
   $\overline{f} \circ i_{S} = f$.
\end{lemma}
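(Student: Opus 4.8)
The plan is to follow the proof of Lemma~\ref{lem:MonotoneScottCont} almost verbatim, replacing the poset $(S,\leq)$ by the proximity poset $(S,\prox)$, ordinary ideals by rounded ideals, and the down-set $\downset_{\leq} a$ by $\downset_{\prox} a$. First I would record two facts that let $\RIdeals{S}$ behave like $\Ideals{S}$: (i) a directed union of rounded ideals is again a rounded ideal, so that directed joins in $\RIdeals{S}$ are computed as unions, exactly as in $\Ideals{S}$; and (ii) the relation $\prox$ is idempotent, ${\prox}\circ{\prox} = {\prox}$, which is merely a restatement of the roundedness of $\downset_{\prox} a$. Both are immediate from the definition of proximity poset together with a short check that roundedness is preserved under directed unions.

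With these in hand I would verify that $i_{S}$ is a dcpo interpretation. Monotonicity of $i_{S}$ holds because $b \prox a$ and $a \leq a'$ give $b \prox a'$ (upward closedness of $\upset_{\prox} b$), whence $\downset_{\prox} a \subseteq \downset_{\prox} a'$. For the defining equation I would compute, using (i), that $\bigvee_{b \prox a} i_{S}(b) = \bigcup_{b \prox a} \downset_{\prox} b = ({\prox}\circ{\prox})\, a = \downset_{\prox} a = i_{S}(a)$, the middle equality being idempotency (ii); the family $\{\downset_{\prox} b \mid b \prox a\}$ is directed because $\downset_{\prox} a$ is an upward-directed ideal and $i_{S}$ is monotone, so the join is indeed a directed join in $\RIdeals{S}$.

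For the universal property I would define $\overline{f}$ by the same formula~\eqref{eq:UniquExntension}, namely $\overline{f}(I) \defeql \bigvee_{a\in I} f(a)$; this join exists since $I$ is directed and $f$ is monotone. Scott continuity is the standard interchange of directed joins, using (i) to identify directed joins in $\RIdeals{S}$ with unions. The factorisation $\overline{f}\circ i_{S} = f$ is where the hypothesis genuinely enters: $\overline{f}(i_{S}(a)) = \bigvee_{b \prox a} f(b) = f(a)$ holds precisely because $f$ is a dcpo interpretation, not merely monotone, this being the analogue of the step in Lemma~\ref{lem:MonotoneScottCont} where $a$ is the greatest element of $\downset_{\leq} a$. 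For uniqueness I would observe that every rounded ideal satisfies $I = \prox I = \bigcup_{a\in I}\downset_{\prox} a = \bigvee_{a\in I} i_{S}(a)$ (a directed join, as $I$ is directed and $i_{S}$ monotone), so any Scott continuous $g$ with $g\circ i_{S} = f$ is forced to coincide with $\overline{f}$ on every $I$.

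The only genuinely new content relative to the poset case, and the step to treat with care, is the systematic use of roundedness: it is what makes $\bigvee_{b\prox a} i_{S}(b)$ collapse back to $i_{S}(a)$ via idempotency, and what recovers an arbitrary rounded ideal as $\prox I$. Everything else parallels the proof of Lemma~\ref{lem:MonotoneScottCont} directly, so I expect no serious obstacle beyond bookkeeping the roundedness conditions.
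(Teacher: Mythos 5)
Your proof is correct and follows exactly the route the paper intends: the paper's own proof is just the remark ``Similar to that of Lemma~\ref{lem:MonotoneScottCont},'' and your argument is precisely that adaptation, with the extension $\overline{f}(I)=\bigvee_{a\in I}f(a)$ and the roundedness/idempotency of $\prox$ supplying the steps where the poset proof used $a\in\downset_{\leq}a$.
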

\begin{proof}
  Similar to that of Lemma \ref{lem:MonotoneScottCont}.
\end{proof}
Then, each approximable relation $r \colon (S,\prox) \to (S',\prox')$
bijectively corresponds to a dcpo interpretation $f_{r} \colon S' \to
\RIdeals{S}$ given by \eqref{eq:ApproxToScott}, and hence to a Scott
continuous function $\overline{f_{r}} \colon \RIdeals{S'} \to
\RIdeals{S}$ given by \eqref{eq:ApproxToScottPxPos}.

\subsection{Strong proximity \texorpdfstring{$\vee$}{join}-semilattices}
We introduce another characterisation of a continuous lattice, which is
obtained by 
splitting idempotents of the category of algebraic lattices and
suplattice homomorphisms. This naturally leads to a predicative
characterisation of the category of continuous lattices and suplattice
homomorphisms.

First, we give a predicative characterisation of suplattice
homomorphisms between algebraic lattices.
The following is analogous to Lemma \ref{lem:MonotoneScottCont}.
\begin{lemma}
  \label{lem:JoinPreservingSupLatticeHom}
  For each $\vee$-semilattice  $(S,0, \vee)$, there is a
  join-preserving function $i_{S} \colon S \to \Ideals{S}$ 
   defined  \eqref{eq:InjectionGenerator}
      with the following universal property:
   for any join-preserving function $f \colon
   S \to L$ to a suplattice $L$, there exists a unique suplattice
   homomorphism
   $\overline{f} \colon \Ideals{S} \to L$ such that
   $\overline{f} \circ i_{S} = f$.
\end{lemma}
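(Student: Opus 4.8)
The plan is to follow the template of Lemma~\ref{lem:MonotoneScottCont}, using the same extension formula \eqref{eq:UniquExntension}, namely $\overline{f}(I) \defeql \bigvee_{a \in I} f(a)$, but upgrading every verification from preservation of directed joins to preservation of \emph{all} joins, which is where the hypothesis that $f$ is join-preserving (rather than merely monotone) is used.

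First I would check that $i_{S}$ is join-preserving. For the bottom element, $i_{S}(0) = \downset_{\leq} 0 = \{0\}$, which is the least ideal by \eqref{eq:FiniteJoinsIdeals}. For binary joins, I would compute $i_{S}(a) \vee i_{S}(b)$ directly from \eqref{eq:FiniteJoinsIdeals}: since $x \leq a$ and $y \leq b$ imply $x \vee y \leq a \vee b$, while the pair $(a,b)$ itself witnesses $a \vee b$, the union $\bigcup_{x \leq a,\, y \leq b} \downset_{\leq}(x \vee y)$ collapses to $\downset_{\leq}(a \vee b) = i_{S}(a \vee b)$.

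Next I would verify that $\overline{f}$ is a suplattice homomorphism. The join in $L$ defining $\overline{f}(I)$ exists because $L$ carries all joins. To show $\overline{f}$ preserves an arbitrary join $\bigvee_{j} I_{j}$, the key observation is that every $a \in \bigvee_{j} I_{j}$ lies below a finite join $a_{j_{1}} \vee \dots \vee a_{j_{n}}$ with $a_{j_{i}} \in I_{j_{i}}$; applying join-preservation of $f$ gives $f(a) \leq \bigvee_{i} f(a_{j_{i}}) \leq \bigvee_{j} \overline{f}(I_{j})$, whence $\overline{f}(\bigvee_{j} I_{j}) \leq \bigvee_{j} \overline{f}(I_{j})$, while the reverse inequality is immediate from monotonicity of $\overline{f}$ together with $I_{j} \subseteq \bigvee_{j} I_{j}$. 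The empty-family case recovers $\overline{f}(\{0\}) = f(0) = 0$. The factorisation identity $\overline{f} \circ i_{S} = f$ then follows exactly as after \eqref{eq:UniquExntension}, since $\bigvee_{b \leq a} f(b) = f(a)$ by monotonicity. For uniqueness I would use that every ideal is the join of its principal ideals, $I = \bigvee_{a \in I} i_{S}(a)$; any suplattice homomorphism $g$ with $g \circ i_{S} = f$ must therefore satisfy $g(I) = \bigvee_{a \in I} g(i_{S}(a)) = \bigvee_{a \in I} f(a) = \overline{f}(I)$, forcing $g = \overline{f}$.

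The main obstacle I anticipate is the correct predicative description of the arbitrary join $\bigvee_{j} I_{j}$ in $\Ideals{S}$ as the downward closure of the finite joins of elements drawn from $\bigcup_{j} I_{j}$; pinning down this description, and then invoking join-preservation of $f$ precisely at those finite joins, is the single place where the argument genuinely uses that $f$ preserves all finite joins rather than being merely monotone, and hence the only real departure from the Scott-continuous case of Lemma~\ref{lem:MonotoneScottCont}.
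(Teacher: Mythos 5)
Your proof is correct and takes essentially the same route as the paper: the paper simply cites Vickers (Proposition 9.1.5 (ii)(iv)) for the universal property and records the same extension formula \eqref{eq:UniquExntension}, whereas you carry out the verifications explicitly. The details you supply --- the description of an arbitrary join of ideals as the downward closure of finite joins of elements of the union, and the decomposition $I = \bigvee_{a \in I} i_{S}(a)$ used for uniqueness --- are all accurate.
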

\begin{proof}
See Vickers \cite[Proposition 9.1.5 (ii)(iv)]{vickers1989topology}.  The
unique extension $\overline{f} \colon \Ideals{S} \to L$ is again given
by  \eqref{eq:UniquExntension}.
\end{proof}

For $\vee$-semilattices $(S,0,\vee)$ and $(S', 0', \vee')$, 
$\vee$-semilattice homomorphisms $f \colon S' \to \Ideals{S}$ correspond to
the class of approximable relations from $S$ to $S'$ as defined below.
\begin{definition}
  \label{def:JPreservingApproximableRelation}
  Let  $(S,0,\vee)$ and $(S',0',\vee')$ be $\vee$-semilattices.
  An approximable relation $r \colon S \to S'$ is 
   \emph{join-preserving} if
  \begin{enumerate}[align=left]
    \myitem[(App$0$)]\label{def:approximable0} $a \mathrel{r} 0' \imp a
    = 0$,
    \myitem[(App$\vee$)]\label{def:approximableJ} $a \mathrel{r} (b
    \vee' c) \imp
    \exists b', c' \in S   
    \left( a \leq b' \vee c' 
    \amp b' \mathrel{r} b \amp c' \mathrel{r} c \right)$.
  \end{enumerate}
  We call join-preserving approximable relations simply as
  \emph{join-approximable relations}.
\end{definition}
\begin{remark}
  The two conditions in
  Definition~\ref{def:JPreservingApproximableRelation} correspond to
  the condition (\emph{G-$\vee$}) in Jung and
  S\"underhauf~\cite[Definition~25]{JungSunderhaufDualtyCompactOpen},
  which is also called \emph{join-approximable} by van
  Gool~\cite[Definition 1.9 (4)]{vanGoolDualityCanExt}.
\end{remark}

Proposition \ref{prop:BijApproxScottCont} 
restricts to 
a bijective correspondence between join-approx\-imable
relations and suplattice homomorphisms. 
\begin{proposition}
  \label{prop:BijVeeApproxSupLatticeHom}
  Let $(S, 0, \vee)$ and $(S',0', \vee')$ be $\vee$-semilattices.
  There exists a bijective
  correspondence between join-approximable relations from $S$ to $S'$ and
  suplattice homomorphisms from $\Ideals{S'}$ to $\Ideals{S}$.
  Through this correspondence, the identity function
  on $\Ideals{S}$ corresponds to the order $\leq$ on $S$
  and the composition of suplattice homomorphisms 
  contravariantly corresponds
  to the relational composition of the join-approximable relations.
\end{proposition}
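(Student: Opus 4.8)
The plan is to show that the bijection of Proposition~\ref{prop:BijApproxScottCont} restricts to the desired correspondence. Recall that a Scott continuous function is a suplattice homomorphism precisely when it additionally preserves finite joins, i.e.\ the bottom element (the empty join) and binary joins, since every join decomposes as a directed join of finite joins. Hence the whole proposition reduces to the single claim that an approximable relation $r \colon S \to S'$ satisfies \ref{def:approximable0} and \ref{def:approximableJ} if and only if the associated Scott continuous function $\overline{f_{r}} \colon \Ideals{S'} \to \Ideals{S}$ of Proposition~\ref{prop:BijApproxScottCont}, given by $\overline{f_{r}}(I) = r^{-}I$, preserves finite joins. Once this equivalence is established, the identity and composition clauses are inherited verbatim from Proposition~\ref{prop:BijApproxScottCont}: the identity homomorphism and composites of suplattice homomorphisms are again suplattice homomorphisms, so the relation $\leq$ and the relational composites they correspond to are automatically join-approximable.

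For the bottom element, I would compute $\overline{f_{r}}(\{0'\}) = r^{-}0'$, where $\{0'\} = \downset_{\leq'}0'$ is the least ideal of $S'$. Since $r^{-}0'$ is an ideal it is inhabited and downward closed, hence always contains $0$; thus $r^{-}0' = \{0\}$, the bottom of $\Ideals{S}$, holds exactly when $a \mathrel{r} 0'$ forces $a = 0$, which is \ref{def:approximable0}.

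For binary joins I would first reduce to principal ideals. Writing $i_{S'}(b) = \downset_{\leq'}b$, every ideal is the directed join of the principal ideals below it, and $I \vee J = \bigvee_{a \in I, b \in J}\bigl(i_{S'}(a) \vee i_{S'}(b)\bigr)$ is a directed join; by Scott continuity of $\overline{f_{r}}$ it therefore suffices to check $\overline{f_{r}}(i_{S'}(b) \vee i_{S'}(c)) = \overline{f_{r}}(i_{S'}(b)) \vee \overline{f_{r}}(i_{S'}(c))$. Since $i_{S'}(b) \vee i_{S'}(c) = i_{S'}(b \vee' c)$, using \eqref{eq:FiniteJoinsIdeals} this amounts to
\[
  r^{-}(b \vee' c) = \bigcup_{x \mathrel{r} b, \, y \mathrel{r} c} \downset_{\leq}(x \vee y).
\]
The inclusion $\supseteq$ holds for every approximable relation: from $b, c \leq' b \vee' c$ and upward closure of $r$ we get $x, y \in r^{-}(b \vee' c)$, and since this is an ideal it is directed, so it contains an upper bound of $x$ and $y$ and hence, being downward closed, contains $x \vee y$. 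The inclusion $\subseteq$, which says that $a \mathrel{r}(b \vee' c)$ yields $x \mathrel{r} b$ and $y \mathrel{r} c$ with $a \leq x \vee y$, is precisely \ref{def:approximableJ}.

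The main point to get right is the reduction to generators: one must check that preservation of binary joins on principal ideals, together with Scott continuity, propagates to arbitrary ideals, for which the key observation is that $I \vee J$ is genuinely a directed join of the principal joins $i_{S'}(a) \vee i_{S'}(b)$ and that binary joins commute with directed joins in the suplattice $\Ideals{S}$. The remaining verifications are the short computations above, and the fact that the $\supseteq$ inclusion is free, requiring only approximability and the semilattice structure, isolates \ref{def:approximableJ} as exactly the content of the $\subseteq$ inclusion.
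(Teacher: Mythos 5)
Your proof is correct and follows essentially the same route as the paper: both restrict the bijection of Proposition~\ref{prop:BijApproxScottCont} and reduce the preservation of finite joins to the characterisation \eqref{eq:FiniteJoinsIdeals} on principal ideals. The only difference is presentational: where the paper delegates the passage from the generating $\vee$-semilattice to the whole ideal completion to the universal property in Lemma~\ref{lem:JoinPreservingSupLatticeHom}, you carry out that reduction explicitly via Scott continuity and the commutation of binary joins with directed joins.
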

\begin{proof}
  We show that the correspondence described in the proof of
  Proposition~\ref{prop:BijApproxScottCont} restricts to
  join-approximable relations and suplattice homomorphisms.
  First, if $r \colon (S, 0, \vee) \to (S',0', \vee')$ is a
  join-approximable  relation, the function $f_{r} \colon S'
  \to \Ideals{S}$ given by \eqref{eq:ApproxToScott} preserves finite
  joins by the characterisation of finite joins of ideals in
  \eqref{eq:FiniteJoinsIdeals}.  Thus, the resulting Scott continuous
  function $\overline{f_{r}} \colon \Ideals{S'} \to \Ideals{S}$ is a
  suplattice homomorphism by
  Lemma~\ref{lem:JoinPreservingSupLatticeHom}. 
 
  Conversely, if $f \colon \Ideals{S'} \to \Ideals{S}$ is a suplattice
  homomorphism, then the approximable relation $r_{f} \colon S \to S'$
  given by  \eqref{eq:ScottToApprox} is join-preserving again by the
  characterisation \eqref{eq:FiniteJoinsIdeals}.
\end{proof}

Let $\AlgPxSLat$ be a subcategory of $\AlgPxPos$ consisting of
$\vee$-semilattices and join-approximable relations between them.
By Proposition \ref{prop:BijVeeApproxSupLatticeHom}, we have the following.
\begin{proposition}\label{prop:AlgPxSLat}
  $\AlgPxSLat$ is dually equivalent to $\AlgLat$.
\end{proposition}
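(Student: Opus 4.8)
The plan is to promote the bijective correspondence of Proposition~\ref{prop:BijVeeApproxSupLatticeHom} to a dual equivalence of categories, exactly as Proposition~\ref{prop:AlgPxPos} does for $\AlgPxPos$ and $\AlgDom$. Concretely, I would define a contravariant functor $\Ideals{-} \colon \AlgPxSLat \to \AlgLat$ sending each $\vee$-semilattice $(S,0,\vee)$ to its ideal completion $\Ideals{S}$, and sending each join-approximable relation $r \colon S \to S'$ to the suplattice homomorphism $\overline{f_{r}} \colon \Ideals{S'} \to \Ideals{S}$ supplied by Proposition~\ref{prop:BijVeeApproxSupLatticeHom}.

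First I would check that this assignment lands in $\AlgLat$: on objects, $\Ideals{S}$ is an algebraic lattice by the very definition of algebraic lattice recalled above; on morphisms, $\overline{f_{r}}$ is a suplattice homomorphism by Proposition~\ref{prop:BijVeeApproxSupLatticeHom}. Functoriality is then immediate from the second assertion of that proposition: the identity morphism of $S$ in $\AlgPxSLat$ is the underlying order $\leq$, which is sent to the identity on $\Ideals{S}$, while the relational composition of join-approximable relations is sent, contravariantly, to the composition of the corresponding suplattice homomorphisms.

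It then remains to verify full faithfulness and essential surjectivity. Full faithfulness is precisely the bijectivity of the correspondence in Proposition~\ref{prop:BijVeeApproxSupLatticeHom}, which identifies the join-approximable relations $S \to S'$ with the suplattice homomorphisms $\Ideals{S'} \to \Ideals{S}$. Essential surjectivity holds because, by definition, every algebraic lattice is (isomorphic to) the ideal completion $\Ideals{S}$ of some $\vee$-semilattice $S$. Assembling these facts yields the asserted dual equivalence.

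Since every ingredient is already in place, I do not expect a genuine obstacle: the argument is pure bookkeeping, following the same pattern used to derive Proposition~\ref{prop:AlgPxPos}, and the only point requiring care is to track the reversal of direction that makes the functor contravariant. The single nontrivial input, the restriction of the correspondence of Proposition~\ref{prop:BijApproxScottCont} to join-preserving relations and suplattice homomorphisms, has already been discharged in Proposition~\ref{prop:BijVeeApproxSupLatticeHom}, so the present statement is indeed immediate.
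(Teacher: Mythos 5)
Your proof is correct and follows exactly the route the paper takes: the paper derives this proposition directly from Proposition~\ref{prop:BijVeeApproxSupLatticeHom}, and your write-up simply makes explicit the functoriality, full faithfulness, and essential surjectivity checks that the paper leaves implicit (mirroring how Proposition~\ref{prop:AlgPxPos} follows from Proposition~\ref{prop:BijApproxScottCont}).
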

Thus, by Theorem \ref{thm:ContContLatKaroubi}, we have the following
characterisation of the category of continuous lattices.
\begin{theorem}
  \label{prop:SPxJLatEquivContLat}
  $\Karoubi{\AlgPxSLat}$ is dually equivalent to
$\ContLat$.
\end{theorem}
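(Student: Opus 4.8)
The plan is to derive Theorem~\ref{prop:SPxJLatEquivContLat} by the same two-step strategy already used for Theorem~\ref{thm:SplitAlgPxSLatS}, namely combining a dual equivalence at the level of algebraic lattices with the abstract fact about Karoubi envelops recorded after the definition of $\Karoubi{\Cat{C}}$. The key observation is that the Karoubi envelop construction is functorial with respect to dual equivalences: if $\Cat{C}$ is dually equivalent to $\Cat{D}$, then $\Karoubi{\Cat{C}}$ is dually equivalent to $\Karoubi{\Cat{D}}$, since splitting idempotents is a purely categorical operation preserved by (contravariant) equivalences of categories. Applying this to the dual equivalence $\AlgPxSLat \simeq \Opposite{\AlgLat}$ from Proposition~\ref{prop:AlgPxSLat} yields $\Karoubi{\AlgPxSLat} \simeq \Opposite{\Karoubi{\AlgLat}}$.

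First I would invoke Proposition~\ref{prop:AlgPxSLat}, which gives a dual equivalence between $\AlgPxSLat$ and $\AlgLat$. Next I would combine this with Theorem~\ref{thm:ContContLatKaroubi}(3), which states that $\ContLat$ is equivalent to $\Karoubi{\AlgLat}$. The latter equivalence holds precisely because every idempotent in $\ContLat$ splits—this is exactly the content of Proposition~\ref{prop:SplitCont}\ref{prop:SplitCont2} together with the observation that a suplattice-homomorphism idempotent on a continuous lattice splits through its image, and $\AlgLat$ is a full subcategory of $\ContLat$ in which every object is a retract of an algebraic lattice by Proposition~\ref{prop:ScotRetract}\ref{prop:ScotRetract3}. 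Passing to opposite categories and composing the two (co)equivalences, I obtain
\[
  \Karoubi{\AlgPxSLat} \simeq \Opposite{\left(\Karoubi{\AlgLat}\right)} \simeq \Opposite{\ContLat},
\]
which is precisely the dual equivalence asserted.

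The main point requiring care—rather than a genuine obstacle—is verifying that the dual equivalence of Proposition~\ref{prop:AlgPxSLat} genuinely lifts to the Karoubi envelops, i.e.\ that a contravariant equivalence $F \colon \AlgPxSLat \to \Opposite{\AlgLat}$ induces a well-defined contravariant equivalence on split-idempotent completions. This is standard: $F$ sends an idempotent join-approximable relation $r \colon S \to S$ to an idempotent suplattice endomorphism $F(r)$, and sends a $\Karoubi{\AlgPxSLat}$-morphism to the corresponding $\Karoubi{\AlgLat}$-morphism, with the defining equations $g \circ h = h = h \circ f$ transported by functoriality. Full faithfulness and essential surjectivity of the induced functor follow formally from those of $F$, since the hom-sets and object classes of a Karoubi envelop are defined entirely in terms of those of the base category. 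Because the argument is a direct mirror of the one already given for Theorem~\ref{thm:SplitAlgPxSLatS}, I would keep the proof brief, citing Proposition~\ref{prop:AlgPxSLat} and Theorem~\ref{thm:ContContLatKaroubi} and noting that the equivalence follows as before.
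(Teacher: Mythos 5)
Your proposal is correct and follows essentially the same route as the paper: the theorem is obtained by combining Proposition~\ref{prop:AlgPxSLat} with Theorem~\ref{thm:ContContLatKaroubi}, using the standard fact that the Karoubi envelop is invariant under (dual) equivalences. Your additional verification that the equivalence lifts to the split-idempotent completions is the implicit content of the paper's one-line derivation, and is accurate.
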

The objects of $\Karoubi{\AlgPxSLat}$ can be explicitly described as follows.
\begin{definition}
A \emph{strong proximity $\vee$-semilattice} is a proximity
$\vee$-semilattice $(S, 0, \vee, \prox)$ satisfying
  \begin{enumerate}
    \item
    $a \prox 0 \imp a = 0$,
    \item
    $a \prox (b \vee c) 
    \imp
    \exists b',c' \in S
    \left( a \leq b' \vee c' \amp b' \prox b \amp c' \prox c \right)$.
  \end{enumerate}
\end{definition}

Henceforth, we write $\SPxJLat$ for $\Karoubi{\AlgPxSLat}$, which forms a
subcategory of $\PxJLat$ consisting of strong proximity
$\vee$-semilattices and join-approximable  relations
(in the sense of Definition \ref{def:ApproximableRelationRounded} and
Definition \ref{def:JPreservingApproximableRelation}).

Note that the objects of $\SPxJLat$
and $\PxJLat$ are essentially the same as they both represent continuous
lattices. Proposition~\ref{prop:PxJLatToSPxJLat} below makes this explicit (cf.\
Kawai~\cite[Theorem 5.8]{KawaiContEnt}). We need the following
construction for its proof.
\begin{definition}
  For a relation $r \subseteq S \times S'$, its \emph{lower extension} $r_{L}
  \subseteq  \Fin{S} \times \Fin{S'}$
  is defined by
  \[
    A \mathrel{r_{L}} B
    \defeqiv
    \forall a \in A \, \exists b \in B \left( a \mathrel{r} b \right).
  \]
\end{definition}

\begin{proposition}
  \label{prop:PxJLatToSPxJLat}
  For each proximity $\vee$-semilattice $S$, there exists a strong
  proximity $\vee$-semilattice $S'$ which is isomorphic to $S$ in
  $\PxJLat$.
\end{proposition}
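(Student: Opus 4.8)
The plan is to avoid constructing $S'$ by hand and instead transfer the isomorphism through the continuous lattice that $S$ represents, using the two dual equivalences already in place. Write $L \defeql \RIdeals{S}$ for the continuous lattice dual to $S$ under the dual equivalence $\PxJLat \simeq \ContLatS$ of Theorem~\ref{thm:SplitAlgPxSLatS}. The point I would exploit is that $\ContLat$ and $\ContLatS$ have the \emph{same} objects, namely the continuous lattices, so $L$ is simultaneously an object of both.

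First I would invoke essential surjectivity of the dual equivalence $\SPxJLat \simeq \ContLat$ (Theorem~\ref{prop:SPxJLatEquivContLat}): since $L$ is a continuous lattice there is a \emph{strong} proximity $\vee$-semilattice $S'$ whose dual is isomorphic to $L$ in $\ContLat$, that is, $\RIdeals{S'} \cong L$ by a suplattice isomorphism. Next I would upgrade this to an isomorphism in $\ContLatS$: a suplattice homomorphism preserves all joins, hence directed joins, so both the isomorphism and its inverse are Scott continuous and $\RIdeals{S'} \cong L$ already holds in $\ContLatS$. A strong proximity $\vee$-semilattice is in particular a proximity $\vee$-semilattice, so $S'$ is an object of $\PxJLat$ whose dual under Theorem~\ref{thm:SplitAlgPxSLatS} is again $\RIdeals{S'}$. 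Thus $S$ and $S'$ have isomorphic duals in $\ContLatS$, and since a dual equivalence reflects isomorphisms I conclude $S \cong S'$ in $\PxJLat$, as required.

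The step I expect to carry the real content—and the one the lower extension $r_{L}$ is presumably introduced for—is making this \emph{explicit}, that is, exhibiting $S'$ and the two mutually inverse approximable relations concretely rather than extracting them abstractly. The natural attempt is to base the carrier of $S'$ on $\Fin{S}$ with union as join and $\emptyset$ as bottom, to define its proximity $\prox'$ from $\prox$ by a lower-extension (Hoare-style) recipe, and to witness the isomorphism by relations assembled from singletons and from $r_{L}$. Here the main obstacle is verifying the two \emph{strong} axioms for $\prox'$: the join-splitting condition and, more subtly, the bottom condition $a \prox' 0' \imp a = 0'$. The latter is delicate precisely because $S$ need not be strong—the rounded ideal $\downset_{\prox} 0$ is inhabited, so $S$ may contain $x \prox 0$ with $x \neq 0$—and a join-reflecting proximity such as $A \prox' B \defeqiv \bigvee A \prox \bigvee B$ would then wrongly give $\{x\} \prox' \emptyset$. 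This is exactly what forces $\prox'$ into the lower-extension shape $A \prox' B \defeqiv \forall a \in A\, \exists b \in B\, (a \prox b)$ on the bottom and join clauses, so that $A \prox' \emptyset$ collapses to $A = \emptyset$. The bulk of the explicit argument is then to check that such a $\prox'$ remains rounded and idempotent, that the splitting of a join across $B \cup C$ can be performed constructively along a fixed enumeration using the disjunction evidence for membership in the union, and that the accompanying approximable relations compose to the two identities $\prox$ and $\prox'$; this compatibility between strongness and the composition law is the crux, and it is where the lower extension does its work.
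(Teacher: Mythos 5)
Your first, purely abstract argument is classically valid and genuinely different from the paper's proof: the two dual equivalences (Theorems \ref{thm:SplitAlgPxSLatS} and \ref{prop:SPxJLatEquivContLat}) are already in place and do not depend on this proposition, so transporting the isomorphism through $\ContLatS$ and $\ContLat$ is not circular, and a fully faithful functor does create an isomorphism $S \cong S'$ from $\RIdeals{S} \cong \RIdeals{S'}$. But in the setting of this paper it misses the point in a way that matters. It routes through $\RIdeals{S}$, which is not a set predicatively, and the $S'$ that essential surjectivity of $\SPxJLat \to \Opposite{\ContLat}$ actually hands you is the continuous lattice $\RIdeals{S}$ itself equipped with its way-below relation --- a class-sized ``basis''. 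The proposition is needed later precisely in explicit, set-sized form: Lemma \ref{lem:LowerCoalgEssSurj} and the footnote to Theorem \ref{thm:LowerCoAlgEquivPxSupLat} rely on an explicit quasi-inverse extracted from this construction to avoid the axiom of choice. So the abstract route proves the literal statement only classically and cannot replace the paper's proof.

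You correctly identify that the real content is the explicit construction on $\Fin{S}$, but the construction you sketch has a genuine gap at its central point. Taking $\prox'$ to be the plain lower extension, $A \prox' B \defeqiv \forall a \in A\, \exists b \in B\, (a \prox b)$, does make the structure strong, but the resulting object is exactly the lower powerlocale $\Lower{S}$, which is \emph{not} isomorphic to $S$ in general (already for $S$ a four-element $\vee$-semilattice with $\prox$ equal to $\leq$, the rounded ideals of $(\Fin{S}, \prox_{L})$ form the free suplattice on the poset $S$, which is strictly larger than $\Ideals{S}$). The relations ``assembled from singletons and from $r_{L}$'' therefore cannot compose to the two identities. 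The missing idea is that the proximity on $\Fin{S}$ must also identify $A$ with $B$ when their joins agree up to $\prox$, while keeping a lower-extension outer layer to secure strongness. The paper does this by first introducing the preorder
\[
  A \leq^{\vee} B \defeqiv \forall C \prox_{L} A \, \exists D \prox_{L} B \left( \medvee C \prox \medvee D \right),
\]
taking $S'$ to be its poset reflection, and then setting $A \prox^{\vee} B \defeqiv \exists C \prox_{L} B \left( A \leq^{\vee} C \right)$; the interleaving of $\prox_{L}$ with the join comparison is what simultaneously gives idempotency, the two strong axioms, and the mutually inverse approximable relations $A \mathrel{r} a \defeqiv A \prox^{\vee} \{a\}$ and $a \mathrel{s} A \defeqiv \exists B \prox_{L} A\, (a \prox \medvee B)$. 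Without some such join-comparing ingredient your $S'$ presents the wrong continuous lattice.
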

\begin{proof}
  Given a proximity $\vee$-semilattice $S = (S, 0, \vee, \prox)$,
  define a preorder $\leq^{\vee}$ on $\Fin{S}$
  by 
  \[
    A \leq^{\vee} B \defeqiv \forall C \prox_{L} A \, \exists D
    \prox_{L} B \left( \medvee C \prox \medvee D \right).
  \]
  Let $S' = (\Fin{S}, \leq^{\vee})$ be the poset reflection of
  $\leq^{\vee}$, and define a $\vee$-semilattice structure $(S', 0^{\vee},
  \vee^{\vee})$ on $S'$ by 
  \begin{align*}
    0^{\vee} &\defeql \emptyset, &
   A  \vee^{\vee} B &\defeql A \cup B.
  \end{align*}
  It is easy to see that the above operations are well-defined with
  respect to the
  equality on $S'$. Moreover, define a relation $\prox^{\vee}$ on
  $S'$ by
  \[
    A \prox^{\vee} B  \defeqiv \exists C \prox_L B \left( A
    \leq^{\vee} C \right). 
  \]
  Then, $S' = (S', 0^{\vee},
  \vee^{\vee}, \prox^{\vee})$ is a strong proximity
  $\vee$-semilattice. Indeed, since $\prox_{L}$ is idempotent and ${\prox_{L}}
  \subseteq {\prox^{\vee}} \subseteq {\leq^{\vee}}$, 
  the relation $\prox^{\vee}$ is idempotent, 
  which clearly satisfies \ref{def:approximableI}. As for the property
  \ref{def:approximableU}, suppose $A \prox^{\vee} B$ and 
  $B \leq^{\vee} C$. By the former, there exists  $B' \prox_{L} B$ such that $A
  \leq^{\vee} B'$, so by the latter, there exists  $C' \prox_{L} C$ such
  that $\bigvee B' \prox \bigvee C'$. Thus, there exists $D$ such that 
  $C' \prox_{L} D \prox_{L} C$. Then, one can easily see that $A
  \leq^{\vee} D$, so that $A \prox^{\vee} C$.
  To see that $S'$ is strong, we have $A \prox^{\vee} \emptyset$ 
  implies $A \leq^{\vee} \emptyset$, so $S'$ satisfies
  \ref{def:approximable0}. Next, suppose that $A \prox^{\vee} B
  \vee^{\vee} C$. Then, there exists $D \prox_{L} B \cup C$
  such that $A \leq^{\vee} D$. Split $D$ into $D_{B}$ and 
  $D_{C}$ such that $D = D_{B} \cup D_{C}$, $D_{B} \prox_{L} B$, and
  $D_{C} \prox_{L} C$. Then, $A \leq^{\vee} D_{B} \vee^{\vee} D_{C}$,
  $D_{B} \prox^{\vee} B$, and $D_{C} \prox^{\vee} C$. Hence $S'$
  satisfies~\ref{def:approximableJ}.
  
  Now, define relations $r \subseteq \Fin{S} \times S$
  and $s \subseteq S \times \Fin{S}$ by 
  \begin{align*}
    A \mathrel{r} a &\defeqiv A \prox^{\vee} \left\{ a \right\}, &
    a \mathrel{s} A \defeqiv \exists B \prox_{L} A \left( a \prox
    \medvee B \right).
  \end{align*}
  We show that these are approximable relations between $S'$ and $S$
  which are inverse to each other.  The property
  \ref{def:approximableI} for $s$ follows from the fact that $0 \prox
  0$ and $a \prox b \amp a' \prox b' \imp a \vee a' \prox b \vee b'$. 
  From the definition of $\leq^{\vee}$, it is
  easy to see that $s$ also satisfies \ref{def:approximableU}.
  Moreover, we have ${\mathrel{s} \circ \prox} = s$ by the idempotency
  of $\prox$, and ${s} \subseteq {\prox_{L} \circ \mathrel{s}}
  \subseteq {\prox^{\vee} \circ \mathrel{s}} \subseteq {\leq^{\vee}
  \circ \mathrel{s}} = {\mathrel{s}}$. Thus, $s$ is an approximable
  relation from $S$ and $S'$. The fact that $r$ is approximable is also
  easy to check.
  To see that $r$ and $s$ are inverse to each other,
  first, we have  ${s \circ r} \subseteq  {\prox^{\vee}}$, noting that $a
  \mathrel{s} A$ implies $\left\{ a \right\} \leq^{\vee} A$.
  Conversely, if $A \prox^{\vee} B$, then there exists $C \prox_{L} B$
  such that $A \prox^{\vee} C$. Then, $A \mathrel{r} \bigvee C
   \mathrel{s} B$, and so $A \mathrel{(s \circ r)} B$.
  Thus, ${s \circ r}  =  {\prox^{\vee}}$.
  Next, we have ${\prox} \subseteq  {r \circ s}$ by the idempotency of
  $\prox$. Conversely, if $a \mathrel{s} A \mathrel{r} b$, 
  there exists $B \prox_L A$ such that $a \prox \bigvee B$. Since
  $A \mathrel{r} b$ implies $A \leq^{\vee} \left\{ b \right\}$, there exists $C
  \prox_{L} \left\{ b \right\}$ such that $\bigvee B \prox \bigvee C$.
  Then, $a \prox
    b$, and thus ${r \circ s} = {\prox}$.
\end{proof}

\subsection{Algebraic theory of continuous lattices}
\label{sec:LowerPowerLoc}
We give yet another predicative characterisation of continuous
lattices in terms of algebras of the lower powerlocale on the category
of continuous domains. Specifically, from the localic point of view,
the category of continuous domains and continuous functions can be
regarded as a subcategory of locales whose
frames are Scott topologies of continuous domains.
This is the view of the category $\Infosys$ of infosys~\cite{Infosys},
a classical dual of $\PxPos$.  In this view, the construction in
Definition \ref{def:LowerPower} below corresponds to that of the lower
powerlocale on the Scott topologies of continuous
domains~\cite[Theorem~4.3~(ii)]{Infosys}. In the case of infosys, the
construction forms a monad on $\Infosys$, and Vickers conjectured that
the category of algebras for the monad is equivalent to
$\ContLat$~\cite[Section~5.1]{Infosys}.  In classical domain
theory, the corresponding characterisation of continuous lattices has
been given by Schalk~\cite[Section~6.2.2]{schalk1994algebras}.

In what follows, we work in the dual context of $\PxPos$, where the
construction in Definition~\ref{def:LowerPower} (also called the lower
powerlocale in this paper) forms a comonad on $\PxPos$. Then, the main result of this
subsection says that the category of strong proximity
$\vee$-semilattices is equivalent to that of the coalgebras of the
lower powerlocale.

\begin{defiC}[{\cite[Definition~4.1]{Infosys}}]
  \label{def:LowerPower}
  Let $(S,\prox)$ be a proximity poset. The \emph{lower powerlocale}
  $\Lower{S}$
  of $(S,\prox)$ is a proximity poset $((\Fin{S}, \leq_{L}), \prox_{L})$ where
  $(\Fin{S}, \leq_{L})$ denotes the poset reflection of the preorder
  $\leq_{L}$.
\end{defiC}
  
\begin{proposition}\label{prop:LowerIsUniversal}
  Let $(S,\prox)$ be a proximity poset.
  \begin{enumerate}
    \item\label{prop:LowerIsUniversal1} $\Lower{S}$ is a strong
      proximity $\vee$-semilattice.
    \item\label{prop:LowerIsUniversal2} 
      A relation $\varepsilon^{L}_{S} \subseteq \Fin{S} \times S$
      defined by
      \begin{equation} \label{eq:Lowercounit}
        A \mathrel{\varepsilon^{L}_{S}} a \defeqiv A \prox_{L} \left\{
        a \right\}
      \end{equation}
      is an approximable relation from $\Lower{S}$ to $(S,\prox)$ with
      the following universal property: for any strong proximity
      $\vee$-semilattice
      $(S',0',\vee', \prox')$ and an approximable relation $r \colon
      S' \to S$, there exists a unique join-approximable 
      relation $\overline{r} \colon S' \to \Lower{S}$ such that
      $\varepsilon^{L}_{S} \circ \overline{r} = r$.
  \end{enumerate}
\end{proposition}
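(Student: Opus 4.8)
The plan is to treat the two parts separately, with the bulk of the work going into the universal property in part~(2).

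For part~(1), I would obtain the structure of $\Lower{S}$ by transporting the data of $(S,\prox)$ along the lower extension. First note that $(\Fin{S}, \leq_{L})$ is a $\vee$-semilattice with bottom the class of $\emptyset$ and join given by union, and that $\prox_{L}$ is idempotent, inherited from the idempotency of $\prox$ by choosing, for each element of a finite set, an interpolant of a witnessing $\prox$-pair and collecting these into an intermediate finite set. The identities ${\prox_{L}} \circ {\leq_{L}} \subseteq {\prox_{L}}$ and ${\leq_{L}} \circ {\prox_{L}} \subseteq {\prox_{L}}$ follow pointwise from the corresponding identities for $\prox$ (which express that $\downset_{\prox} a$ is downward closed and $\upset_{\prox} a$ is upward closed). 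These give that $\downset_{\prox_{L}} A$ is a rounded ideal --- directedness is now immediate, since $B_{1}, B_{2} \prox_{L} A$ forces $B_{1} \cup B_{2} \prox_{L} A$ --- and that $\upset_{\prox_{L}} A$ is rounded upward closed; roundedness in both cases is just the idempotency of $\prox_{L}$. Hence $\Lower{S}$ is a proximity $\vee$-semilattice. The two strong axioms are direct: $A \prox_{L} \emptyset$ forces $A = \emptyset$; and if $A \prox_{L}(B \cup C)$, then splitting $A$ according to whether each element has its $\prox$-witness in $B$ or in $C$ yields $A = A_{B} \cup A_{C}$ with $A_{B} \prox_{L} B$ and $A_{C} \prox_{L} C$, establishing the join condition of a strong proximity $\vee$-semilattice.

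For part~(2), I would first check that $\varepsilon^{L}_{S}$ is approximable. Its inverse image $(\varepsilon^{L}_{S})^{-} a$ is exactly $\downset_{\prox_{L}} \left\{ a \right\}$, a rounded ideal by part~(1). For the forward image $\varepsilon^{L}_{S} A = \left\{ a \mid A \prox_{L} \left\{ a \right\} \right\}$ I would show it is rounded upward closed: upward closure uses ${\leq} \circ {\prox} \subseteq {\prox}$, while roundedness amounts to finding, for $a$ with $A \prox_{L}\left\{ a \right\}$, a single $a'$ with $a' \prox a$ and $A \prox_{L}\left\{ a' \right\}$; this I get by choosing interpolants $c_{b} \prox a$ of the finitely many $b \in A$ inside the ideal $\downset_{\prox} a$ and taking an upper bound $a'$ of the $c_{b}$ in that ideal, using its directedness. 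Equivalently, these amount to the identities ${\varepsilon^{L}_{S}} \circ {\prox_{L}} = \varepsilon^{L}_{S} = {\prox} \circ {\varepsilon^{L}_{S}}$.

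The heart of the matter is the universal property. Given a strong proximity $\vee$-semilattice $(S',0',\vee',\prox')$ and an approximable $r \colon S' \to S$, I would define $\overline{r} \colon S' \to \Lower{S}$ by
\[
  x \mathrel{\overline{r}} A \defeqiv \exists B \in \Fin{S'} \left( x \prox' \medvee B \amp B \mathrel{r_{L}} A \right).
\]
Showing $\overline{r}$ is approximable (a rounded ideal in each inverse image, rounded upward closed in each forward image) reduces to the idempotency of $\prox'$, the interaction identities for $\prox'$ and $\prox$, and the fact that a rounded ideal of the $\vee$-semilattice $S'$ is closed under finite joins. The factorisation $\varepsilon^{L}_{S} \circ \overline{r} = r$ is then two inclusions: for $\supseteq$ I round $x \mathrel{r} a$ on both sides using $r = {r} \circ {\prox'}$ and $r = {\prox} \circ {r}$ to manufacture singleton witnesses; for $\subseteq$ I use that each element of the witnessing $B$ lands in the rounded ideal $r^{-} a$, so $\medvee B \in r^{-} a$, and $x \prox' \medvee B$ then closes the argument. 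The step I expect to be the main obstacle is verifying the join axiom~\ref{def:approximableJ} for $\overline{r}$: given $x \mathrel{\overline{r}}(A \cup A^{\sharp})$ with witness $B \mathrel{r_{L}}(A \cup A^{\sharp})$, I split $B = B_{1} \cup B_{2}$ with $B_{1} \mathrel{r_{L}} A$ and $B_{2} \mathrel{r_{L}} A^{\sharp}$, and then I must split $x$ itself. This is exactly where strongness of $S'$ enters, since the strong join condition for $\prox'$ applied to $x \prox'(\medvee B_{1} \vee' \medvee B_{2})$ yields $u,v$ with $x \leq' u \vee' v$, $u \prox' \medvee B_{1}$, and $v \prox' \medvee B_{2}$, giving $u \mathrel{\overline{r}} A$ and $v \mathrel{\overline{r}} A^{\sharp}$. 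Axiom~\ref{def:approximable0} is immediate, again by strongness: $x \mathrel{\overline{r}} \emptyset$ forces $x \prox' 0'$, hence $x = 0'$.

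Finally, for uniqueness I would argue that any join-approximable $\overline{r}'$ with $\varepsilon^{L}_{S} \circ \overline{r}' = r$ is already determined on singletons: the roundedness identity ${\prox_{L}} \circ {\overline{r}'} = \overline{r}'$ together with the definition of $\varepsilon^{L}_{S}$ gives $x \mathrel{\overline{r}'} \left\{ a \right\} \iff x \mathrel{r} a$. A general $x \mathrel{\overline{r}'} A$ then decomposes through the generators by iterating~\ref{def:approximableJ} and~\ref{def:approximable0}, recovering the defining condition of $\overline{r}$ (the passage between $\leq'$ and $\prox'$ being absorbed by roundedness of $\overline{r}'$); the reverse containment uses that $(\overline{r}')^{-} A$ is a join-closed rounded ideal. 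Thus the only genuine difficulty is the verification of~\ref{def:approximableJ} for $\overline{r}$, and it is precisely the hypothesis that $S'$ is strong that makes it succeed.
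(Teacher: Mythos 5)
Your proof is correct and follows essentially the same route as the paper's: the same construction of $\Lower{S}$, the same formula for $\overline{r}$ (you require $x \prox' \medvee B$ where the paper requires $x \leq' \medvee B$, but by strongness of $S'$ and ${\mathrel{r}} = {\mathrel{r} \circ \prox'}$ these define the same relation, and your verification goes through), strongness of $S'$ entering exactly at the join axiom \ref{def:approximableJ}, and uniqueness via decomposition through singletons using ${\prox_{L} \circ \overline{r}'} = \overline{r}'$ and join-preservation. I see no gaps.
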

\begin{proof}
  \ref{prop:LowerIsUniversal1}. The poset $(\Fin{S}, \leq_{L})$ has
  finite joins characterised by
  \begin{align*}
    0 &\defeql \emptyset,
    &
    A \vee B &\defeql A \cup B.
  \end{align*}
  It is straightforward to show that $\prox_{L}$ is an idempotent approximable
  relation on ${(\Fin{S}, \leq_{L})}$ which satisfies \ref{def:approximable0}
  and \ref{def:approximableJ} with respect to  the above operations.
  \smallskip
   
  \noindent\ref{prop:LowerIsUniversal2}.
  We clearly have
  ${\mathrel{\varepsilon^{L}_{S}} \circ \prox_{L}} 
  = {\varepsilon^{L}_{S}}$. 
  By \ref{def:approximableI} for $\prox$, we also have
  ${\prox \circ \mathrel{\varepsilon^{L}_{S}}}
  = {\varepsilon^{L}_{S}}$.
  Since  $(\Fin{S}, \leq_{L})$ has finite joins computed by
  unions, $\varepsilon^{L}_{S}$ satisfies \ref{def:approximableI}. 
  Moreover, the property \ref{def:approximableU}
  follows from the corresponding property of $\prox$.
  Thus, $\varepsilon^{L}_{S}$ is an approximable relation.
  Next, given an
  approximable relation $r \colon S' \to S$ from a strong proximity
  $\vee$-semilattice
  $(S',0',\vee',\prox')$, define a relation $\overline{r}
  \subseteq S' \times \Fin{S}$ by
  \[
    b \mathrel{\overline{r}} A \defeqiv \exists B \in \Fin{S'}
    \left( b \leq' \medvee B \amp B \mathrel{r_{L}} A \right).
  \]
  It is straightforward to check that $\overline{r}$ is a
  join-approximable  relation from $S'$ to $\Lower{S}$ such
  that
  $\varepsilon^{L}_{S} \circ \overline{r} = r$. For the uniqueness of
  $\overline{r}$,
  suppose that $u \colon S' \to \Lower{S}$ is a join-approximable
  relation such that ${\mathrel{\varepsilon^{L}_{S}} \circ \mathrel{u}} = r$.
  Suppose $b \mathrel{u} A$. Since ${\prox_{L} \circ \mathrel{u}} = u$, there
  exists $A' \in \Fin{S}$ such that $b \mathrel{u} A' \prox_{L}A$.
  Since $A' = \bigcup \left\{ \left\{ a' \right\} \mid a' \in A'
\right\}$  and $u$ is join-preserving, there exists $B \in
  \Fin{S'}$ such that $b \leq' \medvee B$ and $B \mathrel{u_{L}}
  \left\{ \left\{ a' \right\} \mid a' \in A' \right\}$. Then, $B
  \mathrel{r_{L}}A$ by ${\mathrel{\varepsilon^{L}_{S}} \circ
  \mathrel{u}} \subseteq r$. Hence $b
  \mathrel{\overline{r}} A$.  Conversely, suppose $b
  \mathrel{\overline{r}} A$. Then, there exists $B \in \Fin{S'}$ such
  that $b \leq' \medvee B$ and $B \mathrel{r_{L}} A$. Thus, for each
  $b' \in B$, there exists $a \in A$ such that $b'
  \mathrel{(\varepsilon^{L}_{S} \circ u)} a$.
  Then, $b' \mathrel{u}
  \left\{ a \right\}$, and since $\left\{ a \right\} \leq_{L} A$, we
  have $b' \mathrel{u} A$ by \ref{def:approximableU}. Hence,
  $b \mathrel{u} A$ by \ref{def:approximableI}.
\end{proof}
By Proposition \ref{prop:LowerIsUniversal}, the construction $\Lower{S}$ determines a right adjoint to the forgetful functor
from $\SPxJLat$ to $\PxPos$.
The functor
$\LowerFunc \colon \PxPos \to \SPxJLat$ sends an approximable relation
$r \colon S \to S'$ to 
a join-approximable  relation $\Lower{r} \colon \Lower{S}
\to \Lower{S'}$ defined by
\[
  \Lower{r} \defeql r_{L}.
\]
The counit $\varepsilon^{L}_{S} \colon \Lower{S} \to S$ of the adjunction
is given by \eqref{eq:Lowercounit},
and the unit $\eta^{L}_{S} \colon S \to \Lower{S}$ at a strong proximity
$\vee$-semilattice $(S,0,\vee, \prox)$ is given by 
\begin{equation} \label{eq:Lowerunit}
  a \mathrel{\eta^{L}_{S}} A \defeqiv a \prox \medvee A.
\end{equation}
The adjunction induces a comonad $\langle \LowerFunc, \varepsilon^{L},
\nu^{L} \rangle$ on $\PxPos$ with a co-multiplication
$\nu^{L}_{S} \defeql
\eta_{\Lower{S}}^{L}$, which by definition satisfies
\[
  A \mathrel{\nu^{L}_{S}} \mathcal{U}
  \leftrightarrow
  A \prox_{L} \bigcup \mathcal{U}.
\]
Let $\coAlg{\LowerFunc}$ denote the category of $\LowerFunc$-coalgebras
and coalgebra homomorphisms, and let  $K \colon \SPxJLat \to
\coAlg{\LowerFunc}$ be the comparison functor. Note that $K$ sends each strong
proximity
$\vee$-semilattice $S$ to a $\LowerFunc$-coalgebra
$\eta^{L}_{S} \colon S \to \Lower{S}$ and it is an identity map on
morphisms (cf.\ \cite[Chapter VI, Section 3]{Category_at_work}).
\begin{lemma}\label{lem:ProxSupLatLowerCoalg}
 The functor $K \colon \SPxJLat \to \coAlg{\LowerFunc}$ 
 is full and faithful.
\end{lemma}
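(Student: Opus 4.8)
The plan is to verify faithfulness by inspection and to reduce fullness to the two defining clauses of a join-approximable relation. Faithfulness is immediate, since $K$ is by construction the identity on morphisms: it sends a join-approximable relation $f \colon S \to S'$ to the underlying approximable relation, which is literally the same subset of $S \times S'$, so $K(f) = K(g)$ forces $f = g$. The whole content is therefore fullness. So let $S$ and $S'$ be strong proximity $\vee$-semilattices and let $h \colon K(S) \to K(S')$ be a $\LowerFunc$-coalgebra homomorphism; concretely, $h$ is an approximable relation $h \colon S \to S'$ in $\PxPos$ making the coalgebra square commute, $\eta^{L}_{S'} \circ h = h_{L} \circ \eta^{L}_{S}$. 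I would show that this forces $h$ to be join-approximable, i.e.\ to satisfy \ref{def:approximable0} and \ref{def:approximableJ}; once this is done, $h$ is a morphism of $\SPxJLat$ and, as $K$ is the identity on morphisms, the given coalgebra homomorphism equals $K(h)$, so $K$ is full.

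The first step is to unwind the coalgebra square pointwise. Using \eqref{eq:Lowerunit} and the definition of the lower extension $h_{L}$, the identity $\eta^{L}_{S'} \circ h = h_{L} \circ \eta^{L}_{S}$ becomes the biconditional
\[
  \exists b' \in S' \left( a \mathrel{h} b' \amp b' \prox' \medvee B \right)
  \iff
  \exists A \in \Fin{S} \left( a \prox \medvee A \amp A \mathrel{h_{L}} B \right)
\]
for all $a \in S$ and $B \in \Fin{S'}$. I will only use the left-to-right direction, instantiated at $B = \emptyset$ and at $B = \left\{ b, c \right\}$, and in each case the left-hand side will be supplied by the roundedness of the upward closed set $\mathrel{h} a$ (part of $h$ being approximable, Definition~\ref{def:ApproximableRelationRounded}). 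For \ref{def:approximable0}, suppose $a \mathrel{h} 0'$. Roundedness of $\mathrel{h} a$ gives $y \prox' 0'$ with $a \mathrel{h} y$; since $\medvee \emptyset = 0'$, this is the left-hand side at $B = \emptyset$, so the biconditional yields $A \in \Fin{S}$ with $a \prox \medvee A$ and $A \mathrel{h_{L}} \emptyset$. The latter forces $A = \emptyset$, whence $a \prox \medvee \emptyset = 0$, and strongness of $S$ (clause~(1) of a strong proximity $\vee$-semilattice) gives $a = 0$.

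For \ref{def:approximableJ}, suppose $a \mathrel{h} (b \vee' c)$. Roundedness of $\mathrel{h} a$ again supplies the left-hand side at $B = \left\{ b, c \right\}$, so the biconditional produces $A \in \Fin{S}$ with $a \prox \medvee A$ such that every element of $A$ is $h$-related to $b$ or to $c$. Splitting $A = A_{b} \cup A_{c}$ with $A_{b} = \left\{ a' \in A \mid a' \mathrel{h} b \right\}$ and $A_{c} = \left\{ a' \in A \mid a' \mathrel{h} c \right\}$, and using that the ideals $h^{-} b$ and $h^{-} c$ contain $0$ and are closed under finite joins, I obtain $\medvee A_{b} \mathrel{h} b$ and $\medvee A_{c} \mathrel{h} c$, together with $a \prox \medvee A = (\medvee A_{b}) \vee (\medvee A_{c})$. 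The hard part is exactly here: clause \ref{def:approximableJ} demands a splitting with respect to the order $\leq$, whereas so far I only have one with respect to $\prox$, and $\prox$ need not be contained in $\leq$. This is where strongness of the \emph{domain} $S$ enters: clause~(2) applied to $a \prox (\medvee A_{b}) \vee (\medvee A_{c})$ yields $x', y'$ with $a \leq x' \vee y'$, $x' \prox \medvee A_{b}$, and $y' \prox \medvee A_{c}$. Finally, absorbing along $h \circ \prox = h$ (Definition~\ref{def:ApproximableRelationRounded}) turns $x' \prox \medvee A_{b} \mathrel{h} b$ into $x' \mathrel{h} b$, and likewise $y' \mathrel{h} c$; with $a \leq x' \vee y'$ this is precisely \ref{def:approximableJ}. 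Hence $h$ is join-approximable, which completes the proof of fullness and hence of the lemma.
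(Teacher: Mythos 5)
Your proof is correct and follows essentially the same route as the paper's: both unwind the coalgebra square via \eqref{eq:Lowerunit} and the lower extension, use $\prox' \circ h = h$ to supply the left-hand side, and then use strongness of $S$ together with $h \circ \prox = h$ to convert the $\prox$-bound into a $\leq$-bound. The only difference is presentational: the paper runs one equivalence chain for arbitrary $B \in \Fin{S'}$, while you instantiate at $B = \emptyset$ and $B = \{b,c\}$ and carry out the split of $A$ explicitly, which is the same content.
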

\begin{proof}
  $K$ is obviously faithful. To see that $K$ is full,
  let $r \colon S \to S'$ be a $\LowerFunc$-coalgebra homomorphism
  between strong proximity $\vee$-semilattices $(S, 0,\vee, \prox)$ and $(S', 0',
  \vee', \prox')$.  Then, for any $a \in S$ and $B \in \Fin{S'}$, we
  have
  \[
    \begin{aligned}
      a \mathrel{r} {\medvee}' B
      &\iff \exists b \in S' \bigl( 
      a \mathrel{r} b \prox' {\medvee}' B\bigr)
      && (\text{by } {\mathrel{r}} = {\prox' \circ \mathrel{r}}) \\
      &\iff a \mathrel{(\eta^{L}_{S'} \circ r)} B \\
      &\iff a \mathrel{(\Lower{r} \circ \eta^{L}_{S})} B
      && (\text{$r$ is a homomorphism}) \\
      &\iff \exists A \in \Fin{S}
      \left( a \prox \medvee A \amp A \mathrel{r_{L}} B \right) \\
      &\iff \exists A \in \Fin{S}
      \left( a \leq \medvee A \amp A \mathrel{(\mathrel{r_{L}} \circ
      \prox_{L})} B \right)
      && (\text{$S$ is strong}) \\
      &\iff \exists A \in \Fin{S}
      \left( a \leq \medvee A \amp A \mathrel{r_{L}} B \right).
      && (\text{by }{\mathrel{r}} = {\mathrel{r}  \circ  \prox})
    \end{aligned}
  \]
  Thus, $r$ is join-preserving. Hence $K$ is full.
\end{proof}
It remains to show that 
$K$ is essentially surjective. This will become clear after 
we recall the fact  that the comonad $\LowerFunc$ is of Kock--Z\"oberlein type
(see Escard\'o \cite[Section 4.1]{escardo1998properly}). The
dual notion is that of $\KZ$-monad \cite{KOCKKZmonad}.
\begin{definition} 
  Let $\langle T, \varepsilon, \nu \rangle$ be a comonad on a poset enriched
  category  $\mathbb{C}$, where $T$ preserves the order on
  homsets. Then, $T$ is called a \emph{$\KZ$-comonad (co$\KZ$-comonad)} if
  $\varepsilon_{TX} \leq T \varepsilon_{X}$ (resp.\
  $T\varepsilon_{X} \leq \varepsilon_{TX}$) for each object $X$ of
  $\mathbb{C}$.
\end{definition}
\begin{proposition}\label{prop:KZ}
  Let $\langle T, \varepsilon, \nu \rangle$ be a comonad on a poset enriched
  category  $\mathbb{C}$, where $T$ preserves the order on
  homsets. Then, the following are equivalent:
  \begin{enumerate}
    \item\label{prop:KZ1} $T$ is a $\KZ$-comonad.
    \item\label{prop:KZ2} $\alpha \colon X \to TX$ is a $T$-coalgebra
      if and only if $\alpha \dashv \varepsilon_{X}$ and 
      $\varepsilon_{X} \circ \alpha = \id_{X}$,
      where
      \[
      \alpha \dashv \varepsilon_{X} \defeqiv
      \id_{X} \leq \varepsilon_{X} \circ \alpha 
      \amp
      \alpha \circ \varepsilon_{X} \leq \id_{TX}.\footnote{As usual
      in the context of poset enriched categories, the situation 
      $\alpha \dashv \varepsilon_{X}$
      is
      called an \emph{adjunction}, where  $\alpha$ and
      $\varepsilon_{X}$ are the \emph{left adjoint} and
      the \emph{right adjoint}, respectively.}
      \]
  \end{enumerate}
\end{proposition}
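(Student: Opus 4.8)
The plan is to reduce everything to the two comonad unit laws \(\varepsilon_{TX} \circ \nu_X = \id_{TX}\) and \(T\varepsilon_X \circ \nu_X = \id_{TX}\), the naturality square \(\varepsilon_{TX} \circ T\alpha = \alpha \circ \varepsilon_X\) of the counit, local monotonicity of composition, and the standard transposition valid for an adjunction \(l \dashv r\) in a poset enriched category: \(l \circ \psi \leq \phi \iff \psi \leq r \circ \phi\). The pivotal observation is that the comultiplication \(\nu_X \colon TX \to TTX\) is always a \(T\)-coalgebra (the cofree one), a fact I will feed into both implications.

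For \ref{prop:KZ2} \(\imp\) \ref{prop:KZ1}: apply the hypothesis to the cofree coalgebra \(\nu_X\). Since \(\nu_X\) is a coalgebra, \ref{prop:KZ2} yields \(\nu_X \dashv \varepsilon_{TX}\); in particular the counit inequality \(\nu_X \circ \varepsilon_{TX} \leq \id_{TTX}\) holds. Post-composing with \(T\varepsilon_X\) and using \(T\varepsilon_X \circ \nu_X = \id_{TX}\) then gives \(\varepsilon_{TX} = T\varepsilon_X \circ \nu_X \circ \varepsilon_{TX} \leq T\varepsilon_X\), which is exactly the \(\KZ\) condition \ref{prop:KZ1}.

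For \ref{prop:KZ1} \(\imp\) \ref{prop:KZ2}, the forward half of the biconditional is short: if \(\alpha\) is a coalgebra then \(\varepsilon_X \circ \alpha = \id_X\) is the counit law, and using naturality together with \(\KZ\) I compute \(\alpha \circ \varepsilon_X = \varepsilon_{TX} \circ T\alpha \leq T\varepsilon_X \circ T\alpha = T(\varepsilon_X \circ \alpha) = \id_{TX}\), so \(\alpha \dashv \varepsilon_X\). The converse half --- recovering the coalgebra coassociativity \(\nu_X \circ \alpha = T\alpha \circ \alpha\) from \(\alpha \dashv \varepsilon_X\) and \(\varepsilon_X \circ \alpha = \id_X\) --- is where the real work sits, and I expect it to be the main obstacle. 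The strategy is to prove the two inequalities separately by transposing across two auxiliary adjunctions. First, \(\KZ\) gives \(\nu_X \dashv \varepsilon_{TX}\) (by the same naturality-plus-\(\KZ\) computation applied to the coalgebra \(\nu_X\)); transposing \(\nu_X \circ \alpha \leq T\alpha \circ \alpha\) across it reduces the claim to \(\alpha \leq \varepsilon_{TX} \circ T\alpha \circ \alpha = \alpha \circ \varepsilon_X \circ \alpha = \alpha\), which is trivial. Second, applying the locally monotone functor \(T\) to \(\alpha \dashv \varepsilon_X\) gives \(T\alpha \dashv T\varepsilon_X\); transposing \(T\alpha \circ \alpha \leq \nu_X \circ \alpha\) across it reduces the claim to \(\alpha \leq T\varepsilon_X \circ \nu_X \circ \alpha = \alpha\), again trivial. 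Combining the two inequalities yields coassociativity, completing the equivalence.

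A final remark on pitfalls: the argument hinges on keeping the variances of the two transpositions straight (which morphism is the left adjoint in each case) and on the two distinct unit laws \(\varepsilon_{TX} \circ \nu_X = \id\) versus \(T\varepsilon_X \circ \nu_X = \id\), which feed the \(\nu_X \dashv \varepsilon_{TX}\) and \(T\alpha \dashv T\varepsilon_X\) transpositions respectively. No interpolation or structure specific to proximity posets is needed here; the proof is purely formal \(2\)-categorical reasoning, so it will go through verbatim in the dual (co\(\KZ\)) case by reversing the relevant inequalities.
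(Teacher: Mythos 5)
Your proof is correct and complete: the identification of $\nu_X$ as the cofree coalgebra, the naturality-plus-$\KZ$ computation giving $\alpha\circ\varepsilon_X=\varepsilon_{TX}\circ T\alpha\leq T\varepsilon_X\circ T\alpha=\id_{TX}$, and the two transpositions (across $\nu_X\dashv\varepsilon_{TX}$ and $T\alpha\dashv T\varepsilon_X$) that squeeze out the coassociativity law all check out. The paper does not prove this proposition itself but defers to Escard\'o's Lemma 4.1.1 for the $\KZ$-monad case; your argument is precisely the standard formal dual of that proof, so there is nothing to fix.
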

\begin{proof}
  See Escard\'o \cite[Lemma 4.1.1]{escardo1998properly} for a proof
  for $\KZ$-monads.
\end{proof}
We have a similar characterisation for co$\KZ$-comonads, which is
obtained by replacing
item \ref{prop:KZ2} of Proposition~\ref{prop:KZ}  with the following:
\begin{equation}
  \label{eq:coKZcomonad}
  \text{$\alpha \colon X \to TX$ is a $T$-coalgebra
  if and only if $\varepsilon_{X} \dashv \alpha$ and
  $\varepsilon_{X} \circ \alpha = \id_{X}$.}
\end{equation}
By the uniqueness of left and right adjoints, for a (co)$\KZ$-comonad, each object
admits at most one coalgebra structure.  The following
characterisation of coalgebras and isomorphisms between them is useful.
\begin{corollaryC}[{\cite[Corollary 4.2.3]{escardo1998properly}}]
\label{cor:KZAlg}
  Let $\langle T, \varepsilon, \nu \rangle$ be a (co)$\KZ$-comonad on
  a poset enriched category $\mathbb{C}$.
  \begin{enumerate}
    \item \label{cor:KZAlg1}
      The following are equivalent for an object $X$:
      \begin{enumerate}
        \item\label{cor:KZAlg1a} $X$ admits a $T$-coalgebra structure.
        \item\label{cor:KZAlg1b} $X$ is a retract of $TX$.
        \item\label{cor:KZAlg1c} There exists  $\alpha \colon X \to TX$ such 
          that $\varepsilon_{X} \circ \alpha = \id_{X}$.
      \end{enumerate}

    \item \label{cor:KZAlg2} Any isomorphism in $\mathbb{C}$ 
      between the underlying objects of $T$-coalgebras is an
      isomorphism of the coalgebras.
  \end{enumerate}
\end{corollaryC}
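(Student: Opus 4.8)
The plan is to reduce everything to Proposition~\ref{prop:KZ} together with its co$\KZ$ analogue~\eqref{eq:coKZcomonad} and the uniqueness of adjoints, which already guarantees that each object carries at most one $T$-coalgebra structure. For part~\ref{cor:KZAlg1} I would prove the cyclic chain of implications (a)~$\Rightarrow$~(b)~$\Rightarrow$~(c)~$\Rightarrow$~(a), so that the whole work reduces to two small steps: producing a section of $\varepsilon_X$ from an arbitrary retraction, and upgrading any such section to a coalgebra.

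The implication (a)~$\Rightarrow$~(b) is immediate: a coalgebra structure $\alpha \colon X \to TX$ satisfies the counit law $\varepsilon_X \circ \alpha = \id_X$, so $\alpha$ is a section of $\varepsilon_X$ and $X$ is a retract of $TX$. For (b)~$\Rightarrow$~(c) I would use a pure comonad computation that does not yet invoke the $\KZ$ property: given $s \colon X \to TX$ and $r \colon TX \to X$ with $r \circ s = \id_X$, set $\beta \defeql Tr \circ \nu_X \circ s$. Naturality of $\varepsilon$ gives $\varepsilon_X \circ Tr = r \circ \varepsilon_{TX}$, and the comonad law $\varepsilon_{TX} \circ \nu_X = \id_{TX}$ then yields $\varepsilon_X \circ \beta = r \circ \varepsilon_{TX} \circ \nu_X \circ s = r \circ s = \id_X$, so $\beta$ is a section of $\varepsilon_X$, which is exactly condition~(c).

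The heart of the argument is (c)~$\Rightarrow$~(a), and this is where the $\KZ$ property is essential. Suppose $\beta \colon X \to TX$ satisfies $\varepsilon_X \circ \beta = \id_X$. Naturality of $\varepsilon$ at $\beta$ gives $\beta \circ \varepsilon_X = \varepsilon_{TX} \circ T\beta$. In the $\KZ$ case, $\varepsilon_{TX} \leq T\varepsilon_X$, so $\beta \circ \varepsilon_X \leq T\varepsilon_X \circ T\beta = T(\varepsilon_X \circ \beta) = \id_{TX}$; together with $\id_X \leq \varepsilon_X \circ \beta$ this exhibits $\beta \dashv \varepsilon_X$, so by Proposition~\ref{prop:KZ} the section $\beta$ is itself a $T$-coalgebra structure. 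In the co$\KZ$ case the inequality reverses to $\beta \circ \varepsilon_X \geq \id_{TX}$, giving $\varepsilon_X \dashv \beta$, and~\eqref{eq:coKZcomonad} again makes $\beta$ a coalgebra. I expect this step --- recognising that, under the (co)$\KZ$ hypothesis, every splitting of $\varepsilon_X$ is automatically the unique coalgebra structure --- to be the main conceptual obstacle; the rest is bookkeeping.

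For part~\ref{cor:KZAlg2}, let $(X,\alpha)$ and $(Y,\gamma)$ be $T$-coalgebras and $f \colon X \to Y$ an isomorphism of the underlying objects. I would transport $\gamma$ along $f$ by setting $\alpha' \defeql Tf^{-1} \circ \gamma \circ f$; naturality of $\varepsilon$ and $\varepsilon_Y \circ \gamma = \id_Y$ give $\varepsilon_X \circ \alpha' = f^{-1} \circ \varepsilon_Y \circ \gamma \circ f = \id_X$, and since $f$ and $f^{-1}$ are isomorphisms (hence adjoint to one another) the composite $\alpha'$ inherits the adjunction with $\varepsilon_X$, so $\alpha'$ is a coalgebra structure on $X$ by Proposition~\ref{prop:KZ}. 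By the uniqueness of coalgebra structures for a (co)$\KZ$-comonad, $\alpha' = \alpha$, which unwinds to $Tf \circ \alpha = \gamma \circ f$; hence $f$ is a coalgebra homomorphism, and being an isomorphism it is a coalgebra isomorphism.
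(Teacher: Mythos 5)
Your proof is correct and, for part~(2), essentially identical to the paper's: both transport the coalgebra structure along the isomorphism, use the splitting of $\varepsilon_X$ (i.e.\ (1c)$\Rightarrow$(1a)) to recognise $Tf^{-1}\circ\gamma\circ f$ as a coalgebra on $X$, and conclude by uniqueness of coalgebra structures. For part~(1) the paper simply cites Escard\'o, whereas you supply the standard argument in full --- $Tr\circ\nu_X\circ s$ for (b)$\Rightarrow$(c), and the naturality square $\beta\circ\varepsilon_X=\varepsilon_{TX}\circ T\beta$ combined with the (co)$\KZ$ inequality and Proposition~\ref{prop:KZ} for (c)$\Rightarrow$(a) --- which is correct and is exactly the intended content of the citation.
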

\begin{proof}
  \noindent \ref{cor:KZAlg1}. 
  See Escard\'o~\cite[Corollary~4.2.3]{escardo1998properly}.
  \smallskip

  \noindent \ref{cor:KZAlg2}. 
  We only give a proof for $\KZ$-comonads.
  Let $\alpha \colon X \to TX$ and 
    $\beta\colon Y \to TY$ be $T$-coalgebras, and let $f \colon X \to Y$
    be an isomorphism in $\mathbb{C}$ with an inverse $g \colon Y \to
    X$.  It suffices to show that $f$ is a coalgebra homomorphism.
    Since
    \[
      \varepsilon_{X} \circ Tg \circ \beta \circ f  = g \circ
      \varepsilon_{Y} \circ \beta \circ f = g \circ f = \id_{X},
    \]
    $Tg \circ \beta \circ f$ is a $T$-coalgebra on $X$ by \ref{cor:KZAlg1c}.
    Since a coalgebra structure on $X$ is unique, we must have 
    $Tg \circ \beta \circ f
    = \alpha$. Thus, $\beta \circ f = Tf \circ \alpha$.
\end{proof}

Each homset of $\PxPos$ is ordered by the inclusion of graphs of
approximable relations, and the functor $\LowerFunc$ clearly preserves
this order. 
\begin{proposition}
  \label{prop:LowerIscoKZ}
The comonad $\langle \LowerFunc, \varepsilon^{L}, \nu^{L} \rangle$ is a
co$\KZ$-comonad on $\PxPos$.
\end{proposition}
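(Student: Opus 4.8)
The plan is to show directly that $\LowerFunc$ satisfies the defining inequality of a co$\KZ$-comonad, namely $\LowerFunc \varepsilon^{L}_{S} \leq \varepsilon^{L}_{\Lower{S}}$ for each proximity poset $(S,\prox)$. Since each homset of $\PxPos$ is ordered by inclusion of graphs, this amounts to the inclusion of approximable relations ${\LowerFunc \varepsilon^{L}_{S}} \subseteq {\varepsilon^{L}_{\Lower{S}}}$ as subsets of $\Fin{\Fin{S}} \times \Fin{S}$. Both sides are concrete relations that I can compute from the definitions: the functor action gives $\LowerFunc \varepsilon^{L}_{S} = (\varepsilon^{L}_{S})_{L}$, the lower extension of the counit, while $\varepsilon^{L}_{\Lower{S}}$ is the counit of the comonad applied to the object $\Lower{S} = ((\Fin{S}, \leq_{L}), \prox_{L})$, hence is $(\prox_{L})_{L}$ restricted to singletons on the right.

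First I would unfold the left-hand side. For $\mathcal{A} \in \Fin{\Fin{S}}$ and $B \in \Fin{S}$, the relation $\mathcal{A} \mathrel{(\varepsilon^{L}_{S})_{L}} B$ holds iff every $A \in \mathcal{A}$ is related to some $b \in B$ via $\varepsilon^{L}_{S}$, i.e.\ $A \prox_{L} \{b\}$ for some $b \in B$. Next I would unfold the right-hand side: $\varepsilon^{L}_{\Lower{S}}$ is defined by \eqref{eq:Lowercounit} at the proximity poset $\Lower{S}$, so $\mathcal{A} \mathrel{\varepsilon^{L}_{\Lower{S}}} B$ holds iff $\mathcal{A} \mathrel{(\prox_{L})_{L}} \{B\}$ in $\Lower{\Lower{S}}$, which means every $A \in \mathcal{A}$ satisfies $A \prox_{L} B$. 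The desired inclusion then reduces to the elementary implication: if every $A \in \mathcal{A}$ has some $b \in B$ with $A \prox_{L} \{b\}$, then every such $A$ satisfies $A \prox_{L} B$. This is immediate because $\{b\} \leq_{L} B$ for $b \in B$, and $\prox_{L}$ composed with $\leq_{L}$ stays inside $\prox_{L}$ by property \ref{def:approximableU} for $\prox_{L}$.

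The main subtlety, and the step I would handle most carefully, is making sure the two relations are being compared as morphisms with the right source and target in $\PxPos$ and that the poset-enrichment order genuinely is graph inclusion in the sense needed by the definition of co$\KZ$-comonad. Since Proposition~\ref{prop:LowerIsUniversal} already establishes that $\Lower{S}$ is a strong proximity $\vee$-semilattice and that the relevant relations are approximable, I can invoke those facts and confine the work to the containment of graphs. I would also note the companion direction is not required: the co$\KZ$ condition is a one-sided inequality, so I only need the inclusion in the single direction computed above, and the reverse inclusion (which would give the $\KZ$ condition) genuinely fails, reflecting the lower powerlocale's orientation. With both sides reduced to explicit conditions on finite sets, the verification is a short chain of the definitions of $\prox_{L}$, $\leq_{L}$, and the lower extension, together with idempotency and roundedness of $\prox_{L}$ from Proposition~\ref{prop:LowerIsUniversal}.
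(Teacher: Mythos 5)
Your proposal is correct and follows essentially the same route as the paper: both unfold $\Lower{\varepsilon^{L}_{S}}$ and $\varepsilon^{L}_{\Lower{S}}$ into explicit conditions on finite sets and observe that $A \prox_{L} \{b\}$ for some $b \in B$ implies $A \prox_{L} B$. The paper phrases this last step directly from the definition of $\prox_{L}$ rather than via $\{b\} \leq_{L} B$ and \ref{def:approximableU}, but the content is identical.
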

\begin{proof}
  Let $(S, \prox)$ be a proximity poset. For each $\mathcal{U} \in
  \FFin{S}$ and $A \in \Fin{S}$, we have
  \begin{align*}
    \mathcal{U} \mathrel{\Lower{\varepsilon^{L}_{S}}}  A 
    &\iff
    \mathcal{U} \mathrel{(\prox_{L})_{L}}  \left\{ \left\{ a
    \right\} \mid a \in A \right\}\\
    &\;\,\Longrightarrow
    \mathcal{U} \mathrel{(\prox_{L})_{L}}  \left\{ A \right\}\\
    &\iff
    \mathcal{U} \mathrel{\varepsilon^{L}_{\Lower{S}}}  A.
  \end{align*}
  Thus  $\Lower{\varepsilon^{L}_{S}} \leq
  \varepsilon^{L}_{\Lower{S}}$.
\end{proof}

By Proposition \ref{prop:LowerIscoKZ} and Corollary
\ref{cor:KZAlg}~\eqref{cor:KZAlg1}, $\LowerFunc$-coalgebras are
precisely the retracts of free $\LowerFunc$-coalgebras in the category
of proximity posets. In view of Proposition~\ref{prop:LowerIsUniversal}
\eqref{prop:LowerIsUniversal1}, Theorem~\ref{prop:SPxJLatEquivContLat}, and
Proposition~\ref{prop:SplitCont}~\eqref{prop:SplitCont2}, each
$\LowerFunc$-coalgebra represents a continuous lattice.
Hence, the functor $K \colon \SPxJLat \to \coAlg{\LowerFunc}$ must be
essentially surjective, which we now make explicit.
\begin{lemma}
  \label{lem:LowerCoalgEssSurj}
  For each $\LowerFunc$-coalgebra $\alpha \colon S \to \Lower{S}$,
  there exists a strong proximity $\vee$-semilattice $S'$ which is isomorphic
  to $S$ in $\PxPos$.
\end{lemma}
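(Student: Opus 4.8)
The plan is to realise $S$, up to isomorphism in $\PxPos$, as the splitting of an idempotent that already lives in the full subcategory of proximity $\vee$-semilattices, and then to strengthen it via Proposition~\ref{prop:PxJLatToSPxJLat}. First I would extract a retraction from the coalgebra structure. The counit law for $\alpha \colon S \to \Lower{S}$ (equivalently, Corollary~\ref{cor:KZAlg}~\eqref{cor:KZAlg1}) gives $\varepsilon^{L}_{S} \circ \alpha = \id_{S}$, so
\[
  e \defeql \alpha \circ \varepsilon^{L}_{S} \colon \Lower{S} \to \Lower{S}
\]
is an idempotent approximable relation on $\Lower{S}$ that splits through $S$: taking $r = \varepsilon^{L}_{S}$ and $s = \alpha$ we have $r \circ s = \id_{S}$ and $s \circ r = e$. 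Hence $S$ is a splitting of $e$, and by uniqueness of splittings the whole task reduces to finding \emph{some} strong proximity $\vee$-semilattice isomorphic in $\PxPos$ to a splitting of $e$.

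The key observation is that $\Lower{S}$ is itself a (strong) proximity $\vee$-semilattice by Proposition~\ref{prop:LowerIsUniversal}~\eqref{prop:LowerIsUniversal1}, so it is an object of the full subcategory $\PxJLat = \Karoubi{\AlgPxSLatS}$ of $\PxPos$. Being a Karoubi envelope, $\PxJLat$ is idempotent complete, and being full, it contains $e$ as an endomorphism of $\Lower{S}$. Therefore $e$ splits inside $\PxJLat$, and its splitting is a \emph{proximity $\vee$-semilattice} $S''$; since splittings in a full subcategory agree with those in $\PxPos$, we get $S'' \cong S$ in $\PxPos$. Explicitly, $S''$ can be taken to be $((\Fin{S}, \leq_{L}), e)$, whose underlying $\vee$-semilattice is that of $\Lower{S}$ (with $\emptyset$ as bottom and union as join) and whose proximity is $e$; that $e$ is a legitimate proximity follows from the roundedness clauses (which amount to the idempotency $e = e \circ e$) and the ideal / upward-closure clauses (which amount to the approximability of $e$).

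It remains to make $S''$ strong, and here I would simply apply Proposition~\ref{prop:PxJLatToSPxJLat} to the proximity $\vee$-semilattice $S''$: it yields a strong proximity $\vee$-semilattice $S'$ isomorphic to $S''$ in $\PxJLat$, hence in $\PxPos$. Composing the two isomorphisms gives $S' \cong S$ in $\PxPos$, as required. As a payoff, essential surjectivity of $K$ then follows at once, since by Corollary~\ref{cor:KZAlg}~\eqref{cor:KZAlg2} this isomorphism automatically lifts to an isomorphism of $\LowerFunc$-coalgebras $K(S') \cong (S, \alpha)$.

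The genuine obstacle, and the reason the proof cannot stop at $S''$, is that the splitting $S''$ is in general \emph{not} strong. Indeed $e$ need not be join-approximable: if it were, then $\varepsilon^{L}_{S} \circ e = \varepsilon^{L}_{S}$ together with the uniqueness in the universal property of $\varepsilon^{L}_{S}$ (Proposition~\ref{prop:LowerIsUniversal}~\eqref{prop:LowerIsUniversal2}) would force $e = \id_{\Lower{S}}$, i.e.\ $S \cong \Lower{S}$, which fails in general. This is precisely why one must route the argument through Proposition~\ref{prop:PxJLatToSPxJLat}, whose whole purpose is to replace an arbitrary proximity $\vee$-semilattice by an isomorphic strong one.
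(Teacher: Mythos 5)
Your proof is correct and follows essentially the same route as the paper's: both form the idempotent $\alpha \circ \varepsilon^{L}_{S}$ on $\Fin{S}$, observe that its splitting is a proximity $\vee$-semilattice isomorphic to $S$ in $\PxPos$ (the paper verifies the isomorphism via $\alpha$ and $\varepsilon^{L}_{S}$ directly, where you invoke idempotent-completeness of $\Karoubi{\AlgPxSLatS}$ and uniqueness of splittings), and then strengthen it by Proposition~\ref{prop:PxJLatToSPxJLat}. Your closing remark correctly identifies why the detour through that proposition is unavoidable.
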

\begin{proof}
  Let $\alpha \colon S \to \Lower{S}$ be a $\LowerFunc$-coalgebra  
  on a proximity poset $(S, \prox)$. Let $\prox_{\alpha} \defeql
  \alpha \circ \varepsilon^{L}_{S}$, and put $S' = (\Fin{S},
  \prox_{\alpha})$ where $\Fin{S}$ is regarded as a free
  $\vee$-semilattice over $S$. Then, it is easy to see that
  $S'$ is a proximity $\vee$-semilattice. Moreover, 
  $\varepsilon^{L}_{X}$ and $\alpha$ are approximable relations from 
  $S'$ to $S$ and $S$ to $S'$, respectively, and they are inverse to each
  other. On the other hand, by Proposition~\ref{prop:PxJLatToSPxJLat}, there
  is a strong proximity $\vee$-semilattice $S''$ which is isomorphic
  to $S'$ in $\PxJLat$. Then, $S''$ is isomorphic to $S$ in $\PxPos$.
\end{proof}
By Lemma \ref{lem:LowerCoalgEssSurj} and
Corollary~\ref{cor:KZAlg}~\eqref{cor:KZAlg2}, the comparison functor
$K \colon \SPxJLat \to \coAlg{\LowerFunc}$ is essentially surjective.
Since $K$ is full and faithful (cf.\
Lemma~\ref{lem:ProxSupLatLowerCoalg}), $K$ determines an equivalence
of the categories.%
\footnote{This equivalence does not require
  the axiom of choice because an explicit description of a
  quasi-inverse of $K$ can be obtained from the proof
  of Lemma \ref{lem:LowerCoalgEssSurj}.}

\begin{theorem}
  \label{thm:LowerCoAlgEquivPxSupLat}
    $\SPxJLat$ is equivalent to $\coAlg{\LowerFunc}$.
\end{theorem}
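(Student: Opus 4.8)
The plan is to show that the comparison functor $K \colon \SPxJLat \to \coAlg{\LowerFunc}$ is an equivalence by verifying the three defining properties of an equivalence: faithfulness, fullness, and essential surjectivity. Faithfulness and fullness are already in hand from Lemma~\ref{lem:ProxSupLatLowerCoalg}, so the only genuine work remaining is essential surjectivity, and even this has been largely prepared by the preceding lemmas.

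For essential surjectivity, I would start from an arbitrary $\LowerFunc$-coalgebra $\alpha \colon S \to \Lower{S}$ on a proximity poset $(S,\prox)$. By Lemma~\ref{lem:LowerCoalgEssSurj}, there is a strong proximity $\vee$-semilattice $S'$ together with an isomorphism $S' \cong S$ in $\PxPos$. The point is then to upgrade this isomorphism of underlying proximity posets to an isomorphism of $\LowerFunc$-coalgebras. This is exactly where the co$\KZ$ character of the comonad enters: since $\LowerFunc$ is a co$\KZ$-comonad by Proposition~\ref{prop:LowerIscoKZ}, Corollary~\ref{cor:KZAlg}~\eqref{cor:KZAlg2} guarantees that any isomorphism in $\PxPos$ between the underlying objects of two coalgebras is automatically an isomorphism of those coalgebras. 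The coalgebra on $S'$ is the canonical one $K(S') = \eta^{L}_{S'}$, while the coalgebra on $S$ is $\alpha$; since a co$\KZ$-comonad admits at most one coalgebra structure per object, there is no ambiguity. Hence $(S,\alpha) \cong K(S')$ in $\coAlg{\LowerFunc}$, and $K$ is essentially surjective.

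Combining essential surjectivity with the full-faithfulness from Lemma~\ref{lem:ProxSupLatLowerCoalg}, the functor $K$ is an equivalence of categories. Following the footnote, I would also observe that this conclusion is choice-free: the proof of Lemma~\ref{lem:LowerCoalgEssSurj}, via the explicit construction of Proposition~\ref{prop:PxJLatToSPxJLat}, assigns a canonical strong proximity $\vee$-semilattice to each coalgebra, so a quasi-inverse to $K$ can be written down explicitly rather than extracted by choice.

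The step I expect to be the genuine crux is not in assembling this theorem---once the lemmas are granted, the assembly is immediate---but rather lies upstream, in the supporting results: the fullness half of Lemma~\ref{lem:ProxSupLatLowerCoalg} (showing that a coalgebra homomorphism between strong proximity $\vee$-semilattices must preserve finite joins) and the construction underlying Lemma~\ref{lem:LowerCoalgEssSurj}, which in turn rests on the delicate passage from a proximity $\vee$-semilattice to an isomorphic strong one in Proposition~\ref{prop:PxJLatToSPxJLat}. The theorem itself is essentially a corollary of this accumulated machinery.
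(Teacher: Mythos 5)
Your proposal is correct and follows the paper's own argument essentially verbatim: full-faithfulness from Lemma~\ref{lem:ProxSupLatLowerCoalg}, essential surjectivity from Lemma~\ref{lem:LowerCoalgEssSurj} combined with Corollary~\ref{cor:KZAlg}~\eqref{cor:KZAlg2} via the co$\KZ$ property, and the same choice-free remark from the footnote. Nothing is missing.
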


By Theorem~\ref{prop:SPxJLatEquivContLat} and
Theorem~\ref{thm:LowerCoAlgEquivPxSupLat},  
we have the following characterisation of continuous lattices.
\begin{theorem}
  \label{thm:LowerEquivContLat}
  $\coAlg{\LowerFunc}$ is dually equivalent to $\ContLat$.
\end{theorem}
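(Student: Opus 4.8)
The plan is simply to chain together the two equivalences that were established immediately beforehand, since the statement is a formal consequence of them. First I would invoke Theorem~\ref{thm:LowerCoAlgEquivPxSupLat}, which says that the comparison functor $K$ exhibits an equivalence of categories $\coAlg{\LowerFunc} \simeq \SPxJLat$; by the footnote accompanying that theorem, an explicit quasi-inverse is available, so this uses no choice principle. Then I would invoke Theorem~\ref{prop:SPxJLatEquivContLat}, which provides a dual equivalence between $\SPxJLat$ and $\ContLat$, that is, an equivalence $\SPxJLat \simeq \Opposite{\ContLat}$.

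The remaining step is the purely formal observation that an equivalence composed with a dual equivalence is again a dual equivalence. Concretely, pasting the two witnessing equivalences yields
\[
  \coAlg{\LowerFunc} \simeq \SPxJLat \simeq \Opposite{\ContLat},
\]
where the composite of the two quasi-inverses serves as a quasi-inverse and the two natural isomorphisms compose in the evident way. An equivalence $\coAlg{\LowerFunc} \simeq \Opposite{\ContLat}$ is, by definition, a dual equivalence between $\coAlg{\LowerFunc}$ and $\ContLat$, which is exactly the assertion to be proved.

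Since both input statements are already proved, there is essentially no obstacle. The only point requiring attention is bookkeeping about the direction of the arrows: one must compose on the side of $\SPxJLat$ that carries the equivalence (not the dualised side) so as to conclude a \emph{dual} equivalence between $\coAlg{\LowerFunc}$ and $\ContLat$, rather than an ordinary equivalence. This follows immediately from the definitions and from the closure of (ordinary) equivalences of categories under composition.
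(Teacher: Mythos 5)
Your proposal is correct and follows exactly the paper's own argument: the paper derives Theorem~\ref{thm:LowerEquivContLat} by composing the equivalence $\coAlg{\LowerFunc} \simeq \SPxJLat$ of Theorem~\ref{thm:LowerCoAlgEquivPxSupLat} with the dual equivalence $\SPxJLat \simeq \Opposite{\ContLat}$ of Theorem~\ref{prop:SPxJLatEquivContLat}. Your additional remarks about the explicit quasi-inverse and the direction of arrows are accurate bookkeeping but add nothing beyond the paper's one-line justification.
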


\subsection{Localized strong proximity 
  \texorpdfstring{$\vee$}{join}-semilattices}\label{sec:LocalizedPxSL}
We characterise locally compact locales in terms of strong proximity
$\vee$-semi\-lattices.  Recall that a \emph{frame} is a poset $(X,
\wedge, \bigvee)$ with finite meets $\wedge$ and joins $\bigvee$ for
all subsets of $X$ where finite meets distribute over all joins. A
homomorphism between frames $X$ and $Y$ is a function $f \colon X \to
Y$ which preserves finite meets and all joins.  The \emph{category of
locales} is the opposite of the category of frames and frame
homomorphisms. A  locale is \emph{locally compact} if it is a
continuous lattice (see Johnstone~\cite[Chapter VII, Section
4]{johnstone-82}).

The following structure characterises locally compact locales.
\begin{definition}
  A strong proximity $\vee$-semilattice $(S, 0, \vee, \prox)$ is  
   \emph{localized} if 
  \begin{equation}
    \label{eq:Localized}
    a \prox b \leq c \vee d \imp \exists a_{1} \in \left(b
    \downarrow_{\prox} c \right)
     \exists a_{2} \in \left( b \downarrow_{\prox} d \right) 
     \left( a \prox a_{1} \vee a_{2} \right),
  \end{equation}
  where $b \downarrow_{\prox} c 
  \defeql \downset_{\prox}b \cap \downset_{\prox}c$.
\end{definition}
Since $S$ is assumed to be strong, the antecedent of \eqref{eq:Localized}
can be equivalently stated as $a \prox b \prox c \vee d$. 

\begin{lemma}
  \label{lem:EquivalenceLocalization}
  For strong proximity $\vee$-semilattices,
  the condition \eqref{eq:Localized} is equivalent to
  \begin{equation}
    \label{eq:LocalizedGeneral}
    a \prox a'\leq \medvee A \amp a' \leq \medvee B 
    \imp \exists C \in \Fin{A \downarrow_{\prox} B} \left( a \prox
    \medvee C \right),
  \end{equation}
  where $A \downarrow_{\prox} B \defeql \downset_{\prox} A \cap
  \downset_{\prox} B$.
\end{lemma}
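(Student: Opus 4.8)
The plan is to prove the two implications separately, with \eqref{eq:LocalizedGeneral} $\Rightarrow$ \eqref{eq:Localized} being routine and \eqref{eq:Localized} $\Rightarrow$ \eqref{eq:LocalizedGeneral} carrying the real work.

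For \eqref{eq:LocalizedGeneral} $\Rightarrow$ \eqref{eq:Localized}, I would instantiate the general condition at $A = \{c,d\}$, $B = \{b\}$ and $a' = b$: the hypothesis $a \prox b \leq c \vee d$ gives $a \prox a' \leq \medvee A$ and $a' \leq \medvee B$, so \eqref{eq:LocalizedGeneral} yields a finite $C \subseteq \downset_{\prox}\{c,d\} \cap \downset_{\prox} b$ with $a \prox \medvee C$. Every $z \in C$ satisfies $z \prox b$ and ($z \prox c$ or $z \prox d$); using the witnesses of this finite disjunction I would split $C = C_{c} \cup C_{d}$ with $C_{c} \subseteq \downset_{\prox} c$ and $C_{d} \subseteq \downset_{\prox} d$, and set $a_{1} = \medvee C_{c}$, $a_{2} = \medvee C_{d}$. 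Since finite joins of elements $\prox b$ (resp.\ $\prox c$) are again $\prox b$ (resp.\ $\prox c$)---the basic fact that $x \prox y \amp x' \prox y' \imp x \vee x' \prox y \vee y'$ together with $0 \prox y$, already used in the paper---we obtain $a_{1} \in b \downarrow_{\prox} c$, $a_{2} \in b \downarrow_{\prox} d$ and $a \prox \medvee C = a_{1} \vee a_{2}$, which is \eqref{eq:Localized}.

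For the converse, the idea is a finitary analogue of deducing the general distributive law from its binary instance, proved by induction on the cardinalities of $A$ and $B$. Throughout I would use the remark following \eqref{eq:Localized} to read an antecedent of the form ``$a \prox a' \leq \medvee A$'' equivalently as ``$a \prox a' \prox \medvee A$'', so that \eqref{eq:Localized} applies whenever the bound lies way-below a binary join. I would first treat the case of a singleton $B = \{b\}$ by induction on $|A|$: writing $\medvee A = \medvee A' \vee a_{n+1}$ and applying \eqref{eq:Localized} with bound $a'$ splits $a$ as $a \prox p \vee q$, where $p \prox a'$, $p \prox \medvee A'$ and $q \prox a'$, $q \prox a_{n+1}$; here $q$ is already a legitimate piece, and since $a' \leq b$ every piece $\prox a'$ is automatically $\prox b$, so the $B$-membership is free. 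The general case then follows by a second induction on $|B|$, peeling off one element of $B$ by \eqref{eq:Localized}. To reassemble the pieces I would use a substitution step: from $a \prox p \vee q$ and $p \prox \medvee C'$ one gets $a \prox (\medvee C') \vee q$, proved by interpolating $a \prox m \prox p \vee q$, decomposing $m \leq p^{*} \vee q^{*}$ with $p^{*} \prox p$, $q^{*} \prox q$ via \ref{def:approximableJ}, and recombining using $p^{*} \prox p \prox \medvee C'$ with the upward and downward closure of $\prox$; this avoids any appeal to $\prox\,\subseteq\,\leq$, which does not hold in general.

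The main obstacle is the inductive step itself. After \eqref{eq:Localized} peels one element, the residual piece $p$ satisfies only $p \prox a'$ and $p \prox \medvee A'$---the bound $a'$ need no longer lie below the reduced cover $\medvee A'$---so to keep recurring I must re-establish a bound lying below \emph{both} reduced joins, equivalently refine $p$ simultaneously into $\downset_{\prox} A'$ and $\downset_{\prox} B$. This simultaneous refinement is exactly where the localized hypothesis is essential: a single common lower bound need not exist, but \eqref{eq:Localized} supplies the required \emph{finite} family inside the intersection. I expect the cleanest organization is to carry the induction on $|A| + |B|$ with the bound maintained below both joins, applying \eqref{eq:Localized} to produce one piece already in $\downset_{\prox} A \cap \downset_{\prox} B$ together with a controlled remainder absorbed by the induction hypothesis. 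Alternatively, one may transport the statement to the represented continuous lattice $\RIdeals{S}$, where \eqref{eq:Localized} is precisely binary frame distributivity and \eqref{eq:LocalizedGeneral} its finitary consequence, and pull the result back along the correspondence established in Section~\ref{sec:ProxPoset}.
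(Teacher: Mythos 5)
Your first direction and your substitution step are both correct: the instantiation $A=\{c,d\}$, $B=\{b\}$, $a'=b$ matches the paper's argument up to swapping the roles of the two sets, and the substitution step is exactly the right use of roundedness, \ref{def:approximableJ}, and the ideal/upper-set properties of $\prox$. The gap is in the inductive step of your singleton case. After \eqref{eq:Localized} gives $a \prox p \vee q$ with $p \in a' \downarrow_{\prox} \medvee A'$, you recurse ``on $p$'' to obtain $p \prox \medvee C'$. But to invoke the induction hypothesis on $p$ you must first exhibit some $p''$ with $p \prox p'' \leq \medvee A''$ (where $A'' \prox_{L} A'$), and the hypothesis then returns $C' \subseteq p'' \downarrow_{\prox} A''$: its elements are $\prox p''$, and $p''$ bears no relation to $a'$ or $b$, so your claim that ``the $B$-membership is free'' fails for these sub-pieces. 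You correctly sense the difficulty (your ``main obstacle'' paragraph), but the proposed fix --- re-establishing a single bound below both reduced joins, or a simultaneous refinement supplied by localization --- is not available: as you yourself observe, no such common bound need exist, and neither \eqref{eq:Localized} nor the induction hypothesis can be applied to $p$ without one.

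The working fix, which is what the paper does, is not a simultaneous refinement but a relocation of roles plus transitivity. State the intermediate lemma as $a \prox a' \leq \medvee A \imp \exists C \in \Fin{a' \downarrow_{\prox} A}\left( a \prox \medvee C \right)$, i.e.\ with the output pieces lying below the \emph{bound} $a'$. In the inductive step, after interpolating $a \prox m \prox p \vee q$ and decomposing $m \leq p^{*} \vee q^{*}$ with $p^{*} \prox p$ (your own substitution data), apply the induction hypothesis to the pair $(p^{*}, p)$, i.e.\ to $p^{*} \prox p \leq \medvee A''$: the sub-pieces are then $\prox p \prox a'$, hence $\prox a'$ by idempotency of $\prox$, and the $b$-membership really is free. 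The general case is then obtained by two sequential passes of this lemma rather than one induction on $|A|+|B|$: first split $a$ into $C \subseteq a' \downarrow_{\prox} A$, refine to $D \prox_{L} C$ with $a \leq \medvee D$ by strongness, and then split each $d \prox c$ (with $c \in C$ and $c \prox a' \leq \medvee B$) along $B$; the $A$-membership of the new pieces is preserved automatically because they are $\prox c$ and $c \in \downset_{\prox} A$. Your one-sentence alternative via frame distributivity in $\RIdeals{S}$ is plausible but unsubstantiated here, and the relevant equivalence (Proposition~\ref{prop:LocalizedFrame}) is itself established from \eqref{eq:Localized} only after this lemma.
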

\begin{proof}
  First, \eqref{eq:Localized} implies
  \begin{equation}
    \label{lem:EquivalenceLocalization1}
    a \prox a' \leq \medvee A \imp \exists C \in 
    \Fin{a' \downarrow_{\prox} A} \left( a \prox \medvee C \right),
  \end{equation}
  where $a' \downarrow_{\prox} A \defeql \{a'\} \downarrow_{\prox} A$.
  This can be proved
  by induction on the size of $A$. Note that the antecedent
  of \eqref{lem:EquivalenceLocalization1} can be equivalently stated
  as $a \prox a' \prox \medvee A $.
  Now, suppose
  $a \prox a' \leq \medvee A$ and $a' \leq \medvee B$.
  By \eqref{lem:EquivalenceLocalization1},
  there exists $C \in \Fin{a' \downarrow_{\prox} A}
  \subseteq  \Fin{\medvee B \downarrow_{\prox} A}$
  such that $a \prox \medvee C$,
  and since $S$ is strong, there exists $D \prox_{L} C$ such that
  $a \leq \bigvee D$.
%
%
  For each  $d \in D$, there exist $c \in \downset_{\prox} A$ and
  $E_{d} \in \Fin{c \downarrow_{\prox} B}$ such that $d \prox
  \bigvee E_{d}$ by
  \eqref{lem:EquivalenceLocalization1}.
  Thus, $E_{d} \in \Fin{A \downarrow_{\prox} B}$ for each $d \in D$.
  Then, by putting $E = \bigcup_{d \in D} E_{d}$, we have
  $E \in \Fin{A \downarrow_{\prox} B}$
  and $\medvee D \prox \medvee E$, and so $a \prox \medvee E$.

%
%
  Conversely, assume that \eqref{eq:LocalizedGeneral} holds,
  and let $a \prox b \leq c \vee d$. By letting
  $a' = b$, $A = \left\{ b \right\}$, and
  $B = \left\{ c,d \right\}$ in \eqref{eq:LocalizedGeneral},
  we find $C \in \Fin{b \downarrow_{\prox} \left\{ c,d \right\}}$ 
  such that $a \prox \bigvee C$.
  Split $C$ into $C_{c}$ and $C_{d}$ such that $C = C_{c} \cup C_{d}$,
  $C_{c} \in \Fin{b \downarrow_{\prox} c}$, and 
  $C_{d} \in \Fin{b \downarrow_{\prox} d}$. By putting $a_{1} = \bigvee
  C_{c}$ and $a_{2} = \bigvee C_{d}$, we have  $a \prox a_{1} \vee a_{2}$,
  $a_{1} \in (b \downarrow_{\prox} c)$, and 
  $a_{2} \in (b \downarrow_{\prox} d)$.
\end{proof}

As in the case of \eqref{eq:Localized}, the antecedent of
\eqref{eq:LocalizedGeneral} can be equivalently stated as $a \prox a' \prox
\bigvee A$ and $a' \prox \bigvee B$. 
%
%
By simple induction, one can show that \eqref{eq:LocalizedGeneral}
is further equivalent to 
  \begin{equation}
    \label{eq:LocalizedGeneralFinite}
    a \prox a' \amp \forall i < n \left( a' \leq \medvee A_{i} \right) 
    \imp \exists C \in \Fin{\bigcap_{i < n} \downset_{\prox} A_{i}}
    \left( a \prox \medvee C \right)
  \end{equation}
  for finitely many $A_{0},\dots,A_{n-1} \in \Fin{S}$.\footnote{In the case
  $n = 0$, we assume $\bigcap_{i < n} \downset_{\prox} A_{i} = S$.}

The following proposition says that localized strong proximity
$\vee$-semilattices capture the notion of locally compact locale.
\begin{proposition}
  \label{prop:LocalizedFrame}
  A strong proximity $\vee$-semilattice $S$ is localized if and
  only if the collection $\RIdeals{S}$ of rounded ideals of $S$ is a
  frame.
\end{proposition}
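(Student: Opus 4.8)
The plan is to reduce frameness of $\RIdeals{S}$ to a single distributive inclusion and then match it, element by element, against the localization condition. Since $S$ is a strong proximity $\vee$-semilattice, $\RIdeals{S}$ is a continuous, hence complete, lattice by Proposition~\ref{prop:SplitCont}\eqref{prop:SplitCont2}, so it is a frame precisely when binary meets distribute over arbitrary joins; as $\bigvee_{k}(I \wedge J_{k}) \leq I \wedge \bigvee_{k} J_{k}$ always holds, it suffices to establish $I \wedge \bigvee_{k} J_{k} \subseteq \bigvee_{k}(I \wedge J_{k})$. I would first record two computations in $\RIdeals{S}$: joins are finitely approximated, $\bigvee_{k} J_{k} = \{a \mid \exists F \in \Fin{\bigcup_{k} J_{k}}\,(a \prox \medvee F)\}$ (a consequence of the finite-join formula already displayed for $\RIdeals{S}$), while the binary meet is the largest rounded ideal inside the intersection, $I \wedge J = \downset_{\prox}(I \cap J) = \{a \mid \exists e \in I \cap J\,(a \prox e)\}$. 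In particular $\downset_{\prox} b \wedge \downset_{\prox} c = \downset_{\prox}(b \downarrow_{\prox} c)$, and since $\downset_{\prox} b,\downset_{\prox} c$ are inhabited downward-closed ideals, $b \downarrow_{\prox} c$ is closed under finite joins and contains $0$ (which handles empty cases below).

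For frame $\Rightarrow$ localized I would verify \eqref{eq:Localized} directly. Given $a \prox b \leq c \vee d$, interpolation followed by the splitting property \ref{def:approximableJ} gives, for each $y \prox b$, elements $c' \prox c$ and $d' \prox d$ with $y \prox c' \vee d'$, whence $\downset_{\prox} b \subseteq \downset_{\prox} c \vee \downset_{\prox} d$ in $\RIdeals{S}$. The frame law together with the meet computation then yields $\downset_{\prox} b = \downset_{\prox}(b \downarrow_{\prox} c) \vee \downset_{\prox}(b \downarrow_{\prox} d)$. Since $a \in \downset_{\prox} b$, the join formula produces a finite $G$ with $a \prox \medvee G$ and $G \subseteq (b \downarrow_{\prox} c) \cup (b \downarrow_{\prox} d)$ (using $\downset_{\prox}(b\downarrow_{\prox}c) \subseteq b\downarrow_{\prox}c$); splitting $G$ and letting $a_{1},a_{2}$ be the joins of the two parts gives $a_{1} \in b \downarrow_{\prox} c$, $a_{2} \in b \downarrow_{\prox} d$, and $a \prox a_{1} \vee a_{2}$, which is exactly \eqref{eq:Localized}.

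For the converse, localized $\Rightarrow$ frame, I would chase an element $a \in I \wedge \bigvee_{k} J_{k}$. By the meet and join formulas there is $e \in I$ with $a \prox e$ and $e \prox \medvee F$ for a finite $F \subseteq \bigcup_{k} J_{k}$, so $a \prox e \prox \medvee F$. The generalized localization \eqref{lem:EquivalenceLocalization1} (derived from \eqref{eq:Localized} in Lemma~\ref{lem:EquivalenceLocalization}), applied with $a' = e$ and $A = F$, yields a finite $C \subseteq \downset_{\prox} e \cap \downset_{\prox} F$ with $a \prox \medvee C$; each $c \in C$ satisfies $c \prox e \in I$ and $c \prox x \in J_{k}$ for some $x \in F$, hence $c \in I \cap J_{k}$.

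The main obstacle appears here: the $c \in C$ lie in the ordinary intersection $I \cap J_{k}$, which need not be rounded, so they need not belong to the genuine meet $I \wedge J_{k} = \downset_{\prox}(I \cap J_{k})$. To overcome this I would push each $c$ strictly below using strength once more: interpolate $a \prox a^{\dagger} \prox \medvee C$ and apply \ref{def:approximableJ} to $a^{\dagger} \prox \medvee C$ to obtain, for each $c \in C$, an element $c' \prox c$ with $a^{\dagger} \leq \medvee G'$, where $G' = \{c' \mid c \in C\}$. Then $c' \prox c \in I \cap J_{k}$ gives $c' \in \downset_{\prox}(I \cap J_{k}) = I \wedge J_{k}$, and $a \prox a^{\dagger} \leq \medvee G'$ gives $a \prox \medvee G'$; since $G' \subseteq \bigcup_{k}(I \wedge J_{k})$ is finite, $a \in \bigvee_{k}(I \wedge J_{k})$, completing the distributive inclusion. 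The conceptual crux throughout is that arbitrary joins in $\RIdeals{S}$ are approximated by finite sub-joins, which is exactly what lets the finitary condition \eqref{eq:Localized} control the infinitary frame law; the one genuinely delicate step is this final strength-driven passage from $I \cap J_{k}$ into the rounded meet $I \wedge J_{k}$.
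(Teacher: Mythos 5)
Your proof is correct, and the direction ``frame implies localized'' coincides with the paper's argument (interpolate, split with \ref{def:approximableJ} to get $\downset_{\prox}b \subseteq \downset_{\prox}c \vee \downset_{\prox}d$, apply binary distributivity, read off $a_{1},a_{2}$ from the join formula). Where you genuinely diverge is in ``localized implies frame''. The paper first invokes the fact that in a continuous lattice finite meets automatically distribute over directed joins (Johnstone, Ch.~VII, Lemma~4.1), which reduces the frame law to the single binary identity $I \wedge (J \vee K) = (I \wedge J) \vee (I \wedge K)$; that identity is then derived from the raw binary condition \eqref{eq:Localized}. You instead prove the infinitary inclusion $I \wedge \bigvee_{k} J_{k} \subseteq \bigvee_{k}(I \wedge J_{k})$ head-on, using the description of arbitrary joins in $\RIdeals{S}$ as rounded downsets of finite sub-joins together with the $n$-ary localization condition \eqref{lem:EquivalenceLocalization1} extracted from the proof of Lemma~\ref{lem:EquivalenceLocalization}. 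Your route buys self-containedness --- no appeal to meet-continuity of continuous lattices --- at the cost of having to justify the formula for arbitrary joins and of leaning on the generalized rather than the binary form of localization. Both are sound, and your final move (interpolating $a \prox a^{\dagger} \prox \medvee C$ and pushing each $c \in C$ down by strength so as to land in $I \wedge J_{k} = \downset_{\prox}(I \cap J_{k})$ rather than merely in $I \cap J_{k}$) is exactly the same repair the paper performs, in its binary setting, when it passes from $a_{1} \in b \downarrow_{\prox} c$ to $a_{1}' \prox a_{1}$ with $a \leq a_{1}' \vee a_{2}'$; you are right to flag it as the one delicate point.
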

\begin{proof}
Let $(S, 0, \vee, \prox)$ be a strong proximity $\vee$-semilattice. Since $\RIdeals{S}$
has all joins, $\RIdeals{S}$ has finite meets
characterised by
\begin{equation}\label{eq:FiniteMeets}
  1 \defeql \downarrow_{\prox} S,
  \qquad \qquad
  I \wedge J \defeql \downarrow_{\prox} \left(  I \cap J\right).
\end{equation}
%
%
Since $\RIdeals{S}$ is a continuous lattice, finite meets distribute
over directed joins (cf.\ Johnstone~\cite[Chapter~VII,
Lemma~4.1]{johnstone-82}). Thus, it suffices to show that the
condition~\eqref{eq:Localized} is equivalent to
the distributivity of finite meets over finite joins, i.e.,
\begin{equation}\label{eq:Distributive}
  I \wedge (J \vee K)  = (I \wedge J) \vee (I \wedge K) 
\end{equation}
for all $I,J,K \in \RIdeals{S}$. 

First, suppose that $S$ is localized.
Let $a \in I \wedge (J \vee K)$. 
%
%
Since $S$ is strong and $J$ and $K$ are rounded, we have 
$J \vee K 
= \bigcup_{c \in J, d \in K} \downset_{\prox} (c \vee d) 
= \bigcup_{c \in J, d \in K} \downset_{\leq} (c \vee d)$,
and so $I \wedge (J \vee K)= \downset_{\prox} (I \cap \bigcup_{c \in J,
d \in K} \downset_{\leq} (c \vee d))$. 
Thus, there exist $b \in I$,
$c \in J$, and $d \in K$ such that $a \prox b \leq c \vee d$.
Then, by \eqref{eq:Localized}, there exist $a_{1} \in b \downarrow_{\prox} c$
and $a_{2} \in b \downarrow_{\prox} d$ such that $a \prox a_{1} \vee
a_{2}$, and since $S$ is strong, there exist $a_{1}'$ and $a_{2}'$ such that
$a_{1}' \prox a_{1}$, $a_{2}' \prox a_{2}$ and $a \leq a_{1}' \vee
a_{2}'$. Then $a \in (I \wedge J) \vee (I \wedge K)$. 
 
Conversely, assume that \eqref{eq:Distributive}
holds for all $I,J,K \in \RIdeals{S}$, and let  $a \prox b \leq c \vee d$.
Choose $a' \in S$ such that $a \prox a' \prox b$. Then, 
$a' \in {\downarrow_{\prox} b} \cap \left({\downarrow_{\prox}c}
\vee {\downarrow_{\prox}d}  \right)$, and so
$a \in {\downarrow_{\prox} b} \wedge \left({\downarrow_{\prox}c}
\vee {\downarrow_{\prox}d}  \right)$. Thus
$a \in \left( {\downarrow_{\prox}b} \wedge
{\downarrow_{\prox}c} \right) \vee \left(
{\downarrow_{\prox}b} \wedge {\downarrow_{\prox}d} \right)$
by~\eqref{eq:Distributive}. Then, there exist
$a_{1} \in b \downarrow_{\prox} c$ and $a_{2} \in b \downarrow_{\prox}
d$ such that $a \prox a_{1} \vee a_{2}$. 
\end{proof}


%
%
A continuous lattice has finite meets, so it is a continuous meet
semilattice, i.e., continuous domain which has finite meets.
In classical domain theory, the category of continuous meet
semilattices and Scott continuous meet semilattice homomorphisms is
equivalent to the category of algebras of the upper powerdomains of
continuous domains~(cf.\ Schalk~\cite[Section~7.2.5]{schalk1994algebras}).
In the point-free setting, Vickers~\cite[Section 5.1]{Infosys}
conjectured that this domain theoretic characterisation should hold
for the upper powerlocale on the category of infosys. In what follows,
we confirm his conjecture in the dual context of $\PxPos$ by showing
that the category of coalgebras of the upper powerlocale on $\PxPos$
is dually equivalent to that of continuous meet semilattices and Scott
continuous meet semilattice homomorphisms.

%
%
In the context of infosys~\cite{Infosys}, the construction of the
upper powerlocale is given as in Definition~\ref{def:UpperPower} below,
which corresponds to the upper powerlocale on the Scott
topologies of continuous domains~\cite[Theorem~4.3~(iii)]{Infosys}.
\begin{definition}
  \label{def:UpperExt}
  For a relation $r \subseteq S \times S'$, 
   its \emph{upper extension}
   $r_{U} \subseteq  \Fin{S} \times \Fin{S'}$ 
  is defined by
  \[
    A \mathrel{r_{U}} B
    \defeqiv
    \forall b \in B \, \exists a \in A \left(  a \mathrel{r} b \right).
  \]
\end{definition}
\begin{defiC}[{\cite[Definition~4.1]{Infosys}}]
  \label{def:UpperPower}
  Let $(S,\prox)$ be a proximity poset. The \emph{upper powerlocale}
  $\Upper{S}$ of $(S,\prox)$ is a proximity poset $((\Fin{S},
  \leq_{U}), \prox_{U})$ where $(\Fin{S}, \leq_{U})$ denotes the poset reflection of the preorder
  $\leq_{U}$.
\end{defiC}
Note that $\Upper{S}$ is indeed a proximity poset: the only
non-trivial property to be checked is that $\downset_{\prox_{U}} A$ is
directed for each $A \in \Fin{S}$.  To see this, let $B,C \in
\downset_{\prox_{U}} A$.  For each $a \in A$, there exist $b_{a} \in
B$ and $c_{a} \in C$ such that $b_{a} \prox a$ and $c_{a} \prox a$.
Thus, there exists $d_{a} \prox a$ such that $b_{a} \leq d_{a}$ and
$c_{a} \leq d_{a}$. Put $D = \left\{ d_{a} \mid a \in A \right\}$.
Then, $D \prox_{U} A$, $B \leq_{U} D$, and  $C \leq_{U} D$.
Similarly, one can show that $\downset_{\prox_{U}} A$ is inhabited.

The construction $\Upper{S}$ gives rise to a functor $\UpperFunc
\colon \PxPos \to \PxPos$,
which is defined on morphisms as follows:
\[
  \Upper{r} \defeql r_{U}.
\]
%
%
As in the previous paragraph, one can show that  $\UpperFunc$ is
well-defined on morphisms, i.e., if $r \colon (S, \prox) \to (S', \prox')$
is an approximable relation, then $r_{U}$ is an approximable relation
from $\Upper{S}$ to $\Upper{S'}$.
 
There are approximable relations $\varepsilon^{U}_{S} \colon
\Upper{S} \to S$ and $\nu^{U}_{S} \colon
\Upper{S} \to \Upper{\Upper{S}}$ defined by
  \begin{align*}
    A \mathrel{\varepsilon^{U}_{S}} a
    &\defeqiv A \prox_{U} \left\{ a \right\}, \\
    A \mathrel{\nu^{U}_{S}} \mathcal{U}
    &\defeqiv  A \prox_{U} \bigcup \mathcal{U}.
  \end{align*}
It is routine to show that $\langle \UpperFunc, \varepsilon^{U},
\nu^{U} \rangle$ is a $\KZ$-comonad on $\PxPos$. 

\begin{proposition}
  \label{prop:AlgUpper}
  A proximity poset $(S,\prox)$ is a $\UpperFunc$-coalgebra if and only if
  $\RIdeals{S}$ has finite meets.
\end{proposition}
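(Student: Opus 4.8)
The plan is to reduce the biconditional to the existence of an approximable section of the counit $\varepsilon^{U}_{S}$, and then to read such a section, through the dual equivalence $\PxPos \simeq \Opposite{\Cont}$ (Theorem~\ref{prop:SplitAlgPxPos}), as the finite-meet operation on $\RIdeals{S}$. Since $\UpperFunc$ is a $\KZ$-comonad, Corollary~\ref{cor:KZAlg}~\eqref{cor:KZAlg1} shows that $(S,\prox)$ carries a coalgebra structure precisely when there is an approximable relation $\alpha \colon S \to \Upper{S}$ with $\varepsilon^{U}_{S} \circ \alpha = \prox$ (the identity of $S$ in $\PxPos$). Under the correspondence~\eqref{eq:ApproxToScottPxPos}, $\alpha$ and $\varepsilon^{U}_{S}$ become Scott continuous maps $r = \overline{f_{\alpha}} \colon \RIdeals{\Upper{S}} \to \RIdeals{S}$ and $e = \overline{f_{\varepsilon^{U}_{S}}} \colon \RIdeals{S} \to \RIdeals{\Upper{S}}$, and the section equation becomes $r \circ e = \id_{\RIdeals{S}}$. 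A short computation using roundedness gives the explicit descriptions $e(I) = \left\{ A \in \Fin{S} \mid A \cap I \neq \emptyset \right\}$, $r(\mathcal{A}) = \alpha^{-}\mathcal{A}$, and $\downset_{\prox_{U}} A = \left\{ B \in \Fin{S} \mid B \prox_{U} A \right\}$, which I would use on both sides.

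For the direction $(\Leftarrow)$, assume $\RIdeals{S}$ has finite meets and define $\alpha \subseteq S \times \Fin{S}$ by
\[
  a \mathrel{\alpha} A \defeqiv a \in M_{A}, \qquad M_{A} := \bigwedge_{a' \in A} \downset_{\prox} a',
\]
the meet being taken in $\RIdeals{S}$. The crucial point is that $\alpha^{-}A$ is by construction the rounded ideal $M_{A}$, so that \ref{def:approximableI} holds; this is exactly where the hypothesis is consumed, since without meets the raw intersection $\bigcap_{a' \in A} \downset_{\prox} a'$ need not be directed. The property \ref{def:approximableU} and the roundedness conditions ${\prox_{U} \circ \alpha} = \alpha = {\alpha \circ \prox}$ then follow from $M_{A}$ being rounded together with the monotonicity $B \prox_{U} A \imp M_{B} \subseteq M_{A}$. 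Finally $\varepsilon^{U}_{S} \circ \alpha = \prox$ is immediate: if $a \prox d$, interpolate $a \prox c \prox d$ and take $A = \left\{ c \right\}$, so that $a \in M_{\left\{ c \right\}} = \downset_{\prox} c$ and $c \prox d$; conversely $a \in M_{A}$ with $a'' \prox d$ for some $a'' \in A$ gives $a \prox a'' \prox d$. By Corollary~\ref{cor:KZAlg}~\eqref{cor:KZAlg1} this exhibits $(S,\prox)$ as a coalgebra.

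For $(\Rightarrow)$, given a section $\alpha$, I would set $m_{A} := r(\downset_{\prox_{U}} A) = \alpha^{-}(\downset_{\prox_{U}} A)$ and show that $m_{A}$ is the meet of $\left\{ \downset_{\prox} a' \mid a' \in A \right\}$ in $\RIdeals{S}$. It is a lower bound because $\downset_{\prox_{U}} A \subseteq e(\downset_{\prox} a')$ for each $a' \in A$, whence $m_{A} \subseteq r\bigl(e(\downset_{\prox} a')\bigr) = \downset_{\prox} a'$ by monotonicity of $r$ and $r \circ e = \id$. It is the greatest lower bound because any rounded ideal $J$ with $J \subseteq \downset_{\prox} a'$ for all $a' \in A$ satisfies $e(J) \subseteq \downset_{\prox_{U}} A$ (if $p \in B \cap J$ then $p \prox a'$ for every $a'$, so $B \prox_{U} A$), giving $J = r(e(J)) \subseteq m_{A}$. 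Taking $A = \emptyset$ produces a top element, and the binary meet of arbitrary $I, J \in \RIdeals{S}$ is then recovered as the directed join $\bigvee_{a \in I,\, b \in J} m_{\left\{ a,b \right\}}$.

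I expect the main obstacle to be the verification in the converse direction that the proposed $\alpha$ is genuinely approximable, since this is the precise place where the existence of finite meets is indispensable (it is what makes $\alpha^{-}A$ directed), together with the careful bookkeeping on the forward side showing that $m_{A}$ is the \emph{greatest} lower bound rather than merely a lower bound, i.e.\ that the two inclusions involving $e$ and $\downset_{\prox_{U}} A$ interact correctly with roundedness. The final upgrade from meets of finite families of basis elements to all finite meets is routine once these are in place.
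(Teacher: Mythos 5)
Your proof is correct and follows essentially the same route as the paper: both directions hinge on reducing the coalgebra condition to an approximable section $\alpha$ of $\varepsilon^{U}_{S}$ via Corollary~\ref{cor:KZAlg}~\eqref{cor:KZAlg1}, with the meet of $\left\{ \downset_{\prox} a' \mid a' \in A \right\}$ realised as $\alpha^{-}A$ (your $m_{A} = \alpha^{-}(\downset_{\prox_{U}} A)$ coincides with this since ${\prox_{U} \circ \alpha} = \alpha$), and conversely $\alpha$ defined so that $\alpha^{-}A$ is that meet (your $a \in M_{A}$ is extensionally the paper's $\exists b\,(a \prox b \amp \{b\} \prox_{U} A)$). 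The only difference is presentational: you phrase the forward direction through the induced Scott-continuous maps $e$, $r$ and the identity $r \circ e = \id$, where the paper argues element-wise with the relations directly.
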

\begin{proof}
  Suppose that $S$ has a $\UpperFunc$-coalgebra
  structure $\alpha \colon S \to \Upper{S}$.
  Put $\top = \alpha^{-}\emptyset$. Since $\varepsilon^{U}_{S} \circ
  \alpha = \id_{S}$, for any $a,b \in S$ such that
  $b \prox a$, there exists $A \in \Fin{S}$ such that $b
  \mathrel{\alpha} A \prox_{U} \left\{ a \right\}$.
  %
  %
  Since $A \leq_{U} \emptyset$ and $\alpha$ is approximable, we have $b \in \top$.
  Thus, ${\downarrow_{\prox}a} \subseteq \top$ for all
  $a \in S$, which implies that $\top$ is the greatest element of $\RIdeals{S}$.
  Next, we show that $\RIdeals{S}$ has binary meets.
  %
  %
  To this end, it suffices to show that
  $\downset_{\prox} a$ and $\downset_{\prox}b$
  have a meet for all $a, b  \in S$; for then we have
  $I \land J = \bigcup_{a \in I, b \in J} \downset_{\prox} a
  \land \downset_{\prox} b$ for all $I,J \in \RIdeals{S}$. 
  So let $a, b \in S$, and put $a \wedge b =
  \alpha^{-}\left\{ a,b \right\}$, which is in $\RIdeals{S}$ because
  $\alpha$ is approximable. For each $c \in a \wedge b$, there
  exists $C \in \Fin{S}$ such that $c
  \mathrel{\alpha} C  \prox_{U} \left\{ a,b \right\}$.
  %
  %
  Then, $C \prox_{U} \left\{ a \right\}$ and
  $C \prox_{U} \left\{ b \right\}$,
  so $C \mathrel{\varepsilon^{U}_{S}} a$
  and $ C \mathrel{\varepsilon^{U}_{S}} b$.
  Since $\varepsilon^{U}_{S} \circ \alpha =
  \id_{S}$, we have $c \in \downset_{\prox} a \cap
  \downset_{\prox} b$. Thus, $a \wedge b$ is a lower bound
  of $\downset_{\prox}a$ and $\downset_{\prox}b$.
  Let $I \in \RIdeals{S}$ such that $I \subseteq \downset_{\prox} a \cap
  \downset_{\prox} b$,
  %
  %
  and  let $c \in I$ and
  $c' \prox c$. Then 
  $\left\{ c \right\} \prox_{U} \left\{ a,b \right\}$, and 
  since $\varepsilon^{U}_{S} \circ \alpha =
  \id_{S}$, there exists $C \prox_{U} \left\{ c \right\}$ such that
  $c' \mathrel{\alpha} C$. Then $C \prox_{U} \left\{ a,b \right\}$, and
  since $\alpha$ is approximable, we have $c' \in a \land b$. Thus, $I =
  \medvee_{c \in I} \downset_{\prox}c \subseteq a \wedge
  b$. Therefore, $a \wedge b$ is a meet
  of $\downset_{\prox} a$ and $\downset_{\prox} b$.

  Conversely, suppose that $\RIdeals{S}$ has finite meets. Define a relation
  $\alpha \subseteq S \times \Fin{S}$ by 
  \begin{equation}
    \label{eq:UpperCoalgStruct}
    a \mathrel{\alpha} A
    \defeqiv
    \exists b \in S 
    \left( a \prox b \amp \left\{ b  \right\} \prox_{U} A \right).
  \end{equation}
  %
  %
  Note that $\alpha^{-}A = \medwedge_{a \in A}
  \downset_{\prox}a$, which shows that $\alpha$ satisfies
  \ref{def:approximableI}. From \eqref{eq:UpperCoalgStruct},
  it is also easy to see that $\alpha$ satisfies \ref{def:approximableU}
  and that
  ${\mathrel{\alpha} \circ \prox}
  = {\mathrel{\alpha}}
  = {\prox_{U} \circ \mathrel{\alpha}}$. 
  Thus, $\alpha$ is an approximable relation from $S$ to $\Upper{S} $.
  Then, we clearly have  $\varepsilon^{U}_{S} \circ \alpha =
  \id_{S}$. Hence, $\alpha$ is a $\UpperFunc$-coalgebra by
  Corollary~\ref{cor:KZAlg}~\eqref{cor:KZAlg1}.
\end{proof}
\begin{remark}
  \label{rem:StructUpper}
  Since $\UpperFunc$ is a $\KZ$-comonad, a $\UpperFunc$-coalgebra
  structure on a proximity poset, if it exists, is unique. Thus, it is
  always characterised by \eqref{eq:UpperCoalgStruct}.
\end{remark}

Next, we give an intrinsic characterisation of homomorphisms between
$\UpperFunc$-coalgebras.
\begin{defiC}[{\cite[Definition 3.6]{Infosys}}]
  \label{def:LawsonApproxmable}
  An approximable relation $r \colon {(S, \prox)} \to (S', \prox')$
  between proximity posets is \emph{Lawson approximable}
  if
  \begin{enumerate}
    \item  $a \prox a' \imp \exists b \in S' \left( a
      \mathrel{r} b \right)$,
    \item  $a \prox a'  \mathrel{r} b \amp a' \mathrel{r} c
      \imp \exists d \in b \downarrow_{\prox'}c \left( a
      \mathrel{r} d \right)$.
  \end{enumerate}
\end{defiC}
%
%
In Vickers~\cite[Definition 3.6]{Infosys}, Lawson approximable
relation is defined by 
\begin{equation}
  \label{def:VickesLawsonApproxmable}
  a \prox a' \amp \left\{ a' \right\} \mathrel{r_{U}} B \imp
  \exists b \in S' \left( a \mathrel{r} b \amp \left\{ b  \right\}
  \prox_{U}' B \right)
\end{equation}
for each $a,a' \in S$ and $B \in \Fin{S'}$. By induction on the size
of $B$, one can show that \eqref{def:VickesLawsonApproxmable} is
equivalent to the two conditions in Definition~\ref{def:LawsonApproxmable}.

The following is noted by Vickers~\cite[Section 5.1]{Infosys}.
We give a proof for the sake of completeness.
\begin{proposition}
  \label{prop:CharUpperCoalgHom}
  Let $\alpha \colon S \to \Upper{S}$ and 
  $\beta \colon S' \to \Upper{S'}$ be $\UpperFunc$-coalgebras on
  proximity posets $(S, \prox)$ and $(S', \prox')$. 
  For any approximable relation $r \colon S \to S'$,
  the  following are equivalent:
  \begin{enumerate}
    \item\label{prop:CharUpperCoalgHom1}
      $r$ is a $\UpperFunc$-coalgebra homomorphism.
    \item\label{prop:CharUpperCoalgHom2} $r$ is Lawson approximable.
    \item\label{prop:CharUpperCoalgHom3}
      The Scott continuous function $\overline{f_{r}} \colon
      \RIdeals{S'} \to \RIdeals{S}$ given by
      \eqref{eq:ApproxToScottPxPos} preserves finite meets.
  \end{enumerate}
\end{proposition}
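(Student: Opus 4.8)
The plan is to prove the two biconditionals \ref{prop:CharUpperCoalgHom1}\,$\Leftrightarrow$\,\ref{prop:CharUpperCoalgHom2} and \ref{prop:CharUpperCoalgHom2}\,$\Leftrightarrow$\,\ref{prop:CharUpperCoalgHom3} separately. For the first, I would begin by computing the two composites in the coalgebra-homomorphism equation $\beta \circ r = \Upper{r} \circ \alpha = r_U \circ \alpha$. Since $\UpperFunc$ is a $\KZ$-comonad, both $\alpha$ and $\beta$ are forced to be the canonical structure \eqref{eq:UpperCoalgStruct} (Remark~\ref{rem:StructUpper}), so I may substitute that formula on each side and repeatedly collapse the interpolating quantifiers using the roundedness identities $r = r \circ \prox = \prox' \circ r$ together with the idempotency of $\prox$ and $\prox'$. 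I expect the closed forms
\[
  a \mathrel{(\beta \circ r)} B \iff \exists c \in S'\left( a \mathrel{r} c \amp \{c\} \prox_{U}' B\right),
  \qquad
  a \mathrel{(r_U \circ \alpha)} B \iff \exists b \in S'\left( a \prox b \amp \{b\} \mathrel{r_U} B\right),
\]
for $a \in S$ and $B \in \Fin{S'}$. A short roundedness argument shows $\beta \circ r \subseteq r_U \circ \alpha$ for \emph{every} approximable $r$, so the homomorphism equation reduces to the single reverse inclusion $r_U \circ \alpha \subseteq \beta \circ r$; unwinding that inclusion is literally Vickers' inequality \eqref{def:VickesLawsonApproxmable}, which by the remark following it is equivalent to the two conditions of Definition~\ref{def:LawsonApproxmable}. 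This gives \ref{prop:CharUpperCoalgHom1}\,$\Leftrightarrow$\,\ref{prop:CharUpperCoalgHom2}.

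For \ref{prop:CharUpperCoalgHom2}\,$\Leftrightarrow$\,\ref{prop:CharUpperCoalgHom3} I would work with $\overline{f_{r}}(J) = r^{-} J$ from \eqref{eq:ApproxToScottPxPos}. Because $S$ and $S'$ are $\UpperFunc$-coalgebras, $\RIdeals{S}$ and $\RIdeals{S'}$ have finite meets computed by \eqref{eq:FiniteMeets} (Proposition~\ref{prop:AlgUpper}). Using $\prox' \circ r = r$ I would simplify $\overline{f_{r}}(J \wedge K) = r^{-}\!\left(\downset_{\prox'}(J \cap K)\right) = r^{-}(J \cap K)$, while $\overline{f_{r}}(J) \wedge \overline{f_{r}}(K) = \downset_{\prox}\!\left(r^{-}J \cap r^{-}K\right)$ follows directly from \eqref{eq:FiniteMeets}. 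Monotonicity of $\overline{f_{r}}$ already yields $\overline{f_{r}}(J \wedge K) \subseteq \overline{f_{r}}(J) \wedge \overline{f_{r}}(K)$ and $\overline{f_{r}}(\downset_{\prox'} S') \subseteq \downset_{\prox} S$, so preservation of binary meets (resp.\ of the top) reduces to the reverse inclusions. Testing the binary inclusion on principal rounded ideals $J = \downset_{\prox'} b$ and $K = \downset_{\prox'} c$ recovers the second condition of Definition~\ref{def:LawsonApproxmable} (the witness $d \in b \downarrow_{\prox'} c$ is exactly what the inclusion demands), and the top inclusion $\downset_{\prox} S \subseteq r^{-}(\downset_{\prox'} S')$ recovers the first condition. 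Conversely, those two conditions supply the reverse inclusions for arbitrary $J,K$, where the crucial point is that roundedness of $J$ turns $d \prox' b$ with $b \in J$ into $d \in J$, so the witness produced by the second Lawson condition lands in $J \cap K$.

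The main obstacle I anticipate is bookkeeping rather than conceptual difficulty: keeping the direction of relational composition straight and recognising, in each of the two equalities of relations (resp.\ of rounded ideals), that exactly one inclusion is automatic — forced purely by roundedness and idempotency — while the other carries the content and matches precisely one clause of the Lawson condition. The single genuinely load-bearing observation is that a rounded ideal absorbs the $\prox'$-predecessors of its elements; this is what lets the interpolant supplied by Definition~\ref{def:LawsonApproxmable} be reinserted into $J \cap K$, and it is the same roundedness that makes the complementary inclusions hold for free.
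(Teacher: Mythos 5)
Your proposal is correct and follows essentially the same route as the paper: both unwind the canonical coalgebra structures from \eqref{eq:UpperCoalgStruct}, reduce the homomorphism equation $\beta \circ r = \Upper{r} \circ \alpha$ to Vickers' inequality \eqref{def:VickesLawsonApproxmable} (your observation that the inclusion $\beta \circ r \subseteq \Upper{r} \circ \alpha$ holds automatically by roundedness is exactly what the paper's first half of the (2\,$\to$\,1) argument establishes), and your treatment of (2\,$\leftrightarrow$\,3) via \eqref{eq:FiniteMeets} and principal rounded ideals correctly fills in what the paper dismisses as straightforward. The key point you flag --- that a rounded ideal absorbs $\prox'$-predecessors of its elements --- is indeed the load-bearing step there.
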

\begin{proof}
  Before getting down to the proof, note that $\alpha$ and $\beta$ are
  given by \eqref{eq:UpperCoalgStruct} (cf.\ Remark~\ref{rem:StructUpper}).
  \smallskip

  \noindent(\ref{prop:CharUpperCoalgHom1} $\to$ \ref{prop:CharUpperCoalgHom2})
  Suppose that $r$ is a $\UpperFunc$-coalgebra homomorphism.  Let $a
  \prox a'$. Then $a \mathrel{\alpha} \emptyset$. Since $\emptyset
  \mathrel{\Upper{r}} \emptyset$, there exists $b \in S'$ such that $a
  \mathrel{r} b$ and $b \mathrel{\beta} \emptyset$. Next, suppose $a
  \prox a' \mathrel{r} b$ and $a' \mathrel{r} c$. Then $a
  \mathrel{(\Upper{r} \circ \alpha)} \left\{ b,c \right\}$. Thus,
  there exists $d \in S'$ such that $a \mathrel{r} d$ and $d
  \mathrel{\beta} \left\{ b,c \right\}$. Then $d  \in b
  \downarrow_{\prox'} c$.
  \smallskip
  
  \noindent(\ref{prop:CharUpperCoalgHom2} $\to$ \ref{prop:CharUpperCoalgHom1})
  Suppose that $r$ is Lawson approximable, and
  let $a \mathrel{(\beta \circ r)} B$. Then, there exist
  $b,b' \in S'$ such that 
%
%
  $a \mathrel{r} b \prox' b'$ and $ \left\{ b' \right\} \prox'_{U} B$.
  Thus, there exists $a' \in S$ such
  that $a \prox a'$ and  $\left\{ a' \right\} \mathrel{r_{U}} B$.
  Then $a \mathrel{(\Upper{r} \circ \alpha)} B$.
  Conversely, suppose $a \mathrel{(\Upper{r} \circ \alpha)} B$.
  Then, there exist $a' \in S$ and $A \in \Fin{S}$ such that
  $a \prox a'$ and $\left\{ a' \right\} \prox_{U} A
  \mathrel{r_{U}} B$, so that $\left\{ a' \right\} \mathrel{r_{U}} B$.
%
%
  Since $r$ is Lawson approximable, there exists $b \in S'$ such
  that $a \mathrel{r} b$ and $\left\{ b \right\} \prox_{U}'  B$ by
  \eqref{def:VickesLawsonApproxmable}.
  Then, $a \mathrel{(\beta \circ r)} B$.
  \smallskip

  The equivalence (\ref{prop:CharUpperCoalgHom2} $\leftrightarrow
  $ \ref{prop:CharUpperCoalgHom3}) is also straightforward to check.
\end{proof}
\begin{theorem}
  \label{thm:EquivPUCoAlgContMSLat}
  The category of $\UpperFunc$-coalgebras over $\PxPos$ is dually
  equivalent to the category of continuous meet semilattices and Scott
  continuous meet semilattice homomorphisms.
\end{theorem}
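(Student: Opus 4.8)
The plan is to obtain the dual equivalence simply by restricting the dual equivalence $\mathcal{R} \colon \PxPos \to \Opposite{\Cont}$ of Theorem~\ref{prop:SplitAlgPxPos}, which sends a proximity poset $(S,\prox)$ to the continuous domain $\RIdeals{S}$ and an approximable relation $r$ to the Scott continuous function $\overline{f_{r}}$ given by \eqref{eq:ApproxToScottPxPos}. Since $\UpperFunc$ is a $\KZ$-comonad, each proximity poset carries at most one $\UpperFunc$-coalgebra structure (cf.\ the uniqueness of coalgebra structures for $(\mathrm{co})\KZ$-comonads recorded before Corollary~\ref{cor:KZAlg}), so $\coAlg{\UpperFunc}$ may be regarded as the (generally non-full) subcategory of $\PxPos$ whose objects are those $S$ admitting a coalgebra structure and whose morphisms are the coalgebra homomorphisms. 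I would therefore show that $\mathcal{R}$ restricts to a dual equivalence between this subcategory and the subcategory of $\Cont$ consisting of continuous meet semilattices and Scott continuous meet semilattice homomorphisms.

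On objects, Proposition~\ref{prop:AlgUpper} identifies the proximity posets admitting a $\UpperFunc$-coalgebra structure with exactly those $S$ for which $\RIdeals{S}$ has finite meets, i.e.\ is a continuous meet semilattice; hence $\mathcal{R}$ is well defined on objects of $\coAlg{\UpperFunc}$. For essential surjectivity, given any continuous meet semilattice $L$, the essential surjectivity of the base equivalence (Theorem~\ref{prop:SplitAlgPxPos}) yields a proximity poset $S$ with $\RIdeals{S} \cong L$ in $\Cont$, and since $\RIdeals{S}$ then has finite meets, Proposition~\ref{prop:AlgUpper} equips $S$ with its unique coalgebra structure, so $L$ lies in the image up to isomorphism. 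On morphisms, the equivalence (\ref{prop:CharUpperCoalgHom1} $\leftrightarrow$ \ref{prop:CharUpperCoalgHom3}) of Proposition~\ref{prop:CharUpperCoalgHom} shows that an approximable relation $r$ between coalgebras is a coalgebra homomorphism precisely when $\overline{f_{r}}$ preserves finite meets, so $\mathcal{R}$ carries the morphisms of $\coAlg{\UpperFunc}$ exactly onto the Scott continuous meet semilattice homomorphisms.

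Faithfulness of the restricted functor is inherited from $\mathcal{R}$, and fullness follows from the direction (\ref{prop:CharUpperCoalgHom3} $\to$ \ref{prop:CharUpperCoalgHom1}) of Proposition~\ref{prop:CharUpperCoalgHom}: given a Scott continuous meet semilattice homomorphism $g \colon \RIdeals{S'} \to \RIdeals{S}$ between the continuous meet semilattices attached to two coalgebras, fullness of $\mathcal{R}$ supplies an approximable relation $r \colon S \to S'$ with $\overline{f_{r}} = g$, and since $g$ preserves finite meets, $r$ is a coalgebra homomorphism; thus $r$ is a morphism of $\coAlg{\UpperFunc}$ mapping to $g$. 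Combined with the object correspondence and essential surjectivity, this yields the asserted dual equivalence. The genuine content has already been isolated in Propositions~\ref{prop:AlgUpper} and~\ref{prop:CharUpperCoalgHom}, so the theorem is essentially an assembly; the only step demanding real care is essential surjectivity, where for an abstractly given continuous meet semilattice one must produce a concrete proximity poset representing it, and this is exactly what the base equivalence of Theorem~\ref{prop:SplitAlgPxPos} provides before Proposition~\ref{prop:AlgUpper} supplies the coalgebra structure.
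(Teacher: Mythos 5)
Your proposal is correct and follows essentially the same route as the paper, whose entire proof is the citation of Theorem~\ref{prop:SplitAlgPxPos}, Proposition~\ref{prop:AlgUpper}, and Proposition~\ref{prop:CharUpperCoalgHom}; you have simply spelled out the assembly (object correspondence, essential surjectivity via the base equivalence, and the morphism correspondence via the equivalence of \ref{prop:CharUpperCoalgHom1} and \ref{prop:CharUpperCoalgHom3}) that the paper leaves implicit.
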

\begin{proof}
  By Theorem~\ref{prop:SplitAlgPxPos}, Proposition~\ref{prop:AlgUpper}, and Proposition
  \ref{prop:CharUpperCoalgHom}.
\end{proof}
\begin{remark}
  As of this writing, we still lack an intrinsic characterisation of continuous meet
  semilattices in terms of proximity posets. As far as we know,
  $\UpperFunc$-coalgebras provide the best predicative
  characterisation of continuous meet semilattices so far.
\end{remark}


\begin{definition}
  \label{def:ProxRel}
  A join-preserving Lawson approximable relations between localized
  strong proximity $\vee$-semilattices is called a \emph{proximity relation}.
\end{definition}
Let $\PxJLatLoc$ be the category of localized strong proximity
$\vee$-semilattices and
proximity relations.

\begin{theorem}
  \label{thm:LocProxSuplattEquiLKLoc}
  $\PxJLatLoc$ is equivalent to the category of locally
 compact locales.
\end{theorem}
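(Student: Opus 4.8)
The plan is to exhibit the desired equivalence as a restriction of the dual equivalence $\SPxJLat \simeq \Opposite{\ContLat}$ furnished by Theorem~\ref{prop:SPxJLatEquivContLat}, which sends a strong proximity $\vee$-semilattice $S$ to the continuous lattice $\RIdeals{S}$ and a join-approximable relation $r$ to the suplattice homomorphism $\overline{f_{r}} \colon \RIdeals{S'} \to \RIdeals{S}$ of \eqref{eq:ApproxToScottPxPos}. Since $\LKFrm$ is by definition the opposite of the category of locally compact locales, it suffices to show that this functor restricts to a dual equivalence between $\PxJLatLoc$ and $\LKFrm$.

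On objects, Proposition~\ref{prop:LocalizedFrame} tells us that a strong proximity $\vee$-semilattice $S$ is localized exactly when $\RIdeals{S}$ is a frame, and hence a locally compact frame. For essential surjectivity, any locally compact frame is in particular a continuous lattice, so by Theorem~\ref{prop:SPxJLatEquivContLat} it is isomorphic to $\RIdeals{S}$ for some strong proximity $\vee$-semilattice $S$; as $\RIdeals{S}$ is then a frame, $S$ is localized by Proposition~\ref{prop:LocalizedFrame}. Thus $S \mapsto \RIdeals{S}$ induces a bijection, up to isomorphism, between the objects of $\PxJLatLoc$ and those of $\LKFrm$.

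The crux is the identification of the morphisms. A proximity relation is, by Definition~\ref{def:ProxRel}, a join-preserving Lawson approximable relation between localized strong proximity $\vee$-semilattices. Under the correspondence recorded in Proposition~\ref{prop:BijVeeApproxSupLatticeHom} (restated for rounded ideals), join-approximable relations $r \colon S \to S'$ correspond precisely to suplattice homomorphisms $\overline{f_{r}} \colon \RIdeals{S'} \to \RIdeals{S}$, i.e.\ to the maps preserving arbitrary joins. Moreover, since $\RIdeals{S}$ is always a continuous lattice it has finite meets, so every strong proximity $\vee$-semilattice carries a (unique) $\UpperFunc$-coalgebra structure by Proposition~\ref{prop:AlgUpper}, and the equivalence (\ref{prop:CharUpperCoalgHom2}~$\leftrightarrow$~\ref{prop:CharUpperCoalgHom3}) of Proposition~\ref{prop:CharUpperCoalgHom} applies: an approximable relation $r$ is Lawson approximable iff $\overline{f_{r}}$ preserves finite meets. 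Combining these two facts, a join-approximable relation is a proximity relation iff $\overline{f_{r}}$ preserves both arbitrary joins and finite meets, i.e.\ iff it is a frame homomorphism. Hence proximity relations correspond bijectively to frame homomorphisms $\RIdeals{S'} \to \RIdeals{S}$.

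Since identities and composition are already matched contravariantly by Theorem~\ref{prop:SPxJLatEquivContLat} (the identity $\prox$ corresponds to the identity frame homomorphism, and composites go to composites), the morphism bijection is functorial, and in particular proximity relations are closed under composition. The restricted functor is therefore full, faithful, and essentially surjective, yielding a dual equivalence between $\PxJLatLoc$ and $\LKFrm$, i.e.\ an equivalence $\PxJLatLoc \simeq \Opposite{\LKFrm} = \LKLoc$ with the category of locally compact locales. The main obstacle is purely in the bookkeeping of the morphism correspondence: one must check that the hypotheses of Proposition~\ref{prop:CharUpperCoalgHom} are satisfied (which they are, as $\RIdeals{S}$ is automatically a continuous lattice and hence a $\UpperFunc$-coalgebra) and that the two preservation conditions assemble into the single notion of a frame homomorphism. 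No genuinely new verification beyond the cited propositions is required.
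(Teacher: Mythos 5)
Your proposal is correct and follows essentially the same route as the paper's proof: objects are handled via Theorem~\ref{prop:SPxJLatEquivContLat} together with Proposition~\ref{prop:LocalizedFrame}, and morphisms via Proposition~\ref{prop:AlgUpper} and the equivalence of Lawson approximability with finite-meet preservation from Proposition~\ref{prop:CharUpperCoalgHom}, combined with the join-approximable/suplattice-homomorphism correspondence. The only difference is that you spell out the bookkeeping (essential surjectivity, closure of proximity relations under composition) that the paper leaves implicit.
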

\begin{proof}
  By Theorem~\ref{prop:SPxJLatEquivContLat} and Proposition \ref{prop:LocalizedFrame}, the class of localized
strong proximity $\vee$-semilattices  characterises locally compact
locales among continuous lattices.
Since every strong proximity $\vee$-semilattice is a $\UpperFunc$-coalgebra by
Proposition \ref{prop:AlgUpper}, proximity relations
between localized strong proximity $\vee$-semilattices characterise
locale maps between the corresponding locally compact locales by
Proposition~\ref{prop:CharUpperCoalgHom}.
\end{proof}

\subsection{Algebraic theory of locally compact locales}
\label{sec:DoublePowerLoc}
We give another predicative characterisation of locally compact
locales in terms of algebras of the double powerlocale on the category
of continuous domains.
Specifically, building on the result of Section~\ref{sec:LocalizedPxSL},
we characterise localized strong proximity
$\vee$-semilattices by the coalgebras of the double powerlocale on
$\PxPos$.  This algebraic characterisation of locally compact locales
has been conjectured by Vickers in the dual context of
infosys~\cite[Section 5.1]{Infosys}.

\subsubsection{Finite subsets}\label{sec:FiniteSubsets}
We first recall some properties of finitely enumerable subsets from Vickers
\cite[Section~4]{VickersEntailmentSystem}.
Let $S$ be a set. For each $\mathcal{U} \in \FFin{S}$, define $\mathcal{U}^{*} \in
\FFin{S}$ inductively by
\begin{align*}
  \emptyset^{*}
  &\defeql \left\{ \emptyset \right\},
  &
  \left( \mathcal{U} \cup \left\{ A \right\} \right)^{*}
  &\defeql
  \left\{ B \cup C \mid B \in \mathcal{U}^{*} \amp C \in \PFin{A}\right\},
\end{align*}
where $\PFin{A}$ denotes the set of inhabited finitely enumerable subsets
of $A$.\footnote{
In Vickers~\cite[Section 4]{VickersEntailmentSystem},
the set $\mathcal{U}^{*}$ is equal to
  $
  \left\{ \Image{\gamma} \mid \gamma \in \Choice{\mathcal{U}}\right\},
  $
where $\Choice{\mathcal{U}}$ is the set of choices of $\mathcal{U}$
and $\Image{\gamma}$ is the image of a choice $\gamma$; see Definition
12 and Definition 13, and the proof of Proposition~14 in
\cite{VickersEntailmentSystem}.}
Note that for each $B \in \mathcal{U}^{*}$, we have $B \meets A$
for all $A \in \mathcal{U}$, where
\[
  U \meets V \defeqiv \exists a \in S\left( a \in U \cap V \right)
\]
for $U,V \subseteq S$.

\begin{example}
  For $a,b,c \in S$, 
  we have
  \begin{align*}
    \left\{ \left\{ a \right\} \left\{ b \right\} \right\}^{*}
    &=
    \left\{ \left\{ a,b \right\} \right\}, \\
    \left\{ \left\{ a, b \right\}, \left\{ c \right\}\right\}^{*} 
    &=
    \left\{ A \cup \left\{ c \right\} \mid A \in \PFin{\left\{ a,b \right\}} \right\} 
    =
    \left\{ 
    \left\{ a,b,c \right\}, 
    \left\{ a,c \right\}, 
    \left\{ b,c \right\} \right\}, \\
    \left\{ \left\{ a, b \right\}, \left\{ c \right\}\right\}^{**} 
    &=
    \left\{ A \cup B \cup C \mid
      A \in \PFin{\left\{ a,c \right\}}, 
      B \in \PFin{\left\{ b,c \right\}}, 
      C \in \PFin{\left\{ a,b,c \right\}} \right\}  \\
    &=
    \left\{ \left\{ a,b,c \right\}, 
    \left\{ a,b \right\}, 
    \left\{ a,c \right\},
    \left\{ b,c \right\}, 
    \left\{ c \right\} \right\}.
  \end{align*}
\end{example}
\begin{lemma}
  \label{lem:BasicFFin}
  Let $S$ be a set. For each\, $\mathcal{U} \in \FFin{S}$ and $U
  \subseteq S$, we have
  \begin{enumerate}
    \item\label{lem:BasicFFin1} $  
       \forall C \in \mathcal{U}  \left( U \meets C \right)
        \imp \exists B \in \mathcal{U}^{*} \left( B \subseteq U \right)$,

    \item\label{lem:BasicFFin2} $
      \forall C \in
      \mathcal{U}^{*} \left( U \meets C\right)
      \imp \exists B \in \mathcal{U}     \left( B \subseteq U\right)$,

    \item\label{lem:BasicFFin3} $\forall A \in \mathcal{U}^{**} 
      \exists B \in \mathcal{U} \left( B \subseteq A  \right)$
      and\, 
      $\forall B \in\mathcal{U} \, \exists A \in
      \mathcal{U}^{**} \left( A \subseteq B\right)$.
  \end{enumerate}
\end{lemma}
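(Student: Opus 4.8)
The plan is to read $\mathcal{U}^{*}$ through the description recorded in the footnote, $\mathcal{U}^{*} = \left\{ \Image{\gamma} \mid \gamma \in \Choice{\mathcal{U}} \right\}$: concretely a member of $\mathcal{U}^{*}$ is a union $\bigcup_{A \in \mathcal{U}} \gamma(A)$ obtained by choosing an inhabited subset $\gamma(A) \in \PFin{A}$ from each $A \in \mathcal{U}$, and every such union meets each $A \in \mathcal{U}$ (the property noted before the lemma). With this in hand I would prove \ref{lem:BasicFFin1} and the second clause of \ref{lem:BasicFFin3} by direct constructions, treat \ref{lem:BasicFFin2} as the combinatorial core, and obtain the first clause of \ref{lem:BasicFFin3} as an instance of \ref{lem:BasicFFin2}.

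For \ref{lem:BasicFFin1} I proceed by induction on $\mathcal{U}$. The base case is trivial, as $\emptyset \in \emptyset^{*}$. At a step $\mathcal{U} \cup \left\{ A \right\}$ the hypothesis gives $U \meets A$, hence a point $x \in U \cap A$; combining the singleton $\left\{ x \right\} \in \PFin{A}$ with some $B' \in \mathcal{U}^{*}$ satisfying $B' \subseteq U$ from the induction hypothesis yields $B' \cup \left\{ x \right\} \in (\mathcal{U} \cup \left\{ A \right\})^{*}$ with $B' \cup \left\{ x \right\} \subseteq U$. Using the singleton rather than $U \cap A$ sidesteps the fact that a subset of a finitely enumerable set need not be finitely enumerable. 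The second clause of \ref{lem:BasicFFin3} is the dual construction: given $B \in \mathcal{U}$, the noted property applied to the family $\mathcal{U}^{*}$ gives, for each $C \in \mathcal{U}^{*}$, a point $x_{C} \in B \cap C$; then $\bigcup_{C \in \mathcal{U}^{*}} \left\{ x_{C} \right\}$ is a member of $(\mathcal{U}^{*})^{*} = \mathcal{U}^{**}$ contained in $B$.

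The heart of the lemma is \ref{lem:BasicFFin2}, which I would prove by induction on $\mathcal{U}$. The base case $\mathcal{U} = \emptyset$ is vacuous, since $\mathcal{U}^{*} = \left\{ \emptyset \right\}$ and $U \meets \emptyset$ is false. For the step $\mathcal{U} \cup \left\{ A \right\}$, I unfold $(\mathcal{U} \cup \left\{ A \right\})^{*} = \left\{ B \cup D \mid B \in \mathcal{U}^{*},\ D \in \PFin{A} \right\}$ and run an inner induction on the finite set $A$ to show: if $U \meets (B \cup D)$ for all $B \in \mathcal{U}^{*}$ and all $D \in \PFin{A}$, then $A \subseteq U$ or $\exists E \in \mathcal{U} \left( E \subseteq U \right)$. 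Either disjunct exhibits a member of $\mathcal{U} \cup \left\{ A \right\}$ inside $U$, as required. The inner base $A = \emptyset$ gives $A \subseteq U$ at once. In the inner step $A \cup \left\{ a \right\}$, restricting the hypothesis to $D \in \PFin{A}$ feeds the inner induction hypothesis (whose right disjunct already closes the goal); in the surviving case $A \subseteq U$ I instantiate $D = \left\{ a \right\}$ to get $U \meets (B \cup \left\{ a \right\})$ for every $B \in \mathcal{U}^{*}$, read this as $a \in U$ or $U \meets B$ for each such $B$, and then push the disjunction past the universal quantifier (legitimate because $\mathcal{U}^{*}$ is finitely enumerable) to obtain $a \in U$ or $\forall B \in \mathcal{U}^{*} \left( U \meets B \right)$. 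The first alternative gives $A \cup \left\{ a \right\} \subseteq U$; the second lets the outer induction hypothesis produce $E \in \mathcal{U}$ with $E \subseteq U$. Finally, the first clause of \ref{lem:BasicFFin3} follows: any $A \in \mathcal{U}^{**}$ meets every $C \in \mathcal{U}^{*}$ by the noted property, so \ref{lem:BasicFFin2} with $U := A$ gives $B \in \mathcal{U}$ with $B \subseteq A$.

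The step I expect to be the main obstacle is exactly keeping \ref{lem:BasicFFin2} predicative: a naive proof would want to decide whether $U$ meets all of $\mathcal{U}^{*}$, which is not available. The device that rescues it is that disjunction distributes over a universal quantifier bounded by a finitely enumerable set — a short side induction on that set — which collates the per-member disjunctions $a \in U \lor U \meets B$ into a single global one without any decidability assumption. The remaining bookkeeping (that $\PFin{A} \subseteq \PFin{A \cup \left\{ a \right\}}$, and that all the constructions respect equality of finitely enumerable sets so the inductions are legitimate) is routine.
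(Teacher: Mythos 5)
Your proposal is correct, and on the only point where the paper actually argues — item (3) — you do exactly what the paper does: apply the observation that every member of $\mathcal{U}^{**}$ meets every member of $\mathcal{U}^{*}$ (and likewise every $B\in\mathcal{U}$ meets every member of $\mathcal{U}^{*}$) and then invoke (2) (resp.\ (1), applied to the family $\mathcal{U}^{*}$). For items (1) and (2) the paper gives no proof at all; it simply cites Proposition~15 and Lemma~16 of Vickers's entailment-system paper, so your contribution there is to supply the arguments the paper outsources. Those arguments are sound: the induction for (1), using a singleton $\{x\}\in\PFin{A}$ rather than $U\cap A$ to stay within finitely enumerable sets, is the natural constructive proof; and your treatment of (2) correctly identifies and resolves the one genuinely delicate point, namely that one cannot decide whether $U$ meets all of $\mathcal{U}^{*}$, and that the nested induction goes through because a disjunction distributes over a universal quantifier bounded by a finitely enumerable set (itself a short induction on the enumeration). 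The net effect is that your version is self-contained where the paper is not, at the cost of the two-level induction bookkeeping; nothing in it conflicts with, or falls short of, what the paper needs from this lemma.
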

\begin{proof}
  Items \ref{lem:BasicFFin1} and \ref{lem:BasicFFin2} correspond to
  Proposition~15 and Lemma~16 in
  Vickers~\cite{VickersEntailmentSystem}, respectively.
  Item \ref{lem:BasicFFin3} follows from \ref{lem:BasicFFin1} and
  \ref{lem:BasicFFin2} since $\forall C \in \mathcal{U}^{*}
  \left( C \meets A\right) $ for all $A \in \mathcal{U}^{**}$ and
  $\forall C \in \mathcal{U}^{*}\left(  C \meets B \right)$ for all $B
  \in \mathcal{U}$.
\end{proof}

\begin{lemma}
  \label{lem:UpperLower}
  For any relation $r \subseteq S \times S'$, we have
  \[
    \mathcal{U} \mathrel{\left( r_{L} \right)_{U}} \mathcal{V} \leftrightarrow
    \mathcal{U}^{*} \mathrel{\left( r_{U} \right)_{L}} \mathcal{V}^{*} 
  \]
  for all\, $\mathcal{U} \in \FFin{S}$ and $\mathcal{V} \in
  \FFin{S'}$.
\end{lemma}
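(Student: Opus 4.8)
The plan is to unfold both sides into explicit quantifier statements and then prove the two implications separately, exploiting the fact that the two directions are formally dual. Writing $r A = \left\{ b \in S' \mid \exists a \in A \, (a \mathrel{r} b) \right\}$ for the forward image and $r^{-} V$ for the inverse image, I would first record the two elementary equivalences $U \mathrel{r_{L}} V \iff U \subseteq r^{-} V$ and $A \mathrel{r_{U}} B \iff B \subseteq r A$. With these, the left-hand side $\mathcal{U} \mathrel{(r_{L})_{U}} \mathcal{V}$ unfolds to: for every $V \in \mathcal{V}$ there is $U \in \mathcal{U}$ with $U \subseteq r^{-} V$; and the right-hand side $\mathcal{U}^{*} \mathrel{(r_{U})_{L}} \mathcal{V}^{*}$ unfolds to: for every $A \in \mathcal{U}^{*}$ there is $B \in \mathcal{V}^{*}$ with $B \subseteq r A$.

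For the forward implication I would fix $A \in \mathcal{U}^{*}$ and produce the required $B \in \mathcal{V}^{*}$ with $B \subseteq r A$ by invoking item~\ref{lem:BasicFFin1} of Lemma~\ref{lem:BasicFFin} with the family $\mathcal{V}$ and the set $r A$. This reduces the goal to showing $r A \meets V$ for every $V \in \mathcal{V}$. Given such a $V$, the left-hand side supplies some $U \in \mathcal{U}$ with $U \subseteq r^{-} V$; since $A \in \mathcal{U}^{*}$ meets every member of $\mathcal{U}$, I can pick $a \in A \cap U$, and then $a \in r^{-} V$ yields a witness $b \in V \cap r A$, establishing $r A \meets V$.

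The reverse implication is entirely symmetric, trading forward images for inverse images and interchanging the roles of $\mathcal{U}$ and $\mathcal{V}$. Here I would fix $V \in \mathcal{V}$ and produce the required $U \in \mathcal{U}$ with $U \subseteq r^{-} V$ by invoking item~\ref{lem:BasicFFin2} of Lemma~\ref{lem:BasicFFin} with the family $\mathcal{U}$ and the set $r^{-} V$; this reduces the goal to $r^{-} V \meets A$ for every $A \in \mathcal{U}^{*}$. Given such an $A$, the right-hand side supplies $B \in \mathcal{V}^{*}$ with $B \subseteq r A$, and since $B$ meets every member of $\mathcal{V}$, I can pick $b \in B \cap V$; then $b \in r A$ gives $a \in A$ with $a \mathrel{r} b$, whence $a \in r^{-} V \cap A$.

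There is no serious obstacle: the content is purely combinatorial bookkeeping of nested quantifiers, and the only non-trivial ingredients are the two transversal lemmas (items~\ref{lem:BasicFFin1} and~\ref{lem:BasicFFin2} of Lemma~\ref{lem:BasicFFin}) together with the defining property that each element of $\mathcal{U}^{*}$ (respectively $\mathcal{V}^{*}$) meets every element of $\mathcal{U}$ (respectively $\mathcal{V}$). The main thing to be careful about is keeping the direction of the relation straight when passing between the forward image $r A$ used on the $\mathcal{V}$-side and the inverse image $r^{-} V$ used on the $\mathcal{U}$-side, so that the correct transversal lemma is applied in each direction.
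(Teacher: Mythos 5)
Your proof is correct, and your forward direction coincides with the paper's argument: both reduce the existence of a suitable $B \in \mathcal{V}^{*}$ inside $rA$ to the transversal statement Lemma~\ref{lem:BasicFFin}~\eqref{lem:BasicFFin1}, using the fact that $A \in \mathcal{U}^{*}$ meets every member of $\mathcal{U}$. Where you diverge is the converse. The paper does not argue directly: it observes that $\mathcal{U}^{*} \mathrel{(r_{U})_{L}} \mathcal{V}^{*}$ is the same statement as $\mathcal{V}^{*} \mathrel{((r^{-})_{L})_{U}} \mathcal{U}^{*}$, applies the already-proved forward direction to the opposite relation $r^{-}$ to obtain $\mathcal{U}^{**} \mathrel{(r_{L})_{U}} \mathcal{V}^{**}$, and then descends from the double stars back to $\mathcal{U}$ and $\mathcal{V}$ via Lemma~\ref{lem:BasicFFin}~\eqref{lem:BasicFFin3}. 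You instead give a direct symmetric argument using the other transversal lemma, Lemma~\ref{lem:BasicFFin}~\eqref{lem:BasicFFin2}: fixing $V \in \mathcal{V}$, you verify that $r^{-}V$ meets every $A \in \mathcal{U}^{*}$ (by extracting $B \in \mathcal{V}^{*}$ with $B \subseteq rA$ and intersecting $B$ with $V$) and conclude that some member of $\mathcal{U}$ is contained in $r^{-}V$. Both routes are sound; yours is arguably cleaner in that it avoids the detour through $r^{-}$ and the $(-)^{**}$ bookkeeping, and it makes transparent that the two directions of the equivalence are governed by the two dual halves of Lemma~\ref{lem:BasicFFin}, whereas the paper's version gets the converse almost for free from the forward direction at the cost of invoking item~\eqref{lem:BasicFFin3}.
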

\begin{proof}
  First, suppose $\mathcal{U} \mathrel{( r_{L})_{U}} \mathcal{V}$,
  and let $A' \in \mathcal{U}^{*}$. 
  For each $B \in \mathcal{V}$, there exists $A \in \mathcal{U}$ such
  that $A \mathrel{r_{L}} B$. Since $A \meets A'$, we have $r A'
  \meets B$.  Thus, by 
  Lemma \ref{lem:BasicFFin}~\eqref{lem:BasicFFin1}, there
  exists $B' \in \mathcal{V}^{*}$ such that $B' \subseteq r A'$, i.e.,
  $A' \mathrel{r}_{U} B'$.
  Thus, $\mathcal{U}^{*} \mathrel{(r_{U})_{L}} \mathcal{V}^{*}$.
 
  Conversely, suppose 
  $\mathcal{U}^{*} \mathrel{\left( r_{U} \right)_{L}} \mathcal{V}^{*}$.
  Then, $\mathcal{V}^{*} \mathrel{\left( (r^{-})_{L} \right)_{U}} \mathcal{U}^{*}
  $,
  so by the proof of the converse, we have
  $\mathcal{V}^{**} \mathrel{\left( (r^{-})_{U} \right)_{L}} \mathcal{U}^{**}$,
  i.e., $\mathcal{U}^{**} \mathrel{\left( r_{L} \right)_{U}} \mathcal{V}^{**}$.
  Since $\mathcal{U} \subseteq_{U} \mathcal{U}^{**}$
  and $\mathcal{V}^{**} \subseteq_{U} \mathcal{V}$ by 
  Lemma \ref{lem:BasicFFin}~\eqref{lem:BasicFFin3}, we have
  $\mathcal{U} \mathrel{( r_{L})_{U}} \mathcal{V}$.
\end{proof}

The following is also useful later. 
\begin{lemma}
  \label{lem:StarSingleton}
  Let $S$ be a set. For each $A \in \Fin{S}$ and $\mathcal{U} \in
  \Fin{\Fin{S}}$, we have 
  \begin{enumerate}
    \item \label{lem:StarSingleton1}
      $\left\{ A \right\}^{*} \subseteq_U \Sin{A}
      \subseteq_U \left\{ A \right\}^{*}$,

    \item \label{lem:StarSingleton2}
      $\left\{ \Sin{A} \mid A \in \mathcal{U} \right\}^{*}
      \subseteq_U
      \left\{ \Sin{B} \mid B \in \mathcal{U}^{*} \right\}
      \subseteq_U
      \left\{ \Sin{A} \mid A \in \mathcal{U} \right\}^{*}$.
  \end{enumerate}
\end{lemma}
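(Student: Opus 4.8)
The plan is to reduce both items to two elementary properties of the singleton operation $A \mapsto \Sin{A}$: that it preserves finite unions, $\Sin{A \cup A'} = \Sin{A} \cup \Sin{A'}$, and that its image realises the inhabited finite subsets of a set of singletons, in the sense that $\PFin{\Sin{A}} = \left\{ \Sin{A'} \mid A' \in \PFin{A} \right\}$. The second identity holds because any inhabited $C \subseteq \Sin{A}$ consists entirely of singletons $\left\{ a \right\}$ with $a \in A$; putting $A' = \bigcup C$ yields $A' \in \PFin{A}$ with $C = \Sin{A'}$, and conversely $\Sin{A'} \in \PFin{\Sin{A}}$ for every $A' \in \PFin{A}$, since $A' \subseteq A$ inhabited gives $\Sin{A'} \subseteq \Sin{A}$ inhabited.

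For item \ref{lem:StarSingleton1} I would first unfold the definition of $(-)^{*}$ at a one-element family, obtaining $\left\{ A \right\}^{*} = \PFin{A}$ (using $\emptyset^{*} = \left\{ \emptyset \right\}$ and $\emptyset \cup C = C$). Both inclusions are then immediate from the meaning of $\subseteq_{U}$: given $\left\{ a \right\} \in \Sin{A}$ we have $\left\{ a \right\} \in \PFin{A}$ with $\left\{ a \right\} \subseteq \left\{ a \right\}$, which yields $\left\{ A \right\}^{*} \subseteq_{U} \Sin{A}$; and given an inhabited $C \in \PFin{A}$ we may choose $a \in C \subseteq A$, so that $\left\{ a \right\} \in \Sin{A}$ with $\left\{ a \right\} \subseteq C$, which yields $\Sin{A} \subseteq_{U} \left\{ A \right\}^{*}$.

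For item \ref{lem:StarSingleton2} I would argue by induction on $\mathcal{U}$, following the recursive definition of $(-)^{*}$. In the base case $\mathcal{U} = \emptyset$ both sides reduce to $\left\{ \emptyset \right\}$ (using $\Sin{\emptyset} = \emptyset$), so the two inclusions are trivial. For the step, write $\mathcal{U} = \mathcal{V} \cup \left\{ A \right\}$ and peel off the same element on each side: the left-hand side becomes $\left\{ B \cup C \mid B \in \left\{ \Sin{D} \mid D \in \mathcal{V} \right\}^{*} \amp C \in \PFin{\Sin{A}} \right\}$, while $\mathcal{U}^{*}$ becomes $\left\{ B' \cup A' \mid B' \in \mathcal{V}^{*} \amp A' \in \PFin{A} \right\}$. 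The inductive hypothesis makes $\left\{ \Sin{D} \mid D \in \mathcal{V} \right\}^{*}$ and $\left\{ \Sin{B'} \mid B' \in \mathcal{V}^{*} \right\}$ $\subseteq_{U}$-equivalent, the two properties above identify $\PFin{\Sin{A}}$ with $\left\{ \Sin{A'} \mid A' \in \PFin{A} \right\}$, and union-preservation turns each $\Sin{B'} \cup \Sin{A'}$ into $\Sin{B' \cup A'}$. Appealing to the monotonicity of the operation $(\mathcal{X}, \mathcal{Y}) \mapsto \left\{ X \cup Y \mid X \in \mathcal{X} \amp Y \in \mathcal{Y} \right\}$ with respect to $\subseteq_{U}$ in each argument --- if $X \subseteq X'$ and $Y \subseteq Y'$ then $X \cup Y \subseteq X' \cup Y'$ --- the left-hand side is seen to be $\subseteq_{U}$-equivalent to $\left\{ \Sin{E} \mid E \in \mathcal{U}^{*} \right\}$, which gives both desired inclusions.

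The bookkeeping is routine; the one point requiring care is that the recursion defining $(-)^{*}$ depends on a presentation of the finite family, so the comparison must be carried out up to $\subseteq_{U}$ rather than on the nose. The main (modest) obstacle is therefore to establish the $\subseteq_{U}$-monotonicity of the binary union-of-families operation and to verify the constructive identity $\PFin{\Sin{A}} = \left\{ \Sin{A'} \mid A' \in \PFin{A} \right\}$, after which the induction closes without further difficulty.
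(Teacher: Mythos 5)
Your proof is correct, and item \ref{lem:StarSingleton1} is handled exactly as in the paper (via $\left\{ A \right\}^{*} = \PFin{A}$). For item \ref{lem:StarSingleton2}, however, you take a genuinely different route: you run an induction over the recursive definition of $(-)^{*}$, transporting the construction through the singleton map using the identities $\Sin{A \cup A'} = \Sin{A} \cup \Sin{A'}$ and $\PFin{\Sin{A}} = \left\{ \Sin{A'} \mid A' \in \PFin{A} \right\}$, together with $\subseteq_{U}$-monotonicity of the binary operation $(\mathcal{X},\mathcal{Y}) \mapsto \left\{ X \cup Y \mid X \in \mathcal{X} \amp Y \in \mathcal{Y} \right\}$. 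The paper instead avoids any induction at this point by invoking Lemma \ref{lem:BasicFFin}~\eqref{lem:BasicFFin1}: an element $B \in \mathcal{U}^{*}$ meets every $A \in \mathcal{U}$, hence $\Sin{B}$ meets every $\Sin{A}$, and the lemma immediately produces the required element of $\left\{ \Sin{A} \mid A \in \mathcal{U} \right\}^{*}$ below $\Sin{B}$ (and symmetrically for the other inclusion, reading off $\left\{ a \in S \mid \left\{ a \right\} \in C \right\}$ from a given $C$). The paper's argument is shorter because the combinatorial content of $(-)^{*}$ has already been packaged into Lemma \ref{lem:BasicFFin}; your argument is more self-contained and makes the interaction of $(-)^{*}$ with the singleton map explicit, at the cost of the extra bookkeeping you correctly identify (well-definedness up to $\subseteq_{U}$ under different presentations of the finitely enumerable family, and the two auxiliary identities). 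Both are valid; yours would also go through if Lemma \ref{lem:BasicFFin} were not available.
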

\begin{proof}
  \noindent \ref{lem:StarSingleton1}.
  This is immediate from $\left\{ A \right\}^{*} = \PFin{A}$.

  \smallskip
  \noindent \ref{lem:StarSingleton2}.
  For the first inclusion, let 
  $B \in \mathcal{U}^{*}$. Then, $B \meets A$ for all $A \in
  \mathcal{U}$, i.e., $\Sin{B} \meets \Sin{A}$ for all $A \in
  \mathcal{U}$. By Lemma~\ref{lem:BasicFFin}~\eqref{lem:BasicFFin1},
  there exists $C \in \left\{ \Sin{A} \mid A \in \mathcal{U} \right\}^{*}$
  such that $C \subseteq \Sin{B}$.
  Next,
  let $C \in \left\{ \Sin{A} \mid A \in \mathcal{U} \right\}^{*}$.
  Then, $C \meets \Sin{A}$ for each $A \in \mathcal{U}$, i.e.,
  $\left\{ a \in S \mid \left\{ a \right\} \in C \right\} \meets A$
  for each $A \in \mathcal{U}$. By
  Lemma~\ref{lem:BasicFFin}~\eqref{lem:BasicFFin1}, there exists $B
  \in \mathcal{U}^{*}$ such that $B \subseteq \left\{ a \in S \mid \left\{
  a \right\} \in C \right\}$. Then $\Sin{B} \subseteq C$.
\end{proof}

\subsubsection{Double powerlocale}
As is well known in locale theory \cite{DoublePowLocExp}, the
two compositions $\UpperFunc \circ \LowerFunc$ and $\LowerFunc \circ
\UpperFunc$ of the upper and lower powerlocales are naturally isomorphic. Indeed, for each proximity poset
$(S,\prox)$, there is an approximable relation $\sigma_{S} \colon
\Upper{\Lower{S}} \to
\Lower{\Upper{S}}$ defined by
\begin{equation}
  \label{eq:DistLaw}
  \mathcal{U} \mathrel{\sigma_{S}} \mathcal{V} \defeqiv
  \mathcal{U} \mathrel{(\prox_{L})_{U}} \mathcal{V}^{*}.
\end{equation}
For any approximable relation $r \colon (S, \prox) \to (S', \prox')$,
we have
\begin{align*}
  &\mathcal{U} \mathrel{(\sigma_{S'} \circ \Upper{\Lower{r}})}
  \mathcal{V}\\
  &\iff  
  \mathcal{U} \mathrel{((\prox' \circ \mathrel{r})_{L})_{U}}
  \mathcal{V}^{*}\\
  &\iff  
  \mathcal{U} \mathrel{((\mathrel{r} \circ \prox)_{L})_{U}} \mathcal{V}^{*}
  && (\text{$r$ is approximable})\\
  &\iff  \exists \mathcal{W} 
  \left( \mathcal{U} \mathrel{(\prox_{L})_{U}} \mathcal{W}^{**}
  \amp \mathcal{W}^{**} \mathrel{(r_{L})_{U}} \mathcal{V}^{*}\right) 
  && (\text{by Lemma \ref{lem:BasicFFin}~\eqref{lem:BasicFFin3}})
  \\
  &\iff  \exists \mathcal{W} 
  \left( \mathcal{U} \mathrel{(\prox_{L})_{U}} \mathcal{W}^{**}
  \amp \mathcal{W}^{*} \mathrel{(r_{U})_{L}} \mathcal{V}\right) 
  && (\text{by Lemma \ref{lem:UpperLower}})
  \\
  &\iff  
  \mathcal{U} \mathrel{(\Lower{\Upper{r}} \circ \sigma_{S})}
  \mathcal{V},
  && (\text{by Lemma \ref{lem:BasicFFin}~\eqref{lem:BasicFFin3}})
\end{align*}
where $\mathcal{W}$ ranges over $\Fin{\Fin{S}}$.
Thus, $\sigma$ is natural in $S$. 
Moreover, there is an approximable relation 
$\tau_{S} \colon \Lower{\Upper{S}} \to \Upper{\Lower{S}}$ 
defined by
\begin{equation}
  \mathcal{V} \mathrel{\tau_{S}} \mathcal{U} \defeqiv
  \mathcal{V} \mathrel{(\prox_{U})_{L}} \mathcal{U}^{*}.
  \label{eq:DistLawInverse}
\end{equation}
We have
\begin{align*}
  &\mathcal{U} \mathrel{(\tau_{S} \circ \sigma_{S})} \mathcal{V} \\
  &\iff  \exists \mathcal{W} 
  \left( \mathcal{U} \mathrel{(\prox_{L})_{U}} \mathcal{W}^{*}
  \amp \mathcal{W} \mathrel{(\prox_{U})_{L}} \mathcal{V}^{*}\right) \\
  &\iff  \exists \mathcal{W} 
  \left( \mathcal{U} \mathrel{(\prox_{L})_{U}} \mathcal{W}^{*}
  \amp \mathcal{W}^{*} \mathrel{(\prox_{L})_{U}} \mathcal{V}^{**}\right)
  && \text{(by Lemma \ref{lem:UpperLower})} \\
  &\iff  \exists \mathcal{W} 
  \left( \mathcal{U} \mathrel{(\prox_{L})_{U}} \mathcal{W}^{*}
  \amp \mathcal{W}^{*} \mathrel{(\prox_{L})_{U}} \mathcal{V}\right)
  && \text{(by Lemma~\ref{lem:BasicFFin}~\eqref{lem:BasicFFin3})} \\
  &\iff \mathcal{U} \mathrel{(\prox_{L})_{U}}\mathcal{V},
  && \text{(by Lemma \ref{lem:BasicFFin}~\eqref{lem:BasicFFin3})}
\end{align*}
where $\mathcal{W}$ ranges over $\Fin{\Fin{S}}$.
Thus, $\tau_{S} \circ \sigma_{S} = \id_{\Upper{{\Lower{S}}}}$.
Similarly, we have $\sigma_{S} \circ \tau_{S} =
\id_{\Lower{{\Upper{S}}}}$. Hence $\sigma$ and $\tau$ are inverse to each other.

We show that $\sigma \colon \UpperFunc \circ \LowerFunc
\to \LowerFunc \circ \UpperFunc$ satisfies the distributive law of
comonad, which is the dual of the distributive law of
monad~\cite{BeckDistributiveLaw}.
\begin{definition}
  \label{def:DistributiveLaw}
Let $\langle K, \varepsilon^{K}, \nu^{K} \rangle$ and 
$\langle T, \varepsilon^{T}, \nu^{T} \rangle$ be comonads
on a category~$\mathbb{C}$.
A \emph{distributive law} of $K$ over $T$ is a natural transformation
$\sigma \colon T \circ K \to K \circ T$ which makes the following diagrams
commute:
  \[
    \begin{tikzcd}
      [sep=large, row sep=huge]
      {}   & T \circ K
      \ar[d, "\sigma"]
      \ar[dl, "\varepsilon^{T}K"']
      \ar[dr, "T\varepsilon^{K}"]
      & {}
      \\
      K
      & K \circ T 
      \ar[l, "K \varepsilon^{T}"]
      \ar[ul, phantom, "(1)", pos=0.22]
      \ar[ur, phantom, "(2)", pos=0.22]
      \ar[r, "\varepsilon^{K}T"']
      & T
    \end{tikzcd}
    \quad
    \begin{tikzcd}
      T \circ K \circ K
      \ar[d, "\sigma K"']
      & T \circ K 
      \ar[l, "T \nu^{K}"']
      \ar[dd, "\sigma"]
      \ar[r, "\nu^{T}K"]
      & T \circ T \circ K
      \ar[d, "T\sigma"]\\
      K \circ T \circ K
      \ar[d, "K \sigma"']
      \ar[rr, phantom, "(3)", pos=0.2]
      \ar[rr, phantom, "(4)", pos=0.8]
      & 
      & 
      T \circ K \circ T
      \ar[d, "\sigma T"]\\
      K \circ K \circ T
      & 
      \ar[l, "\nu^{K}T"]
      K \circ T 
      \ar[r, "K \nu^{T}"']
      &
      K \circ T \circ T
    \end{tikzcd}
  \]
  In this situation, $T \circ K$ is a comonad with a counit $\varepsilon$ and a co-multiplication
  $\nu$ defined by
  \begin{equation}
    \label{eq:CompositeMonad}
  \begin{aligned}
    \varepsilon &=
    T \circ K \xrightarrow{T \varepsilon^{K}}
    T \xrightarrow{\varepsilon^{T}}
    \id_{\mathbb{C}},\\
    \nu &= 
    T \circ K
    \xrightarrow{T\nu^{K}} 
    T \circ K \circ K
    \xrightarrow{\nu^{T}K\circ K} 
    T \circ T \circ K \circ K
    \xrightarrow{T \sigma K} 
    T \circ K \circ T \circ K.
  \end{aligned}
  \end{equation}
\end{definition}

\begin{proposition}
  \label{prop:LowerUpperDistributiveLaw}
  $\sigma$ given by \eqref{eq:DistLaw} is a distributive law of
$\LowerFunc$ over $\UpperFunc$.
\end{proposition}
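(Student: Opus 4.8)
The plan is to verify the four coherence conditions of Definition~\ref{def:DistributiveLaw} directly, the naturality of $\sigma$ having already been established above. Taking $K = \LowerFunc$ and $T = \UpperFunc$, each condition becomes an equality of approximable relations in $\PxPos$; unwinding the relational composition together with the explicit formulas for $\sigma$ (given by \eqref{eq:DistLaw}), the counits $\varepsilon^{L}, \varepsilon^{U}$, the comultiplications $\nu^{L}, \nu^{U}$, and the functorial actions $\Lower{r} = r_{L}$ and $\Upper{r} = r_{U}$, the two counit triangles (1) and (2) read
\[
  \Lower{\varepsilon^{U}_{S}} \circ \sigma_{S} = \varepsilon^{U}_{\Lower{S}}
  \qquad\text{and}\qquad
  \varepsilon^{L}_{\Upper{S}} \circ \sigma_{S} = \Upper{\varepsilon^{L}_{S}},
\]
while the two comultiplication squares (3) and (4) read
\[
  \Lower{\sigma_{S}} \circ \sigma_{\Lower{S}} \circ \Upper{\nu^{L}_{S}}
  = \nu^{L}_{\Upper{S}} \circ \sigma_{S}
  \qquad\text{and}\qquad
  \sigma_{\Upper{S}} \circ \Upper{\sigma_{S}} \circ \nu^{U}_{\Lower{S}}
  = \Lower{\nu^{U}_{S}} \circ \sigma_{S}.
\]
In each case I would show that the two relations have the same graph, and all the real work is done by the three combinatorial tools already exercised in the naturality computation and in the proof that $\sigma$ and $\tau$ are mutually inverse: Lemma~\ref{lem:UpperLower}, which trades a relation of the form $(\prox_{L})_{U}$ for one of the form $(\prox_{U})_{L}$ at the cost of starring both arguments; Lemma~\ref{lem:BasicFFin}~\eqref{lem:BasicFFin3}, which lets one discard double stars since $\mathcal{U}$ and $\mathcal{U}^{**}$ are interderivable under $\subseteq_{U}$; and Lemma~\ref{lem:StarSingleton}, which governs the singleton families $\Sin{A}$ and their stars.

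For the counit triangles I would expand each side into an explicit condition on a family $\mathcal{U} \in \FFin{S}$ and a finite set. The right-hand sides unfold to $\mathcal{U}\,(\prox_{L})_{U}\,\{A\}$ and $\mathcal{U}\,(\prox_{L})_{U}\,\Sin{B}$ respectively, whereas the left-hand sides introduce an intermediate family $\mathcal{V}$ with $\mathcal{U}\,(\prox_{L})_{U}\,\mathcal{V}^{*}$ subject to a counit condition on $\mathcal{V}$; the singleton manipulations involved are controlled by Lemma~\ref{lem:StarSingleton} and the direct computation $\Sin{A}^{*} = \{A\}$. After applying Lemma~\ref{lem:UpperLower} to normalise the extensions, the equivalence of the two sides reduces to an interpolation step that is available because $\prox$ is idempotent and each $\downset_{\prox}a$ is a rounded, hence directed, ideal. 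The two triangles are entirely analogous, one collapsing an outer $\LowerFunc$-layer and the other an outer $\UpperFunc$-layer, so it suffices to treat one in detail.

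The main obstacle is the pair of comultiplication squares (3) and (4). Here the composites pass through three applications of the powerlocale functors, so both sides are relations between triply iterated finite powers, and they involve $\nu^{L}$ and $\nu^{U}$, whose defining clauses use the union $\bigcup\mathcal{U}$ of a finite family of finite sets. The difficulty is to control how the star operation $(-)^{*}$ interacts simultaneously with these unions and with the doubly iterated extensions (such as $((\prox_{L})_{U})_{L}$ arising from $\Lower{\sigma_{S}}$ and $((\prox_{L})_{U})_{U}$ arising from $\Upper{\sigma_{S}}$). My plan is to bring both composites to a common normal form by the same device used for $\tau_{S} \circ \sigma_{S} = \id$, now one power higher: repeatedly apply Lemma~\ref{lem:UpperLower} until every extension appears as $(\prox_{L})_{U}$ of a starred family, introduce the existential witness $\mathcal{W} \in \Fin{\FFin{S}}$ linking the two halves, and then cancel the surplus stars via Lemma~\ref{lem:BasicFFin}~\eqref{lem:BasicFFin3}. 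Once both sides are reduced to the same condition, the squares commute, and by the recipe \eqref{eq:CompositeMonad} this exhibits $\UpperFunc \circ \LowerFunc$ as a comonad, as required.
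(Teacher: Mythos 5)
Your reading of Definition~\ref{def:DistributiveLaw} is accurate: the four component equations you write down for $K=\LowerFunc$, $T=\UpperFunc$ are exactly the ones to be verified, and the three lemmas you single out (Lemma~\ref{lem:UpperLower}, Lemma~\ref{lem:BasicFFin}~\eqref{lem:BasicFFin3}, Lemma~\ref{lem:StarSingleton}) are indeed the whole toolkit; your treatment of the counit triangles, including the observation that $\Sin{B}^{*}$ collapses to $\{B\}$ and that an interpolation via idempotency of $\prox_{L}$ bridges the intermediate witness, matches what the paper dismisses as ``easy to check''. Where you genuinely diverge is on the squares (3) and (4). You propose to normalise both sides of, e.g., $\Lower{\sigma_{S}}\circ\sigma_{\Lower{S}}\circ\Upper{\nu^{L}_{S}}=\nu^{L}_{\Upper{S}}\circ\sigma_{S}$ directly, which forces you to unwind $\Lower{\sigma_{S}}$ --- a lower extension of a relation that itself carries an internal star on its second argument --- on one side of an equality of relations between $\Fin{\Fin{S}}$ and $\FFFin{S}$. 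The paper instead exploits that $\sigma$ is invertible with inverse $\tau$ and that $\PxPos$ is poset-enriched: it transposes one occurrence of $\sigma$ across the equation as a $\tau$, replacing the target by $\Lower{\Upper{\Lower{S}}}$, and then checks the resulting identity as two separate inclusions \eqref{eq:DistrLaw_3_1} and \eqref{eq:DistrLaw_3_2}, each of which is a one-directional element chase producing explicit witnesses such as $\mathbb{W}=\left\{\Sin{Y}\mid Y\in\mathcal{Y}^{*}\right\}$. This buys two things your route does not: each inclusion only ever unwinds one three-fold composite at a time, and the functorial image that appears is $\Lower{\tau_{S}}$, whose graph is the cleaner $\left(\left(\prox_{U}\right)_{L}\right)_{L}$ against an elementwise-starred family. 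Your direct normalisation is not wrong --- the same lemmas suffice in principle --- but be aware that the bookkeeping of elementwise stars inside $\Lower{\sigma_{S}}$ and $\Upper{\sigma_{S}}$ is the genuinely delicate part, and your proposal currently stops at the level of strategy precisely where that work begins; if you execute it, expect the chain of equivalences to be at least as long as the paper's and consider adopting the two-inequality decomposition, which is available to you for free since you have already shown $\sigma$ and $\tau$ to be mutually inverse.
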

\begin{proof}
  The commutativity of the diagrams (1) and (2) are
  easy to check. 
  We show that the diagram (3) commutes for $T = \UpperFunc$ and $K = \LowerFunc$. 
  Fix a proximity poset $(S,\prox)$. It suffices to show
  \begin{align}
  \label{eq:DistrLaw_3_1}
  \Lower{\tau_{S}} \circ \nu^{L}_{\Upper{S}} \circ
      \sigma_{S} 
      &\leq \sigma_{\Lower{S}} \circ \Upper{\nu^{L}_{S}},\\
  \label{eq:DistrLaw_3_2}
  \sigma_{\Lower{S}} \circ \Upper{\nu^{L}_{S}} \circ \tau_{S}
  &\leq  \Lower{\tau_{S}} \circ \nu^{L}_{\Upper{S}}.
  \end{align}
  In the proof below, we identify each proximity poset with its
  underlying set.

  First, to see that \eqref{eq:DistrLaw_3_1} holds,
  let $\mathcal{X} \in \Upper{\Lower{S}}$ and
  $\mathbb{U} \in \Lower{\Upper{\Lower{S}}}$, and suppose
  $\mathcal{X} \mathrel{(\Lower{\tau_{S}} \circ \nu^{L}_{\Upper{S}} \circ
  \sigma_{S})} \mathbb{U}$. 
  Then, there exist $\mathcal{Y} \in \Lower{\Upper{S}}$ 
  and $\mathbb{V} \in \Lower{\Lower{\Upper{S}}}$ such that
  $\mathcal{X} \mathrel{\sigma_{S}} 
  \mathcal{Y} \mathrel{\nu^{L}_{\Upper{S}}} \mathbb{V} 
  \mathrel{\Lower{\tau_{S}}} \mathbb{U}$.
  Thus
  \begin{enumerate}
    \item\label{eq:DistrLaw_3_1_1} $\mathcal{X}
      \mathrel{(\prox_{L})_{U}} \mathcal{Y}^{*}$,

    \item\label{eq:DistrLaw_3_1_2} $\Sin{\mathcal{Y}} \mathrel{\left( \left( \prox_{U}
      \right)_{L} \right)_{L}} \mathbb{V}$,

    \item\label{eq:DistrLaw_3_1_3} $\mathbb{V} \mathrel{\left( \left( \prox_{U}
      \right)_{L} \right)_{L}} \left\{ \mathcal{U}^{*} \mid
      \mathcal{U} \in \mathbb{U} \right\}$.
  \end{enumerate}
  By \ref{eq:DistrLaw_3_1_2}
  and \ref{eq:DistrLaw_3_1_3}, we have 
  $\Sin{\mathcal{Y}} 
  \mathrel{\left( \left( \prox_{U} \right)_{L} \right)_{L}}
  \left\{ \mathcal{U}^{*} \mid \mathcal{U} \in \mathbb{U} \right\}$.
  Then
  \begin{align*}
  &\Sin{\mathcal{Y}} 
  \mathrel{\left( \left( \prox_{U} \right)_{L} \right)_{L}}
  \left\{ \mathcal{U}^{*} \mid \mathcal{U} \in \mathbb{U} \right\}\\
  &\iff
  \left\{ \left\{ Y \right\}^{**} \mid Y \in \mathcal{Y}\right\} 
  \mathrel{\left( \left( \prox_{U} \right)_{L} \right)_{L}}
  \left\{ \mathcal{U}^{*} \mid \mathcal{U} \in \mathbb{U} \right\}
  && \text{(by  Lemma \ref{lem:BasicFFin}~\eqref{lem:BasicFFin3})} \\
  &\iff
  \left\{ \left\{ Y \right\}^{*} \mid Y \in \mathcal{Y}\right\} 
  \mathrel{\left( \left( \prox_{L} \right)_{U} \right)_{L}}
  \mathbb{U}
  && \text{(by  Lemma \ref{lem:UpperLower})} \\
  &\iff
  \left\{ \Sin{Y} \mid Y \in \mathcal{Y}\right\} 
  \mathrel{\left( \left( \prox_{L} \right)_{U} \right)_{L}}
  \mathbb{U}
  && \text{(by Lemma \ref{lem:StarSingleton}~\eqref{lem:StarSingleton1})} \\
  &\iff
  \left\{ \Sin{Y} \mid Y \in \mathcal{Y}\right\}^{*}
  \mathrel{\left( \left( \prox_{L} \right)_{L} \right)_{U}}
  \mathbb{U}^{*}
  && \text{(by Lemma \ref{lem:UpperLower})} \\
  &\iff
  \left\{ \Sin{Y} \mid Y \in \mathcal{Y}^{*}\right\}
  \mathrel{\left( \left( \prox_{L} \right)_{L} \right)_{U}}
  \mathbb{U}^{*}.
  && \text{(by Lemma \ref{lem:StarSingleton}~\eqref{lem:StarSingleton2})}
  \end{align*}
  On the other hand, \ref{eq:DistrLaw_3_1_1} implies
  $\left\{ \Sin{X} \mid X \in \mathrel{X}\right\}
  \mathrel{\left( \left( \prox_{L} \right)_{L} \right)_{U}}
  \left\{ \Sin{Y} \mid Y \in \mathcal{Y}^{*}\right\}$. Thus, by putting
  $\mathbb{W} = \left\{ \Sin{Y} \mid Y \in \mathcal{Y}^{*}\right\}$,
  we have
  $\mathcal{X} \mathrel{\Upper{\nu^{L}_{S}}} \mathbb{W}$
  and $\mathbb{W} \mathrel{\sigma_{\Lower S}} \mathbb{U}$. Hence
  $\mathcal{X} \mathrel{(\sigma_{\Lower S} \circ
  \Upper{\nu^{L}_{S}})} \mathbb{U}$.

  Next, we verify \eqref{eq:DistrLaw_3_2}. Let $\mathcal{Y} \in
  \Lower{\Upper{S}}$ and
  $\mathbb{U} \in \Lower{\Upper{\Lower{S}}}$, and suppose 
  $\mathcal{Y} \mathrel{(\sigma_{\Lower{S}} \circ \Upper{\nu^{L}_{S}} \circ
  \tau_{S})} \mathbb{U}$. Then, there exist $\mathcal{X} \in
  \Upper{\Lower{S}}$ and $\mathbb{W} \in
  \Upper{\Lower{\Lower{S}}}$ such that $\mathcal{Y}
  \mathrel{\tau_{S}} \mathcal{X} \mathrel{\Upper{\nu^{L}_{S}}} \mathbb{W}
  \mathrel{\sigma_{\Lower{S}}} \mathbb{U}$.
  Thus
  \begin{enumerate}[resume]
    \item\label{eq:DistrLaw_3_1_4} 
      $\mathcal{Y} \mathrel{(\prox_{U})_{L}} \mathcal{X}^{*}$,
    \item\label{eq:DistrLaw_3_1_5} 
      $\left\{ \Sin{X} \mid X \in \mathcal{X} \right\}
      \mathrel{\left( \left( \prox_{L} \right)_{L} \right)_{U}} 
      \mathbb{W}$,
    \item\label{eq:DistrLaw_3_1_6}
      $\mathbb{W} \mathrel{\left( \left( \prox_{L}
      \right)_{L} \right)_{U}} \mathbb{U}^{*}$.
  \end{enumerate}
  By \ref{eq:DistrLaw_3_1_5} and  \ref{eq:DistrLaw_3_1_6}, 
  we have 
  $\left\{ \Sin{X} \mid X \in \mathcal{X} \right\}
      \mathrel{\left( \left( \prox_{L} \right)_{L} \right)_{U}} 
  \mathbb{U}^{*}$. Then
  \begin{align*}
  &\left\{ \Sin{X} \mid X \in \mathcal{X} \right\}
      \mathrel{\left( \left( \prox_{L} \right)_{L} \right)_{U}}
      \mathbb{U}^{*}\\
  &\iff
  \left\{ \Sin{X} \mid X \in \mathcal{X} \right\}^{**}
      \mathrel{\left( \left( \prox_{L} \right)_{L} \right)_{U}}
      \mathbb{U}^{*}
  && \text{(by  Lemma \ref{lem:BasicFFin}~\eqref{lem:BasicFFin3})} \\
  &\iff
  \left\{ \Sin{X} \mid X \in \mathcal{X} \right\}^{*}
      \mathrel{\left( \left( \prox_{L} \right)_{U} \right)_{L}} \mathbb{U}
      && \text{(by Lemma \ref{lem:UpperLower})} \\
  &\iff
  \left\{ \Sin{X} \mid X \in \mathcal{X}^{*} \right\}
      \mathrel{\left( \left( \prox_{L} \right)_{U} \right)_{L}} \mathbb{U}
  && \text{(by Lemma \ref{lem:StarSingleton}~\eqref{lem:StarSingleton2})} \\
  &\iff
  \left\{ \left\{ X \right\}^{*} \mid X \in \mathcal{X}^{*} \right\}
      \mathrel{\left( \left( \prox_{L} \right)_{U} \right)_{L}} \mathbb{U}
  && \text{(by Lemma \ref{lem:StarSingleton}~\eqref{lem:StarSingleton1})} \\
  &\iff
  \left\{ \left\{ X \right\}^{**} \mid X \in \mathcal{X}^{*} \right\}
      \mathrel{\left( \left( \prox_{U} \right)_{L} \right)_{L}} 
      \left\{ \mathcal{U}^{*} \mid \mathcal{U} \in \mathbb{U} \right\}
      && \text{(by Lemma \ref{lem:UpperLower})} \\
  &\iff
  \Sin{\mathcal{X}^{*}}
      \mathrel{\left( \left( \prox_{U} \right)_{L} \right)_{L}} 
      \left\{ \mathcal{U}^{*} \mid \mathcal{U} \in \mathbb{U} \right\}.
        && \text{(by  Lemma \ref{lem:BasicFFin}~\eqref{lem:BasicFFin3})}
  \end{align*}
  On the other hand, \ref{eq:DistrLaw_3_1_4} implies
  $\Sin{\mathcal{Y}} 
  \mathrel{\left( \left( \prox_{U} \right)_{L} \right)_{L}} 
  \Sin{\mathcal{X}^{*}}$. Thus, $\mathcal{Y} \mathrel{\nu^{L}_{\Upper{S}}}
  \Sin{\mathcal{X}^{*}}$ and $\Sin{\mathcal{X}^{*}}
  \mathrel{\Lower{\tau_{S}}} \mathbb{U}$.
  Hence $\mathcal{Y} \mathrel{(\Lower{\tau_{S}} \circ
  \nu^{L}_{\Upper{S}})} \mathbb{U}$.

  The commutativity of (4) can be proved similarly.
\end{proof}

Thus, $\UpperFunc \circ \LowerFunc$ and $\LowerFunc \circ \UpperFunc$
give rise to equivalent comonads on $\PxPos$.
\begin{definition}
  The \emph{double powerlocale} on $\PxPos$ is
  the composition $\UpperFunc \circ \LowerFunc$ (or equivalently the
  composition $\LowerFunc \circ \UpperFunc $).%
  \footnote{
  In this paper, we choose $\UpperFunc \circ \LowerFunc$ as the double powerlocale.}
\end{definition}

Next, we recall some properties of distributive laws to obtain
convenient characterisations of coalgebras of double powerlocales
and homomorphisms between them.  Fix comonads
$\langle T, \varepsilon^{T}, \nu^{T} \rangle$ and $\langle K,
\varepsilon^{K}, \nu^{K} \rangle$ on a category $\mathbb{C}$ and a
distributive law $\sigma \colon T \circ K \to K \circ T$ of $K$ over
$T$. Let $\langle H, \varepsilon^{H},\nu^{H} \rangle$ be the composite
comonad $T \circ K$ where 
$\varepsilon^{H}$ and $\nu^{H}$ are given by
\eqref{eq:CompositeMonad}.
\begin{lemma}\label{prop:HAlgIsTKAlg}
  If $\alpha \colon X \to H X$ is an $H$-coalgebra, then
  \begin{align*}
      \alpha_{T} &\defeql T \varepsilon^{K}_{X} \circ \alpha,
      &
      \alpha_{K} &\defeql \varepsilon^{T}_{KX} \circ \alpha 
  \end{align*}
  are $T$-coalgebra and $K$-coalgebra, respectively, and make the
  following diagram commute:
\begin{equation*}
  \begin{tikzcd}
    TK X \arrow[rr,"\sigma_{X}"] &  & KT X \\
    T X \arrow[u,"T\alpha_{K}"] & & K X \arrow[u,"K\alpha_{T}"'] \\
    & \arrow[ul,"\alpha_{T}"]X \arrow[ur,"\alpha_{K}"']&
  \end{tikzcd}
\end{equation*}
Moreover, $\alpha = T\alpha_{K} \circ \alpha_{T}$.
\end{lemma}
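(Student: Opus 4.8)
The plan is to derive all four assertions from the two $H$-coalgebra laws for $\alpha$, namely the counit law $\varepsilon^{H}_{X} \circ \alpha = \id_{X}$ and the coassociativity law $\nu^{H}_{X} \circ \alpha = TK\alpha \circ \alpha$, where $\varepsilon^{H}_{X} = \varepsilon^{T}_{X} \circ T\varepsilon^{K}_{X}$ and $\nu^{H}_{X} = T\sigma_{KX} \circ \nu^{T}_{KKX} \circ T\nu^{K}_{X}$ by \eqref{eq:CompositeMonad}. Throughout I will use only naturality of the structure maps, the comonad identities $T\varepsilon^{T} \circ \nu^{T} = \id = \varepsilon^{T}T \circ \nu^{T}$ together with their $K$-analogues, and the two triangles $(1)$ and $(2)$ of Definition~\ref{def:DistributiveLaw}; notably the comultiplication squares $(3)$ and $(4)$ will not be needed. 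The two counit laws are immediate: naturality of $\varepsilon^{T}$ applied to $\varepsilon^{K}_{X}$ gives $\varepsilon^{H}_{X} = \varepsilon^{T}_{X} \circ T\varepsilon^{K}_{X} = \varepsilon^{K}_{X} \circ \varepsilon^{T}_{KX}$, whence $\varepsilon^{T}_{X} \circ \alpha_{T} = \varepsilon^{H}_{X} \circ \alpha = \id_{X}$ and $\varepsilon^{K}_{X} \circ \alpha_{K} = \varepsilon^{H}_{X} \circ \alpha = \id_{X}$.

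The heart of the argument is to isolate two collapse identities for $\nu^{H}_{X}$. Postcomposing with $T\varepsilon^{K}_{TKX}$ and using triangle $(2)$ in the form $\varepsilon^{K}_{TKX} \circ \sigma_{KX} = T\varepsilon^{K}_{KX}$, then naturality of $\nu^{T}$ and the $K$-counit law $\varepsilon^{K}_{KX} \circ \nu^{K}_{X} = \id$, yields $T\varepsilon^{K}_{TKX} \circ \nu^{H}_{X} = \nu^{T}_{KX}$. Postcomposing instead with $\varepsilon^{T}_{KTKX}$ and using naturality of $\varepsilon^{T}$ together with $\varepsilon^{T}_{TKKX} \circ \nu^{T}_{KKX} = \id$ yields $\varepsilon^{T}_{KTKX} \circ \nu^{H}_{X} = \sigma_{KX} \circ T\nu^{K}_{X}$. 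In every subsequent computation a composite $TK\alpha \circ \alpha$ is produced by moving a counit past $\alpha$ via naturality, rewritten as $\nu^{H}_{X} \circ \alpha$ by coassociativity, and then absorbed by one of these two identities.

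Granting these, the remaining items fall out uniformly. For the reconstruction, expanding $T\alpha_{K} \circ \alpha_{T}$ and moving $\varepsilon^{K}_{X}$ past $\alpha$ produces $T\varepsilon^{T}_{KX} \circ T\varepsilon^{K}_{TKX} \circ \nu^{H}_{X} \circ \alpha$, which by the first collapse and $T\varepsilon^{T}_{KX} \circ \nu^{T}_{KX} = \id$ reduces to $\alpha$. Since $T\alpha_{K} \circ \alpha_{T} = \alpha$, the compatibility square is equivalent to $\sigma_{X} \circ \alpha = K\alpha_{T} \circ \alpha_{K}$; expanding the right-hand side, applying the second collapse to reach $\sigma_{KX} \circ T\nu^{K}_{X}$, then naturality of $\sigma$ past $\varepsilon^{K}_{X}$ and the $K$-counit law $K\varepsilon^{K}_{X} \circ \nu^{K}_{X} = \id$, produces $\sigma_{X} \circ \alpha$. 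Coassociativity of $\alpha_{T}$ is checked by reducing both $\nu^{T}_{X} \circ \alpha_{T}$ and $T\alpha_{T} \circ \alpha_{T}$, the latter through the first collapse, to the common value $TT\varepsilon^{K}_{X} \circ \nu^{T}_{KX} \circ \alpha$. Coassociativity of $\alpha_{K}$ is checked similarly, reducing both $\nu^{K}_{X} \circ \alpha_{K}$ and $K\alpha_{K} \circ \alpha_{K}$ to $\varepsilon^{T}_{KKX} \circ T\nu^{K}_{X} \circ \alpha$, the latter reduction being the one place where triangle $(1)$, in the form $K\varepsilon^{T}_{KX} \circ \sigma_{KX} = \varepsilon^{T}_{KKX}$, is invoked.

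I expect the only real obstacle to be clerical: keeping the whiskering subscripts aligned so that each appeal to naturality, to a triangle, or to a comonad counit law is taken at the correct object. There is no conceptual difficulty once the two collapse identities are in hand, since the genuine inputs are just the triangles $(1)$ and $(2)$ and the counit laws; the risk lies entirely in mismatched indices rather than in any missing idea.
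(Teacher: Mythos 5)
Your proposal is correct and follows essentially the same route as the paper: the paper likewise derives everything from the $H$-coalgebra laws using naturality of $\varepsilon^K$, $\varepsilon^T$, $\nu^T$, the triangles $(1)$ and $(2)$, and the comonad counit laws (the squares $(3)$ and $(4)$ are indeed never used), leaving the coassociativity checks to ``direct calculation'' and writing out only the chain for $T\alpha_K\circ\alpha_T=\alpha$. Your two collapse identities $T\varepsilon^{K}_{TKX}\circ\nu^{H}_{X}=\nu^{T}_{KX}$ and $\varepsilon^{T}_{KTKX}\circ\nu^{H}_{X}=\sigma_{KX}\circ T\nu^{K}_{X}$ are just a tidy modularisation of the very same computation, and all the whiskering indices check out.
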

\begin{proof}
  By direct calculations, one can show
  $\varepsilon^{T}_{X} \circ  \alpha_{T} = \id_{X}$ and 
  $\nu^{T}_{X} \circ \alpha_{T} =  T \alpha_{T} \circ \alpha_{T}$,
  and similarly for $\alpha_{K}$.
Moreover,
\begin{align*}
  T \alpha_{K} \circ \alpha_{T}
  &= T(\varepsilon^{T}_{KX} \circ \alpha) \circ T \varepsilon^{K}_{X} \circ \alpha
  \\
  &= T \varepsilon^{T}_{KX} \circ T \varepsilon^{K}_{TKX}\circ TK \alpha \circ \alpha 
  && \text{(by the naturality of $\varepsilon^{K}$)}
  \\
  &= T \varepsilon^{T}_{KX} \circ T \varepsilon^{K}_{TKX}\circ
  \left(T \sigma_{KX} \circ \nu^{T}_{KKX} \circ T \nu^{K}_{X} \right)
  \circ \alpha 
  && \text{($\alpha$ is an $H$-coalgebra)}
  \\
  &= T \varepsilon^{T}_{KX} \circ TT \varepsilon^{K}_{KX}\circ
  \nu^{T}_{KKX} \circ T \nu^{K}_{X} \circ \alpha 
  && \text{(by the diagram (2))}
  \\
  &= T \varepsilon^{T}_{KX} \circ 
  \nu^{T}_{KX} \circ T \varepsilon^{K}_{KX}\circ T \nu^{K}_{X} \circ \alpha
  && \text{(by the naturality of $\nu^{T}$)}
  \\
  &= \alpha.
  && \text{($K$ and $T$ are comonads)}
\end{align*}
Similarly, we have $K \alpha_{T} \circ \alpha_{K} = \sigma_{X} \circ
\alpha$. 
\end{proof}
\begin{lemma}\label{prop:CharHalg}
  Let $\alpha \colon X \to T X$  and $\beta \colon X
  \to K X$ 
  be a $T$-coalgebra and a $K$-coalgebra, respectively.
  Then, $\gamma \defeql T \beta \circ \alpha$ is an $H$-coalgebra
  if and only if 
  $\sigma_{X} \circ T \beta \circ \alpha = K \alpha \circ \beta$.
\end{lemma}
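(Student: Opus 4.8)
The plan is to check the two coalgebra laws for $\gamma = T\beta \circ \alpha$ separately, observing that the counit law holds unconditionally while the coassociativity law is exactly what is equivalent to the compatibility condition $\sigma_{X} \circ T\beta \circ \alpha = K\alpha \circ \beta$. For the counit law I would compute directly from \eqref{eq:CompositeMonad}: since $\varepsilon^{H}_{X} = \varepsilon^{T}_{X} \circ T\varepsilon^{K}_{X}$, we get $\varepsilon^{H}_{X} \circ \gamma = \varepsilon^{T}_{X} \circ T(\varepsilon^{K}_{X} \circ \beta) \circ \alpha = \varepsilon^{T}_{X} \circ \alpha = \id_{X}$, using the counit law of the $K$-coalgebra $\beta$ and then that of the $T$-coalgebra $\alpha$. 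Thus only coassociativity remains, and the entire content of the lemma is the equivalence between that law and the displayed condition.

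For the forward direction I would invoke Lemma~\ref{prop:HAlgIsTKAlg}. If $\gamma$ is an $H$-coalgebra, that lemma produces its components $\gamma_{T} = T\varepsilon^{K}_{X} \circ \gamma$ and $\gamma_{K} = \varepsilon^{T}_{KX} \circ \gamma$, together with the identity $\sigma_{X} \circ \gamma = K\gamma_{T} \circ \gamma_{K}$. It then suffices to identify $\gamma_{T} = \alpha$ and $\gamma_{K} = \beta$: the first follows from $T(\varepsilon^{K}_{X} \circ \beta) = \id_{TX}$, and the second from the naturality of $\varepsilon^{T}$ (so that $\varepsilon^{T}_{KX} \circ T\beta = \beta \circ \varepsilon^{T}_{X}$) together with the counit law of $\alpha$. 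Substituting these into $\sigma_{X} \circ \gamma = K\gamma_{T} \circ \gamma_{K}$ yields exactly $\sigma_{X} \circ T\beta \circ \alpha = K\alpha \circ \beta$.

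For the backward direction I would assume the compatibility condition and verify $\nu^{H}_{X} \circ \gamma = TK\gamma \circ \gamma$. Starting from $\nu^{H}_{X} = T\sigma_{KX} \circ \nu^{T}_{KKX} \circ T\nu^{K}_{X}$ read off from \eqref{eq:CompositeMonad}, I would first rewrite $T\nu^{K}_{X} \circ T\beta = T(K\beta \circ \beta)$ by coassociativity of $\beta$, then push $\nu^{T}$ past $TK\beta$ and $T\beta$ by naturality of $\nu^{T}$, and finally use coassociativity of $\alpha$ to turn $\nu^{T}_{X} \circ \alpha$ into $T\alpha \circ \alpha$. This brings the left-hand side to the form $T(\sigma_{KX} \circ TK\beta \circ T\beta \circ \alpha) \circ \alpha$. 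The crucial simplification is then to apply naturality of the distributive law $\sigma$ along $\beta \colon X \to KX$, giving $\sigma_{KX} \circ TK\beta = KT\beta \circ \sigma_{X}$, so that the inner expression becomes $KT\beta \circ (\sigma_{X} \circ T\beta \circ \alpha)$; here the hypothesis collapses it to $K(T\beta \circ \alpha) \circ \beta = K\gamma \circ \beta$. Re-applying $T$ and unfolding $\gamma$ yields $T(K\gamma \circ \beta) \circ \alpha = TK\gamma \circ T\beta \circ \alpha = TK\gamma \circ \gamma$, as required. I expect the main obstacle to be the bookkeeping in this last computation: keeping the two uses of the naturality of $\nu^{T}$ and the single use of the naturality of $\sigma$ in the correct order, and in particular recognizing that one must factor out the outer $T$ \emph{before} the naturality of $\sigma$ and the hypothesis can be brought to bear.
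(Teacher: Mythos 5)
Your proposal is correct and follows essentially the same route as the paper: the forward direction via Lemma~\ref{prop:HAlgIsTKAlg} with the identifications $\gamma_T=\alpha$ and $\gamma_K=\beta$, and the backward direction by the same chain of rewrites (coassociativity of $\beta$, naturality of $\nu^T$, coassociativity of $\alpha$, naturality of $\sigma$, then the hypothesis). The only difference is cosmetic bookkeeping in where the outer $T$ is factored out before applying the naturality of $\sigma$.
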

\begin{proof}
  ($\Rightarrow$)
  This follows from Lemma \ref{prop:HAlgIsTKAlg} because
  \begin{align*}
    \gamma_{T} &= T \varepsilon^{K}_{X} \circ T \beta \circ \alpha  = \alpha,\\
    \gamma_{K} &= \varepsilon^{T}_{KX} \circ T \beta \circ \alpha
    = \beta \circ \varepsilon^{T}_{X} \circ \alpha = \beta.
  \end{align*}

  \noindent($\Leftarrow$)
  If
  $\sigma_{X} \circ T \beta \circ \alpha = K \alpha \circ \beta$,
  then
  \begin{align*}
    \varepsilon^{H}_{X} \circ \gamma 
    &= \varepsilon^{H}_{X} \circ T\beta \circ \alpha\\
    &= \varepsilon^{T}_{X} \circ T \varepsilon^{K}_{X} \circ T \beta
    \circ \alpha  \\
    &= \id_{X},
    && \text{($\alpha$ and $\beta$ are coalgebras)}
    \\[.5em]
     \nu^{H}_{X} \circ \gamma
    &= \nu^{H}_{X} \circ T\beta \circ \alpha\\
    &= T\sigma_{KX} \circ \nu^{T}_{KKX} \circ T\nu^{K}_{X}
     \circ T\beta \circ \alpha
     && \text{(by \eqref{eq:CompositeMonad})}
     \\
    &= T\sigma_{KX} \circ \nu^{T}_{KKX} \circ T(K\beta \circ \beta)
     \circ \alpha
    && \text{($\beta$ is a $K$-coalgebra)}
     \\
    &= T\sigma_{KX} \circ TT(K\beta \circ \beta) \circ \nu^{T}_{X}
     \circ \alpha
  && \text{(by the naturality of $\nu^{T}$)}
     \\
    &= T(KT\beta \circ \sigma_{X}) \circ TT \beta \circ \nu^{T}_{X}
     \circ \alpha
  && \text{(by the naturality of $\sigma$)}
     \\
    &= T(KT\beta \circ \sigma_{X}) \circ TT \beta \circ T\alpha \circ \alpha
    && \text{($\alpha$ is a $T$-coalgebra)}
    \\
    &= TKT\beta \circ TK\alpha \circ T\beta \circ \alpha
    && \text{(by $\sigma_{X} \circ T \beta \circ \alpha = K \alpha \circ \beta$)}
    \\
    &= H\gamma \circ \gamma.
  \end{align*}
  Thus, $\gamma$ is an $H$-coalgebra.
\end{proof}

Homomorphisms  between $H$-coalgebras can be characterised by their
underlying homomorphisms of $T$-coalgebras and $K$-coalgebras.
In the following lemma, we use the notation from Lemma \ref{prop:HAlgIsTKAlg}.
\begin{lemma}\label{prop:CharHalgHom}
  Let $\alpha \colon X \to H X$ and $\beta \colon Y \to HY$ be
  $H$-coalgebras. Then $f \colon X \to Y$
  is an $H$-coalgebra homomorphism from $\alpha$ to $\beta$ if and only if it is
  a $T$-coalgebra homomorphism from $\alpha_{T}$ to $\beta_{T}$
  and a $K$-coalgebra homomorphism from $\alpha_{K}$ to~$\beta_{K}$.
\end{lemma}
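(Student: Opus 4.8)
The plan is to unfold the three coalgebra-homomorphism conditions and move between them using only the naturality of the counits $\varepsilon^{T}$ and $\varepsilon^{K}$ together with the decomposition recorded in Lemma~\ref{prop:HAlgIsTKAlg}. Recall that $f$ is an $H$-coalgebra homomorphism precisely when $\beta \circ f = Hf \circ \alpha = TKf \circ \alpha$; that it is a $T$-coalgebra homomorphism from $\alpha_{T}$ to $\beta_{T}$ precisely when $\beta_{T} \circ f = Tf \circ \alpha_{T}$; and that it is a $K$-coalgebra homomorphism from $\alpha_{K}$ to $\beta_{K}$ precisely when $\beta_{K} \circ f = Kf \circ \alpha_{K}$.

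For the forward direction, I would assume $\beta \circ f = TKf \circ \alpha$. Composing on the left with $T\varepsilon^{K}_{Y}$ and invoking the naturality of $\varepsilon^{K}$ (which gives $\varepsilon^{K}_{Y} \circ Kf = f \circ \varepsilon^{K}_{X}$) yields
\[
  \beta_{T} \circ f
  = T\varepsilon^{K}_{Y} \circ \beta \circ f
  = T\varepsilon^{K}_{Y} \circ TKf \circ \alpha
  = Tf \circ T\varepsilon^{K}_{X} \circ \alpha
  = Tf \circ \alpha_{T},
\]
so $f$ is a $T$-coalgebra homomorphism. Symmetrically, composing with $\varepsilon^{T}_{KY}$ and using the naturality of $\varepsilon^{T}$ (which gives $\varepsilon^{T}_{KY} \circ T(Kf) = Kf \circ \varepsilon^{T}_{KX}$) produces $\beta_{K} \circ f = Kf \circ \alpha_{K}$, so $f$ is a $K$-coalgebra homomorphism as well.

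For the converse, I would assume both $\beta_{T} \circ f = Tf \circ \alpha_{T}$ and $\beta_{K} \circ f = Kf \circ \alpha_{K}$. By the last clause of Lemma~\ref{prop:HAlgIsTKAlg} we may write $\alpha = T\alpha_{K} \circ \alpha_{T}$ and $\beta = T\beta_{K} \circ \beta_{T}$. Substituting the two hypotheses in turn gives
\[
  \beta \circ f
  = T\beta_{K} \circ \beta_{T} \circ f
  = T\beta_{K} \circ Tf \circ \alpha_{T}
  = T(Kf \circ \alpha_{K}) \circ \alpha_{T}
  = TKf \circ T\alpha_{K} \circ \alpha_{T}
  = TKf \circ \alpha,
\]
which is exactly the condition for $f$ to be an $H$-coalgebra homomorphism.

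The argument presents no genuine obstacle: it is a short naturality chase whose only substantive input is the factorisation $\alpha = T\alpha_{K} \circ \alpha_{T}$ (and its analogue for $\beta$) already supplied by Lemma~\ref{prop:HAlgIsTKAlg}. If anything requires a moment of care it is only bookkeeping the object at which each naturality square is taken, and confirming that the counit used in each direction is the one arising in the corresponding summand $\alpha_{T}$ or $\alpha_{K}$.
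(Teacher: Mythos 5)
Your proposal is correct and follows essentially the same route as the paper: the forward direction is the naturality chase with $\varepsilon^{K}$ and $\varepsilon^{T}$ (which the paper leaves as ``immediate'' and you spell out), and the converse is exactly the paper's computation using the factorisations $\alpha = T\alpha_{K}\circ\alpha_{T}$ and $\beta = T\beta_{K}\circ\beta_{T}$ from Lemma~\ref{prop:HAlgIsTKAlg}. No gaps.
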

\begin{proof}
  ($\Rightarrow$)
  Immediate from the naturality of $\varepsilon^{K}$
  and $\varepsilon^{T}$.
  \smallskip

  \noindent($\Leftarrow$)
  If $f$ is a
  homomorphism of the underlying $T$-coalgebra and $K$-coalgebra structures of $\alpha$ and $\beta$, 
  then
  \[
    \beta \circ f
    =   (T \beta_{K} \circ \beta_{T})  \circ f
    =  T\beta_{K} \circ T f  \circ \alpha_{T}
    =  T(K f \circ \alpha_{K})  \circ \alpha_{T}
    =  H f \circ \alpha.
    \qedhere
  \]
\end{proof}

We come back to the context of the distributive law of $\LowerFunc$
over $\UpperFunc$ in $\PxPos$.
Note that by the last part of Lemma \ref{prop:HAlgIsTKAlg}, a double
powerlocale structure on a proximity poset, if it exists, is
completely determined by the underlying $\LowerFunc$-coalgebra and
$\UpperFunc$-coalgebra structures. Since $\LowerFunc$ and $\UpperFunc$
are (co)$\KZ$-comonads, these coalgebra structures, if they exist, are
unique.  Thus, each proximity poset admits at most one double
powerlocale coalgebra structure.
\begin{proposition}
  \label{prop:LocalizedDoubleCoalg}
  A strong proximity $\vee$-semilattice is localized if and only if it is a
  coalgebra of the double powerlocale over $\PxPos$.
\end{proposition}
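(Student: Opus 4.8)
The plan is to run the general theory of the distributive law $\sigma$ of $\LowerFunc$ over $\UpperFunc$ (Proposition~\ref{prop:LowerUpperDistributiveLaw}) to reduce a double powerlocale coalgebra structure on $S$ to its two components, and then to match the resulting compatibility condition with the characterisation of localizedness through the frame structure of $\RIdeals{S}$ (Proposition~\ref{prop:LocalizedFrame}, Lemma~\ref{lem:EquivalenceLocalization}).

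Fix a strong proximity $\vee$-semilattice $S = (S,0,\vee,\prox)$. By Theorem~\ref{thm:LowerCoAlgEquivPxSupLat} it is a $\LowerFunc$-coalgebra, and since $\LowerFunc$ is a co$\KZ$-comonad (Proposition~\ref{prop:LowerIscoKZ}) this structure is the unique one, namely $\beta \defeql \eta^{L}_{S}$ of \eqref{eq:Lowerunit}. Writing $T = \UpperFunc$, $K = \LowerFunc$ and $H = T \circ K$ for the double powerlocale, Lemma~\ref{prop:HAlgIsTKAlg} shows that any $H$-coalgebra on $S$ decomposes into a $\UpperFunc$-coalgebra $\alpha$ and a $\LowerFunc$-coalgebra which, by the uniqueness just noted, must be $\beta$; conversely, Lemma~\ref{prop:CharHalg} shows that a $\UpperFunc$-coalgebra $\alpha$ together with $\beta$ assembles into an $H$-coalgebra $\gamma = \Upper{\beta}\circ\alpha$ exactly when
\[
  \sigma_{S} \circ \Upper{\beta} \circ \alpha = \Lower{\alpha} \circ \beta .
\]
As $\UpperFunc$ is a $\KZ$-comonad, $\alpha$ is itself unique when it exists. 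Hence $S$ carries a double powerlocale coalgebra structure if and only if (i) $S$ carries a $\UpperFunc$-coalgebra structure $\alpha$ and (ii) the displayed square commutes.

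By Proposition~\ref{prop:AlgUpper}, condition (i) holds precisely when $\RIdeals{S}$ has finite meets, in which case $\alpha$ is given by \eqref{eq:UpperCoalgStruct}; and a localized $S$ enjoys this by Proposition~\ref{prop:LocalizedFrame}. So the real work is the equivalence of (ii) with localizedness, assuming throughout that $\RIdeals{S}$ has finite meets. To establish it I would read each datum $\mathcal{B} \in \Fin{\Fin{S}}$ as the element $\bigvee_{B \in \mathcal{B}} \medwedge_{b \in B} \downset_{\prox} b$ of $\RIdeals{S}$, and unwind both composites as relations $S \to \Fin{\Fin{S}}$ using the explicit forms of $\alpha$, $\beta$, and $\sigma_{S}$ (the latter involving $\prox_{L}$, $\prox_{U}$ and the operation $(-)^{*}$). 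The right-hand relation $\Lower{\alpha}\circ\beta$ then expresses that $a$ lies in (the rounded ideal of) a join-of-meets, while the left-hand relation $\sigma_{S}\circ\Upper{\beta}\circ\alpha$ routes the same $\mathcal{B}$ through the distributive law, and hence through the transversals computed by $(-)^{*}$, and so expresses that $a$ lies in the corresponding meet-of-joins; the bookkeeping here is exactly that of the proof of Proposition~\ref{prop:LowerUpperDistributiveLaw} and relies on Lemma~\ref{lem:BasicFFin}, Lemma~\ref{lem:UpperLower}, and Lemma~\ref{lem:StarSingleton}. Since a join-of-meets is always below the associated meet-of-joins, the inclusion $\Lower{\alpha}\circ\beta \subseteq \sigma_{S}\circ\Upper{\beta}\circ\alpha$ holds as soon as $\alpha$ exists, and the square reduces to the converse inclusion, which is precisely the distributive inequality \eqref{eq:Distributive}.

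This identifies (ii) with distributivity of finite meets over finite joins in $\RIdeals{S}$, equivalently with localizedness, yielding both implications. For the forward direction I would moreover use the general form \eqref{eq:LocalizedGeneral} of localization to supply the approximants produced by the unwinding, and for the backward direction I would specialise $\mathcal{B}$ to a datum encoding $(b \wedge c) \vee (b \wedge d)$, such as $\mathcal{B} = \{\{b,c\},\{b,d\}\}$, to recover the defining implication \eqref{eq:Localized} verbatim. I expect the main obstacle to be this unwinding: translating the commuting square into the lattice inequality forces one to keep precise track of the starred finite-subset operation and of the interleaved lower and upper extensions, and to verify that the reduction preserves \emph{both} inclusions, so that (ii) is genuinely equivalent to — not merely a consequence of — localizedness.
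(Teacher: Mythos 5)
Your skeleton is the same as the paper's: reduce a double‑powerlocale coalgebra structure on $S$ to the pair of (unique, by the (co)$\KZ$ property) $\UpperFunc$‑ and $\LowerFunc$‑coalgebra structures via Lemmas~\ref{prop:HAlgIsTKAlg} and~\ref{prop:CharHalg}, observe that one inclusion of the resulting square holds for every strong proximity $\vee$‑semilattice, and identify the other inclusion with localizedness, recovering \eqref{eq:Localized} in the converse by specialising to a two‑element datum. (Note that condition (i) is automatic here: $\RIdeals{S}$ is a continuous lattice, hence has finite meets, so every strong proximity $\vee$‑semilattice is already a $\UpperFunc$‑coalgebra; localizedness lives entirely in (ii).) Where you diverge is the finish: you propose to interpret both composites as elements of $\RIdeals{S}$ and reduce (ii) to the distributivity \eqref{eq:Distributive} via Proposition~\ref{prop:LocalizedFrame}, whereas the paper never leaves the finite combinatorial data — it proves the nontrivial inclusion directly from the $n$‑ary form \eqref{eq:LocalizedGeneralFinite} of localization and extracts \eqref{eq:Localized} directly from the specialisation. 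Your route is viable but does not actually save the work you defer: translating the relational square into a lattice identity in $\RIdeals{S}$ requires tracking the $\prox$‑interpolations (the composites assert rounded membership, not mere membership), and that bookkeeping, together with the $(-)^{*}$/transversal manipulations of Lemmas~\ref{lem:BasicFFin}--\ref{lem:StarSingleton}, is precisely the bulk of the paper's argument; moreover the square encodes the $n$‑ary distributive law, so an induction back to the binary \eqref{eq:Distributive} is still needed. One concrete adjustment for the converse: it is cleaner to specialise on the $\Upper{\Lower{S}}$ side to $\mathcal{U}=\left\{\left\{b\right\},\left\{c,d\right\}\right\}$ (as the paper does), since the hypothesis $a \mathrel{(\Upper{\alpha}\circ\beta)}\,\mathcal{U}$ is then immediate from $a \prox a' \prox b \leq c \vee d$; with your $\mathcal{B}=\left\{\left\{b,c\right\},\left\{b,d\right\}\right\}$ on the $\Lower{\Upper{S}}$ side, verifying that $a$ is related to $\mathcal{B}$ under the meet‑of‑joins composite requires an extra interpolation argument because $\prox$ is not reflexive.
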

\begin{proof}
  Let $(S,0,\vee,\prox)$ be a strong proximity $\vee$-semilattice.
  Before getting down to the proof, note that
  by Theorem \ref{thm:LowerCoAlgEquivPxSupLat} and 
  Proposition~\ref{prop:AlgUpper}, $S$ is both a $\LowerFunc$-coalgebra and 
  a $\UpperFunc$-coalgebra, and their coalgebra structures 
   $\alpha \colon S \to \Lower{S}$ and $\beta \colon S \to
  \Upper{S}$ are given by \eqref{eq:Lowerunit}
  and \eqref{eq:UpperCoalgStruct}, respectively.
  Then, by Lemma~\ref{prop:LowerUpperDistributiveLaw} and 
  Lemma~\ref{prop:CharHalg}, $S$ is a coalgebra of
  the double powerlocale
  if and only if 
  \begin{equation}
    \label{eq:EquivDoublePowerCoalgebra}
    \sigma_{S} \circ \Upper{\alpha} \circ \beta = \Lower{\beta} \circ
    \alpha.
  \end{equation}

  \noindent($\Rightarrow$) Suppose that $S$ is localized.
  It suffices to show \eqref{eq:EquivDoublePowerCoalgebra}, or
  equivalently
  \begin{align}
    \label{eq:DistLaw1}
    \tau_{S} \circ \Lower{\beta} \circ \alpha
    &\leq 
    \Upper{\alpha} \circ \beta, \\
    \label{eq:DistLaw2}
    \sigma_{S} \circ \Upper{\alpha} \circ \beta
    &\leq
    \Lower{\beta} \circ \alpha.
  \end{align}
  First, to see that \eqref{eq:DistLaw1} holds, suppose
  $a \mathrel{(\tau_{S} \circ \Lower{\beta} \circ \alpha)}
  \mathcal{U}$. Then, there exist $A \in \Fin{S}$ and
  $\mathcal{V} \in \FFin{S}$ such that
  \begin{enumerate}
    \item \label{eq:DistLaw1_1}
      $a \prox \medvee A$, 
    \item \label{eq:DistLaw1_2}
      $A \mathrel{\beta_{L}} \mathcal{V}$,
    \item \label{eq:DistLaw1_3}
      $\mathcal{V} \left( \prox_{U} \right)_{L} \mathcal{U}^{*}$.
  \end{enumerate}
  Then, \ref{eq:DistLaw1_2} implies $\Sin{A}
  \mathrel{\left(\prox_{U}\right)_{L}} \mathcal{V}$, so by
  \ref{eq:DistLaw1_3}, we have
  $\Sin{A} \mathrel{\left(\prox_{U}\right)_{L}} \mathcal{U}^{*}$.
  Then, by Lemma~\ref{lem:UpperLower} and
  Lemma~\ref{lem:StarSingleton}~\eqref{lem:StarSingleton1}, we have
  $ \left\{ A  \right\} \mathrel{\left(\prox_{L} \right)_{U}} \mathcal{U}$.
  By putting $B = \left\{ \medvee A \right\}$, we have
  $a \mathrel{\beta} B$ 
  and $B \mathrel{\Upper{\alpha}} \mathcal{U}$.
  Thus, $a \mathrel{\left( \Upper{\alpha} \circ \beta  \right)} \mathcal{U}$.

  Next, to see that \eqref{eq:DistLaw2} holds, suppose
  $a \mathrel{(\sigma_{S} \circ \Upper{\alpha} \circ \beta)}
  \mathcal{V}$. Then, there exist $a' \in S$, $B \in \Fin{S}$, and
  $\mathcal{U} \in \FFin{S}$ such that
  \begin{enumerate}[resume]
    \item \label{eq:DistLaw1_4}
      $a \prox a'$ and $\left\{ a' \right\} \prox_{U} B$, 
    \item \label{eq:DistLaw1_5}
      $B \mathrel{\alpha_{U}} \mathcal{U}$,
    \item \label{eq:DistLaw1_6}
      $\mathcal{U} \left( \prox_{L} \right)_{U} \mathcal{V}^{*}$.
  \end{enumerate}
  By  \ref{eq:DistLaw1_4} and \ref{eq:DistLaw1_5}, we have $a' \prox
  \medvee C$ for all $C \in \mathcal{U}$. 
  Since $S$ is localized,
  there exists $A \in \Fin{S}$ such that $a \prox \medvee A$ and $A
  \prox_{L} C$ for all $C \in \mathcal{U}$ by \eqref{eq:LocalizedGeneralFinite},
  i.e., $\left\{ A \right\} \mathrel{(\prox_{L})_{U}} \mathcal{U}$. Then, by
  Lemma~\ref{lem:UpperLower}, we have
  $\left\{ A \right\}^{*} \mathrel{(\prox_{U})_{L}} \mathcal{U}^{*}$, so that
  $\Sin{A} \mathrel{(\prox_{U})_{L}} \mathcal{U}^{*}$ by
  Lemma~\ref{lem:StarSingleton}~\eqref{lem:StarSingleton1}.
  On the other hand, \ref{eq:DistLaw1_6} implies
  $\mathcal{U}^{*} \mathrel{\left( \prox_{U} \right)_{L}} \mathcal{V}$
  by Lemma~\ref{lem:BasicFFin}~\eqref{lem:BasicFFin3} and Lemma~\ref{lem:UpperLower}.
  Thus, $\Sin{A} \mathrel{\left( \prox_{U} \right)_{L}} \mathcal{V}$.
  Moreover, since $S$ is strong, there exists $A' \in
  \Fin{S}$ such that $a \prox \medvee A'$ and $A' \prox_{L} A$. Then,
  $a \mathrel{\alpha} A'$ and $A' \mathrel{\Lower{\beta}}
  \mathcal{V}$, and hence $a \mathrel{\left(\Lower{\beta} \circ \alpha
\right)} \mathcal{V}$.
  \smallskip

  \noindent($\Leftarrow$)
  Suppose that $\alpha$ and $\beta$ satisfy \eqref{eq:EquivDoublePowerCoalgebra}, and
  let $a \prox b \leq c \vee d$. Put $\mathcal{U} = \left\{ \left\{ b
  \right\}, \left\{ c,d \right\} \right\}$, and choose $a'$ such that 
  $a \prox a' \prox b$. Then 
  $a \mathrel{\beta} \left\{ a' \right\} \mathrel{\Upper{\alpha}}
  \mathcal{U}$,
  so by the opposite of \eqref{eq:DistLaw1}, we have
  $a \mathrel{(\tau_{S} \circ \Lower{\beta} \circ \alpha)}
  \mathcal{U}$. Thus, there exist $A \in \Fin{S}$ and
  $\mathcal{V} \in \FFin{S}$ such that 
  \begin{enumerate}[resume]
    \item\label{eq:DistrLaw_3_1_9}
      $a \prox \medvee A$,
    \item\label{eq:DistrLaw_3_1_10}
      $\Sin{A} \mathrel{(\prox_{U})_{L}} \mathcal{V}$,
    \item\label{eq:DistrLaw_3_1_11}
      $\mathcal{V} \mathrel{(\prox_{U})_{L}} \mathcal{U}^{*}$.
  \end{enumerate}
  From  \ref{eq:DistrLaw_3_1_10}, \ref{eq:DistrLaw_3_1_11}, and Lemma
  \ref{lem:StarSingleton}~\eqref{lem:StarSingleton1},
  we have $\left\{ A \right\}^{*} \mathrel{(\prox_{U})_{L}} 
  \mathcal{U}^{*}$. Then
  $\left\{ A \right\} \mathrel{(\prox_{L})_{U}} \mathcal{U}$ 
  by Lemma \ref{lem:UpperLower}, or equivalently
  $A \prox_{L} \left\{ b \right\}$ and $A \prox_{L} \left\{ c,d \right\}$.
  Split $A$ into $C$ and $D$ so that $A = C \cup
  D$, $C \subseteq b \downarrow_{\prox} c$
  and $D \subseteq b \downarrow_{\prox} d$. 
  By putting $a_{1} = \medvee C$ and $a_{2} = \medvee D$ and using
  \ref{eq:DistrLaw_3_1_9} , we have
  $a \prox a_{1} \vee a_{2}$, $a_{1} \in b \downarrow_{\prox} c$ and
  $a_{2} \in b \downarrow_{\prox} d$. Hence, $S$ is localized.
\end{proof}

Proposition \ref{prop:LocalizedDoubleCoalg} also implies that the
property of being localized is invariant under isomorphisms
of strong proximity $\vee$-semilattices.

We are now ready to prove the main result of this subsection.
\begin{theorem}
  \label{thm:LocalizedPxSupLatEquivDoubleCoalg}
  $\PxJLatLoc$ is equivalent to the category of coalgebras of the double
  powerlocale over $\PxPos$.
\end{theorem}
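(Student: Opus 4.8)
The plan is to construct a comparison functor $\Phi \colon \PxJLatLoc \to \coAlg{H}$, where $H = \UpperFunc \circ \LowerFunc$ denotes the double powerlocale comonad, and to prove that $\Phi$ is full, faithful, and essentially surjective. On objects, $\Phi$ sends a localized strong proximity $\vee$-semilattice $S$ to its double powerlocale coalgebra structure, which exists by Proposition~\ref{prop:LocalizedDoubleCoalg} and is unique by the (co)$\KZ$ property of $\LowerFunc$ and $\UpperFunc$ (cf.\ the discussion preceding Proposition~\ref{prop:LocalizedDoubleCoalg}); on morphisms, $\Phi$ acts as the identity on the underlying graphs of approximable relations. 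To see that $\Phi$ is well defined on morphisms, I would note that a proximity relation $r$, being join-preserving and Lawson approximable, is a homomorphism of the underlying $\LowerFunc$-coalgebras (by the correspondence behind Theorem~\ref{thm:LowerCoAlgEquivPxSupLat}, i.e.\ Lemma~\ref{lem:ProxSupLatLowerCoalg}) and of the underlying $\UpperFunc$-coalgebras (by Proposition~\ref{prop:CharUpperCoalgHom}), hence an $H$-coalgebra homomorphism by Lemma~\ref{prop:CharHalgHom}. Faithfulness is then immediate, since $\Phi$ does not alter graphs.

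For fullness I would run this argument in reverse. Given an $H$-coalgebra homomorphism $r \colon \Phi(S) \to \Phi(S')$, Lemma~\ref{prop:CharHalgHom} shows that $r$ is simultaneously a homomorphism of the underlying $\LowerFunc$-coalgebras and of the underlying $\UpperFunc$-coalgebras. Because these underlying structures are, by the last part of Lemma~\ref{prop:HAlgIsTKAlg} together with the uniqueness of coalgebra structures, exactly the canonical $\LowerFunc$- and $\UpperFunc$-coalgebra structures carried by the strong proximity $\vee$-semilattices $S$ and $S'$, Lemma~\ref{lem:ProxSupLatLowerCoalg} forces $r$ to be join-approximable and Proposition~\ref{prop:CharUpperCoalgHom} forces it to be Lawson approximable. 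Thus $r$ is a proximity relation, so $\Phi$ is full.

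Essential surjectivity is where the work lies. Starting from an arbitrary $H$-coalgebra on a proximity poset $P$, I would extract its underlying $\LowerFunc$-coalgebra via Lemma~\ref{prop:HAlgIsTKAlg} and apply Lemma~\ref{lem:LowerCoalgEssSurj} to obtain a strong proximity $\vee$-semilattice $S$ with an isomorphism $f \colon P \to S$ in $\PxPos$. Transporting the $H$-coalgebra along $f$ endows $S$ with an $H$-coalgebra structure for which $f$ is a coalgebra isomorphism. Now every strong proximity $\vee$-semilattice is automatically both a $\LowerFunc$-coalgebra (Theorem~\ref{thm:LowerCoAlgEquivPxSupLat}) and a $\UpperFunc$-coalgebra (Proposition~\ref{prop:AlgUpper}, since $\RIdeals{S}$ is a continuous lattice by Theorem~\ref{prop:SPxJLatEquivContLat} and hence has finite meets), and these structures are unique; therefore the transported $H$-coalgebra must coincide with the canonical double powerlocale structure on $S$. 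Proposition~\ref{prop:LocalizedDoubleCoalg} then yields that $S$ is localized, so $P \cong \Phi(S)$ in $\coAlg{H}$. Hence $\Phi$ is essentially surjective, and therefore an equivalence.

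The main obstacle is precisely the bookkeeping in essential surjectivity: one must verify that the coalgebra structure transported along the $\PxPos$-isomorphism is the canonical double powerlocale structure to which Proposition~\ref{prop:LocalizedDoubleCoalg} applies. This rests entirely on the uniqueness of the $H$-coalgebra structure — guaranteed by the (co)$\KZ$ character of $\LowerFunc$ and $\UpperFunc$ together with the decomposition in Lemma~\ref{prop:HAlgIsTKAlg} — after which the identification and the invocation of Proposition~\ref{prop:LocalizedDoubleCoalg} are routine. (Alternatively, one may dispense with the explicit transport by applying the lifting of isomorphisms, Corollary~\ref{cor:KZAlg}~\eqref{cor:KZAlg2}, separately to the $\LowerFunc$- and $\UpperFunc$-components and then combining them through Lemma~\ref{prop:CharHalgHom}.)
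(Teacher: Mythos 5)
Your proof is correct and follows essentially the same route as the paper: a fully faithful embedding via Lemma~\ref{lem:ProxSupLatLowerCoalg}, Proposition~\ref{prop:CharUpperCoalgHom}, and Lemma~\ref{prop:CharHalgHom}, and essential surjectivity via Lemma~\ref{lem:LowerCoalgEssSurj} and Proposition~\ref{prop:LocalizedDoubleCoalg}. Your main variant for essential surjectivity --- transporting the whole $H$-coalgebra along the $\PxPos$-isomorphism and then identifying it with the canonical structure by uniqueness --- is a harmless repackaging of the paper's argument, which instead transports the component $\LowerFunc$- and $\UpperFunc$-coalgebras via Corollary~\ref{cor:KZAlg}~\eqref{cor:KZAlg2} and reverifies the compatibility condition of Lemma~\ref{prop:CharHalg}; indeed your closing parenthetical is exactly the paper's proof.
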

\begin{proof}
  By Lemma \ref{lem:ProxSupLatLowerCoalg},
  Lemma \ref{prop:CharUpperCoalgHom},
  Lemma \ref{prop:CharHalgHom}, and
  Proposition \ref{prop:LocalizedDoubleCoalg},
  the category $\PxJLatLoc$ embeds into that of coalgebras of the double
  powerlocale. To see that the embedding is essentially surjective,
  let $S$ be a double powerlocale coalgebra. By Lemma~\ref{lem:LowerCoalgEssSurj},
  there is a strong proximity $\vee$-semilattice $S'$ which is
  isomorphic to $S$, so by Corollary \ref{cor:KZAlg}~\eqref{cor:KZAlg2},
  the isomorphism induces an isomorphism of the underlying
  $\UpperFunc$-coalgebras and $\LowerFunc$-coalgebras of $S$ and $S'$.
  Then, it is straightforward to show that the $\UpperFunc$-coalgebra
  and $\LowerFunc$-coalgebra structures on $S'$ satisfies
  the condition of Lemma~\ref{prop:CharHalg}. Thus, $S'$ is localized by
  Proposition \ref{prop:LocalizedDoubleCoalg}, which is isomorphic to
  $S$ as a double powerlocale coalgebra by
  Lemma~\ref{prop:CharHalgHom}.
\end{proof}

By Theorem \ref{thm:LocProxSuplattEquiLKLoc} and Theorem
\ref{thm:LocalizedPxSupLatEquivDoubleCoalg}, we have the following
characterisation of locally compact locales.
\begin{theorem}
  \label{thm:DoubleEquivLKLoc}
The category of coalgebras of the double powerlocale over $\PxPos$ is
equivalent to the category of locally compact locales.
\end{theorem}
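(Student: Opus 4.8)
The plan is to obtain the equivalence by composing the two equivalences established immediately above, both of which are stated in terms of the intermediate category $\PxJLatLoc$ of localized strong proximity $\vee$-semilattices and proximity relations. Concretely, Theorem~\ref{thm:LocalizedPxSupLatEquivDoubleCoalg} gives an equivalence between $\PxJLatLoc$ and the category of coalgebras of the double powerlocale over $\PxPos$, while Theorem~\ref{thm:LocProxSuplattEquiLKLoc} gives an equivalence between $\PxJLatLoc$ and the category of locally compact locales. Since equivalence of categories is a transitive relation, the composite of the first equivalence (read from the coalgebra side to $\PxJLatLoc$) with the second (read from $\PxJLatLoc$ to locally compact locales) is the desired equivalence.

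First I would invoke Theorem~\ref{thm:LocalizedPxSupLatEquivDoubleCoalg} to pass from a double powerlocale coalgebra to its underlying localized strong proximity $\vee$-semilattice, recalling that by Proposition~\ref{prop:LocalizedDoubleCoalg} a strong proximity $\vee$-semilattice is localized precisely when it carries a double powerlocale coalgebra structure, and that this structure is unique because $\LowerFunc$ and $\UpperFunc$ are (co)$\KZ$-comonads. Then I would apply Theorem~\ref{thm:LocProxSuplattEquiLKLoc}, which itself rests on Theorem~\ref{prop:SPxJLatEquivContLat}, Proposition~\ref{prop:LocalizedFrame}, and Proposition~\ref{prop:CharUpperCoalgHom}, to send the localized object to the corresponding locally compact locale and each proximity relation to the corresponding locale map. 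No new construction is required.

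The main --- and in fact only --- point to verify is purely bookkeeping: one must confirm that the two equivalences share the \emph{same} intermediate category $\PxJLatLoc$, with a matching identification of objects and of morphisms, so that the composite is well-defined as a single equivalence. Since both theorems are literally phrased in terms of the same category $\PxJLatLoc$, this is immediate, and there is no genuine obstacle; all the substantive work has already been carried out in the preceding results.
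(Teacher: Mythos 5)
Your proposal is correct and matches the paper's own argument exactly: the theorem is obtained by composing the equivalence of Theorem~\ref{thm:LocalizedPxSupLatEquivDoubleCoalg} with that of Theorem~\ref{thm:LocProxSuplattEquiLKLoc} through the common intermediate category $\PxJLatLoc$. No further verification beyond that composition is needed.
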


\section{Continuous finitary covers}\label{sec:PresentingJoinProximitySemilattices}
We introduce a notion of continuous finitary cover, which provides a
logical characterisation of proximity $\vee$-semilattice.  
A continuous finitary cover can be thought of as a presentation of a
proximity $\vee$-semilattice where the underlying $\vee$-semilattice
is presented by generators and relations. The structure provides us
with a flexible way of constructing proximity $\vee$-semilattices
by generators and relations, and directly working on the
presentation.  Moreover, its strong variant provides a predicative and
even finitary alternative to the notion of continuous lattice in
formal topology~\cite{negri1998continuous}.
The left column of Table \ref{tab:CategoriesContFCov} shows some of
the major structures introduced in this section, with the
corresponding structures in domain theory and formal topology
on the middle and the right columns.

The development of this section can be seen as a suplattice analogue
of those of entailment systems~\cite{VickersEntailmentSystem} and
continuous entailment relations~\cite{KawaiContEnt}.
Similar to~\cite[Section~5]{VickersEntailmentSystem}, the underlying idea of
this section is that $\ContLatS$ can be described as the Karoubi 
envelop of its subcategory of free suplattices (cf.\ Section~\ref{sec:SEntsys}).

\begin{table}[t]
  \renewcommand{\arraystretch}{1.2}
\begin{tabular}{ccc}
  %
  %
  Predicative characterisation
  & Domain theoretic dual & Formal topology\\
  \hline
  \rule[4pt]{0pt}{14pt}
  \parbox{12em}{continuous finitary covers \\
                 + approximable maps} 
  & 
  $\ContLatS$
  & 
  ---
  \\
  \rule[6pt]{0pt}{14pt}
  \parbox{12em}{strong cont.\ fin.\ covers \\
  + join-approximable maps} 
  & 
  $\ContLat$
  & 
  \parbox{12em}{continuous basic covers \\
  + basic cover maps}
  \\
  \rule[6pt]{0pt}{14pt}
  \parbox{12em}{localized str.\ cont.\ fin.\
    cov.\ \\ + proximity maps}
  & $\LKFrm$
  & \parbox{12em}{locally compact formal top.\ \\
  + formal topology maps}
\end{tabular}
\caption{Main structures in Section~\ref{sec:PresentingJoinProximitySemilattices}.}
\label{tab:CategoriesContFCov}
\end{table}

\subsection{Semi-entailment systems}
\label{sec:SEntsys}
We begin with an observation that every continuous lattice
is a Scott continuous retract of the free suplattice over that lattice.
Recall that the \emph{free $\vee$-semilattice} over a set $S$ 
can be represented by $\Fin{S}$ ordered by inclusion, where
joins are computed by unions.
The \emph{free suplattice} over $S$ is the
ideal completion of $\Fin{S}$, or equivalently the power set of
$S$ (see Johnstone~\cite[Section C1.1, Lemma
1.1.3]{ElephantII}).

\begin{proposition}
  \label{prop:ContLatRetractFreeSLat}
  Every continuous lattice is a Scott continuous retract of a free
  suplattice.
\end{proposition}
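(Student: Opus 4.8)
The plan is to factor the desired retraction through an algebraic lattice. By Proposition~\ref{prop:ScotRetract}~\eqref{prop:ScotRetract3}, a continuous lattice $L$ is a suplattice retract of an algebraic lattice, and such a lattice is by definition of the form $\Ideals{S}$ for some $\vee$-semilattice $(S,0,\vee)$. Write $e \colon L \to \Ideals{S}$ and $p \colon \Ideals{S} \to L$ for the accompanying suplattice homomorphisms, so that $p \circ e = \id_{L}$. Since a suplattice homomorphism preserves all joins, it preserves directed joins in particular, and hence $e$ and $p$ are Scott continuous.

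Next I would realise the algebraic lattice $\Ideals{S}$ as a Scott continuous retract of the free suplattice over the underlying set of $S$, which, as recalled above, is the power set $\Pow{S}$. Take $s \colon \Ideals{S} \to \Pow{S}$ to be the inclusion of ideals as subsets, and let $r \colon \Pow{S} \to \Ideals{S}$ send a subset $U$ to the ideal it generates, $r(U) = \left\{ a \in S \mid \exists F \in \Fin{U} \left( a \leq \medvee F \right) \right\}$. Because an ideal generates itself, $r \circ s = \id_{\Ideals{S}}$. Moreover $r$ is a suplattice homomorphism: the ideal generated by $\bigcup_{j} U_{j}$ equals the ideal generated by $\bigcup_{j} r(U_{j})$, which is precisely the join of the $r(U_{j})$ in $\Ideals{S}$; in particular $r$ is Scott continuous. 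Composing, I obtain Scott continuous maps $s \circ e \colon L \to \Pow{S}$ and $p \circ r \colon \Pow{S} \to L$ with $(p \circ r) \circ (s \circ e) = p \circ (r \circ s) \circ e = p \circ e = \id_{L}$, exhibiting $L$ as a Scott continuous retract of $\Pow{S}$.

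The only step needing genuine care is the Scott continuity of the inclusion $s$. Here the point is that a directed join of ideals is computed as their union and is therefore again an ideal, so $s$ preserves directed joins; by contrast an arbitrary join of ideals is the \emph{ideal generated by} their union and may be strictly larger, so $s$ is not a suplattice homomorphism. This asymmetry is exactly why the resulting retraction is merely Scott continuous, in accordance with the statement. Everything else---that $r \circ s$ is the identity, that $r$ respects joins, and that Scott continuous retracts compose---is routine.
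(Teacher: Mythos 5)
Your proof is correct, but it takes a genuinely different route from the paper's. The paper constructs the retraction in one step, directly on the free suplattice $\Ideals{\Fin{L}}$ over the underlying set of $L$ itself: it sets $f(a) = \left\{ A \in \Fin{L} \mid \medvee A \ll a \right\}$ and $g(I) = \bigvee_{A \in I} \medvee A$, so the continuity of $L$ (via the way-below relation) is used explicitly to verify $g \circ f = \id_{L}$. You instead factor the retraction in two stages: first $L \lhd \Ideals{S}$ as a suplattice retract of an algebraic lattice via Proposition~\ref{prop:ScotRetract}~\eqref{prop:ScotRetract3}, and then $\Ideals{S} \lhd \Pow{S}$ via the inclusion of ideals and the ideal-generation map. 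Your second stage isolates a clean, reusable fact --- every algebraic lattice is a Scott continuous (but not suplattice) retract of the free suplattice on its basis --- and you correctly identify the one delicate point, namely that the inclusion of ideals preserves directed joins but not arbitrary ones, which is exactly why the statement can only assert a Scott continuous retraction. The paper's version buys brevity and keeps the free suplattice attached to $L$ itself rather than to a chosen presentation $S$; your version buys modularity at the cost of invoking the earlier retract result and a choice of $\vee$-semilattice presenting the algebraic lattice. (One cosmetic remark: the paper's own proof works with $\Ideals{\Fin{L}}$ rather than the powerset form of the free suplattice; since the paper states these are equivalent, your use of $\Pow{S}$ is unproblematic here, though the ideal-completion form is the one the paper favours elsewhere.)
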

\begin{proof}
  Let $L$ be a continuous lattice. Define
  functions
    $f  \colon  L \to \Ideals{\Fin{L}}$ and $g \colon \Ideals{\Fin{L}}
    \to L$ by 
  \begin{align*}
    f(a) &\defeql  \left\{ A \in \Fin{L} \mid \medvee A \ll a
  \right\}, &
    g(I) &\defeql \bigvee_{A \in I} \medvee A.
  \end{align*}
  It is easy to see that $f$ and $g$ are Scott continuous
  and $g \circ f = \id_{L}$.
\end{proof}
Let $\FreeSL$ be the full subcategory of $\ContLatS$ consisting of
free suplattices. Since every idempotent splits in $\ContLatS$
(cf.\ Proposition~\ref{prop:SplitCont}~\eqref{prop:SplitCont2}), we
have the following.
\begin{theorem}
  \label{thm:KaroubiFreeSL}
   $\Karoubi{\FreeSL}$ is equivalent to $\ContLatS$.
\end{theorem}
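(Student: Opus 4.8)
The plan is to invoke the general criterion for the Karoubi envelope recorded just after the definition of $\Karoubi{\Cat{C}}$: if $\Cat{C}$ is a full subcategory of $\Cat{D}$ in which every idempotent of $\Cat{D}$ splits, and if every object of $\Cat{D}$ is a retract of an object of $\Cat{C}$, then $\Cat{D}$ is equivalent to $\Karoubi{\Cat{C}}$. I would apply this with $\Cat{C} = \FreeSL$ and $\Cat{D} = \ContLatS$, so that the whole proof reduces to checking the two hypotheses, both of which have already been prepared by the preceding results.

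First I would observe that $\FreeSL$ is by construction a full subcategory of $\ContLatS$, so no work is needed there. Next, to see that every idempotent of $\ContLatS$ splits, take an idempotent Scott continuous function $f \colon L \to L$ on a continuous lattice $L$. By Proposition~\ref{prop:SplitCont}~\eqref{prop:SplitCont2} the image $L_{f} = \left\{ f(a) \mid a \in L \right\}$ is again a continuous lattice, and the corestriction $r \colon L \to L_{f}$ together with the inclusion $s \colon L_{f} \to L$ are Scott continuous and satisfy $s \circ r = f$ and $r \circ s = \id_{L_{f}}$; this is exactly the image factorisation underlying the proof of that proposition, so $f$ splits in $\ContLatS$.

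Finally, the retract hypothesis is supplied verbatim by Proposition~\ref{prop:ContLatRetractFreeSLat}: the functions $f$ and $g$ defined there are Scott continuous with $g \circ f = \id_{L}$, exhibiting an arbitrary continuous lattice $L$ as a retract, in $\ContLatS$, of the free suplattice $\Ideals{\Fin{L}}$, which is an object of $\FreeSL$. Having verified all three conditions, the criterion yields the desired equivalence $\Karoubi{\FreeSL} \simeq \ContLatS$. I do not expect a genuine obstacle here, since all the substance sits in the earlier propositions; the only point demanding a little care is to confirm that both the splitting and the retract live in $\ContLatS$ specifically, i.e.\ that the comparison morphisms are Scott continuous rather than merely monotone. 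Both facts are immediate from the cited results, so the argument is essentially a bookkeeping step assembling Proposition~\ref{prop:SplitCont}~\eqref{prop:SplitCont2} and Proposition~\ref{prop:ContLatRetractFreeSLat} through the Karoubi criterion.
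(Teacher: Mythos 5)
Your proposal is correct and is essentially identical to the paper's argument: the paper likewise obtains the equivalence by combining the splitting of idempotents in $\ContLatS$ (Proposition~\ref{prop:SplitCont}~\eqref{prop:SplitCont2}) with Proposition~\ref{prop:ContLatRetractFreeSLat} via the general Karoubi criterion. Your additional care about the comparison morphisms being Scott continuous is sound but, as you note, already covered by the cited results.
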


The dual of $\FreeSL$ can be identified with a full subcategory
of $\AlgPxPos$ consisting of free $\vee$-semilattices.
Then, the morphisms of this subcategory can be characterised in terms of generators
of free semilattices as follows.
\begin{definition}
  \label{def:UpperCut}
  Let $S$ and $S'$ be sets.
  A relation $r \subseteq S \times \Fin{S'}$ is \emph{upper} if
  \[
    a \mathrel{r} B \imp a \mathrel{r} (B \cup B').
  \]
  Given two upper relations
  $r \subseteq S \times \Fin{S'}$
  and $s \subseteq S' \times \Fin{S''}$,
  their \emph{cut composition} $s \cutcomp r \subseteq S \times
  \Fin{S''}$ is defined by
  \begin{align}
    \label{eq:CutComposition}
    a \mathrel{(s \cutcomp r)} C 
    &\defeqiv
    \exists B \in \Fin{S'} \left( a \mathrel{r} B \amp B
    \ApproxExt{s} C \right),\\
    \shortintertext{where}
    \label{eq:ApproxExt}
    B \ApproxExt{s} C &\defeqiv \forall b \in B \left( b
    \mathrel{s} C \right).
  \end{align}
\end{definition}
The following is analogous to
\cite[Proposition~25]{VickersEntailmentSystem}.
\begin{proposition}
  \label{prop:BijectCorrespondenceFreeSL}
  Let $\Fin{S}$ and $\Fin{S'}$ be free $\vee$-semilattices
  over sets $S$ and~$S'$.
  Then, there exists a bijective
  correspondence between approximable relations from $\Fin{S}$
  to $\Fin{S'}$ and upper relations from $S$ to $\Fin{S'}$.
  Via this correspondence, the identities and compositions
  of\, $\AlgPxPos$ correspond to the membership relation $\in$
  and cut compositions.
\end{proposition}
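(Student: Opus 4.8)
The plan is to realise the bijection by restricting an approximable relation to, and extending an upper relation from, the generating singletons of the free $\vee$-semilattices. Given an approximable relation $r \subseteq \Fin{S} \times \Fin{S'}$, I would set $a \mathrel{\hat r} B \defeqiv \{a\} \mathrel{r} B$, and given an upper relation $\rho \subseteq S \times \Fin{S'}$, I would set $A \mathrel{\check\rho} B \defeqiv A \ApproxExt{\rho} B$, which by \eqref{eq:ApproxExt} means $\forall a \in A\,(a \mathrel{\rho} B)$. The engine of the whole argument is the observation that an approximable relation between free $\vee$-semilattices is recoverable from its values on singletons:
\[
  A \mathrel{r} B \iff \forall a \in A\,(\{a\} \mathrel{r} B).
\]
This holds because, by \ref{def:approximableI}, $r^{-}B$ is an ideal of the join-semilattice $(\Fin{S}, \subseteq)$; being a directed lower set it contains $\emptyset$ and is closed under binary unions (an upper bound of $A$ and $A'$ inside $r^{-}B$ dominates $A \cup A'$, which then lies in $r^{-}B$ by downward closure), hence under all finite unions, so $A = \bigcup_{a \in A}\{a\} \in r^{-}B$ whenever every $\{a\}$ does, while the forward direction is just downward closure.

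Next I would check that the two assignments are well defined and mutually inverse. That $\hat r$ is upper is immediate from \ref{def:approximableU}, since $\hat r a = r(\{a\})$ is upward closed. That $\check\rho$ is approximable splits into its two clauses: $\check\rho^{-}B$ is downward closed and closed under unions, hence an inhabited directed lower set, giving \ref{def:approximableI}; and $\check\rho A$ is upward closed precisely because $\rho$ is upper, giving \ref{def:approximableU}. The identity $\widehat{\check\rho} = \rho$ falls out by unwinding the singleton case, and $\check{\hat r} = r$ is exactly the displayed flattening equivalence applied to $r$.

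Finally I would verify compatibility with identities and composition. By Proposition~\ref{prop:BijApproxScottCont} the identity morphism on $\Fin{S}$ in $\AlgPxPos$ is the order $\subseteq$, and $a \mathrel{\widehat{\subseteq}} B \iff \{a\} \subseteq B \iff a \in B$, so the identity corresponds to $\in$. For composition, let $r \colon \Fin{S} \to \Fin{S'}$ and $s \colon \Fin{S'} \to \Fin{S''}$ be approximable, with $\rho = \hat r$ and $\sigma = \hat s$. Unwinding relational composition gives $a \mathrel{\widehat{s \circ r}} C \iff \exists B\,(\{a\} \mathrel{r} B \amp B \mathrel{s} C)$; rewriting $\{a\} \mathrel{r} B$ as $a \mathrel{\rho} B$ and, via the flattening equivalence for $s$, rewriting $B \mathrel{s} C$ as $\forall b \in B\,(b \mathrel{\sigma} C)$, i.e.\ $B \ApproxExt{\sigma} C$, turns this into the defining condition \eqref{eq:CutComposition} of $\sigma \cutcomp \rho$. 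Hence $\widehat{s \circ r} = \hat s \cutcomp \hat r$.

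The only genuinely delicate point is the flattening equivalence, and inside it the passage from \emph{$\{a\} \mathrel{r} B$ for every $a \in A$} to \emph{$A \mathrel{r} B$}: this is exactly where the ideal structure of $r^{-}B$ (rather than mere downward closure) is used, and where one must argue predicatively that a finitely enumerable family of singletons drawn from an ideal of $\Fin{S}$ has its union in that ideal. Everything else is a routine unwinding of the definitions of upper relation, approximable relation, and cut composition.
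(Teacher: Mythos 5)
Your proposal is correct and follows essentially the same route as the paper: the same singleton-restriction map $r \mapsto \hat r$ and the same extension $\rho \mapsto \ApproxExt{\rho}$, with the bijectivity resting on the fact that finite joins in $\Fin{S}$ are unions. You merely spell out the details the paper leaves as "straightforward" — in particular the flattening equivalence via the ideal structure of $r^{-}B$ — and those details are right.
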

\begin{proof}
  The correspondence is as follows.
  An approximable relation $r \subseteq \Fin{S} \times \Fin{S'}$
  corresponds to an upper relation $\hat{r} \subseteq S \times
  \Fin{S'}$ defined by 
  \[
    a \mathrel{\hat{r}} B \defeqiv \left\{ a \right\}
    \mathrel{r} B.
  \]
  Conversely, an upper relation $r \subseteq S \times \Fin{S'}$ corresponds
  to an approximable relation ${\ApproxExt{r}} \subseteq \Fin{S} \times
  \Fin{S'}$ defined by \eqref{eq:ApproxExt}.
  Since finite joins in a free $\vee$-semilattice are given by unions, the
  above correspondence is well-defined and bijective.
  The second part is straightforward to check.
\end{proof}

Analogous to~\cite[Definition 27]{VickersEntailmentSystem}, let
$\SEnt$ denote the category in which objects are sets and morphisms
between sets $S$ and $S'$ are upper relations from $S$ to $\Fin{S'}$:
the identity on a set $S$ is the membership relation $\in$; the
composition of two upper relations is the cut composition.

By Proposition \ref{prop:BijectCorrespondenceFreeSL} and
Proposition~\ref{prop:AlgPxPos}, we have the following.
\begin{proposition}
  $\SEnt$ is dually equivalent to $\FreeSL$.
\end{proposition}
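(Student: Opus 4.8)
The plan is to realise the claimed dual equivalence as a composite of two identifications that are already essentially packaged in the two cited propositions, inserting the contravariance exactly once. Write $\mathbb{F}$ for the full subcategory of $\AlgPxPos$ whose objects are the free $\vee$-semilattices $\Fin{S}$ (ordered by inclusion, with unions as joins). I would first read Proposition~\ref{prop:BijectCorrespondenceFreeSL} as describing a functor $\SEnt \to \mathbb{F}$: on objects it is $S \mapsto \Fin{S}$, and on morphisms it sends an upper relation $r \subseteq S \times \Fin{S'}$ to the approximable relation ${\ApproxExt{r}} \subseteq \Fin{S} \times \Fin{S'}$. The proposition supplies exactly what is needed for this to be an equivalence: it is bijective on each hom-set, it carries the identity $\in$ to the order $\subseteq$ (the identity of $\AlgPxPos$), and it carries cut composition to relational composition; moreover it is surjective on the objects of $\mathbb{F}$, since every free $\vee$-semilattice has the form $\Fin{S}$. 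Hence $\SEnt$ is equivalent (in fact isomorphic) to $\mathbb{F}$.

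Next I would restrict the dual equivalence $\AlgPxPos \simeq \AlgDom^{\op}$ of Proposition~\ref{prop:AlgPxPos} to $\mathbb{F}$. The underlying functor sends a poset to its ideal completion, so it sends $\Fin{S}$ to $\Ideals{\Fin{S}}$, which is precisely the free suplattice over $S$ (cf.\ the identification of free suplattices with power sets stated before Proposition~\ref{prop:ContLatRetractFreeSLat}). Thus the image of $\mathbb{F}$ under this functor is exactly $\FreeSL$, and the restriction yields a dual equivalence $\mathbb{F} \simeq \FreeSL^{\op}$. Composing with the equivalence $\SEnt \simeq \mathbb{F}$ of the previous step gives $\SEnt \simeq \FreeSL^{\op}$, which is the asserted dual equivalence.

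The one step that deserves more than bookkeeping is justifying that Proposition~\ref{prop:AlgPxPos} may legitimately be restricted to $\FreeSL$ on the domain-theoretic side, i.e.\ that $\FreeSL$ is a full subcategory of $\AlgDom$. This holds because a free suplattice $\Ideals{\Fin{S}}$ is an algebraic lattice and hence an algebraic domain, and because $\FreeSL$, $\ContLatS$, and $\AlgDom$ are all full subcategories of $\Cont$ sharing the same Scott continuous morphisms, so no morphisms are lost or added under the restriction. Everything else is a transport of structure along the two propositions, and I expect no further obstacle beyond confirming that these fullness and object-image claims match up so that the two equivalences compose into a single dual equivalence.
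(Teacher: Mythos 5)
Your proposal is correct and follows essentially the same route as the paper, which likewise derives the statement by combining Proposition~\ref{prop:BijectCorrespondenceFreeSL} (identifying $\SEnt$ with the full subcategory of $\AlgPxPos$ on free $\vee$-semilattices) with the dual equivalence of Proposition~\ref{prop:AlgPxPos} restricted to that subcategory. Your extra care in checking that $\FreeSL$ sits as a full subcategory of $\AlgDom$ (all being full subcategories of $\Cont$) is exactly the implicit justification the paper relies on.
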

Thus, by Theorem~\ref{thm:KaroubiFreeSL}, we have another characterisation of
$\ContLatS$.
\begin{theorem}
  \label{thm:EquivSEntsysContLatS}
  $\Karoubi{\SEnt}$ is dually equivalent to $\ContLatS$.
\end{theorem}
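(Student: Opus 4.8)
The plan is to reduce the statement to the two results immediately preceding it: the dual equivalence between $\SEnt$ and $\FreeSL$ established in the proposition just above, and the equivalence $\Karoubi{\FreeSL} \simeq \ContLatS$ of Theorem~\ref{thm:KaroubiFreeSL}. The only thing that needs to be supplied is that passing to the Karoubi envelope is compatible with both equivalences and with the formation of opposite categories. This is exactly the pattern already used to pass from Proposition~\ref{prop:AlgPxPos} to Theorem~\ref{prop:SplitAlgPxPos}.

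First I would record that the Karoubi envelope is functorial on equivalences: any equivalence $F \colon \Cat{C} \to \Cat{D}$ induces an equivalence $\Karoubi{F} \colon \Karoubi{\Cat{C}} \to \Karoubi{\Cat{D}}$, acting by $e \mapsto Fe$ on objects (idempotents) and $h \mapsto Fh$ on morphisms; a quasi-inverse of $F$ together with its unit and counit lift componentwise, so full faithfulness and essential surjectivity are preserved. Next I would observe that forming opposites commutes with the Karoubi envelope, i.e.\ $\Karoubi{\Opposite{\Cat{C}}}$ is isomorphic to $\Opposite{(\Karoubi{\Cat{C}})}$: an idempotent in $\Opposite{\Cat{C}}$ is literally an idempotent in $\Cat{C}$, and the defining condition $g \circ h = h = h \circ f$ on morphisms of the Karoubi envelope is self-dual, so the two constructions agree on objects and reverse precisely the same morphisms.

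With these two facts in hand the argument is purely formal. The dual equivalence of the preceding proposition is an equivalence $\SEnt \simeq \Opposite{\FreeSL}$; applying $\Karoubi{(-)}$ gives $\Karoubi{\SEnt} \simeq \Karoubi{\Opposite{\FreeSL}} \cong \Opposite{(\Karoubi{\FreeSL})}$, and then Theorem~\ref{thm:KaroubiFreeSL} yields $\Opposite{(\Karoubi{\FreeSL})} \simeq \Opposite{\ContLatS}$. Composing these, $\Karoubi{\SEnt} \simeq \Opposite{\ContLatS}$, which is precisely the assertion that $\Karoubi{\SEnt}$ is dually equivalent to $\ContLatS$.

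I expect no substantial obstacle here, since the content lives entirely in the two cited results; the only care needed is bookkeeping of the duality, so that the single $\op$ is threaded through correctly (each of the two inputs contributes one contravariance, and they combine so that $\Karoubi{\SEnt}$ ends up \emph{dually}, rather than covariantly, equivalent to $\ContLatS$). The mildly non-routine ingredient, were one to spell it out, is the isomorphism $\Karoubi{\Opposite{\Cat{C}}} \cong \Opposite{(\Karoubi{\Cat{C}})}$, but this is immediate from the self-duality of the definition of the Karoubi envelope.
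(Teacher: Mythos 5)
Your proposal is correct and follows essentially the same route as the paper, which likewise derives the theorem immediately from the dual equivalence $\SEnt \simeq \Opposite{\FreeSL}$ and Theorem~\ref{thm:KaroubiFreeSL} (the paper leaves the formal bookkeeping implicit, exactly as in the passage from Proposition~\ref{prop:AlgPxPos} to Theorem~\ref{prop:SplitAlgPxPos}). Your explicit justification of the functoriality of the Karoubi envelope on equivalences and its commutation with taking opposites is a correct elaboration of what the paper takes for granted.
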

The objects and morphisms of $\Karoubi{\SEnt}$ are 
called \emph{semi-entailment systems}
and \emph{approximable maps}, respectively.
For the record, we put down their explicit descriptions.
\begin{definition}
  A \emph{semi-entailment system} is a structure
  $(S, \ll)$, where ${\ll} \subseteq S \times \Fin{S}$ is an upper
  relation such that $\ll \cutcomp \ll$. 

  Let $(S, \ll)$ and $(S', \ll')$ be semi-entailment systems. An 
  \emph{approximable map} from $(S, \ll)$ to $(S', \ll')$
  is a relation $r \subseteq S \times \Fin{S'}$ such that 
  ${\ll \cutcomp r} = r = {\ll' \cutcomp r}$.
\end{definition}
Note that every approximable map is an upper relation.

In $\Karoubi{\SEnt}$, the identity on a semi-entailment system $(S,
\ll)$ is $\ll$; the composition of two approximable maps is the cut
composition.

\subsection{Continuous finitary covers}
We introduce another description of $\SEntsys$
which is closely related to $\PxJLat$.
First, we characterise a full subcategory of $\SEntsys$ which
corresponds to $\AlgLatS$.
As in Vickers~\cite[Section 6.1]{VickersEntailmentSystem}, 
we say that a semi-entailment system ${(S, \ll)}$ is \emph{reflexive} if
  \[
    a \in A \imp a \ll A.
  \]
Reflexive semi-entailment systems are
also known as finitary covers, or single
conclusion entailment relations~\cite{SchusterRinaldiWesselElimDisj}.
\begin{definition}
  A \emph{finitary cover} is a pair $(S, \fcov)$ where $S$ is a set 
  and $\fcov$
  is a relation between $S$ and $\Fin{S}$ such that
  \begin{align*}
    &\frac{a \in A}{a \fcov A} 
    &
    &\frac{a \fcov A}{a \fcov A \cup B} 
    &
    &\frac{a \fcov {A} \cup \left\{ b \right\} 
  \quad b \fcov A}{a \fcov A}.
  \end{align*}
We write $\FCov$ for the full subcategory of $\SEntsys$ consisting of
finitary covers. 
\end{definition}

Each finitary cover $(S, \fcov)$ determines a
$\vee$-semilattice $\CovtoLat{S,\fcov}$, which is the poset reflection of
$(\Fin{S}, \ApproxExt{\fcov})$ with finite joins computed by unions.%
\footnote{Specifically, two elements $A,B \in \Fin{S}$ are in the same
equivalence class if $A \ApproxExt{\fcov} B$ and $B \ApproxExt{\fcov}
A$.}
Also, each approximable map $r \colon  {(S,\fcov)} \to (S',\fcov')$
determines an approximable relation $\ApproxExt{r} \colon
\CovtoLat{S,\fcov} \to \CovtoLat{S',\fcov'}$ between the corresponding
$\vee$-semilattices. Indeed, since 
${\ApproxExt{\fcov'} \circ \ApproxExt{r}}
= {\ApproxExt{\fcov' \cutcomp r}} 
= {\ApproxExt{r}}$
and 
${\ApproxExt{r} \circ \ApproxExt{\fcov}}
= {\ApproxExt{r}}$, the relation $\ApproxExt{r}$ is well-defined and
satisfies \ref{def:approximableU}; since joins of $\CovtoLat{S,\fcov}$  are computed by unions, $\ApproxExt{r}$ also satisfies \ref{def:approximableI}.
It is easy to see that these assignments determine a full and faithful functor $F
\colon \FCov \to \AlgPxSLatS$.
\begin{proposition}
  \label{prop:EquivFCovPxJLat}
  The functor $F \colon \FCov \to \AlgPxSLatS$ is essentially
  surjective. Hence, $F$ determines an equivalence of $\FCov$ and
  $\AlgPxSLatS$.
\end{proposition}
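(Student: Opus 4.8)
The plan is to establish essential surjectivity by putting a canonical finitary cover on the underlying set of a given $\vee$-semilattice and checking that the cover reproduces that $\vee$-semilattice. So let $(L, 0, \vee)$ be an object of $\AlgPxSLatS$. I would take $S = L$ as the carrier and define ${\fcov} \subseteq L \times \Fin{L}$ by
\[
  a \fcov A \defeqiv a \leq \medvee A.
\]
First I would verify that $(L, \fcov)$ is a finitary cover. Reflexivity $a \in A \imp a \fcov A$ and the weakening rule $a \fcov A \imp a \fcov A \cup B$ are immediate from $\medvee A \leq \medvee (A \cup B)$; for the cut rule, $a \leq \medvee (A \cup \{b\}) = \medvee A \vee b$ and $b \leq \medvee A$ together give $a \leq \medvee A$.

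Next I would identify the associated $\vee$-semilattice. Unfolding \eqref{eq:ApproxExt} gives $A \ApproxExt{\fcov} B \iff \medvee A \leq \medvee B$, so the preorder $\ApproxExt{\fcov}$ on $\Fin{L}$ identifies $A$ and $B$ exactly when $\medvee A = \medvee B$. Hence $A \mapsto \medvee A$ factors through the poset reflection and induces a bijection $\CovtoLat{L, \fcov} \to L$ (surjective via singletons $\{a\} \mapsto a$) that both preserves and reflects the order; as it sends unions to binary joins and $\emptyset$ to $0$, it is a $\vee$-semilattice isomorphism. Thus $F(L, \fcov) = \CovtoLat{L, \fcov}$ is isomorphic to $L$ as a $\vee$-semilattice.

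The one point needing care is that morphisms of $\AlgPxSLatS$ are approximable relations, not functions, so I must promote the above to a categorical isomorphism. To this end I would observe that any poset isomorphism $\phi \colon P \to Q$ yields an approximable relation $r_\phi$ given by $a \mathrel{r_\phi} b \defeqiv \phi(a) \leq b$: here $r_\phi^{-} b = \downset_{\leq} \phi^{-1}(b)$ is an ideal and $r_\phi a = \upset_{\leq} \phi(a)$ is upward closed. A direct computation of relational composites gives $r_{\phi^{-1}} \circ r_\phi = {\leq_P}$ and $r_\phi \circ r_{\phi^{-1}} = {\leq_Q}$, which are the identities of $\AlgPxPos$, so $r_\phi$ is an isomorphism with inverse $r_{\phi^{-1}}$. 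Applying this to the isomorphism $\CovtoLat{L, \fcov} \to L$ and using that $\AlgPxSLatS$ is a full subcategory of $\AlgPxPos$ yields $F(L, \fcov) \cong L$ in $\AlgPxSLatS$. Since $F$ is already full and faithful, essential surjectivity makes it an equivalence. No step presents a genuine difficulty; the only mildly delicate passage is this last translation of a function isomorphism into a relational one, which becomes transparent once $r_\phi$ is written down.
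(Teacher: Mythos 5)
Your proof is correct and follows essentially the same route as the paper: the paper also equips $L$ with the cover $a \fcov_{\vee} A \defeqiv a \leq \medvee A$ and exhibits the isomorphism via the pair of relations $a \mathrel{r} A \defeqiv a \leq \medvee A$ and $A \mathrel{s} a \defeqiv \medvee A \leq a$, which are exactly the relations $r_{\phi^{-1}}$ and $r_{\phi}$ your general recipe produces for $\phi = \medvee$. Your extra step of packaging the passage from a poset isomorphism to a relational one is a harmless elaboration of what the paper leaves as ``straightforward to check.''
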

\begin{proof}
  For each $\vee$-semilattice
  $(S,0,\vee)$, define a finitary cover $(S, \fcov_{\vee})$ by
  \begin{equation}
    \label{eq:CovFromPxSL}
    a \fcov_{\vee} A \defeqiv a \leq \medvee A.
  \end{equation}
  Define relations $r \subseteq S \times \Fin{S}$
  and $s \subseteq \Fin{S} \times S$ by
  \begin{align*}
    a \mathrel{r} A &\defeqiv a \leq \medvee A,
    & 
    A \mathrel{s} a &\defeqiv \medvee A  \leq a.
  \end{align*}
  It is straightforward to show that $r$ and $s$ are approximable
  relations between $(S, 0, \vee)$ and $\CovtoLat{S, \fcov_{\vee}}$
  and that they are inverse to each other.
\end{proof}
\begin{remark}
  \label{rem:QuasiInverseOfF}
  From the explicit constructions of the finitary
  cover $(S, \fcov_{\vee})$ and isomorphisms $r$ and $s$
  in the above proof,
  we can define a quasi-inverse $G \colon
  \AlgPxSLatS \to \FCov$ of $F$ as follows:
  \begin{align*}
    G(S,0,\vee) &\defeql (S, \fcov_{\vee}), \\
    a \mathrel{G(r)} A &\defeqiv a \mathrel{r} \medvee A.
  \end{align*}
\end{remark}
Since $\PxJLat = \Karoubi{\AlgPxSLatS}$ by definition, we have the following.
\begin{corollary}
  \label{cor:EquivContFCovPxJLat}
   $\Karoubi{\FCov}$ is equivalent to $\PxJLat$. 
\end{corollary}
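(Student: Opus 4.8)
The plan is to obtain this by lifting the equivalence $F \colon \FCov \to \AlgPxSLatS$ of Proposition~\ref{prop:EquivFCovPxJLat} to the Karoubi envelopes. The Karoubi envelope is a $2$-functor on the $2$-category of categories, hence preserves equivalences; since $\PxJLat = \Karoubi{\AlgPxSLatS}$ by definition, it will suffice to check that the induced functor $\Karoubi{F} \colon \Karoubi{\FCov} \to \Karoubi{\AlgPxSLatS}$ is an equivalence. First I would spell out $\Karoubi{F}$: it sends an idempotent $e \colon A \to A$ in $\FCov$ to $F(e) \colon F(A) \to F(A)$, which is idempotent because $F(e) \circ F(e) = F(e \circ e) = F(e)$, and a morphism $h$ of idempotents to $F(h)$. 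The defining equations $g \circ h = h = h \circ f$ of morphisms in a Karoubi envelope are preserved by any functor, so $\Karoubi{F}$ is well defined and functorial.

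Next I would verify that $\Karoubi{F}$ inherits faithfulness and fullness from $F$. Faithfulness is immediate, since the morphisms of $\Karoubi{\FCov}$ form a subclass of the morphisms of $\FCov$. For fullness, a morphism $h \colon \Karoubi{F}(e_A) \to \Karoubi{F}(e_B)$ is a morphism $h \colon F(A) \to F(B)$ with $F(e_B) \circ h = h = h \circ F(e_A)$; writing $h = F(h')$ by fullness of $F$ and applying faithfulness to $F(e_B \circ h') = F(e_B) \circ h = h = F(h')$ (and symmetrically on the right) shows that $h'$ is already a morphism of idempotents $e_A \to e_B$ with $\Karoubi{F}(h') = h$.

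The main step, and the only one carrying real content, is essential surjectivity. Given an idempotent $e \colon D \to D$ in $\AlgPxSLatS$, essential surjectivity of $F$ supplies an object $C$ of $\FCov$ and an isomorphism $\phi \colon F(C) \to D$. Transporting $e$ along $\phi$ yields an idempotent $\phi^{-1} \circ e \circ \phi$ on $F(C)$, which pulls back through the full and faithful $F$ to an idempotent $e' \colon C \to C$ (idempotency of $e'$ again following from faithfulness). The two morphisms $e \circ \phi \colon F(C) \to D$ and $\phi^{-1} \circ e \colon D \to F(C)$ then satisfy the side conditions making them morphisms $\Karoubi{F}(e') \to e$ and $e \to \Karoubi{F}(e')$ in $\Karoubi{\AlgPxSLatS}$, and one computes $(e \circ \phi) \circ (\phi^{-1} \circ e) = e = \id_{e}$ and $(\phi^{-1} \circ e) \circ (e \circ \phi) = \phi^{-1} \circ e \circ \phi = \id_{\Karoubi{F}(e')}$, so they are mutually inverse. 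Hence $e$ is isomorphic to $\Karoubi{F}(e')$, and $\Karoubi{F}$ is essentially surjective.

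Combining the three points, $\Karoubi{F}$ is an equivalence, and therefore $\Karoubi{\FCov}$ is equivalent to $\Karoubi{\AlgPxSLatS} = \PxJLat$. I expect no genuine obstacle here, only the bookkeeping of the essential-surjectivity step; a fully explicit alternative is to apply $\Karoubi{-}$ to the quasi-inverse $G$ of Remark~\ref{rem:QuasiInverseOfF} together with the natural isomorphisms witnessing the equivalence $F$, which directly produces a quasi-inverse $\Karoubi{G}$ to $\Karoubi{F}$ and avoids any appeal to choice.
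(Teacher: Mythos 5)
Your proposal is correct and takes the same route as the paper, which simply observes that $\PxJLat = \Karoubi{\AlgPxSLatS}$ by definition and that the equivalence $F \colon \FCov \to \AlgPxSLatS$ of Proposition~\ref{prop:EquivFCovPxJLat} passes to Karoubi envelopes. You have merely written out in full the standard verification (fullness, faithfulness, and the transport-of-idempotents argument for essential surjectivity) that the paper leaves implicit.
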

The objects and morphisms of $\Karoubi{\FCov}$ 
can be explicitly described as follows.
\begin{definition}
  A \emph{continuous finitary cover} is a structure $(S, \fcov, \ll)$,
  where $(S, \fcov)$ is a finitary cover and ${\ll} \subseteq S \times
  \Fin{S}$ is a relation such that ${\ll \cutcomp \ll} = {\ll}$ and 
  ${\fcov \cutcomp \ll} = {\ll} = {\ll \cutcomp \fcov}$.

  Let $(S, \fcov, \ll)$ and $(S', \fcov', \ll')$ be continuous finitary
  covers. An \emph{approximable map} from $(S, \fcov, \ll)$ to $(S',
  \fcov', \ll')$ is a relation $r \subseteq S \times \Fin{S'}$ such that 
  ${\ll \cutcomp r} = r = {\ll' \cutcomp r}$.
\end{definition}

Henceforth, we write $\ContFCov$ for $\Karoubi{\FCov}$, which consists
of continuous finitary covers and approximable maps between them: the
identity on a continuous finitary cover $(S, \fcov, \ll)$ is $\ll$;
the composition of two approximable maps is the cut composition.

The equivalence of $\ContFCov$ and $\PxJLat$ in Corollary \ref{cor:EquivContFCovPxJLat}
is induced by the
functor $F \colon \FCov \to \AlgPxSLatS$ and its quasi-inverse $G \colon
\AlgPxSLatS \to \FCov$. Specifically, we have a pair of
functors $\overline{F} \colon \ContFCov \to \PxJLat$ 
and $\overline{G} \colon \PxJLat \to \ContFCov$ which act on
morphisms as $F$ and $G$, respectively, and on objects as follows:
\begin{align}
  \label{OverlineF}
  \overline{F}(S, \fcov, \ll) &\defeql (\CovtoLat{S, \fcov},
  \widetilde{\ll}),\\
  \notag
  \overline{G}(S, 0, \vee, \prox) &\defeql (S, \fcov_{\vee},
  \ll_{\vee}),\\
\shortintertext{where}
  \label{eq:llVee}
  a \mathrel{\ll_{\vee}} A 
  &\defeqiv a \prox \medvee A.
\end{align}
Thus, by Theorem~\ref{thm:SplitAlgPxSLatS}, we have the following.
\begin{theorem}
  \label{thm:ContFCovEquivContLatS}
  $\ContFCov$ is equivalent to $\ContLatS$.
\end{theorem}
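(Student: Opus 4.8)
The plan is to deduce the statement by composing two equivalences that are already available. By definition $\ContFCov = \Karoubi{\FCov}$, so the first step is to invoke Corollary~\ref{cor:EquivContFCovPxJLat}, which provides an equivalence $\ContFCov \simeq \PxJLat$. I would present this equivalence through the explicit pair of functors $\overline F \colon \ContFCov \to \PxJLat$ and $\overline G \colon \PxJLat \to \ContFCov$ recorded in and around \eqref{OverlineF}: on objects $\overline F$ sends a continuous finitary cover $(S,\fcov,\ll)$ to the proximity $\vee$-semilattice $(\CovtoLat{S,\fcov},\widetilde{\ll})$ and $\overline G$ sends $(S,0,\vee,\prox)$ to $(S,\fcov_{\vee},\ll_{\vee})$, while on morphisms they act as $F$ and $G$. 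The only thing to verify at this stage is that $\overline F$ and $\overline G$ are mutually quasi-inverse. This is inherited from Proposition~\ref{prop:EquivFCovPxJLat}, where $F \colon \FCov \to \AlgPxSLatS$ and $G$ are shown to form an equivalence: since the Karoubi envelope is a $2$-functor, it carries this equivalence of $\FCov$ and $\AlgPxSLatS$ to an equivalence of $\Karoubi{\FCov} = \ContFCov$ and $\Karoubi{\AlgPxSLatS} = \PxJLat$, and one checks that the resulting functors agree with the explicit $\overline F,\overline G$ above.

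The second step passes from $\PxJLat$ to $\ContLatS$. Since $\PxJLat = \Karoubi{\AlgPxSLatS}$, Theorem~\ref{thm:SplitAlgPxSLatS} supplies the equivalence between $\PxJLat$ and $\ContLatS$. Unwinding it through the representation of continuous domains, a proximity $\vee$-semilattice $(M,\prox)$ is sent to the continuous lattice $\RIdeals{M}$ of its rounded ideals, and an approximable relation $r$ is sent to the induced Scott continuous map on rounded ideals given by \eqref{eq:ApproxToScottPxPos}. Pre-composing with $\overline F$ then yields, on objects, the assignment $(S,\fcov,\ll) \mapsto \RIdeals{\CovtoLat{S,\fcov}}$, and on morphisms the associated Scott continuous maps.

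Chaining these two steps gives the claimed equivalence of $\ContFCov$ and $\ContLatS$. I do not expect a substantial obstacle, because both constituent equivalences are already proved in full and the theorem is essentially their composite; this matches the paper's one-line deduction ``by Theorem~\ref{thm:SplitAlgPxSLatS}''. The single point demanding care is the one isolated in the first paragraph --- confirming that the explicitly defined $\overline F,\overline G$ really do constitute the equivalence produced by applying the Karoubi envelope to $F$ and $G$ --- after which the result follows immediately.
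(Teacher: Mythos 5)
Your proposal is correct and follows essentially the same route as the paper: it composes the equivalence $\ContFCov \simeq \PxJLat$ of Corollary~\ref{cor:EquivContFCovPxJLat} (realised by the explicit functors $\overline{F}$ and $\overline{G}$ of \eqref{OverlineF}) with the (dual) equivalence of $\PxJLat = \Karoubi{\AlgPxSLatS}$ and $\ContLatS$ from Theorem~\ref{thm:SplitAlgPxSLatS}. Your extra verification that $\overline{F}$, $\overline{G}$ arise from applying the Karoubi envelope to $F$, $G$ is a harmless elaboration of what the paper leaves implicit.
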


On the other hand, 
each continuous finitary cover $(S, \fcov, \ll)$ determines  a
semi-entailment system $(S, \ll)$, and approximable maps
between two continuous finitary covers are precisely the approximable maps
between the corresponding semi-entailment systems. 
Conversely, each semi-entailment system $(S, \ll)$ can be regarded
as a continuous finitary cover $(S, \in, \ll)$. 
\begin{proposition}
  \label{prop:EquivContFCovSEntsys}
  $\ContFCov$ is equivalent to $\SEntsys$.
\end{proposition}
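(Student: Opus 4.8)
The plan is to exhibit the equivalence through the evident forgetful functor $U \colon \ContFCov \to \SEntsys$ that discards the finitary cover and keeps only the idempotent relation, sending a continuous finitary cover $(S, \fcov, \ll)$ to the semi-entailment system $(S, \ll)$ and acting as the identity on the underlying relations of morphisms. The observation that makes the argument almost formal is that an approximable map between continuous finitary covers $(S, \fcov, \ll)$ and $(S', \fcov', \ll')$ is, by definition, exactly a relation $r \subseteq S \times \Fin{S'}$ with ${\ll \cutcomp r} = r = {\ll' \cutcomp r}$, and this is verbatim the defining condition for an approximable map between the semi-entailment systems $(S, \ll)$ and $(S', \ll')$. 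The cover $\fcov$ enters neither the hom-sets, nor their cut composition, nor the identities (which are $\ll$ in both cases). Hence $U$ is a well-defined functor that is the identity on morphisms.

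First I would record that $U$ is full and faithful. For fixed objects, the hom-set $\ContFCov\bigl((S,\fcov,\ll),(S',\fcov',\ll')\bigr)$ and the hom-set $\SEntsys\bigl((S,\ll),(S',\ll')\bigr)$ are literally the same set of relations, carrying the same cut composition and the same identity $\ll$; so fullness and faithfulness are immediate.

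Then I would establish surjectivity on objects by constructing a section on objects: send a semi-entailment system $(S, \ll)$ to the continuous finitary cover $(S, \in, \ll)$, as suggested before the statement. The one point needing verification is that $(S, \in, \ll)$ genuinely \emph{is} a continuous finitary cover. That $(S, \in)$ is a finitary cover is the routine check that membership satisfies reflexivity, monotonicity, and the cut rule (the last because $a \in A \cup \{b\}$ together with $b \in A$ forces $a \in A$). Since $\in$ is precisely the identity of $S$ in $\SEnt$, cut composition with it is trivial, so ${\in \cutcomp \ll} = \ll = {\ll \cutcomp \in}$ holds automatically, while ${\ll \cutcomp \ll} = \ll$ holds because $(S,\ll)$ is a semi-entailment system. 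Thus $(S, \in, \ll)$ satisfies all axioms, and $U(S, \in, \ll) = (S, \ll)$ on the nose.

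Combining these, $U$ is fully faithful and surjective on objects, hence an equivalence; explicitly, the assignment $(S, \ll) \mapsto (S, \in, \ll)$ with the identity action on morphisms is a quasi-inverse $V$ with $U V = \id_{\SEntsys}$ and $V U \cong \id_{\ContFCov}$, the isomorphism at $(S,\fcov,\ll)$ being the identity $\ll$ regarded as an approximable map $(S,\fcov,\ll) \to (S,\in,\ll)$, which is its own inverse since $\ll$ is the identity of both objects. I do not anticipate a real obstacle: the only step demanding any care is confirming that membership is a finitary cover and that the compatibility conditions between $\in$ and $\ll$ collapse to the identity laws, which is exactly where the fact that $\in$ is the identity morphism of $\SEnt$ is used.
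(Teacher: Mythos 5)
Your proposal is correct and follows essentially the same route as the paper: both exhibit the forgetful functor $(S,\fcov,\ll)\mapsto(S,\ll)$ (the paper's $P$) together with the section $(S,\ll)\mapsto(S,\in,\ll)$ (the paper's $Q$), observe that $P\circ Q$ is the identity, and note that $QP(S,\fcov,\ll)\cong(S,\fcov,\ll)$ via $\ll$ itself. Your extra checks (that $\in$ is a finitary cover and that the hom-sets literally coincide) only make explicit what the paper leaves as remarks preceding the proposition.
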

\begin{proof}
  As noted above, we have a pair of functors
  $P \colon \ContFCov \to \SEntsys$
  and 
  $Q \colon  \SEntsys \to \ContFCov$ whose actions on objects 
  are given by 
\begin{align*}
  P(S, \fcov, \ll) &\defeql (S, \ll),\\
  Q(S, \ll) &\defeql (S, \in, \ll),
\end{align*}
  and which are identity on morphisms.
  Obviously, $P \circ Q$ is an identity. Moreover,
  we have an approximable map $\ll \colon QP(S, \fcov, \ll) \to (S,
  \fcov, \ll)$, which is clearly isomorphic and natural in $(S, \fcov, \ll)$.
\end{proof}
In fact, Proposition~\ref{prop:EquivContFCovSEntsys} follows from
Theorem \ref{thm:EquivSEntsysContLatS} and
Theorem~\ref{thm:ContFCovEquivContLatS}. The point of this
proposition, however, is that its proof does not rely on the
impredicative notion of $\ContLatS$.

\subsection{Strong continuous finitary covers}
We characterise a subcategory of $\ContFCov$ which corresponds
to $\SPxJLat$.

\begin{definition}
  \label{def:FinCov}
  A \emph{strong continuous finitary cover}
  is a finitary cover $(S,\fcov)$ 
  together with an
  idempotent relation $\fprox$ on $S$ such that
  \begin{equation}
    \label{eq:ContFCov}
    \exists b \in S \left( a \fprox b \fcov A\right)  \leftrightarrow
    \exists B \in \Fin{S} \left(  a \fcov B \fprox_{L}  A \right).
  \end{equation}
  We write $(S,\fcov, \fprox)$ or simply $S$ for a strong continuous finitary cover.
\end{definition}
The condition \eqref{eq:ContFCov} is equivalent to
\[
  {\fcov \cutcomp {\fprox_{\exists}}}
  =
  {{\fprox_{\exists}} \cutcomp \fcov},
\]
where ${\fprox_{\exists}} \subseteq S \times \Fin{S}$ is defined by
  $
  a \fprox_{\exists} B \defeqiv \exists b \in B \left( a \fprox b
  \right). 
  $
Put
\begin{equation}
  \label{eq:ProxContFCovOnesided}
  {\ll_{\fcov}}
  \defeql {\fcov \cutcomp \fprox_{\exists}}
  = {{\fprox_{\exists}} \cutcomp \fcov}.
\end{equation}
Then, we have
\[
  \begin{aligned}
    {\ll_{\fcov} \cutcomp \ll_{\fcov}}
    &= {\fcov \cutcomp \fprox_{\exists} \cutcomp \fcov \cutcomp
    \fprox_{\exists}}
    = {\fprox_{\exists} \cutcomp \fcov \cutcomp \fcov \cutcomp
    \fprox_{\exists}}\\
    &= {\fprox_{\exists} \cutcomp \fcov \cutcomp \fprox_{\exists}}
    = {\fprox_{\exists} \cutcomp \fprox_{\exists} \cutcomp \fcov}
    = {\fprox_{\exists} \cutcomp \fcov}
    = {\ll_{\fcov}}.
  \end{aligned}
\]
Similarly,
we have
${\ll_{\fcov} \cutcomp \fcov}  
= {\ll_{\fcov}}
= {\ll_{\fcov} \cutcomp \fcov}$.
Thus, $(S, \fcov, \ll_{\fcov})$ is a continuous finitary cover.
With the above identification, strong continuous finitary covers form a full subcategory
of $\ContFCov$. 
In particular, each strong continuous finitary cover $(S, \fcov,
\fprox)$ determines a proximity $\vee$-semilattice $(\CovtoLat{S,
\fcov}, \widetilde{\ll_{\fcov}})$ as in  \eqref{OverlineF}.
\begin{lemma}
  \label{lem:SContFCovToSPxSLat}
  $(\CovtoLat{S, \fcov}, \widetilde{\ll_{\fcov}})$ is a 
  strong proximity $\vee$-semilattice.
\end{lemma}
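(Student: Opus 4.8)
The object $(\CovtoLat{S,\fcov}, \widetilde{\ll_{\fcov}})$ is already known to be a proximity $\vee$-semilattice, being the image under $\overline{F}$ of the continuous finitary cover $(S,\fcov,\ll_{\fcov})$; hence the plan is only to verify the two extra conditions that make a proximity $\vee$-semilattice \emph{strong}, namely $x \prox 0 \imp x = 0$ and the splitting of $\prox$ over binary joins, for the relation $\widetilde{\ll_{\fcov}}$. Recall that the elements of $\CovtoLat{S,\fcov}$ are classes of finite subsets, the order is $\ApproxExt{\fcov}$, the bottom element is the class of $\emptyset$, and binary join is union; thus $A \mathrel{\widetilde{\ll_{\fcov}}} B$ unfolds to $a \ll_{\fcov} B$ for every $a \in A$. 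First I would record the explicit form of $\ll_{\fcov}$: unfolding \eqref{eq:ProxContFCovOnesided} through the cut composition $\fprox_{\exists} \cutcomp \fcov$ gives
\[
  a \ll_{\fcov} A \iff \exists D \in \Fin{S} \left( a \fcov D \amp D \fprox_{L} A \right),
\]
since $D \ApproxExt{\fprox_{\exists}} A$ is literally $D \fprox_{L} A$. I would also note the inclusion $\fprox_{\exists} \subseteq \ll_{\fcov}$: if $b \fprox_{\exists} A$, then taking $D = \left\{ b \right\}$ and using the reflexivity $b \fcov \left\{ b \right\}$ of the finitary cover yields $b \ll_{\fcov} A$.

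The first strong condition is immediate. If $A \mathrel{\widetilde{\ll_{\fcov}}} \emptyset$, then for each $a \in A$ the characterisation above supplies a finite $D$ with $a \fcov D$ and $D \fprox_{L} \emptyset$; but $D \fprox_{L} \emptyset$ forces $D = \emptyset$, whence $a \fcov \emptyset$. Therefore $A \ApproxExt{\fcov} \emptyset$, i.e.\ $A$ is the bottom element of $\CovtoLat{S,\fcov}$.

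The substance lies in the join condition. Given $A \mathrel{\widetilde{\ll_{\fcov}}} (B \cup C)$, for each $a \in A$ I obtain $D_{a} \in \Fin{S}$ with $a \fcov D_{a}$ and $D_{a} \fprox_{L} (B \cup C)$, and then split $D_{a} = D_{a}^{B} \cup D_{a}^{C}$ with $D_{a}^{B} \fprox_{L} B$ and $D_{a}^{C} \fprox_{L} C$, exactly as in the proof of Proposition~\ref{prop:PxJLatToSPxJLat}: one enumerates $D_{a}$ and, for each element, chooses a witness in $B \cup C$ together with the disjunct ($B$ or $C$) to which it belongs. Putting $B' = \bigcup_{a \in A} D_{a}^{B}$ and $C' = \bigcup_{a \in A} D_{a}^{C}$ --- finite unions over the finitely enumerable $A$, hence finitely enumerable --- I have $a \fcov D_{a}$ and $D_{a} \subseteq B' \cup C'$, so weakening gives $A \ApproxExt{\fcov} (B' \cup C')$, that is, $A \leq B' \vee C'$. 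Finally, each $b' \in B'$ lies in some $D_{a}^{B}$ and hence satisfies $b' \fprox_{\exists} B$, so $b' \ll_{\fcov} B$ by the inclusion above; thus $B' \mathrel{\widetilde{\ll_{\fcov}}} B$, and symmetrically $C' \mathrel{\widetilde{\ll_{\fcov}}} C$, which is precisely the required splitting. I expect the main obstacle to be the bookkeeping of this split-and-aggregate step --- keeping every set finitely enumerable and tracking, per element, the disjunction arising from membership in $B \cup C$ --- rather than any conceptual difficulty; well-definedness of all these statements on equivalence classes is inherited from the fact, already established, that $\widetilde{\ll_{\fcov}}$ is an approximable relation.
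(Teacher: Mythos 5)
Your proposal is correct and follows essentially the same route as the paper's proof: unfold $\ll_{\fcov}$ via the cut composition ${\fprox_{\exists} \cutcomp \fcov}$, handle \ref{def:approximable0} by noting $D \fprox_{L} \emptyset$ forces $D = \emptyset$, and handle \ref{def:approximableJ} by splitting each witness set over $B \cup C$ and using reflexivity of $\fcov$ to get ${\fprox_{\exists}} \subseteq {\ll_{\fcov}}$. The paper merely states the existence of the aggregated $B', C'$ directly, whereas you spell out the per-element split-and-union bookkeeping; the content is identical.
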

\begin{proof}
  We show that $\ApproxExt{\ll_{\fcov}}$ satisfies
  \ref{def:approximable0} and \ref{def:approximableJ}.  For
  \ref{def:approximable0}, suppose $A \mathrel{\widetilde{\ll_{\fcov}}}
  \emptyset$. Then $A \ApproxExt{\fcov} \emptyset$ by
  \eqref{eq:ContFCov}, and so $A = 0$ in $\CovtoLat{S,\fcov}$.  For
  \ref{def:approximableJ}, suppose $A \mathrel{\widetilde{\ll_{\fcov}}}
  B \cup C$.
  Then, there exist $B',C' \in \Fin{S}$ such that $B'
  \mathrel{\widetilde{\fprox_{\exists}}} B$, $C'
  \mathrel{\widetilde{\fprox_{\exists}}} C$, and
  $A \ApproxExt{\fcov} B' \cup C'$.  Since $\fcov$ is reflexive, we have  $B'
  \mathrel{\widetilde{\ll_{\fcov}}} B$
  and $C' \mathrel{\widetilde{\ll_{\fcov}}} C$.
\end{proof}
In the opposite direction, each strong proximity $\vee$-semilattice
$(S,0,\vee, \prox)$ determines a strong continuous finitary cover
$(S, \fcov_{\vee}, \prox)$ where $\fcov_{\vee}$ is given
by~\eqref{eq:CovFromPxSL}.
Moreover, we have ${\ll_{\fcov_{\vee}}} = {\ll_{\vee}}$ where
$\ll_{\vee}$ is given by  \eqref{eq:llVee}. Thus, we have the
following.
\begin{proposition}
  \label{prop:EquivContFCovPxJLatRestrictToSContFCov}
The equivalence of
$\ContFCov$ and $\PxJLat$ restricts to their respective full  subcategories of strong
continuous finitary covers and strong proximity $\vee$-semilattices.
\end{proposition}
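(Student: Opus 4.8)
The plan is to show that the mutually quasi-inverse functors $\overline{F}\colon \ContFCov \to \PxJLat$ and $\overline{G}\colon \PxJLat \to \ContFCov$ realising the equivalence of Corollary~\ref{cor:EquivContFCovPxJLat} carry the two distinguished classes of objects into one another, and then to invoke the general fact that an equivalence restricts to any pair of full subcategories preserved by both functors. Since the strong continuous finitary covers inside $\ContFCov$ and the strong proximity $\vee$-semilattices inside $\PxJLat$ both form \emph{full} subcategories, once this object-level preservation is established the morphisms and the two natural isomorphisms $\id \cong \overline{G}\,\overline{F}$ and $\overline{F}\,\overline{G} \cong \id$ restrict automatically: each component is a morphism of the ambient category between objects lying in the subcategory, hence a morphism of the subcategory by fullness.

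The preservation along $\overline{F}$ is exactly Lemma~\ref{lem:SContFCovToSPxSLat}, which shows that $\overline{F}$ sends a strong continuous finitary cover $(S, \fcov, \fprox)$ to $(\CovtoLat{S, \fcov}, \widetilde{\ll_{\fcov}})$ (cf.\ \eqref{OverlineF}), a strong proximity $\vee$-semilattice. For the opposite direction I would take a strong proximity $\vee$-semilattice $(S, 0, \vee, \prox)$ and verify that the triple $(S, \fcov_{\vee}, \prox)$, with $\fcov_{\vee}$ given by~\eqref{eq:CovFromPxSL}, is a strong continuous finitary cover. Idempotency of $\fprox = \prox$ is inherited from the proximity poset structure, so the only real content is the mixing law~\eqref{eq:ContFCov}. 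Unfolding both sides through $a \fcov_{\vee} A \iff a \leq \medvee A$, the left-hand side becomes $\exists b\,(a \prox b \leq \medvee A)$ and the right-hand side $\exists B\,(a \leq \medvee B \amp B \prox_{L} A)$; their equivalence follows from the elementary stability of $\prox$ (that $a \prox b \leq c$ implies $a \prox c$ and $a' \leq a \prox b$ implies $a' \prox b$), the fact that $\prox$ respects finite joins (noted in the proof of Proposition~\ref{prop:PxJLatToSPxJLat}), and the strong splitting property $a \prox \medvee A \imp \exists B\,(B \prox_{L} A \amp a \leq \medvee B)$ obtained by induction from the definition of a strong proximity $\vee$-semilattice.

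It then remains to confirm that $\overline{G}$ lands in the full subcategory of strong continuous finitary covers on the nose, i.e.\ that $\overline{G}(S, 0, \vee, \prox) = (S, \fcov_{\vee}, \ll_{\vee})$ coincides with the continuous finitary cover $(S, \fcov_{\vee}, \ll_{\fcov_{\vee}})$ determined by the strong continuous finitary cover $(S, \fcov_{\vee}, \prox)$. This reduces to the identity ${\ll_{\fcov_{\vee}}} = {\ll_{\vee}}$: writing $\ll_{\fcov_{\vee}}$ as $\fcov_{\vee} \cutcomp \fprox_{\exists}$ via~\eqref{eq:ProxContFCovOnesided} and unwinding the cut composition gives $a \ll_{\fcov_{\vee}} A \iff \exists b\,(a \prox b \leq \medvee A) \iff a \prox \medvee A$, which is precisely $a \ll_{\vee} A$ by~\eqref{eq:llVee}. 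With both object-preservation statements in hand, the general principle on restricting an equivalence to full subcategories closed under both functors yields the claimed restricted equivalence.

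The main obstacle is bookkeeping rather than conceptual. I would need to make sure the verification of~\eqref{eq:ContFCov} genuinely uses only the strong splitting property together with the small stability facts about $\prox$, with the finite-join splitting invoked in its correct (inductively strengthened) form, and its converse direction using that $\prox$ is compatible with binary joins. Equally important is to keep clear that the target is the \emph{full} subcategory of $\PxJLat$ on strong proximity $\vee$-semilattices --- whose morphisms are all approximable relations, not merely the join-approximable relations of $\SPxJLat$ --- so that fullness is genuinely available to transport both the morphisms and the two natural isomorphisms of the ambient equivalence.
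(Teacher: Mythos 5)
Your proposal is correct and follows essentially the same route as the paper: the forward direction is exactly Lemma~\ref{lem:SContFCovToSPxSLat}, and the backward direction is the paper's observation that each strong proximity $\vee$-semilattice yields the strong continuous finitary cover $(S, \fcov_{\vee}, \prox)$ with ${\ll_{\fcov_{\vee}}} = {\ll_{\vee}}$, after which fullness of the two subcategories gives the restricted equivalence. You merely spell out the verification of~\eqref{eq:ContFCov} and of ${\ll_{\fcov_{\vee}}} = {\ll_{\vee}}$ that the paper leaves implicit, and your checks are sound.
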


Next, we characterise approximable maps between strong
continuous finitary covers which correspond to join-approximable
relations.

\begin{definition}
  \label{def:ApproxRel}
  Let $(S, \fcov, \fprox)$ and $(S', \fcov', \fprox')$ be strong continuous
  finitary covers. An approximable map $r \colon (S,
  \fcov,\ll_{\fcov}) \to (S', \fcov', \ll_{\fcov'})$
  is  \emph{join-preserving} if
  \begin{equation}
      \label{def:approximableJoin}
      a \mathrel{r} B \imp \exists A \in \Fin{S}
      \left( a \fcov A \amp \forall a' \in A \, \exists b \in B \left(a' \mathrel{r} \left\{ b \right\}\right) \right).
  \end{equation}
  We call join-preserving approximable maps simply as
  \emph{join-approximable maps}. 
\end{definition}

\begin{lemma}\label{lem:ApproxToVmap}
  Let $(S, \fcov, \fprox)$ and $(S', \fcov', \fprox')$ be strong
  continuous finitary covers.
  An approximable map $r \colon (S, \fcov, \ll_{\fcov}) \to (S',
  \fcov', \ll_{\fcov'})$ is join-preserving if and only if
  the corresponding approximable relation $\ApproxExt{r} \colon
  (\CovtoLat{S, \fcov}, \widetilde{\ll_{\fcov}}) \to
  {(\CovtoLat{S', \fcov'}, \widetilde{\ll_{\fcov'}})}$
  is join-preserving.
\end{lemma}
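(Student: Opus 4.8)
The plan is to prove both implications at the level of representatives, working with the relation on $\Fin{S} \times \Fin{S'}$ given by $A \ApproxExt{r} D \defeqiv \forall a \in A\,(a \mathrel{r} D)$ rather than its class version, which is already known to be a well-defined approximable relation. Since joins, the bottom element, and the order of $\CovtoLat{S,\fcov}$ are computed by unions, $\emptyset$, and $\ApproxExt{\fcov}$, the clauses \ref{def:approximable0} and \ref{def:approximableJ} defining join-preservation of $\ApproxExt{r}$ unfold to representative-level statements: \ref{def:approximable0} becomes $A \ApproxExt{r}\emptyset \imp A \ApproxExt{\fcov}\emptyset$, and \ref{def:approximableJ} becomes $A \ApproxExt{r}(B\cup C) \imp \exists B',C'\in\Fin{S}\,(A \ApproxExt{\fcov} B'\cup C' \amp B'\ApproxExt{r}B \amp C'\ApproxExt{r}C)$. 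It therefore suffices to show these two conditions together are equivalent to \eqref{def:approximableJoin}. Throughout I will use that $r$, being an approximable map, is an upper relation, and that $\fcov$ is reflexive and transitive (${\fcov \cutcomp \fcov} = {\fcov}$), so that $a \fcov A$ together with $A \ApproxExt{\fcov} A'$ yields $a \fcov A'$, and $a \fcov A$ with $A \subseteq A'$ yields $a \fcov A'$.

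For the forward direction, assume \eqref{def:approximableJoin}. To verify \ref{def:approximable0}, suppose $A \ApproxExt{r}\emptyset$; applying \eqref{def:approximableJoin} to each $a \in A$ with $B = \emptyset$ forces the witnessing cover to be $a \fcov\emptyset$, because its second conjunct $\forall a'\,\exists b\in\emptyset(\ldots)$ is unsatisfiable unless the covering set is empty. Hence $A \ApproxExt{\fcov}\emptyset$. To verify \ref{def:approximableJ}, suppose $A \ApproxExt{r}(B\cup C)$. For each $a \in A$, \eqref{def:approximableJoin} yields $A_a$ with $a \fcov A_a$ and, for every $a' \in A_a$, some $b \in B\cup C$ with $a' \mathrel{r}\{b\}$; I sort each such $a'$ into $B'_a$ when its witness lies in $B$ and into $C'_a$ when it lies in $C$. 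Since $r$ is upper, $a' \mathrel{r}\{b\}$ with $b \in B$ upgrades to $a' \mathrel{r}B$, so with $B' = \bigcup_{a\in A}B'_a$ and $C' = \bigcup_{a\in A}C'_a$ one gets $B' \ApproxExt{r}B$ and $C'\ApproxExt{r}C$; moreover $A_a \subseteq B'\cup C'$, so weakening turns $a \fcov A_a$ into $a \fcov B'\cup C'$, whence $A \ApproxExt{\fcov}B'\cup C'$.

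For the converse, assume \ref{def:approximable0} and \ref{def:approximableJ} and prove \eqref{def:approximableJoin} by induction on the size of $B$, in the form quantified over all $a \in S$. The base case $B=\emptyset$ is immediate from \ref{def:approximable0} with witness $A=\emptyset$. For $B = B_0\cup\{b\}$, apply \ref{def:approximableJ} to $\{a\}\ApproxExt{r}(B_0\cup\{b\})$ to obtain $A_0,A_1$ with $a \fcov A_0\cup A_1$, $A_0\ApproxExt{r}B_0$, and $A_1\ApproxExt{r}\{b\}$; the induction hypothesis applied to each $a_0\in A_0$ (which satisfies $a_0 \mathrel{r}B_0$) furnishes covers $a_0 \fcov A_{a_0}$ whose members are each $r$-related to a singleton from $B_0$. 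Setting $A = A_1\cup\bigcup_{a_0\in A_0}A_{a_0}$, reflexivity and weakening give $(A_0\cup A_1)\ApproxExt{\fcov}A$, and transitivity of $\fcov$ upgrades $a \fcov A_0\cup A_1$ to $a \fcov A$; by construction every member of $A$ is $r$-related to a singleton from $B_0\cup\{b\}=B$.

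The routine parts are the forward direction and the base case; the main obstacle is making the inductive gluing in the converse precise, namely assembling the single cover $a \fcov A$ from the pieces $a \fcov A_0 \cup A_1$ and the $a_0 \fcov A_{a_0}$. This is exactly where the transitivity ${\fcov \cutcomp \fcov} = {\fcov}$ and the weakening rule of the finitary cover are indispensable, and where one must take care to run the induction over finitely enumerable subsets, decomposing $B$ as $B_0 \cup \{b\}$ along an enumeration rather than assuming genuine finiteness.
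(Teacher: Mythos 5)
Your proof is correct and follows essentially the same route as the paper: the forward direction uses the identical splitting of each cover $A_a$ into $B_a$ and $C_a$ according to where the witnesses land, and the converse is the same iteration of \ref{def:approximable0} and \ref{def:approximableJ} over an enumeration of $B$ that the paper compresses into "there exist $A_0,\dots,A_{n-1}$". Your explicit induction (applying the hypothesis to the members of $A_0$ and regluing with reflexivity, weakening, and transitivity of $\fcov$) is just a more carefully spelled-out version of that step.
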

\begin{proof}
  Suppose that $r$ is join-approximable. We must show
  \ref{def:approximable0}
  and
  \ref{def:approximableJ}.
  For \ref{def:approximable0}, suppose $A \ApproxExt{r} \emptyset$.
  Then \eqref{def:approximableJoin} implies $A \ApproxExt{\fcov} \emptyset$.
  For \ref{def:approximableJ}, suppose $A \ApproxExt{r} B \cup C$.
  By \eqref{def:approximableJoin}, for each $a \in A$, there  exist
  $B_{a}, C_{a} \in \Fin{S}$ such that $a \fcov
  B_{a} \cup C_{a}$ and that $\forall b' \in B_{a} \exists b
  \in B \left( b' \mathrel{r} \left\{ b \right\}\right)$
  and $\forall c' \in C_{a} \exists c
  \in C \left( c' \mathrel{r} \left\{ c \right\}\right)$.
  Put
  $B' = \bigcup_{a \in A}B_{a} $
  and 
  $C' = \bigcup_{a \in A}C_{a} $. Then, $A \ApproxExt{\fcov} B' \cup
  C'$,
  $B' \ApproxExt{r} B$, and $C' \ApproxExt{r} C$.
   
  Conversely, suppose that $\ApproxExt{r}$ is join-approximable, and
  let $a \mathrel{r} B$. Then, $\left\{ a \right\}
  \mathrel{\ApproxExt{r}} \bigcup_{b \in B} \left\{ b \right\}$. Since
  $\ApproxExt{r}$ is join-preserving, 
  there exist $A_{0},\dots,A_{n-1} \in \Fin{S}$ such that
  $a \fcov \bigcup_{i < n} A_{i}$
  and for each $i< n$, there exists $b\in B$ such that $A_{i}
  \mathrel{\widetilde{r}} \left\{ b \right\}$.  
\end{proof}

Let $\SContFCov$ be the category in which objects are strong continuous
finitary covers and morphisms are join-approximable
maps between underlying continuous finitary covers
of strong continuous finitary covers.
Then, by Proposition~\ref{prop:EquivContFCovPxJLatRestrictToSContFCov}
and Lemma~\ref{lem:ApproxToVmap}, the functor $\overline{F} \colon
\ContFCov \to \PxJLat$ restricts to a full and faithful functor
from $\SContFCov$ to $\SPxJLat$.
The following lemma implies that this restriction of $\overline{F}$ is essentially surjective.
\begin{lemma}
  \label{lem:IsoJoinPreserving}
  Every isomorphism in $\PxJLat$ between strong proximity
  $\vee$-semi\-lattices is 
  join-preserving.
\end{lemma}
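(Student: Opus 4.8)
The plan is to avoid a direct manipulation of \ref{def:approximable0} and \ref{def:approximableJ} and instead exploit the machinery of Section~\ref{sec:LowerPowerLoc}: the lower powerlocale $\LowerFunc$ is a co$\KZ$-comonad on $\PxPos$ whose coalgebras are exactly the strong proximity $\vee$-semilattices, and whose coalgebra homomorphisms are exactly the join-approximable relations. The essential point is that join-preservation of an approximable relation is not an extra property to be checked by hand, but is precisely the condition of being a $\LowerFunc$-coalgebra homomorphism; so the uniqueness-of-coalgebra-structure feature of co$\KZ$-comonads will force every isomorphism to respect this structure.

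Concretely, let $S$ and $S'$ be strong proximity $\vee$-semilattices and let $r \colon S \to S'$ be an isomorphism in $\PxJLat$. First I would observe that, since $\PxJLat$ is a full subcategory of $\PxPos$ (the morphisms being the same approximable relations in both categories), $r$ is in particular an isomorphism in $\PxPos$. Next, by the definition of the comparison functor $K \colon \SPxJLat \to \coAlg{\LowerFunc}$ (cf.\ the discussion preceding Lemma~\ref{lem:ProxSupLatLowerCoalg}, and Proposition~\ref{prop:LowerIsUniversal}), both $S$ and $S'$ carry canonical $\LowerFunc$-coalgebra structures $\eta^{L}_{S}$ and $\eta^{L}_{S'}$ given by \eqref{eq:Lowerunit}.

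Then I would invoke Proposition~\ref{prop:LowerIscoKZ}, that $\LowerFunc$ is a co$\KZ$-comonad on $\PxPos$, together with Corollary~\ref{cor:KZAlg}~\eqref{cor:KZAlg2}: any isomorphism in the base category between the underlying objects of two coalgebras is automatically an isomorphism of the coalgebras. Applying this to $r$, viewed as an isomorphism in $\PxPos$ between the underlying objects of $\eta^{L}_{S}$ and $\eta^{L}_{S'}$, shows that $r$ is a $\LowerFunc$-coalgebra homomorphism. Finally, by the proof of Lemma~\ref{lem:ProxSupLatLowerCoalg} (the fullness of $K$), every $\LowerFunc$-coalgebra homomorphism between strong proximity $\vee$-semilattices is join-preserving; hence $r$ is join-preserving, as required.

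The conceptual crux — what I would regard as the only real obstacle — is recognizing that join-preservation is encoded as coalgebra-homomorphism-hood, so that no explicit verification of \ref{def:approximable0} and \ref{def:approximableJ} is needed. A fully elementary alternative would instead start from the inverse $s = r^{-1}$ and deduce \ref{def:approximable0} and \ref{def:approximableJ} for $r$ from the identities $s \circ r = {\prox}$ and $r \circ s = {\prox'}$ (using strength of $S$, $S'$ and roundedness); this is feasible but considerably more laborious, and it is exactly this computation that the co$\KZ$ argument circumvents.
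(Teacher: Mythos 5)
Your argument is correct, but it is genuinely different from the paper's. The paper proves the lemma by the ``fully elementary alternative'' you sketch at the end: it takes the inverse $s$ with $s \circ r = {\prox}$ and $r \circ s = {\prox'}$ and verifies \ref{def:approximable0} and \ref{def:approximableJ} for $r$ directly, using only the strength of $S$ and $S'$ and the roundedness built into approximability (the computation is about ten lines, not especially laborious). Your route instead factors through Section~\ref{sec:LowerPowerLoc}: every step you invoke is available and non-circular --- $\PxJLat$ is a full subcategory of $\PxPos$, the comparison functor $K$ equips each strong proximity $\vee$-semilattice with the coalgebra structure $\eta^{L}$, Proposition~\ref{prop:LowerIscoKZ} and Corollary~\ref{cor:KZAlg}~\eqref{cor:KZAlg2} upgrade the isomorphism to a coalgebra isomorphism, and the proof of Lemma~\ref{lem:ProxSupLatLowerCoalg} translates coalgebra-homomorphism-hood back into join-preservation. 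What your approach buys is uniformity: the identical argument with $\UpperFunc$ in place of $\LowerFunc$ (via Proposition~\ref{prop:CharUpperCoalgHom}) yields that every isomorphism in $\PxPos$ between $\UpperFunc$-coalgebras is Lawson approximable, a fact the paper later asserts ``as in Lemma~\ref{lem:IsoJoinPreserving}'' without proof. What the paper's approach buys is self-containedness --- the lemma sits in Section~\ref{sec:PresentingJoinProximitySemilattices} and its elementary proof does not import the powerlocale machinery --- and a slightly sharper statement, since it never needs to pass through the uniqueness of coalgebra structures. Both proofs are valid; yours trades a short computation for a conceptual explanation of \emph{why} join-preservation is automatic.
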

\begin{proof}
  Let $(S,0, \vee, \prox)$ and $(S', 0', \vee', \prox')$
  be strong proximity $\vee$-semilattices, and $r \colon S \to S'$ and
  $s \colon S' \to S$ be approximable relations  such that $s \circ r = {\prox}$
  and $r \circ s = {\prox'}$.
  We show that $r$ is join-preserving.
  For \ref{def:approximable0}, suppose $a \mathrel{r} 0'$. 
  Since $0' \mathrel{s} 0$ by \ref{def:approximableI}, we have
  $a \mathrel{(s \circ r)}0$. Thus, $a \prox 0$ and so $a = 0$ 
  by the strength of $S$. For \ref{def:approximableJ}, suppose $a
  \mathrel{r} (b \vee' c)$. Since $r$ is approximable, there exists $d \prox' (b \vee' c)$ such that $a
  \mathrel{r} d$. Since $S'$ is strong, there exist $b' \prox' b$
  and $c' \prox' c$ such that  $d \leq (b' \vee' c')$. Thus, there exist
  $a_{b}, a_{c} \in S$ such that $b' \mathrel{s} a_{b} \mathrel{r} b$
  and $c' \mathrel{s} a_{c} \mathrel{r} c$. Then $a \mathrel{r} (b' \vee'
  c') \mathrel{s} (a_{b} \vee a_{c})$, and so $a \prox a_{b} \vee
  a_{c}$.  Since $S$ is strong, there exist $a',a'' \in S$ such that
  $a \leq a' \vee a''$, $a' \prox a_{b}$, and $a'' \prox a_{c}$. 
  Then $a' \mathrel{r} b$ and $a'' \mathrel{r} c$.
\end{proof}
\begin{theorem}
  \label{thm:EquivSContFCovSPxJLat}
  $\SContFCov$ is equivalent to $\SPxJLat$.
\end{theorem}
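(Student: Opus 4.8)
The plan is to show that the restriction of the equivalence functor $\overline{F} \colon \ContFCov \to \PxJLat$ to a functor $\overline{F} \colon \SContFCov \to \SPxJLat$ is an equivalence, by checking that it is full, faithful, and essentially surjective. Fullness and faithfulness are already in hand: by Lemma~\ref{lem:SContFCovToSPxSLat} this functor sends a strong continuous finitary cover to a strong proximity $\vee$-semilattice, by Lemma~\ref{lem:ApproxToVmap} it sends precisely join-approximable maps to join-approximable relations and reflects them, and by Proposition~\ref{prop:EquivContFCovPxJLatRestrictToSContFCov} together with Lemma~\ref{lem:ApproxToVmap} the restriction is full and faithful. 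So the only remaining point is essential surjectivity.

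For essential surjectivity I would start from an arbitrary strong proximity $\vee$-semilattice $(S, 0, \vee, \prox)$ and exhibit a strong continuous finitary cover whose image under $\overline{F}$ is isomorphic to it. The natural candidate is $\overline{G}(S)$, the value at $S$ of the quasi-inverse $\overline{G} \colon \PxJLat \to \ContFCov$ from Corollary~\ref{cor:EquivContFCovPxJLat}. By construction $\overline{G}(S) = (S, \fcov_{\vee}, \ll_{\vee})$, and by the discussion preceding Proposition~\ref{prop:EquivContFCovPxJLatRestrictToSContFCov} this is exactly the continuous finitary cover underlying the strong continuous finitary cover $(S, \fcov_{\vee}, \prox)$, using the identity ${\ll_{\fcov_{\vee}}} = {\ll_{\vee}}$. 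Hence $\overline{G}(S)$ is (the underlying continuous finitary cover of) an object of $\SContFCov$.

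Since $\overline{F}$ and $\overline{G}$ form an equivalence of $\ContFCov$ and $\PxJLat$, there is an isomorphism $\overline{F}(\overline{G}(S)) \cong S$ in $\PxJLat$. This isomorphism connects two strong proximity $\vee$-semilattices, so by Lemma~\ref{lem:IsoJoinPreserving} both it and its inverse are join-preserving; that is, it is already an isomorphism in $\SPxJLat$. Thus $\overline{F} \colon \SContFCov \to \SPxJLat$ is essentially surjective, and a full, faithful, essentially surjective functor is an equivalence, completing the argument.

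The step I expect to carry the real content is the appeal to Lemma~\ref{lem:IsoJoinPreserving}. Essential surjectivity of $\overline{F}$ comes for free in the larger categories $\ContFCov$ and $\PxJLat$, but transporting it down to the subcategories requires knowing that an isomorphism of $\PxJLat$ between strong objects automatically lies in $\SPxJLat$; without that, one would have no guarantee that the isomorphism witnessing $\overline{F}(\overline{G}(S)) \cong S$ respects finite joins. Everything else is bookkeeping that reduces the theorem to the already-established full faithfulness of the restricted functor.
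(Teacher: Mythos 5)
Your proposal is correct and follows essentially the same route as the paper: the restriction of $\overline{F}$ is full and faithful by Lemma~\ref{lem:SContFCovToSPxSLat}, Lemma~\ref{lem:ApproxToVmap}, and Proposition~\ref{prop:EquivContFCovPxJLatRestrictToSContFCov}, and essential surjectivity is obtained from the object $(S, \fcov_{\vee}, \prox)$ together with Lemma~\ref{lem:IsoJoinPreserving} upgrading the $\PxJLat$-isomorphism to one in $\SPxJLat$. You have also correctly identified Lemma~\ref{lem:IsoJoinPreserving} as the step carrying the real content, which is exactly the role it plays in the paper.
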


\subsection{Localized strong continuous finitary covers}\label{sec:LocalizedContFCov}
We characterise a subcategory of strong continuous finitary covers
which corresponds to $\PxJLatLoc$, the category of localized strong
proximity $\vee$-semilattices and proximity relations. 

In what follows, notations $a \downarrow_\fprox A$ and $A
\downarrow_\fprox B$ are defined similarly as in $a \downarrow_\prox A$ and $A
\downarrow_\prox B$, respectively.
\begin{definition}
  A strong continuous finitary cover $(S,\fcov, \fprox)$ is  \emph{localized} if 
  \[
    b \fprox a \fcov A \imp \exists B \in \Fin{a \downarrow_{\fprox} A} \left( b \fcov B\right).
  \]
\end{definition}
The following characterisation is more convenient.
\begin{lemma}
  \label{lem:ContFCovlocalized}
  A strong continuous finitary cover $(S,\fcov, \fprox)$ is localized if 
  and only if
  \[
    b \fprox a \fcov A \amp a \fcov B\imp 
    \exists C \in \Fin{A \downarrow_{\fprox} B} \left( b \fcov C
    \right).
  \]
\end{lemma}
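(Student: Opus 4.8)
The plan is to prove the two implications separately. The implication from the displayed condition to localizedness is immediate: assuming the displayed condition and $b \fprox a \fcov A$, reflexivity of $\fcov$ gives $a \fcov \{a\}$, so instantiating with $B \defeql \{a\}$ yields $C \in \Fin{A \downarrow_{\fprox} \{a\}}$ with $b \fcov C$; since $\downset_{\fprox}\{a\} = \downset_{\fprox} a$ we have $A \downarrow_{\fprox} \{a\} = a \downarrow_{\fprox} A$, which is exactly the conclusion of localizedness. All the work is in the converse, which I would prove in close analogy with the forward direction of Lemma \ref{lem:EquivalenceLocalization}.

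So assume $(S, \fcov, \fprox)$ is localized and suppose $b \fprox a \fcov A$ and $a \fcov B$; the goal is $C \in \Fin{A \downarrow_{\fprox} B}$ with $b \fcov C$. First I would record a strengthened form of localizedness in which the middle cover is relaxed to $\ll_{\fcov}$: if $u \fprox v$ and $v \ll_{\fcov} W$, then there is $C \in \Fin{v \downarrow_{\fprox} W}$ with $u \fcov C$. This follows by unfolding $v \ll_{\fcov} W$ into $v \fcov W' \fprox_{L} W$ via the description $\ll_{\fcov} = \fprox_{\exists} \cutcomp \fcov$ in \eqref{eq:ProxContFCovOnesided}, applying localizedness to $u \fprox v \fcov W'$, and then contracting $\downset_{\fprox} W' \subseteq \downset_{\fprox} W$ using idempotency of $\fprox$.

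The converse then proceeds by applying localizedness twice. Using idempotency of $\fprox$, interpolate $b \fprox a_{1} \fprox a$. Applying localizedness to $a_{1} \fprox a \fcov A$ gives $P \in \Fin{a \downarrow_{\fprox} A}$ with $a_{1} \fcov P$; applying it again to $b \fprox a_{1} \fcov P$ gives $P^{-} \in \Fin{a_{1} \downarrow_{\fprox} P}$ with $b \fcov P^{-}$. Now each $p^{-} \in P^{-}$ satisfies $p^{-} \fprox p$ for some $p \in P$, and this $p$ lies in $\downset_{\fprox} a \cap \downset_{\fprox} A$; in particular $p \fprox a \fcov B$, so $p \ll_{\fcov} B$, and idempotency gives $\downset_{\fprox} p \subseteq \downset_{\fprox} A$. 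The strengthened form applied to $p^{-} \fprox p$ and $p \ll_{\fcov} B$ then produces $C_{p^{-}} \in \Fin{p \downarrow_{\fprox} B}$ with $p^{-} \fcov C_{p^{-}}$, and $C_{p^{-}} \subseteq \downset_{\fprox} p \cap \downset_{\fprox} B \subseteq A \downarrow_{\fprox} B$. Setting $C \defeql \bigcup_{p^{-} \in P^{-}} C_{p^{-}}$ and cutting $b \fcov P^{-}$ against the covers $p^{-} \fcov C$ (obtained by weakening $p^{-} \fcov C_{p^{-}}$) yields $b \fcov C$ with $C \in \Fin{A \downarrow_{\fprox} B}$.

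The step I expect to be the main obstacle, and the reason localizedness must be applied twice, is the asymmetry between the element relation $\fprox$ and the element-to-set relation $\fcov$. In the proof of Lemma \ref{lem:EquivalenceLocalization} one passes freely from $a' \leq \medvee B$ to $\downset_{\prox} a' \subseteq \downset_{\prox} \medvee B$, but the naive cover analogue $c \fprox a \fcov B \imp c \fprox_{\exists} B$ fails, since a $\fcov$-cover of an element need not place its members $\fprox$-below that element. Hence a single use of localizedness to cover $b$ deposits the covering set in $\downset_{\fprox} a$ rather than in $\downset_{\fprox} A$. Descending one level further, so that the already-constructed $P \subseteq \downset_{\fprox} A$ can serve as the family of middle elements, is what repairs this, and one must be careful to apply the two idempotency contractions $\downset_{\fprox} p \subseteq \downset_{\fprox} A$ and $\downset_{\fprox} W' \subseteq \downset_{\fprox} W$ at the right places.
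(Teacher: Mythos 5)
Your proof is correct and follows essentially the same route as the paper's: interpolate $b \fprox a_{1} \fprox a$, use localizedness once to descend into $\Fin{a \downarrow_{\fprox} A}$, then for each element of the resulting cover of $b$ use strongness to replace $p \fprox a \fcov B$ by $p \fcov B' \fprox_{L} B$ and apply localizedness again, finally taking the union of the finite covers. The only cosmetic difference is that where you invoke localizedness a second time to obtain $b \fcov P^{-}$ with $P^{-} \fprox_{L} P$, the paper appeals directly to the strongness condition \eqref{eq:ContFCov} to get $b \fcov D \fprox_{L} C$; both steps yield the same data.
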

\begin{proof}
  ``If'' part follows from $a \fcov \left\{ a \right\}$.  Conversely,
  suppose $b \fprox a \fcov A$ and $a \fcov B$.
  Choose $b' \in S$  such that $b \fprox b' \fprox a$. Since $S$ is
  localized, there exists $C \in \Fin{a \downarrow_{\fprox} A}$ such
  that $b' \fcov C$, and since $S$ is strong, there exists
  $D \fprox_{L} C$ such that $b \fcov D$.  Then, for each $d \in D$,
  there exists 
  $c \in a \downarrow_{\fprox} A$
  such that
  $d \prox c$, and since $S$ is strong, we
  find $B' \fprox_{L} B$ such
  that $c \fcov B'$.  Since $S$ is localized, there exists $E_{d} \in
  \Fin{c \downarrow_{\fprox} B'} \subseteq \Fin{A \downarrow_{\fprox}
B}$ such that  $d \fcov E_{d}$. Then, by putting $E = \bigcup_{d \in D}
E_{d}$, we have $b \fcov E \in \Fin{A \downarrow_{\fprox} B}$.
\end{proof}

\begin{lemma}
  \label{prop:ContFCovLocalizedProxSuplatLocalized}
  For any localized strong continuous finitary cover $(S,\fcov, \fprox)$,
   the corresponding strong proximity $\vee$-semilattice $(\CovtoLat{S,\fcov},
   \ApproxExt{\ll_{\fcov}})$ is localized.
\end{lemma}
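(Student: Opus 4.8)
The structure $(\CovtoLat{S,\fcov}, \ApproxExt{\ll_{\fcov}})$ is already known to be a strong proximity $\vee$-semilattice by Lemma~\ref{lem:SContFCovToSPxSLat}, so the plan is to verify only the localization condition~\eqref{eq:Localized}. Recall that the elements of $\CovtoLat{S,\fcov}$ are (classes of) finite subsets, that the order is $\ApproxExt{\fcov}$, that joins are unions, and that the proximity is $\ApproxExt{\ll_{\fcov}}$. Unfolding~\eqref{eq:ProxContFCovOnesided} and the definition of cut composition gives the convenient characterisation
\[
  a \mathrel{\ll_{\fcov}} X \iff \exists x \in S \left( a \fprox x \fcov X \right),
\]
which I will use throughout. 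Thus, given finite subsets $A,B,C,D$ with $A \ApproxExt{\ll_{\fcov}} B$ and $B \ApproxExt{\fcov} C \cup D$ (i.e.\ $A \prox B \leq C \vee D$ in the semilattice), I must produce finite subsets $A_{1},A_{2}$ with $A_{1} \ApproxExt{\ll_{\fcov}} B$, $A_{1} \ApproxExt{\ll_{\fcov}} C$, $A_{2} \ApproxExt{\ll_{\fcov}} B$, $A_{2} \ApproxExt{\ll_{\fcov}} D$, and $A \ApproxExt{\ll_{\fcov}} A_{1} \cup A_{2}$.

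I would proceed pointwise in $a \in A$ and then take unions to obtain witnesses uniform in $a$. For fixed $a \in A$, from $a \ll_{\fcov} B$ I obtain $b$ with $a \fprox b \fcov B$, and by the idempotency of $\fprox$ I interpolate $a \fprox a_{0} \fprox b$. Cutting $b \fcov B$ with $B \ApproxExt{\fcov} C \cup D$ yields $b \fcov C \cup D$. I then apply the localization condition in the form of Lemma~\ref{lem:ContFCovlocalized} to $a_{0} \fprox b \fcov B$ together with $b \fcov C \cup D$; this produces $E_{a} \in \Fin{B \downset_{\fprox} (C \cup D)}$ with $a_{0} \fcov E_{a}$. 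Since every $e \in E_{a}$ satisfies $e \fprox_{\exists} (C \cup D)$, I split $E_{a} = E_{a}^{C} \cup E_{a}^{D}$ with $E_{a}^{C} \subseteq \downset_{\fprox} B \cap \downset_{\fprox} C$ and $E_{a}^{D} \subseteq \downset_{\fprox} B \cap \downset_{\fprox} D$, and set $A_{1} = \bigcup_{a \in A} E_{a}^{C}$ and $A_{2} = \bigcup_{a \in A} E_{a}^{D}$, both finite.

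For the verification, each $e \in A_{1}$ has $e \fprox b'$ for some $b' \in B$ and $e \fprox c$ for some $c \in C$; since $\fcov$ is reflexive, $b' \fcov B$ and $c \fcov C$, so $e \ll_{\fcov} B$ and $e \ll_{\fcov} C$ by the characterisation above, giving $A_{1} \in B \downset_{\prox} C$, and symmetrically $A_{2} \in B \downset_{\prox} D$. Finally, for each $a \in A$ we have $a \fprox a_{0} \fcov E_{a} \subseteq A_{1} \cup A_{2}$, so weakening of $\fcov$ gives $a_{0} \fcov A_{1} \cup A_{2}$ and hence $a \fprox a_{0} \fcov A_{1} \cup A_{2}$, i.e.\ $a \ll_{\fcov} A_{1} \cup A_{2}$; as this holds for all $a \in A$, we obtain $A \ApproxExt{\ll_{\fcov}} A_{1} \cup A_{2}$. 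The one delicate point — and the reason the interpolation step is needed — is precisely this last clause: a naive application of localization to $a$ itself would only yield $a \fcov A_{1} \cup A_{2}$, which is too weak, whereas inserting the intermediate $a_{0}$ with $a \fprox a_{0}$ is what upgrades the cover to the proximity $\ll_{\fcov}$. The remaining bookkeeping (finiteness of the unions and the trivial case $A = \emptyset$) is routine.
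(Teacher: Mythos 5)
Your proof is correct and follows essentially the same route as the paper's: unwind $\ApproxExt{\ll_{\fcov}}$ pointwise over $A$, interpolate with $\fprox$, apply the localization property of the finitary cover to obtain a finite set below $B$ and below $C \cup D$, split it between $C$ and $D$, and take finite unions. The only cosmetic difference is that you invoke Lemma~\ref{lem:ContFCovlocalized} where the paper applies the definition of localized directly to an interpolant $b_{i} \in B'$; the interpolation step you flag as the delicate point is likewise the key move in the paper's argument.
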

\begin{proof}
  Suppose that $(S,\fcov, \fprox)$ is localized, and let
  $A \ApproxExt{\ll_{\fcov}} B \ApproxExt{\fcov} C \cup D$. 
  There exists $B'$ such that $A \fprox_{L} B' \ApproxExt{\fcov}
  B$ by \eqref{eq:ProxContFCovOnesided}, and so  $B' \ApproxExt{\fcov}
  C \cup D$.
  Write $A = \left\{ a_{0},\dots,a_{n-1} \right\}$. For each
  $i < n$, there exist $e_{i} \in S$ and $b_{i} \in B'$ such that $a_{i} \fprox
  e_{i} \fprox b_{i}$. Since $S$ is localized, there exists $A_{i}
  \in  \Fin{ C \cup D  \downarrow_{\fprox} b_{i}}$ such that $e_{i} \fcov A_{i}$.
  Split $A_{i}$ into $C_{i}$ and  $D_{i}$ so that $A_{i} =
  C_{i} \cup D_{i}$,
  $C_{i} \subseteq C \downarrow_{\fprox} b_{i}$, and 
  $D_{i} \subseteq D \downarrow_{\fprox} b_{i}$.
  Put $E = \left\{ e_{i} \mid i < n \right\}$, 
  $C' = \bigcup_{i < n}C_{i}$, and 
  $D' = \bigcup_{i < n}D_{i}$.
  Then, $A \fprox_{L} E \ApproxExt{\fcov} C' \cup D'$,
  $C' \subseteq C \downarrow_{\fprox} B'$, and 
  $D' \subseteq D \downarrow_{\fprox} B'$. Thus
  $A \ApproxExt{\ll_{\fcov}} C' \cup D'$,
  $C' \in C \downarrow_{\ApproxExt{\ll_{\fcov}}} B$, and $D' \in D
  \downarrow_{\ApproxExt{\ll_{\fcov}}} B$.
  Hence $(\CovtoLat{S,\fcov}, \ApproxExt{\ll_{\fcov}})$ is localized.
\end{proof}
\begin{definition}
  \label{def:ConFCovLawson}
  An approximable map $r \colon (S, \fcov, \fprox) \to (S',\fcov',
  \fprox')$ between strong continuous finitary covers  is 
  \emph{Lawson approximable} if
  \begin{enumerate}
    \item\label{def:ConFCovLawson1}
  $a \fprox a'\imp \exists B \in \Fin{S'} \left( a
      \mathrel{r} B \right)$,
    \item\label{def:ConFCovLawson2}
  $a \fprox a'  \mathrel{r} B \amp a' \mathrel{r} C
  \imp \exists D \in \Fin{S'} \left(D \ApproxExt{\ll_{\fcov'}} B \amp D
  \ApproxExt{\ll_{\fcov'}} C \amp
      a \mathrel{r} D \right)$.
  \end{enumerate}
\end{definition}
\begin{remark}
  \label{rem:ConFCovLawson}
  If $S'$ is localized, then the
  condition \ref{def:ConFCovLawson2} can be strengthened to
  \[
    a \fprox  a' \mathrel{r} B \amp a' \mathrel{r} C
    \imp \exists D \in \Fin{B \downarrow_{\fprox'} C} \left( a \mathrel{r} D \right).
  \]
  If $S$ and $S'$ are both localized and $r$ is
  join-preserving, then the
  condition \ref{def:ConFCovLawson2} can be simplified further to
  \begin{equation}
    \label{eq:LawsonLocalized}
    a \fprox  a' \mathrel{r} \left\{ b \right\} \amp a' \mathrel{r} \left\{ c \right\}
    \imp \exists D \in \Fin{b \downarrow_{\fprox'} c} \left( a \mathrel{r} D \right).
  \end{equation}
\end{remark}
\begin{lemma}
  \label{lem:EquivalenceOfLawson}
  An approximable map $r \colon (S, \fcov, \fprox) \to (S',\fcov', \fprox')$
  is Lawson approximable if and only if the corresponding approximable relation
  $\ApproxExt{r} \colon  {(\CovtoLat{S,\fcov},
  \ApproxExt{\ll_{\fcov}})} \to
  (\CovtoLat{S',\fcov'}, \ApproxExt{\ll_{\fcov'}})$ 
  is Lawson approximable.
\end{lemma}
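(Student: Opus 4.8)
The plan is to check the two clauses of Definition~\ref{def:LawsonApproxmable} for $\ApproxExt{r}$ against the two clauses of Definition~\ref{def:ConFCovLawson} for $r$, moving back and forth through the correspondence $r \mapsto \ApproxExt{r}$. At the outset I would record the facts I will use repeatedly: by \eqref{eq:ApproxExt}, $A \ApproxExt{r} B \iff \forall a \in A\,(a \mathrel{r} B)$ and $A \ApproxExt{\ll_{\fcov}} A' \iff \forall a \in A\,(a \ll_{\fcov} A')$; unwinding \eqref{eq:ProxContFCovOnesided} gives $a \ll_{\fcov} A' \iff \exists a'' \in S\,(a \fprox a'' \fcov A')$; and in $\CovtoLat{S',\fcov'}$ the set $B \downarrow_{\widetilde{\ll_{\fcov'}}} C$ consists exactly of the $D \in \Fin{S'}$ with $D \ApproxExt{\ll_{\fcov'}} B$ and $D \ApproxExt{\ll_{\fcov'}} C$. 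Since the objects of $\CovtoLat{S',\fcov'}$ are $\ApproxExt{\fcov'}$-classes of finite subsets, I work throughout with representatives, using that joins are unions and that $\ApproxExt{r}$ and $\ApproxExt{\ll_{\fcov'}}$ descend to the quotient (as already used for the functor $F$).

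For the implication from $\ApproxExt{r}$ Lawson approximable to $r$ Lawson approximable I would test the clauses on singletons. If $a \fprox a'$ then $a \fprox a' \fcov \{a'\}$ by reflexivity of $\fcov$, so $\{a\} \ApproxExt{\ll_{\fcov}} \{a'\}$; clause~1 of Definition~\ref{def:LawsonApproxmable} then yields $B$ with $\{a\} \ApproxExt{r} B$, i.e.\ $a \mathrel{r} B$, which is clause~1 of Definition~\ref{def:ConFCovLawson}. For clause~2, given $a \fprox a'$, $a' \mathrel{r} B$, $a' \mathrel{r} C$, the same observation gives $\{a\} \ApproxExt{\ll_{\fcov}} \{a'\}$, while $\{a'\} \ApproxExt{r} B$ and $\{a'\} \ApproxExt{r} C$; clause~2 of Definition~\ref{def:LawsonApproxmable} produces $D \in B \downarrow_{\widetilde{\ll_{\fcov'}}} C$ with $\{a\} \ApproxExt{r} D$, which unpacks precisely to $D \ApproxExt{\ll_{\fcov'}} B$, $D \ApproxExt{\ll_{\fcov'}} C$, and $a \mathrel{r} D$.

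For the reverse implication, from $r$ to $\ApproxExt{r}$, I would pass from points to finite subsets by taking unions, after first recording the absorption identity $r \cutcomp \fcov = r$: from $r \cutcomp \ll_{\fcov} = r$ and $\ll_{\fcov} = \fprox_{\exists} \cutcomp \fcov$ one gets $r = (r \cutcomp \fprox_{\exists}) \cutcomp \fcov$, whence $r \cutcomp \fcov = (r \cutcomp \fprox_{\exists}) \cutcomp (\fcov \cutcomp \fcov) = r$ by idempotency of $\fcov$. Now for clause~1, given $A \ApproxExt{\ll_{\fcov}} A'$, each $a \in A$ satisfies $a \fprox a''$ for some $a''$, so clause~1 of Definition~\ref{def:ConFCovLawson} gives $B_a$ with $a \mathrel{r} B_a$; setting $B = \bigcup_{a \in A} B_a$ and using that approximable maps are upper yields $A \ApproxExt{r} B$. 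For clause~2, suppose $A \ApproxExt{\ll_{\fcov}} A'$, $A' \ApproxExt{r} B$, $A' \ApproxExt{r} C$. Each $a \in A$ gives $a \fprox a'' \fcov A'$; applying the absorption identity to $a'' \fcov A' \ApproxExt{r} B$ (and likewise for $C$) produces $a'' \mathrel{r} B$ and $a'' \mathrel{r} C$, so clause~2 of Definition~\ref{def:ConFCovLawson} yields $D_a$ with $D_a \ApproxExt{\ll_{\fcov'}} B$, $D_a \ApproxExt{\ll_{\fcov'}} C$, and $a \mathrel{r} D_a$. Putting $D = \bigcup_{a \in A} D_a$, the first two properties are inherited union-wise and upper-ness gives $A \ApproxExt{r} D$, so $D \in B \downarrow_{\widetilde{\ll_{\fcov'}}} C$, as required.

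The only step beyond bookkeeping is the absorption identity $r \cutcomp \fcov = r$, which is exactly what lets me transport $a'' \mathrel{r} B$ across the cover $a'' \fcov A'$ in the reverse direction; everything else reduces to unwinding \eqref{eq:ApproxExt} and \eqref{eq:ProxContFCovOnesided} and verifying that unions of the chosen witnesses behave correctly under $\ApproxExt{\ll_{\fcov'}}$ and the upper-ness of $r$.
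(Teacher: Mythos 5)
Your proof is correct and follows essentially the same route as the paper's: one direction by testing the Lawson conditions on singletons, the other by decomposing $\ll_{\fcov}$ into $\fprox$ and $\fcov$, applying the pointwise Lawson conditions, and taking unions of the resulting witnesses. The only immaterial differences are that you use the factorisation $a \fprox a'' \fcov A'$ where the paper uses $A \ApproxExt{\fcov} A'' \fprox_{L} A'$, and that you spell out the absorption identity $r \cutcomp \fcov = r$ which the paper leaves implicit.
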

\begin{proof}
  Suppose that $r$ is Lawson approximable, and let $A
  \ApproxExt{\ll_{\fcov}} A'$.
  Then, there exists $A'' \in
  \Fin{S}$ such that $A \ApproxExt{\fcov} A'' \fprox_{L} A'$.  Since
  $r$ is Lawson approximable, there exists $B \in \Fin{S'}$ such that
  $A'' \ApproxExt{r} B$. Then $A \mathrel{\ApproxExt{r}} B$.
  Next, suppose  $A \ApproxExt{\ll_{\fcov}} A'
  \mathrel{\ApproxExt{r}}
  B$ and $A' \ApproxExt{r} C$. Then, there exists $A'' \in
  \Fin{S}$ such that $A \ApproxExt{\fcov} A'' \fprox_{L} A'$.  Since
  $r$ is Lawson approximable, there exists $D \in \Fin{S'}$ such that
  $A'' \ApproxExt{r} D$, $D \ApproxExt{\ll_{\fcov'}} B$, and $D \ApproxExt{\ll_{\fcov'}} C$. 
  Then, $A \mathrel{\ApproxExt{r}} D$.
  
  Conversely, suppose that $\ApproxExt{r}$ is Lawson
  approximable. If $a \fprox a'$, then $\left\{ a \right\}
  \ApproxExt{\ll_{\fcov}} \left\{ a'
  \right\}$. Since $\ApproxExt{r}$ is Lawson approximable, there exists $B \in \Fin{S}$
  such that $\left\{ a \right\} \mathrel{\ApproxExt{r}} B$, and so 
  $a \mathrel{r} B$. Next, suppose $a \fprox a' \mathrel{r} B$ and
  $a' \mathrel{r} C$. Then, $\left\{ a \right\}
  \ApproxExt{\ll_{\fcov}}
  \left\{ a' \right\} \mathrel{\ApproxExt{r}} B$,
  and $\left\{ a' \right\} \mathrel{\ApproxExt{r}} C$. Since
  $\ApproxExt{r}$ is Lawson approximable, there exists $D \in \Fin{S'}$ such that
  $D \ApproxExt{\ll_{\fcov'}} B$, $D \ApproxExt{\ll_{\fcov'}} C$, and $\left\{ a \right\} \mathrel{\ApproxExt{r}} D$.
  Then $a \mathrel{r} D$.
\end{proof}
\begin{definition}
  \label{def:ProxMap}
  A join-preserving Lawson approximable map between
  localized strong continuous finitary covers is called a \emph{proximity map}.
\end{definition}
Let $\ContFCovLoc$ be the category of localized strong continuous
finitary covers and proximity maps.
\begin{proposition}
  The equivalence between $\SContFCov$ and $\SPxJLat$ restricts to an
  equivalence between $\ContFCovLoc$ and $\PxJLatLoc$.
\end{proposition}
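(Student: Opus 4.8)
The plan is to show that the equivalence $\overline{F}$ of Theorem~\ref{thm:EquivSContFCovSPxJLat} between $\SContFCov$ and $\SPxJLat$ restricts to the two subcategories in question. Writing $\Phi \colon \SContFCov \to \SPxJLat$ for this equivalence, so that $\Phi(S,\fcov,\fprox) = (\CovtoLat{S,\fcov}, \widetilde{\ll_{\fcov}})$ and $\Phi(r) = \widetilde{r}$, and $\Psi$ for its quasi-inverse induced by $\overline{G}$, so that $\Psi(S,0,\vee,\prox) = (S, \fcov_{\vee}, \prox)$, I would verify three things: that $\Phi$ carries $\ContFCovLoc$ into $\PxJLatLoc$; that the resulting functor is full and faithful; and that it is essentially surjective onto $\PxJLatLoc$.

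The first two points follow quickly from the lemmas already proved. On objects, Lemma~\ref{prop:ContFCovLocalizedProxSuplatLocalized} shows that $\Phi$ sends a localized object to a localized one. On morphisms, a proximity map is by definition a join-preserving Lawson approximable map, and by Lemma~\ref{lem:ApproxToVmap} and Lemma~\ref{lem:EquivalenceOfLawson} its image $\widetilde{r}$ is again join-preserving and Lawson approximable, hence a proximity relation; so $\Phi$ is well defined on the subcategories. Faithfulness is inherited from $\Phi$. For fullness, given objects $X,Y$ of $\ContFCovLoc$ and a proximity relation $g \colon \Phi X \to \Phi Y$, fullness of $\Phi$ on $\SContFCov$ yields a join-approximable map $f$ with $\widetilde{f} = g$; since $g$ is Lawson approximable, Lemma~\ref{lem:EquivalenceOfLawson} makes $f$ Lawson approximable as well, so $f$ is a proximity map lifting $g$.

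The real work is essential surjectivity. Given a localized strong proximity $\vee$-semilattice $T = (S,0,\vee,\prox)$, I would take $X \defeql \Psi T = (S,\fcov_{\vee},\prox)$ and show (i) that $X$ is localized and (ii) that the canonical isomorphism $\Phi\Psi T \cong T$ in $\SPxJLat$ is an isomorphism in $\PxJLatLoc$. For (i) — the converse of Lemma~\ref{prop:ContFCovLocalizedProxSuplatLocalized}, which I must supply — I use the criterion of Lemma~\ref{lem:ContFCovlocalized}: given $b \prox a \leq \medvee A$ and $a \leq \medvee B$, the localization condition \eqref{eq:LocalizedGeneral} for $T$ produces $C \in \Fin{A \downarrow_{\prox} B}$ with $b \prox \medvee C$; since $T$ is strong there is $D \prox_{L} C$ with $b \leq \medvee D$, and the transitivity of $\prox$ (part of its idempotency) gives $D \subseteq \downset_{\prox} A \cap \downset_{\prox} B$, whence $D \in \Fin{A \downarrow_{\prox} B}$ and $b \fcov_{\vee} D$ as required. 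For (ii), the isomorphism is join-approximable because it is an isomorphism in $\SPxJLat$; it is moreover Lawson approximable because every strong proximity $\vee$-semilattice is a $\UpperFunc$-coalgebra (Proposition~\ref{prop:AlgUpper}), every isomorphism between the underlying objects of coalgebras is a coalgebra isomorphism (Corollary~\ref{cor:KZAlg}), and coalgebra homomorphisms are precisely the Lawson approximable relations (Proposition~\ref{prop:CharUpperCoalgHom}). Hence it is a proximity relation between localized objects, and likewise for its inverse.

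I expect the main obstacle to be step (i), the converse direction that $\Psi T$ is localized: unlike the forward direction recorded in Lemma~\ref{prop:ContFCovLocalizedProxSuplatLocalized}, it is not handed to us, and the delicate point is that the conclusion of \eqref{eq:LocalizedGeneral} only yields $b \prox \medvee C$ rather than $b \leq \medvee C$, so the strength of $T$ together with the transitivity of $\prox$ must be invoked to repair the cover-membership $b \fcov_{\vee} D$. The complementary observation that isomorphisms are automatically Lawson approximable, read off from the $\UpperFunc$-coalgebra structure, is the other ingredient that makes the morphism side of essential surjectivity go through.
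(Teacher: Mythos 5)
Your proof is correct and follows the same decomposition as the paper: the embedding of $\ContFCovLoc$ into $\PxJLatLoc$ via Lemma~\ref{prop:ContFCovLocalizedProxSuplatLocalized} and Lemma~\ref{lem:EquivalenceOfLawson}, and essential surjectivity via the localizedness of $(S,\fcov_{\vee},\prox)$ together with the automatic Lawson approximability of isomorphisms. Two points of comparison are worth noting. First, where the paper simply cites Lemma~\ref{lem:EquivalenceLocalization} and Lemma~\ref{lem:ContFCovlocalized} for the localizedness of $(S,\fcov_{\vee},\prox)$, you correctly identify and fill the small gap that \eqref{eq:LocalizedGeneral} delivers only $b \prox \medvee C$ while the cover condition demands $b \leq \medvee D$; your repair via strength and the idempotency of $\prox$ is exactly what is needed. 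Second, for the fact that the canonical isomorphism is Lawson approximable, the paper performs a direct elementary verification ``as in Lemma~\ref{lem:IsoJoinPreserving}'', whereas you route through Proposition~\ref{prop:AlgUpper}, Corollary~\ref{cor:KZAlg}~\eqref{cor:KZAlg2}, and Proposition~\ref{prop:CharUpperCoalgHom}; this reuses the $\UpperFunc$-coalgebra machinery already established (indeed the same chain the paper uses in Theorem~\ref{thm:LocProxSuplattEquiLKLoc}) and is a perfectly valid, arguably cleaner, alternative to the hands-on computation.
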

\begin{proof}
  By Lemma \ref{prop:ContFCovLocalizedProxSuplatLocalized} and Lemma
  \ref{lem:EquivalenceOfLawson}, the embedding of $\SContFCov$ into
  $\SPxJLat$ restricts to an embedding of $\ContFCovLoc$ into
  $\PxJLatLoc$.  Moreover, for any localized strong proximity
  $\vee$-semilattice $(S,0,\vee, \prox)$, the strong continuous
  finitary cover $(S, \fcov_{\vee}, \prox)$ defined in
  \eqref{eq:CovFromPxSL} is localized by
  Lemma~\ref{lem:EquivalenceLocalization} and
  Lemma~\ref{lem:ContFCovlocalized}.
  As in Lemma \ref{lem:IsoJoinPreserving}, one can also show that every
  isomorphism in $\PxPos$ is Lawson approximable. 
  Thus, the embedding of $\ContFCovLoc$ into $\PxJLatLoc$ is essentially
  surjective.
\end{proof}

\subsection{Formal topologies}\label{sec:FTop}
We relate the theory of strong continuous finitary covers to that of
continuous lattices and locally compact locales in formal
topology~\cite{negri1998continuous}.

\begin{definition}
  \label{def:ContBCov}
  A \emph{basic cover} is a pair $(S, \cov)$ where
  $\cov$ is a relation between $S$ and $\Pow{S}$
  satisfying
  \begin{gather*}
    \frac{a \in U}{a \cov U} \; \text{(reflexivity)}
    \qquad
    \qquad
    \frac{a \cov U \quad U \cov V}{a \cov V} \;\text{(transitivity)}
  \end{gather*}
  where $U \cov V \defeqiv \forall a \in U \left( a \cov V \right)$.
  A basic cover $(S, \cov)$ is \emph{continuous} 
  if it is equipped with a relation $\wb \subseteq S \times S$
  satisfying
  \begin{equation}
    \label{eq:wb}
    a \cov \wb^{-}a
    \qquad
    \qquad
    \frac{b \mathrel{\wb} a \quad a \cov U}{\exists A \in
      \Fin{U}\;  b \cov A}.
    \end{equation}
\end{definition}

Among the various definitions of formal topology described in
\cite{ConvFTop}, we prefer to work with the one with a preorder.
\begin{definition}
  A \emph{formal topology} is a triple $(S,\cov,\leq)$ where
  $(S, \cov)$ is a basic cover and $(S, \leq)$ is a preorder
   satisfying
   \[
     \frac{a \leq b}{a \cov \{b\}} \; \text{($\leq$-left)}
     \qquad
     \qquad
     \frac{a \cov U \quad a \cov V}{a \cov U \downarrow_{\leq} V}
     \;\text{($\downset$-right)}
   \]
  where $U \downarrow_{\leq} V \defeql \downset_{\leq} U \cap
  \downset_{\leq} V$.
  A formal topology $(S, \cov, \leq)$ is \emph{locally compact} if 
  $(S, \cov)$ is a continuous basic cover.
\end{definition} 

A subset $U \subseteq S$ of a basic cover $(S, \cov)$ is  \emph{saturated} if 
 \[
   a \cov U \imp a \in U.
 \]
Negri \cite{negri1998continuous} showed that the collection of
saturated subsets of a continuous basic cover and that of a locally
compact formal topology are continuous lattice and locally compact
locale, respectively. She also showed that every continuous lattice
and locally compact locale can be represented in this way.%
\footnote{In Negri \cite{negri1998continuous}, continuous basic covers
and locally compact formal topologies are called \emph{locally Stone
infinitary preorders} and \emph{locally Stone formal topologies},
respectively.}

\begin{proposition}
  \label{prop:FromContFCovToBasicCover}
  Let $(S, \fcov, \fprox)$
  be a  strong continuous finitary cover. Define
  a relation ${\fcov_{\fprox}} \subseteq S \times \Pow{S}$ by
  \begin{equation}
    \label{eq:CovFromContFCov}
    a \mathrel{\fcov_{\fprox}}  U \defeqiv \forall b \fprox a \, \exists B
    \in \Fin{U}\left( b \ll_{\fcov}B \right).
  \end{equation}

  \begin{enumerate}
    \item \label{prop:FromContFCovToBasicCover1}
      The structure $(S, \fcov_{\fprox})$, 
      is a continuous basic cover with respect to $\fprox$.

    \item \label{prop:FromContFCovToBasicCover2}
      Let $\fproxeq$ be the reflexive closure of $\fprox$.  Then, $(S,
      \fcov,
      \fprox)$ is localized if and only if the triple $(S, \fcov_{\fprox},
      \fproxeq)$ is a locally compact formal topology.
  \end{enumerate}
\end{proposition}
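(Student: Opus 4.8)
The plan is to handle the two parts in turn, using throughout the idempotency of $\fprox$, the two equivalent descriptions of the way-below relation, $a \ll_{\fcov} B \iff \exists c\,(a \fprox c \fcov B) \iff \exists C\,(a \fcov C \amp C \fprox_{L} B)$, and the absorption laws ${\fcov \cutcomp \ll_{\fcov}} = {\ll_{\fcov}} = {\ll_{\fcov} \cutcomp \fcov}$ and ${\ll_{\fcov} \cutcomp \ll_{\fcov}} = {\ll_{\fcov}}$ already recorded for strong continuous finitary covers. For \ref{prop:FromContFCovToBasicCover1} I first check that $\fcov_{\fprox}$ is a basic cover. Reflexivity is immediate: if $a \in U$ and $b \fprox a$ then $b \fprox a \fcov \left\{ a \right\}$ gives $b \ll_{\fcov} \left\{ a \right\}$ with $\left\{ a \right\} \in \Fin{U}$. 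For transitivity, given $a \fcov_{\fprox} U$, $U \fcov_{\fprox} V$ and a witness $b \fprox a$, I interpolate $b \fprox b' \fprox a$, extract $B \in \Fin{U}$ with $b' \ll_{\fcov} B$, rewrite this as $b' \fcov C$ with $C \fprox_{L} B$, and for each $c \in C$ pick $\beta \in B \subseteq U$ with $c \fprox \beta$ and apply $\beta \fcov_{\fprox} V$ to $c \fprox \beta$ to get $D_{c} \in \Fin{V}$ with $c \ll_{\fcov} D_{c}$; setting $D = \bigcup_{c \in C} D_{c} \in \Fin{V}$ yields $C \ApproxExt{\ll_{\fcov}} D$, so $b' \fcov C \ApproxExt{\ll_{\fcov}} D$ gives $b' \mathrel{(\ll_{\fcov} \cutcomp \fcov)} D = b' \ll_{\fcov} D$, and $b \fprox b'$ upgrades this to $b \ll_{\fcov} D$. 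Continuity with $\wb = \fprox$ is then routine: $a \fcov_{\fprox} \downset_{\fprox} a$ follows by interpolation, and for the second clause, applying $a \fcov_{\fprox} U$ to $b \fprox a$ produces $B \in \Fin{U}$ with $b \ll_{\fcov} B$, after which $b \fcov_{\fprox} B$ holds because any $c \fprox b$ has $c \ll_{\fcov} B$ by idempotency.

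For \ref{prop:FromContFCovToBasicCover2} I observe that $\fproxeq$ is a preorder (transitivity being idempotency of $\fprox$) and that the rule $(\fproxeq\text{-left})$, i.e.\ $a \fproxeq b \imp a \fcov_{\fprox} \left\{ b \right\}$, holds unconditionally by interpolation; neither fact uses localizedness. Since $(S, \fcov_{\fprox})$ is a continuous basic cover by \ref{prop:FromContFCovToBasicCover1}, the associated formal topology is automatically locally compact once it is a formal topology at all. Hence the whole biconditional reduces to the claim that $(S, \fcov, \fprox)$ is localized if and only if the distributivity rule $(\downset\text{-right})$ holds for $(S, \fcov_{\fprox}, \fproxeq)$.

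To establish this equivalence I use the convenient reformulation of localizedness in Lemma \ref{lem:ContFCovlocalized} together with the easy observation that $a \fcov A \imp a \fcov_{\fprox} A$ (every $b \fprox a$ satisfies $b \fprox a \fcov A$, hence $b \ll_{\fcov} A$). For the forward direction, from $a \fcov_{\fprox} U$, $a \fcov_{\fprox} V$ and a witness $b \fprox a$, I interpolate $b \fprox b' \fprox b'' \fprox a$, obtain $B_{0} \in \Fin{U}$, $V_{0} \in \Fin{V}$ with $b'' \fcov P \fprox_{L} B_{0}$ and $b'' \fcov Q \fprox_{L} V_{0}$, and apply Lemma \ref{lem:ContFCovlocalized} to $b' \fprox b'' \fcov P$ and $b'' \fcov Q$ to get $R \in \Fin{P \downarrow_{\fprox} Q}$ with $b' \fcov R$; chasing each $r \in R$ through $P \fprox_{L} B_{0}$ and $Q \fprox_{L} V_{0}$ and collapsing $\fprox$-composites by idempotency shows $R \in \Fin{U \downarrow_{\fproxeq} V}$, and $b \fprox b' \fcov R$ gives $b \ll_{\fcov} R$, which is exactly what $a \fcov_{\fprox} (U \downarrow_{\fproxeq} V)$ requires. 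For the converse, from $a \fcov A$ and $a \fcov B$ I deduce $a \fcov_{\fprox} A$ and $a \fcov_{\fprox} B$, apply $(\downset\text{-right})$ to obtain $a \fcov_{\fprox} (A \downarrow_{\fproxeq} B)$, and unfold at a witness $b \fprox a$ to get $b \fcov C$ with $C \fprox_{L} W$ for some $W \in \Fin{A \downarrow_{\fproxeq} B}$; each $c \in C$ then lands in $A \downarrow_{\fprox} B$ by idempotency, delivering the Lemma \ref{lem:ContFCovlocalized} form of localizedness.

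The main obstacle I anticipate is bookkeeping the interpolations in the forward direction of this equivalence. Lemma \ref{lem:ContFCovlocalized} demands genuine $\fprox$-steps on both its hypothesis and its output, so I must insert exactly enough interpolants (the chain $b \fprox b' \fprox b'' \fprox a$) to feed a real $\fprox$ into the lemma while keeping one spare step to convert the resulting $\fcov$ back into the way-below relation $\ll_{\fcov}$ required by the definition of $\fcov_{\fprox}$. Tracking which bound lands in $\downset_{\fprox}$ versus $\downset_{\fproxeq}$, and verifying that the mixed $\fprox$/$\fproxeq$ composites collapse correctly under idempotency, is the only genuinely delicate part; everything else is a direct application of the absorption laws and the two forms of $\ll_{\fcov}$.
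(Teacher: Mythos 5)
Your proof is correct and follows essentially the same route as the paper's: part (1) is verified directly (the paper leaves it as "straightforward"), and part (2) is reduced to the equivalence of localizedness with ($\downset$-right) via Lemma \ref{lem:ContFCovlocalized}, with the same triple interpolation $b \fprox b' \fprox b'' \fprox a$ in the forward direction and the same use of ${\fcov} \subseteq {\fcov_{\fprox}}$ in the converse. The bookkeeping of $\fprox$ versus $\fproxeq$ composites that you flag as delicate is handled correctly.
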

\begin{proof}
  \noindent
  \ref{prop:FromContFCovToBasicCover1}.  This is straightforward to check.
  \smallskip
   
  \noindent
  \ref{prop:FromContFCovToBasicCover2}.  
  Suppose that $(S, \fcov, \fprox)$ is localized.
  Since ${\fprox \circ \fproxeq} \subseteq {\fprox}$, 
  $(S, \fcov_{\fprox}, \fproxeq)$ satisfies ($\leq$-left).
  As for ($\downset$-right),
  suppose $a
  \mathrel{\fcov_{\fprox}} U$ and $a \mathrel{\fcov_{\fprox}} V$.
  Let $b \fprox a$, and choose $c,d \in S$ such that $b \fprox c \fprox
  d \fprox a$.
  Then, there exist $A \in \Fin{U}$ and $B \in \Fin{V}$ such that $d
  \ll_{\fcov} A$ and $d \ll_{\fcov} B$. 
  Thus, there exist $A' \fprox_{L} A$ and $B' \fprox_{L} B$
  such that $d \fcov A'$ and $d \fcov B'$.
  By Lemma \ref{lem:ContFCovlocalized}, there
  exists $C \in \Fin{A' \downarrow_{\fprox} B'} \subseteq 
  \Fin{A \downarrow_\fprox B}$ such that $c \fcov C$. Then $b \ll_{\fcov}
  C$, and so $a \mathrel{\fcov_{\fprox}} U \downarrow_{\fproxeq} V$.
  Conversely, suppose that $\fcov_{\fprox}$ satisfies
  ($\leq$-left) and ($\downset$-right) with respect to $\fproxeq$. Let
  $b \fprox
  a \fcov A$ and $a \fcov B$. Since ${\fcov} \subseteq {\fcov_{\fprox}}$, we
  have $a \mathrel{\fcov_{\fprox}} A
  \downarrow_{\fproxeq} B$ by ($\downset$-right), so there exists $C \in \Fin{A
    \downarrow_{\fproxeq} B}$ such that
  $b \ll_{\fcov} C$. Then, there exists $C' \in  \Fin{A
    \downarrow_{\fprox} B}$ such that $b \fcov C'$.
\end{proof}
Next, we recall the notions of morphism between basic covers and formal
topologies.
\begin{definition}
  Let $(S, \cov)$ and $(S', \cov')$ be basic covers.
  A relation $r \subseteq S \times S'$ is a \emph{basic cover
  map} from $(S, \cov)$ to $(S', \cov')$ if 
  \begin{enumerate}[({BCM}1)]
    \item\label{BCM1} $a \cov r^{-} b \imp a \mathrel{r} b$,
    \item\label{BCM2} $b \cov' V \imp r^{-} b  \cov r^{-}V$.
  \end{enumerate}
Basic covers and basic cover maps form a category $\BCov$. 
The identity $\id_{(S,\cov)}$ on a basic cover
$(S,\cov)$ is defined by 
\[
  a \mathrel{\id_{(S,\cov)}} b \defeqiv a \cov \left\{ b
  \right\},
\]
and the composition $s * r$
of basic cover maps $r \colon S \to S'$ and
$s \colon S' \to S''$  is defined by 
\[
  a \mathrel{(s * r)} c \defeqiv a \cov r^{-}s^{-} c.
\]

  A \emph{formal topology map} between formal topologies
   $(S, \cov, \leq)$ and $(S', \cov', \leq')$ 
  is a basic cover map $r \colon (S, \cov) \to (S', \cov')$
  such that 
  \begin{enumerate}[({FTM}1)]
    \item\label{FTM1} $S \cov r^{-}S'$,
    \item\label{FTM2} $r^{-} a \downarrow_{\leq} r^{-} b \cov r^{-}(a
      \downarrow_{\leq'} b)$.
  \end{enumerate}
Formal topologies and formal topology maps form a subcategory
$\FTop$ of $\BCov$.
\end{definition}
In what follows, we extend
Proposition~\ref{prop:FromContFCovToBasicCover} to morphisms, i.e., we
show that there is a bijective correspondence between
join-approximable maps and basic cover maps, or in the case of localized
strong continuous finitary covers, between
proximity maps and formal topology maps (cf.\
Proposition~\ref{prop:BijectionBCovMapWbrel}).
The correspondence is mediated by an auxiliary notion given in
the next lemma.
\begin{lemma}
  \label{lem:SingleVersionJoinPresAppMap}
  Let $(S, \fcov, \fprox)$ and ${(S'\fcov', \fprox')}$
  be strong continuous finitary covers.
  \begin{enumerate}
    \item \label{lem:SingleVersionJoinPresAppMapA}
    There exists a bijective correspondence between
    join-approximable maps from $S$ to $S'$ and relations $r
    \subseteq S \times S'$ satisfying
    \begin{enumerate}[label=(\alph*)]
      \item\label{lem:SingleVersionJoinPresAppMap1}
        $a \mathrel{r} b \leftrightarrow \exists a'  \in S\left(
        a \fprox a' \mathrel{r} b \right)$,

      \item\label{lem:SingleVersionJoinPresAppMap2}
        $a \fcov A \subseteq r^{-} b  \imp  a \mathrel{r} b$,

      \item\label{lem:SingleVersionJoinPresAppMap3}
        $a \mathrel{r} b \fcov' B \imp \exists A \in \Fin{S} \left(
        a \fcov A \mathrel{r_{L}} B \right)$,

      \item\label{lem:SingleVersionJoinPresAppMap4}
        $a \mathrel{r} b \fprox' b' \imp a \mathrel{r} b'$,

      \item\label{lem:SingleVersionJoinPresAppMap5}
        $a \mathrel{r} b \imp \exists A \in \Fin{S} \, \exists B \in
        \Fin{S'} \left( a \fcov A \mathrel{r_{L}}  B \fprox_{L}' \left\{
        b \right\}\right)$.
    \end{enumerate}

    \item \label{lem:SingleVersionJoinPresAppMapB}
    If $(S, \fcov, \fprox)$ and  $(S', \fcov', \fprox')$
    are localized, the above bijection restricts to a bijection between
    proximity maps and relations between $S$ and $S'$
    satisfying
    \ref{lem:SingleVersionJoinPresAppMap1}--\ref{lem:SingleVersionJoinPresAppMap5}
    above and the following properties:
    \begin{enumerate} [resume*]
      \item\label{lem:SingleVersionJoinPresAppMap6}
         $a \fprox a' \imp \exists A \in \Fin{S} \, \exists B \in
          \Fin{S'} (a \fcov A \mathrel{r_{L}} B)$,

      \item\label{lem:SingleVersionJoinPresAppMap7}
        $a \fprox a' \mathrel{r} b \amp a' \mathrel{r} c \imp
         \exists A \in \Fin{S} \, \exists D \in \Fin{b \downarrow_{\fprox'} c}
         \left( a \fcov A \mathrel{r}_{L} D \right)$.
    \end{enumerate}
  \end{enumerate}
\end{lemma}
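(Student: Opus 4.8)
The plan is to exhibit explicit mutually inverse assignments between join-approximable maps $\rho \subseteq S \times \Fin{S'}$ and relations $r \subseteq S \times S'$ satisfying \ref{lem:SingleVersionJoinPresAppMap1}--\ref{lem:SingleVersionJoinPresAppMap5}. In one direction, send $\rho$ to the relation $r_{\rho}$ defined by $a \mathrel{r_{\rho}} b \defeqiv \exists a' \in S \left( a \fprox a' \amp a' \mathrel{\rho} \left\{ b \right\} \right)$; in the other, send $r$ to the relation $\rho_{r}$ defined by $a \mathrel{\rho_{r}} B \defeqiv \exists A \in \Fin{S} \left( a \fcov A \amp A \mathrel{r_{L}} B \right)$. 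The prefixed $\fprox$ in $r_{\rho}$ is essential: restricting $\rho$ to singletons is \emph{not} rounded on the left in the single-generator sense demanded by \ref{lem:SingleVersionJoinPresAppMap1}, whereas precomposing with $\fprox$ and using its idempotency yields both directions of \ref{lem:SingleVersionJoinPresAppMap1} at once. Before verifying anything else I would record the two absorption identities that do most of the work: since $\rho$ absorbs $\ll_{\fcov}$ and $\ll_{\fcov'}$ and $\fcov \cutcomp \ll_{\fcov} = \ll_{\fcov} = \ll_{\fcov} \cutcomp \fcov$ (and likewise on $S'$), associativity of the cut composition gives $\rho \cutcomp \fcov = \rho$ and $\fcov' \cutcomp \rho = \rho$. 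I would also repeatedly use the strong-cover axiom \eqref{eq:ContFCov} in the form converting a cover-then-proximity $a \fcov A \fprox_{L} A'$ into a proximity-then-cover $a \fprox \tilde a \fcov A'$; this is the device that manufactures the single $\fprox$-predecessors needed throughout.

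With these tools, checking that $r_{\rho}$ satisfies \ref{lem:SingleVersionJoinPresAppMap1}--\ref{lem:SingleVersionJoinPresAppMap5} is systematic. Condition \ref{lem:SingleVersionJoinPresAppMap1} follows from idempotency of $\fprox$ together with the left absorption; \ref{lem:SingleVersionJoinPresAppMap4} follows because $b \fprox' b'$ gives $\left\{ b \right\} \ApproxExt{\ll_{\fcov'}} \left\{ b' \right\}$ and $\rho$ absorbs $\ll_{\fcov'}$ on the codomain. For \ref{lem:SingleVersionJoinPresAppMap2} I would take $a \fcov A \subseteq r_{\rho}^{-}b$, rewrite it as $a \fcov A \fprox_{L} A''$ with $A'' \ApproxExt{\rho} \left\{ b \right\}$, apply \eqref{eq:ContFCov} to get $a \fprox \tilde a \fcov A''$, and close using $\rho \cutcomp \fcov = \rho$. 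For \ref{lem:SingleVersionJoinPresAppMap3} and \ref{lem:SingleVersionJoinPresAppMap5} I would first move the cover (resp.\ the proximity) on the $S'$-side through $\rho$ using $\fcov' \cutcomp \rho = \rho$ (resp.\ the codomain absorption of $\ll_{\fcov'}$), then apply join-preservation \eqref{def:approximableJoin} followed by \eqref{eq:ContFCov} on the domain side to rebuild the required cover $a \fcov A$ with $A \mathrel{(r_{\rho})_{L}} B$.

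Conversely, to see that $\rho_{r}$ is a join-approximable map, join-preservation \eqref{def:approximableJoin} and the upper property are immediate from the definition, so the real work is the two identity equalities $\rho_{r} \cutcomp \ll_{\fcov} = \rho_{r} = \ll_{\fcov'} \cutcomp \rho_{r}$. The nontrivial inclusions follow the same pattern: given $a \mathrel{\ll_{\fcov}} C \ApproxExt{\rho_{r}} B$, expand $\ll_{\fcov}$ as $a \fcov D \fprox_{L} C$, run each $d \fprox c \fcov A_{c}$ through \eqref{eq:ContFCov}, and apply \ref{lem:SingleVersionJoinPresAppMap1} together with the transitivity (cover-stability) of $\rho_{r}$; the codomain identity is handled symmetrically using \ref{lem:SingleVersionJoinPresAppMap3}, \ref{lem:SingleVersionJoinPresAppMap4}, and \ref{lem:SingleVersionJoinPresAppMap5}. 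Finally, the round trips $r_{\rho_{r}} = r$ and $\rho_{r_{\rho}} = \rho$ reduce, after unfolding, exactly to condition \ref{lem:SingleVersionJoinPresAppMap2} (for the first) and to join-preservation combined with \eqref{eq:ContFCov} and $\rho \cutcomp \fcov = \rho$ (for the second). I expect the main obstacle to be precisely this bookkeeping around single $\fprox$-predecessors: every point where a relation between generators must be recovered from a relation to finite subsets forces an application of the strong axiom \eqref{eq:ContFCov}, and keeping the covers, proximities, and their extensions $(\cdot)_{L}$ aligned is where the argument is most error-prone.

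For part \ref{lem:SingleVersionJoinPresAppMapB}, the plan is to show that, under the bijection of part \ref{lem:SingleVersionJoinPresAppMapA}, $\rho$ is Lawson approximable exactly when $r_{\rho}$ satisfies \ref{lem:SingleVersionJoinPresAppMap6} and \ref{lem:SingleVersionJoinPresAppMap7}. Since both covers are assumed localized, I would invoke Remark~\ref{rem:ConFCovLawson} to replace condition \ref{def:ConFCovLawson2} by its simplified form \eqref{eq:LawsonLocalized}, and Lemma~\ref{lem:ContFCovlocalized} to manage the meets $b \downarrow_{\fprox'} c$. Then \ref{lem:SingleVersionJoinPresAppMap6} matches \ref{def:ConFCovLawson1} (totality) and \ref{lem:SingleVersionJoinPresAppMap7} matches \eqref{eq:LawsonLocalized}, the translation in each direction being the same singleton-versus-finite-subset dictionary as in part \ref{lem:SingleVersionJoinPresAppMapA}, combined once more with \eqref{eq:ContFCov}; alternatively, one can route the equivalence through Lemma~\ref{lem:EquivalenceOfLawson} at the level of the associated strong proximity $\vee$-semilattices.
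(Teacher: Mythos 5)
Your proposal is correct and follows essentially the same route as the paper's proof: the paper sends a join-approximable map $r$ to its plain singleton restriction $r^{\dagger}$ (defined by $a \mathrel{r^{\dagger}} b \defeqiv a \mathrel{r} \left\{ b \right\}$), sends a relation $r$ back to $r^{*}$ defined exactly as your $\rho_{r}$, notes as you do that the round trip $(r^{*})^{\dagger} = r$ is where condition \ref{lem:SingleVersionJoinPresAppMap2} is used, and handles part \ref{lem:SingleVersionJoinPresAppMapB} by the same reduction to \eqref{eq:LawsonLocalized} via Remark~\ref{rem:ConFCovLawson}. One correction: your claim that the prefixed $\fprox$ in $r_{\rho}$ is essential is mistaken --- the absorption identities ${\rho \cutcomp \ll_{\fcov}} = \rho$ and ${\rho \cutcomp \fcov} = \rho$ (which you yourself derive) already yield both directions of \ref{lem:SingleVersionJoinPresAppMap1} for the plain singleton restriction, so your $r_{\rho}$ coincides extensionally with the paper's $r^{\dagger}$ and the discrepancy is harmless.
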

\begin{proof}
  \ref{lem:SingleVersionJoinPresAppMapA}.
   If $r \colon (S, \fcov, \fprox) \to (S', \fcov', \fprox')$ is a
   join-approximable  map, then the relation $r^{\dagger}
   \subseteq S \times S'$ defined by
  \begin{equation}
    \label{def:dagger}
    a \mathrel{r^{\dagger}} b \defeqiv a \mathrel{r} \left\{ b
    \right\}
  \end{equation}
  satisfies
  \ref{lem:SingleVersionJoinPresAppMap1}--\ref{lem:SingleVersionJoinPresAppMap5}.
  Conversely, given a relation $r \subseteq S \times S'$ satisfying
  \ref{lem:SingleVersionJoinPresAppMap1}--\ref{lem:SingleVersionJoinPresAppMap5},
  define a relation $r^{*} \subseteq S \times \Fin{S'}$ by
  \begin{equation}
    \label{def:star}
    a \mathrel{r^{*}} B 
    \defeqiv 
    \exists A \in \Fin{S} \left( a \fcov A \mathrel{r_{L}} B \right).
  \end{equation}
  Clearly, $r^{*}$ satisfies
  \eqref{def:approximableJoin}. 
  Moreover, one can easily show 
  ${\mathrel{r^{*}} \cutcomp {\ll_{\fcov}}} = r^{*}= {{\ll_{\fcov'}}
  \cutcomp \mathrel{r^{*}}}$ using 
  \ref{lem:SingleVersionJoinPresAppMap1},
  \ref{lem:SingleVersionJoinPresAppMap3},
  \ref{lem:SingleVersionJoinPresAppMap4},
  and 
  \ref{lem:SingleVersionJoinPresAppMap5}.
  Thus $r^{*}$ is a join-approximable
  map from $(S, \fcov, \fprox)$ to $(S', \fcov', \fprox')$.
  Then, it is straightforward to check that above correspondence is
  bijective (note that the fact $(r^{*})^{\dagger} = r$ requires~\ref{lem:SingleVersionJoinPresAppMap2}).
  \smallskip

  \noindent\ref{lem:SingleVersionJoinPresAppMapB}.
  Suppose that $(S, \fcov, \fprox)$ and $(S',
  \fcov', \fprox')$ are localized. First, if
  $r \colon {(S, \fcov, \fprox)} \to {(S', \fcov', \fprox')}$ is a
  proximity map, then
  the relation $r^{\dagger}$ given by \eqref{def:dagger}
  satisfies~\ref{lem:SingleVersionJoinPresAppMap6}
  and~\ref{lem:SingleVersionJoinPresAppMap7} by the fact that $r$ is
  Lawson and join-approximable (cf.\ Remark~\ref{rem:ConFCovLawson}).
  Conversely, let $r \subseteq S \times S'$ be a relation satisfying
  \ref{lem:SingleVersionJoinPresAppMap1}--\ref{lem:SingleVersionJoinPresAppMap7}.
  Then, the relation~$r^{*}$
  given by \eqref{def:star} satisfies the first property of
  Lawson approximable map by~\ref{lem:SingleVersionJoinPresAppMap6}. For the second property, it
  suffices to show \eqref{eq:LawsonLocalized}. Suppose
  $a \fprox a' \mathrel{r^{*}} \left\{ b \right\}$ and $a'
  \mathrel{r^{*}} \left\{ c \right\}$.
  Then, $a' \mathrel{r} b$ and $a' \mathrel{r} c$ by
  \ref{lem:SingleVersionJoinPresAppMap2}. Thus,
  there exists $D \in {\Fin{b \downarrow_{\fprox'}c}}$ such that
  $a \mathrel{r^{*}} D$ by~\ref{lem:SingleVersionJoinPresAppMap7}.
\end{proof}

\begin{proposition}
  \label{prop:BijectionBCovMapWbrel}
  Let $(S, \fcov, \fprox)$ and ${(S'\fcov', \fprox')}$ be strong
  continuous finitary covers.

  \begin{enumerate}
    \item\label{prop:BijectionBCovMapWbrel1} There exists a bijective
      correspondence between join-approximable maps from
      $(S, \fcov, \fprox)$ and $(S', \fcov', \fprox')$ and basic cover
      maps from $(S, \fcov_{\fprox})$ to $(S', \fcov'_{\fprox'})$.

    \item\label{prop:BijectionBCovMapWbrel2} If $(S, \fcov, \fprox)$ and
      $(S'\fcov', \fprox')$ are localized, then the above bijection restricts
      to a bijection between proximity maps  and formal topology maps.
  \end{enumerate}
\end{proposition}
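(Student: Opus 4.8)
The plan is to route everything through the single-conclusion relations of Lemma~\ref{lem:SingleVersionJoinPresAppMap}, which already identify join-approximable maps (resp.\ proximity maps, in the localized case) with the relations $r \subseteq S \times S'$ satisfying \ref{lem:SingleVersionJoinPresAppMap1}--\ref{lem:SingleVersionJoinPresAppMap5} (resp.\ \ref{lem:SingleVersionJoinPresAppMap1}--\ref{lem:SingleVersionJoinPresAppMap7}). It therefore suffices to set up a bijection between these single-conclusion relations and basic cover maps $(S,\fcov_{\fprox}) \to (S',\fcov'_{\fprox'})$, and to check for part~\ref{prop:BijectionBCovMapWbrel2} that it cuts down to formal topology maps. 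The desired correspondence is then the composite of this bijection with that of Lemma~\ref{lem:SingleVersionJoinPresAppMap}, and it extends Proposition~\ref{prop:FromContFCovToBasicCover} from objects to morphisms.

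First I would record the shape of a single-conclusion relation. Clause \ref{lem:SingleVersionJoinPresAppMap1} says exactly that each $r^{-}b$ is a $\fprox$-rounded down-set (downward closed under $\fprox$, with every member dominated via $\fprox$ by a member), so in general $r^{-}b$ need \emph{not} be $\fcov_{\fprox}$-saturated. This dictates the direction of the construction: to a single-conclusion relation $r$ I would associate the basic cover map $\rho$ given by $a \mathrel{\rho} b \defeqiv a \mathrel{\fcov_{\fprox}} r^{-}b$, so that $\rho^{-}b$ is the $\fcov_{\fprox}$-saturation $\mathrm{Sat}(r^{-}b)$; conversely, to a basic cover map $\rho$ (for which each $\rho^{-}b$ is already saturated) I would associate the rounded-down-set relation $a \mathrel{r} b \defeqiv \exists a'\,(a \fprox a' \mathrel{\rho} b)$. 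The two round trips are governed by the identities $\mathrm{Sat}(\downset_{\fprox}U) = \mathrm{Sat}(U)$ and $\mathrm{Sat}(\mathrm{Sat}(U)) = \mathrm{Sat}(U)$, valid in any continuous basic cover, which I would extract from the continuity presentation \eqref{eq:wb} with $\wb = \fprox$ together with the idempotency (hence interpolation and transitivity) of $\fprox$.

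The verifications then divide along the two cover axioms. Axiom \ref{BCM1} is immediate for $\rho$, since $\rho^{-}b$ is saturated and $\fcov_{\fprox}$ is transitive. Axiom \ref{BCM2} is the continuity of the pullback, and this is where the finitary clauses do the work: unfolding $b \fcov'_{\fprox'} V$ and reducing both covers to their $\fprox$-approximants by means of the two forms
\[
  a \ll_{\fcov} B \iff \exists c\,(a \fprox c \fcov B) \iff \exists C\,(a \fcov C \fprox_{L} B),
\]
I would feed the target cover through the finitary pullback clause \ref{lem:SingleVersionJoinPresAppMap3} and the factorization clause \ref{lem:SingleVersionJoinPresAppMap5}, reassemble a source cover using the locality clause \ref{lem:SingleVersionJoinPresAppMap2}, and let the roundedness clause \ref{lem:SingleVersionJoinPresAppMap4} handle the passage $b \fprox' b'$ on the target. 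I expect this direction, together with the matching check that a basic cover map yields back a relation obeying \ref{lem:SingleVersionJoinPresAppMap1}--\ref{lem:SingleVersionJoinPresAppMap5}, to be the main obstacle: every implication trades the infinitary cover $\fcov_{\fprox}$, which quantifies over finite subsets of an arbitrary set, against purely finitary data, and each such trade leans on the continuity of $\fcov_{\fprox}$ and on the interpolation supplied by idempotency of $\fprox$.

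For part~\ref{prop:BijectionBCovMapWbrel2} I would check that, once $r$ is already a basic cover map, the two extra single-conclusion clauses match the two extra formal-topology axioms. Clause \ref{lem:SingleVersionJoinPresAppMap6} unfolds directly to the totality axiom \ref{FTM1}, as $S \fcov_{\fprox} r^{-}S'$ just says that every element with a $\fprox$-approximant is covered by points in the domain of $r$. For \ref{FTM2} I would use that in the localized case the meet in $\fcov_{\fprox}$ (and in $\fcov'_{\fprox'}$) is computed by $\downset_{\fprox}$ (resp.\ $\downset_{\fprox'}$), exactly as in the proof of Proposition~\ref{prop:FromContFCovToBasicCover}; the axiom $r^{-}a \downarrow_{\fproxeq} r^{-}b \mathrel{\fcov_{\fprox}} r^{-}(a \downarrow_{\fproxeq'} b)$ then reduces, after passing to $\fprox$-approximants, to the simplified Lawson condition \eqref{eq:LawsonLocalized}, which by Remark~\ref{rem:ConFCovLawson} is equivalent to clause \ref{lem:SingleVersionJoinPresAppMap7} for join-preserving maps. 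With these matchings in hand, the restricted bijection of part~\ref{prop:BijectionBCovMapWbrel2} follows from that of part~\ref{prop:BijectionBCovMapWbrel1}.
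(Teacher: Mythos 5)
Your proposal follows essentially the same route as the paper: the bijection is set up between basic cover maps and the single-conclusion relations of Lemma~\ref{lem:SingleVersionJoinPresAppMap}, using exactly the paper's two constructions ($a \mathrel{\widebreve{r}} b \defeqiv a \fcov_{\fprox} r^{-}b$ and $a \mathrel{r^{\ddagger}} b \defeqiv a \in \downset_{\fprox} r^{-}b$), with the round trips resting on \eqref{eq:SaturationDownset} together with clauses \ref{lem:SingleVersionJoinPresAppMap1} and \ref{lem:SingleVersionJoinPresAppMap2}, and with \ref{lem:SingleVersionJoinPresAppMap6}, \ref{lem:SingleVersionJoinPresAppMap7} matched to \ref{FTM1}, \ref{FTM2} in the localized case. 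The outline is correct and matches the paper's proof.
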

\begin{proof}
  \noindent \ref{prop:BijectionBCovMapWbrel1}.
  We establish a bijection between basic cover
  maps from $(S, \fcov_{\fprox})$ to $(S', \fcov'_{\fprox'})$
  and relations between $S$ and $S'$ satisfying
  \ref{lem:SingleVersionJoinPresAppMap1}--\ref{lem:SingleVersionJoinPresAppMap5}
  in Lemma \ref{lem:SingleVersionJoinPresAppMap}.
  In what follows, we take Proposition~\ref{prop:FromContFCovToBasicCover} as given.

  First, for a relation $r \subseteq S \times S'$ satisfying 
  \ref{lem:SingleVersionJoinPresAppMap1}--\ref{lem:SingleVersionJoinPresAppMap5}, 
  define $\widebreve{r} \subseteq S \times S'$ by 
  \begin{equation}
    \label{def:breve}
    a \mathrel{\widebreve{r}} b \defeqiv a \mathrel{\fcov_{\fprox}}
    r^{-} b.
  \end{equation}
  We show that $\widebreve{r}$
  satisfies \ref{BCM1} and \ref{BCM2}.
  \smallskip
  
  \noindent\ref{BCM1} This follows from (transitivity) of $\fcov_{\fprox}$.
  \smallskip

  \noindent \ref{BCM2}  Suppose $b \mathrel{\fcov'_{\fprox'}} V$ and 
  $a \mathrel{\widebreve{r}} b$. Since $a \mathrel{\fcov_{\fprox}} r^{-}b$, 
  we may assume $a \mathrel{r} b$ by (transitivity).
  By~\ref{lem:SingleVersionJoinPresAppMap5},
  there exists $A \in \Fin{S}$ and $B \in \Fin{S'}$ such that $a \fcov A
  \mathrel{r_{L}} B \fprox_{L}' \left\{ b \right\}$.
  Since $b \fcov'_{\fprox'} V$, there exist $B',B'' \in \Fin{S'}$
  such that $B \ApproxExt{\fcov'} B' \fprox_{L}' B'' \subseteq V$.
  By \ref{lem:SingleVersionJoinPresAppMap3}, there exists $A' \in
  \Fin{S}$ such that $A \ApproxExt{\fcov} A' \mathrel{r_{L}} B'$,
  and so $A' \mathrel{r_{L}} B''$ by \ref{lem:SingleVersionJoinPresAppMap4}.
  Then, $a \fcov A' \in \Fin{\widebreve{r}^{-}V}$ so that $a
  \fcov_{\fprox} \widebreve{r}^{-}V$. Hence, $\widebreve{r}^{-}b
  \fcov_{\fprox} \widebreve{r}^{-}V$.

  Conversely, for a basic cover map $r \colon (S, \fcov_{\fprox}) \to
  (S', \fcov'_{\fprox'})$, define a relation $r^{\ddagger} \subseteq S
  \times S'$ by
  \begin{equation}
    \label{def:ddagger}
    a \mathrel{r^{\ddagger}} b \defeqiv a \in \downset_{\fprox}
    r^{-}b.
  \end{equation}
  We show that $r^{\ddagger}$ satisfies 
  \ref{lem:SingleVersionJoinPresAppMap1}--\ref{lem:SingleVersionJoinPresAppMap5}
  in Lemma \ref{lem:SingleVersionJoinPresAppMap}. 
  \smallskip

  \noindent\ref{lem:SingleVersionJoinPresAppMap1}
  This follows from the idempotency of $\fprox$.
  \smallskip

  \noindent\ref{lem:SingleVersionJoinPresAppMap2}
  Let $a \fcov A \subseteq (r^{\ddagger})^{-}b$. Then, there exists 
  $A' \in \Fin{r^{-}b}$ such that
  $A \fprox_{L} A'$. Since $S$ is strong, there exists $a'$ such that
  $a \fprox a' \fcov A'$. Then, $a' \fcov_{\fprox} A'$ and so $a'
  \mathrel{r} b$ by \ref{BCM1}. Hence $a \mathrel{r^{\ddagger}} b$. 
  \smallskip

  \noindent\ref{lem:SingleVersionJoinPresAppMap3}
  Let $a \mathrel{r^{\ddagger}} b \fcov' B$.
  Then, there exists $a'$ such that $a \fprox a' \mathrel{r} b$.
  Since $b \fcov'_{\fprox'} B$, we have $r^{-}b \fcov_{\fprox} r^{-}B$
  by \ref{BCM2}, so there exists $A \in \Fin{r^{-}B}$ such that 
  $a \ll_{\fcov} A$. Thus, there exists $A' \fprox_{L} A$
  such that $a \fcov A'$ and $A' \mathrel{(r^{\ddagger})_{L}} B$.
  \smallskip

  \noindent\ref{lem:SingleVersionJoinPresAppMap4}
  Let $a \mathrel{r^{\ddagger}} b \fprox' b'$. Then, there exists $a'$
  such that $a \fprox a' \mathrel{r} b$. Since $b \fprox' b'$ implies
  $b \fcov'_{\fprox'} \left\{ b' \right\}$, we have $r^{-}b \fcov_{\fprox}
  r^{-} b'$ by \ref{BCM2}. Then $a' \mathrel{r} b'$ by \ref{BCM1}
  so that $a \mathrel{r^{\ddagger}} b'$.
  \smallskip

  \noindent\ref{lem:SingleVersionJoinPresAppMap5}
  Let $a \mathrel{r^{\ddagger}} b$. Then, there exists $a'$ such that
  $a \fprox a' \mathrel{r} b$. Since 
  $b \fcov'_{\fprox'} \downset_{\fprox'} b$, we have $a' \fcov_{\fprox}
  r^{-} \downset_{\fprox'} b$ by \ref{BCM2}. Thus, there exists $A \in
  \Fin{r^{-}\downset_{\fprox'} b}$ such that $a \ll_{\fcov} A$, and so
  there exist $A' \in \Fin{S}$ and $B \fprox_{L}' \left\{ b \right\}$
  such that $a \fcov A' \fprox_L A$ and $A \mathrel{r_L} B$.
  Then, $A' \mathrel{(r^{\ddagger})_L} B$.
  \smallskip
 
  Next, we show that the above correspondence is bijective.
  First, note that 
  \begin{equation}
    \label{eq:SaturationDownset}
    a \fcov_{\fprox} U \iff a \fcov_{\fprox} \downset_{\fprox} U 
  \end{equation}
  for each $a \in S$ and $U \subseteq S$. Thus, for any
  basic cover map $r \colon {(S, \fcov_{\fprox})} \to
  {(S', \fcov'_{\fprox'})}$, we have
  \begin{align*}
    a \mathrel{\widebreve{r^{\ddagger}}} b 
    &\iff 
    a \fcov_{\fprox} (r^{\ddagger})^{-} b\\
    &\iff 
    a \fcov_{\fprox} r^{-} b
    && \text{(by \eqref{eq:SaturationDownset})} \\
    &\iff 
    a \mathrel{r} b
    && \text{(by \ref{BCM1})}.
  \end{align*}
  Conversely, for any 
relation $r \subseteq S \times S'$ satisfying 
  \ref{lem:SingleVersionJoinPresAppMap1}--\ref{lem:SingleVersionJoinPresAppMap5},
  we have
  \begin{align*}
    a \mathrel{(\widebreve{r})^{\ddagger}} b 
    &\iff
    \exists a' \in S \left( a \fprox a' \fcov_{\fprox} r^{-}b \right)\\
    &\iff
    \exists A \in \Fin{r^{-}b} \, a \ll_{\fcov} A
    &&\text{(by ${\fcov} \subseteq {\fcov_{\fprox}}$)}\\
    &\iff
    \exists A \in \Fin{r^{-}b} \, a \fcov A
    &&\text{(by \ref{lem:SingleVersionJoinPresAppMap1})}\\
    &\iff
    a \mathrel{r} b
    &&\text{(by \ref{lem:SingleVersionJoinPresAppMap2})}.
  \end{align*}

  \noindent \ref{prop:BijectionBCovMapWbrel2}.
  Suppose that $(S, \fcov, \fprox)$ and $(S', \fcov', \fprox')$ 
  are localized. First, let $r \subseteq S \times S'$ be a relation satisfying
  \ref{lem:SingleVersionJoinPresAppMap1}--\ref{lem:SingleVersionJoinPresAppMap7}
  in Lemma \ref{lem:SingleVersionJoinPresAppMap}.
  We must show that $\widebreve{r}$ given by~\eqref{def:breve} satisfies 
  \ref{FTM1} and \ref{FTM2}.
  \smallskip

  \noindent \ref{FTM1} Let $a \in S$ and $a' \fprox a$. Choose $a''
  \in S$ such that $a' \fprox a'' \fprox a$. 
  By~\ref{lem:SingleVersionJoinPresAppMap6}, there exist $A \in
  \Fin{S}$ and $B \in \Fin{S'}$ such that $a'' \fcov A \mathrel{r_{L}} B$.
  Then, $a' \ll_{\fcov} A$ and $A \subseteq (\widebreve{r})^{-} B$.
  Thus $a \fcov_{\fprox} (\widebreve{r})^{-} B \subseteq
  (\widebreve{r})^{-} S'$, and hence $S \fcov_{\fprox} (\widebreve{r})^{-} S'$.
  \smallskip

  \noindent \ref{FTM2}
  Let $a \in \widebreve{r}^{-}b
  \downarrow_{\fproxeq}\widebreve{r}^{-}c$. By ($\leq$-left), we
  have $a \fcov_{\fprox} r^{-}b$ and $a \fcov_{\fprox} r^{-}c$, and so
  $a \fcov_{\fprox} r^{-}b \downarrow_{\fproxeq}
  r^{-}c$ by ($\downset$-right). By (transitivity), it suffices to show
  $r^{-}b \downarrow_{\fproxeq} r^{-}c 
  \fcov_{\fprox} (\widebreve{r})^{-}\left(  b \downarrow_{\fproxeq'} c  \right)$.
  Let $a' \in r^{-}b \downarrow_{\fproxeq} r^{-}c$, and $a'' \fprox a'$. 
  Choose $a''' \in S$ such that $a'' \fprox a''' \fprox a'$.
  We have 
  $r^{-}b \downarrow_{\fproxeq} r^{-}c = r^{-}b \cap r^{-}c$
  by~\ref{lem:SingleVersionJoinPresAppMap1}, so by applying
  \ref{lem:SingleVersionJoinPresAppMap7}, we find
  $A \in \Fin{S}$ and $D \in \Fin{b \downarrow_{\fprox'} c}$ 
  such that $a''' \fcov A \mathrel{r_{L}} D$. 
  Then, $a'' \ll_{\fcov} A$ and $A \subseteq (\widebreve{r})^{-} D
  \subseteq (\widebreve{r})^{-} (b \downarrow_{\fproxeq'} c)$,
  and so $a' \fcov_{\fprox} (\widebreve{r})^{-}\left( b
  \downarrow_{\fproxeq'} c \right)$.  Hence, $r^{-}b
  \downarrow_{\fproxeq} r^{-}c \fcov_{\fprox}
  (\widebreve{r})^{-}\left( b \downarrow_{\fproxeq'} c \right)$.

  Conversely, let $r$ be a formal topology map from $(S,
  \fcov_{\fprox}, \fproxeq)$ to ${(S', \fcov'_{\fprox'}, \fproxeq')}$.
  We must show that $r^{\ddagger}$ given by \eqref{def:ddagger}
  satisfies~\ref{lem:SingleVersionJoinPresAppMap6}
  and~\ref{lem:SingleVersionJoinPresAppMap7}.
  \smallskip

  \noindent\ref{lem:SingleVersionJoinPresAppMap6}
  Let $a \fprox a'$. Since $S \fcov_{\fprox} r^{-}S'$ by \ref{FTM1}, there exists $A
  \in \Fin{r^{-}S'}$ such that $a \ll_{\fcov} A$, so there exist
  $A' \fprox_{L} A$ and $B \in \Fin{S'}$ such that $a \fcov A'$ and $A
  \mathrel{r_{L}} B$. Then $A' \mathrel{(r^{\ddagger})_{L}} B$.
  \smallskip

  \noindent\ref{lem:SingleVersionJoinPresAppMap7}
  Suppose
  $a \fprox a' \mathrel{r^{\ddagger}} b$ and  $a' \mathrel{r^{\ddagger}} c$.
  Then, $a' \fcov_{\fprox} r^{-}b$
  and $a' \fcov_{\fprox} r^{-}c$.
  Since 
  $b \fcov'_{\fprox'} \downset_{\fprox'} b$
  and $c \fcov'_{\fprox'} \downset_{\fprox'} c$, we have
  $a' \fcov_{\fprox} r^{-}\downset_{\fprox'}b$
  and
  $a' \fcov_{\fprox} r^{-}\downset_{\fprox'}c$ by \ref{BCM2}. Then,
  by ($\downset$-right) and \ref{FTM2},  we have
   $a' \fcov_{\fprox} r^{-} \left( \left(\downset_{\fprox'} b\right) 
  \downset_{\fproxeq'} \left(\downset_{\fprox'} c\right) \right) 
  \subseteq r^{-} \left(b \downarrow_{\fprox'} c \right)$. 
  Thus, there exists $A \in \Fin{r^{-}\left(b \downarrow_{\fprox'} c \right)}$ such that
  $a \ll_{\fcov} A$. Then, as in the proof
  of~\ref{lem:SingleVersionJoinPresAppMap6} above, we find $D \in
  \Fin{b \downarrow_{\fprox'} c}$ and $A' \in \Fin{S}$ such that
  $a \fcov A'$ and $A' \mathrel{(r^{\ddagger})_L} D$.
\end{proof}

Let $\ContBCov$ be the full subcategory of $\BCov$ consisting of
continuous basic covers, and let $\LKFTop$ be the full subcategory of
$\FTop$ consisting of locally compact formal topologies.
Then, it is straightforward to show that the composition of the
assignments $r \mapsto
r^{\dagger}$ and $r \mapsto \widebreve{r}$ given by~\eqref{def:dagger}
and~\eqref{def:breve}, respectively, determines a functor
$F \colon \SContFCov \to \ContBCov$.
By Proposition \ref{prop:BijectionBCovMapWbrel}, $F$ is full and
faithful, and it restricts to a full and faithful functor 
from $\ContFCovLoc$ to $\LKFTop$.

\begin{theorem}
$\SContFCov$ is equivalent to $\ContBCov$.
The equivalence restricts to an equivalence between $\ContFCovLoc$ and $\LKFTop$.
\end{theorem}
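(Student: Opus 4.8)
The functor $F \colon \SContFCov \to \ContBCov$ obtained after Proposition~\ref{prop:BijectionBCovMapWbrel} by composing $r \mapsto r^{\dagger}$ and $r \mapsto \widebreve{r}$ is full and faithful, and it restricts to a full and faithful functor $\ContFCovLoc \to \LKFTop$. Hence the whole plan reduces to proving that $F$ and its localized restriction are \emph{essentially surjective}; the two equivalences then follow formally.

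For essential surjectivity I would argue directly. Given a continuous basic cover $(S,\cov)$ with a witnessing relation $\wb$ as in \eqref{eq:wb}, the goal is to manufacture a strong continuous finitary cover $T$ together with mutually inverse basic cover maps between $F(T)$ and $(S,\cov)$. Mirroring the lower-powerlocale constructions of Proposition~\ref{prop:PxJLatToSPxJLat} and Lemma~\ref{lem:LowerCoalgEssSurj}, I would take the carrier of $T$ to be $\Fin{S}$ (the free $\vee$-semilattice on $S$, where the finite joins of the represented lattice live), let $\fcov$ be the finitary cover generated by restricting $\cov$ to finite joins, and let $\fprox$ be the induced way-below relation presented predicatively through $\wb$. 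The comparison with $(S,\cov)$ is then effected by the ``unit'' relation $s \mapsto \{s\}$ and the ``counit'' $A \mapsto a$ for $A \cov \{a\}$, whose mutual inverseness (as morphisms of $\BCov$) would be verified using Proposition~\ref{prop:BijectionBCovMapWbrel} together with the saturation identity \eqref{eq:CovFromContFCov}.

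The hard part, and the step I expect to demand the most care, is checking that $\fprox$ is idempotent, i.e.\ that the presented way-below interpolates. This cannot be arranged on the carrier $S$ itself: the primitive $\wb$ of a continuous basic cover need not interpolate at the level of single elements, since from $b \wb a$ and $a \cov \wb^{-}a$ the second rule of \eqref{eq:wb} only extracts a \emph{finite} set $A \subseteq \wb^{-}a$ with $b \cov A$, never a single intermediate point. This is exactly what forces the passage to finite joins: on $\Fin{S}$ the finite interpolant $A$ becomes a single element, so that idempotency of $\fprox$ can be recovered from the finitary interpolation in \eqref{eq:wb} combined with transitivity of $\cov$. It is precisely here that both axioms of a continuous basic cover are used in an essential way, and verifying that the derived $\ll_{\fcov}$ matches the way-below data coming from $\wb$ is the technical crux.

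Finally, the localized case is handled by restriction. By Proposition~\ref{prop:FromContFCovToBasicCover}~\eqref{prop:FromContFCovToBasicCover2} the constructed cover $T$ is localized exactly when $(S,\cov)$ carries the structure of a locally compact formal topology, and by Proposition~\ref{prop:BijectionBCovMapWbrel}~\eqref{prop:BijectionBCovMapWbrel2} proximity maps correspond to formal topology maps. An argument in the style of Lemma~\ref{lem:IsoJoinPreserving} shows that the comparison isomorphisms are automatically join-preserving and Lawson approximable, so that the equivalence $\SContFCov \simeq \ContBCov$ cuts down to the desired equivalence $\ContFCovLoc \simeq \LKFTop$.
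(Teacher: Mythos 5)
Your overall architecture is the paper's: since $F$ is already known to be full and faithful (and to restrict accordingly on the localized subcategories), everything reduces to essential surjectivity, and for that you build a strong continuous finitary cover on the carrier $\Fin{S}$, with $\fcov$ the restriction of $\cov$ to finitely enumerable subsets and $\fprox$ extracted from $\wb$. You also correctly identify why the carrier must be $\Fin{S}$: the second rule of \eqref{eq:wb} only produces a \emph{finite} interpolant, which becomes a single element of $\Fin{S}$, and this is exactly how the paper obtains idempotency of $A \fprox B \defeqiv \exists C \in \Fin{S}\left(A \cov C \mathrel{\wb_{L}} B\right)$. One point is under-specified: besides idempotency you must verify the exchange law \eqref{eq:ContFCov} relating $\fcov$ and $\fprox$ --- this is what makes $(\Fin{S},\fcov,\fprox)$ a \emph{strong} continuous finitary cover, and it is a separate (if routine) computation in the paper; your phrase about ``$\ll_{\fcov}$ matching the way-below data'' gestures at it but does not pin it down.

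More seriously, your treatment of the localized case does not go through as stated. Proposition~\ref{prop:FromContFCovToBasicCover}~\eqref{prop:FromContFCovToBasicCover2} characterises localizedness of $(\Fin{S},\fcov,\fprox)$ in terms of local compactness of the derived formal topology $(\Fin{S},\fcov_{\fprox},\fproxeq)$, whose carrier and preorder are \emph{not} those of the given $(S,\cov,\leq)$. What you actually know is only that $(S,\cov)$ and $(\Fin{S},\fcov_{\fprox})$ are isomorphic as basic covers; transferring the ($\downset$-right) condition across this isomorphism and across the change of preorder from $\leq$ to $\fproxeq$ is a genuine extra step (essentially the invariance of localizedness under isomorphism, which the paper only records for proximity $\vee$-semilattices, after Proposition~\ref{prop:LocalizedDoubleCoalg}). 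The paper sidesteps this by verifying localizedness of $(\Fin{S},\fcov,\fprox)$ directly: from $A \fprox B \fcov \mathcal{U}$ one applies ($\downset$-right) for the original $(S,\cov,\leq)$ to get $B \cov \wb^{-}\bigcup\mathcal{U} \downarrow_{\leq} \wb^{-}B$ and then extracts the required element of $\Fin{B \downarrow_{\fprox} \mathcal{U}}$. You should either supply the transfer argument or carry out this direct computation.
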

\begin{proof}
  First, we show that 
  the functor $F \colon \SContFCov \to \ContBCov$ described above is essentially surjective.
  Let $(S, \cov)$ be a continuous basic cover with a relation
  $\wb \subseteq S \times S$ satisfying \eqref{eq:wb}. Define a finitary cover
  ${\fcov}$ on $\Fin{S}$ by
  \begin{gather}
    \label{eq:wbCover}
    A \mathrel{\fcov} \mathcal{U} \defeqiv A \cov \bigcup
    \mathcal{U}.
   \intertext{Then, define a relation $\fprox$ on $\Fin{S}$ by}
      \label{eq:wbInterpolate}
      A \fprox B \defeqiv \exists C \in \Fin{S} \left( A \cov C
      \mathrel{\wb_{L}} B \right).
  \end{gather}
  One can easily show that $\fprox$ is idempotent by noting that $a
  \cov \wb^{-} a \cov \left\{ a \right\}$. We show that
  $(\Fin{S}, \fcov, \fprox)$ satisfies \eqref{eq:ContFCov}.
  First, suppose $A \fprox B \fcov \mathcal{U}$.
  Then, there exists $C \in \Fin{S}$ such that $A \cov C
  \mathrel{\wb_{L}} B$. Since $B \cov \wb^{-}\bigcup \mathcal{U}$,
  there exists $D \in \Fin{\wb^{-}\bigcup \mathcal{U}}$ such that
  $C \cov D$. Then, there exists $\mathcal{V} \in \FFin{S}$ such that
  $D = \bigcup \mathcal{V}$ and $\mathcal{V}
  \mathrel{(\wb_{L})_{L}} \mathcal{U}$. Thus $A \mathrel{\fcov}
  \mathcal{V} \fprox_{L} \mathcal{U}$. Conversely, if
  $A \mathrel{\fcov} \mathcal{V} \fprox_{L} \mathcal{U}$, then
  $A \fprox \bigcup \mathcal{U} \mathrel{\fcov} \mathcal{U}$.
  Therefore, $(\Fin{S}, \fcov, \fprox)$ is a strong continuous
  finitary cover.
  
  It is straightforward to show that $(S, \cov)$ is isomorphic to
  the basic cover $(\Fin{S}, \fcov_{\fprox})$ determined by 
  $(\Fin{S}, \fcov, \fprox)$ as in \eqref{eq:CovFromContFCov}. Specifically, we have an isomorphism $r
  \colon (S, \cov) \to (\Fin{S}, \fcov_{\fprox})$ defined by 
  \[
    a \mathrel{r} A \defeqiv a \cov A
  \]
  with an inverse $s \colon (\Fin{S}, \fcov_{\fprox}) \to (S, \cov)$
  defined by 
  \[
    A \mathrel{s} a \defeqiv A \cov \left\{ a \right\}.
  \]
  Hence, the functor $F \colon \SContFCov \to \ContBCov$ is essentially
  surjective.
   
  Next, we show that for any locally compact formal
  topology $(S, \cov, \leq)$ equipped with a relation $\wb$ satisfying
  \eqref{eq:wb}, the strong
  continuous finitary cover ${(\Fin{S}, \fcov, \fprox)}$ defined by
  \eqref{eq:wbCover} and \eqref{eq:wbInterpolate} is localized.
  Suppose  $A \fprox B \fcov
  \mathcal{U}$. Then
  $B \cov \wb^{-} \bigcup \mathcal{U} \downarrow_{\leq} \wb^{-} B$,
  so there exists $C \in \Fin{\wb^{-} \bigcup \mathcal{U}
  \downarrow_{\leq} \wb^{-}B}$ such that $A \cov C$. Thus, there
  exists $\mathcal{V} \in \FFin{S}$ such that $C =
  \bigcup{\mathcal{V}}$,
  $\mathcal{V} \fprox_{L} \mathcal{U}$, and
  $\mathcal{V} \fprox_{L} \left\{ B \right\}$.
  Hence $A \fcov \mathcal{V} \in \Fin{B \downarrow_{\fprox} \mathcal{U}}$, and so
  $(\Fin{S}, \fcov, \fprox)$ is localized. 

  As we have shown in the first part, as a basic cover, $(S, \cov, \leq)$ is
  isomorphic to the locally compact formal topology $(\Fin{S}, \fcov_{\fprox}, \fproxeq)$
  determined by ${(\Fin{S}, \fcov, \fprox)}$. 
  Since any isomorphism between the underlying basic covers is a
  formal topology map~\cite[Proposition 5.6]{ConvFTop}, the
  restriction of $F \colon \SContFCov \to \ContBCov$ to
  $\ContFCovLoc$ and $\LKFTop$ is essentially surjective.
\end{proof}

\section*{Acknowledgements}
We thank the following people: anonymous referees for many useful
suggestions; Steve Vickers for valuable feedback on a preliminary
version of the paper; Francesco Ciraulo, Milly Maietti, and Giovanni
Sambin for illuminating discussions on the predicativity problem of
locally compact formal topologies.

\bibliographystyle{alpha}
\bibliography{refs}
\end{document}